\newtheorem{theorem}{Theorem}[section]
\newtheorem{lemma}{Lemma}[section]
\newtheorem{definition}{Definition}[section]
\newtheorem{proposition}{Proposition}[section]
\crefname{definition}{Def.}{Defs.}
\crefname{equation}{Eq.}{Eqs.}
\newcommand{\defcal} [1]{\expandafter\newcommand\csname cal#1\endcsname{{\mathcal #1}}}
\newcommand{\defsf} [1]{\expandafter\newcommand\csname sf#1\endcsname{{\mathsf #1}}}
\newcommand{\defbf} [1]{\expandafter\newcommand\csname bf#1\endcsname{{\mathbf #1}}}
\newcommand{\defbb} [1]{\expandafter\newcommand\csname bb#1\endcsname{{\mathbb{#1}}}}
\newcommand{\deffrak} [1]{\expandafter\newcommand\csname frak#1\endcsname{{\mathfrak{#1}}}}
\newcounter{ct}
    \edef\letter{\Alph{ct}}
    \edef\letter{\alph{ct}}
\newcommand{\rmD}{\mathrm{D}}
\newcommand{\rmd}{\mathrm{d}}
\newcommand{\rmI}{\mathrm{I}}
\newcommand{\argmin}{\mathop{\mathrm{argmin}}}
\newcommand{\argmax}{\mathop{\mathrm{argmax}}}
\newcommand{\frob}{\mathrm{F}}
\newcommand{\op}{\mathrm{op}}
\newcommand{\numberthis}{\addtocounter{equation}{1}\tag{\theequation}}
\newcommand{\TVdist}{\mathrm{d}_{\mathrm{TV}}}
\newcommand{\KLdist}{\mathrm{d}_{\mathrm{KL}}}
\newcommand{\mirrorfunc}[1]{
    \ensuremath{\if$#1$\phi \else \phi(#1)\fi}
}
\newcommand{\mirrorfuncdual}[1]{
    \ensuremath{\if$#1$\phi^{\star}\else \phi^{\star}(#1)\fi}
}
\newcommand{\primtodual}[1]{\nabla\mirrorfunc{#1}}
\newcommand{\dualtoprim}[1]{\nabla\mirrorfuncdual{#1}}
\newcommand{\hessmirror}[1]{\nabla^{2}\mirrorfunc{#1}}
\newcommand{\hessmirrorinv}[1]{\nabla^{2}\mirrorfuncdual{#1}}
\newcommand{\trace}{\mathrm{trace}}
\newcommand{\defeq}{\overset{\mathrm{def}}{=}}
\newcommand{\subscript}[2]{$#1 _ #2$}
\newlist{assumplist}{enumerate}{1}
\setlist[assumplist]{label=\subscript{\textbf{\textsf{A}}}{\textsf{{\arabic*}}},leftmargin=*, itemsep=0pt}
\newcommand{\potential}[1]{
    \ensuremath{\if$#1$f\else f(#1)\fi}
}
\newcommand{\metric}[1]{
    \ensuremath{\if$#1$\mathscr{G}\else \mathscr{G}(#1)\fi}
}
\newcommand{\gradpotential}[1]{\nabla\potential{#1}}
\newcommand{\hesspotential}[1]{\nabla^{2}\potential{#1}}
\newcommand{\mixingtime}[1]{\tau_{\mathrm{mix}}(#1)}
\newcommand{\interior}[1]{\mathrm{int}(#1)}
\newcommand{\crossratio}[3]{\mathrm{CR}(#1, #2; #3)}
\newcommand{\gaussian}[2]{\mathcal{N}\left(#1,~#2\right)}
\newcommand{\MAMLA}{\textsf{MAMLA}}
\newcommand{\MAPLA}{\textsf{MAPLA}}
\newcommand{\Dikin}{\textsf{DikinWalk}}
\newcommand{\primalspace}{\mathcal{K}}
\newcommand{\targetdens}{\pi}
\newcommand{\targetdist}{\Pi}
\newcommand*{\algotitle}[2]{%
  \stepcounter{algocf}%
  \hypertarget{algocf.title.\theHalgocf}{}%
  \NR@gettitle{#1}%
  \label{#2}%
  \addtocounter{algocf}{-1}%
}
\crefname{paragraph}{part}{parts}
\def\@maketitle{%
  \newpage
  \begin{center}%
  \let \footnote \thanks
    {\Large \bf \@title \par}%
  \end{center}%
  \par
  \vskip 0.5em}
\title{High-accuracy sampling from constrained spaces with the \\
Metropolis-adjusted Preconditioned Langevin Algorithm}
\begin{document}
\maketitle

\begin{center}
{\large
\begin{tabular}{ccc}
    \makecell{Vishwak Srinivasan\(^\dagger\)\\{\normalsize\texttt{vishwaks@mit.edu}}} & \makecell{Andre Wibisono\(^\star\)\\{\normalsize\texttt{andre.wibisono@yale.edu}}} & \makecell{Ashia Wilson\(^\dagger\)\\{\normalsize\texttt{ashia07@mit.edu}}}
\end{tabular}
\vskip 0.5em

\normalsize
\begin{tabular}{c}
\({}^{\dagger}\)Department of Electrical Engineering and Computer Science, MIT
\\
[0.25em]
\({}^{\star}\)Department of Computer Science, Yale University \\
\end{tabular}
}
\end{center}

\begin{abstract}
  \noindent In this work, we propose a first-order sampling method called the Metropolis-adjusted Preconditioned Langevin Algorithm for approximate sampling from a target distribution whose support is a proper convex subset of \(\bbR^{d}\).
  Our proposed method is the result of applying a Metropolis-Hastings filter to the Markov chain formed by a single step of the preconditioned Langevin algorithm with a metric \(\metric{}\), and is motivated by the natural gradient descent algorithm for optimisation.
  We derive non-asymptotic upper bounds for the mixing time of this method for sampling from target distributions whose potentials are bounded relative to \(\metric{}\), and for exponential distributions restricted to the support.
  Our analysis suggests that if \(\metric{}\) satisfies stronger notions of self-concordance introduced in \citep{kook2024gaussian}, then these mixing time upper bounds have a strictly better dependence on the dimension than when is merely self-concordant.
  We also provide numerical experiments that demonstrates the practicality of our proposed method.
  Our method is a \emph{high-accuracy} sampler due to the polylogarithmic dependence on the error tolerance in our mixing time upper bounds.
\end{abstract}

\setcounter{page}{1}

\section{Introduction}
\label{sec:intro}
Several statistical estimation and inference tasks involve drawing samples from a distribution; examples include estimating functionals, generating credible intervals for point estimates, and structured exploration of state spaces.
The complexity of the distribution is influenced by the problem being studied, or by modelling choices made which can make it infeasible to sample from such distributions exactly.
Markov chain Monte Carlo (MCMC) algorithms have helped tackle this challenge over the past several decades \citep{brooks2011handbook}, and have seen renewed interest in machine learning and statistics, especially in high-dimensional settings.
These distributions of interest could be supported on the entire space (say \(\bbR^{d}\)), or on a proper subset of \(\bbR^{d}\).
Our focus is on the latter kind of distributions, which we refer to as \emph{constrained}, and these arise in several practical problems.
A non-exhaustive collection of applications are Bayesian modelling \citep{gelfand1992bayesian,pakman2014exact}, regularised regression \citep{tian2008efficient,celeux2012regularization}, modelling metabolic networks \citep{heirendt2019creation}, and differential privacy \citep{bassily2014private}.

Formally, we are interested in generating (approximate) samples from a distribution \(\targetdist\) with support \(\primalspace \subset \bbR^{d}\) that is convex, and density \(\targetdens\) (with respect to the Lebesgue measure) of the form
\begin{equation*}
    \targetdens(x) \propto \exp(-\potential{x})~.
\end{equation*}
The function \(\potential{} : \interior{\primalspace} \to \bbR\) is termed the \emph{potential} of \(\targetdist\), and we refer to the above sampling problem as the \textbf{constrained sampling problem} henceforth.
MCMC methods have been proposed and studied for the constrained sampling problem since at least as early as the 1980s \citep{smith1984efficient}.
The earliest methods were originally catered towards generating points uniformly distributed over \(\primalspace\) (where \(\potential{} \equiv 0\)) with applications in volume computation.
The aforementioned collection of applications also concern non-uniform distributions over \(\primalspace\).
An example of such a non-uniform distribution is the truncated Gaussian distribution which arises in various statistical applications.
Generating a sample from a (unconstrained) Gaussian distribution with a specified mean and covariance can be performed exactly and also efficiently using the Box-Muller transform.
However, sampling from a Gaussian distribution whose support is restricted to a convex subset poses non-trivial difficulties beyond \(d \geq 1\), and hence approximate sampling schemes have to be considered.

In general, the task of sampling from a distribution \(\targetdist\) with potential \(\potential{}\) supported on \(\primalspace\) can be transformed into a constrained sampling problem from a distribution \(\widetilde{\targetdist}\) whose potential is linear and supported over a larger domain through \emph{lifting}.
More precisely, suppose the potential \(\potential{}\) satisfies \(\potential{} = \sum_{i=1}^{N}\potential{}_{i}\) for a collection of functions \(\{\potential{}_{i}\}_{i=1}^{N}\).
Lifting defines a new domain \(\widetilde{\primalspace}\) and potential \(\widetilde{\potential{}}\) (referred to as the \emph{lifted domain} and \emph{lifted potential} respectively) which are stated below.
\begin{equation*}
    \widetilde{\primalspace} = \left\{y = (x, t) : x \in \primalspace,~t \in \bbR^{N}, \potential{}_{i}(x) \leq t_{i} ~\forall ~i \in [N]\right\}, \quad
    \widetilde{\potential{}}(y) = \widetilde{\potential{}}(x,t) = [\underbrace{0, \ldots, 0}_{d \text{ times}}, \underbrace{1, \ldots, 1}_{N \text{ times}}]^{\top}(y)~.
\end{equation*}
The lifted domain \(\widetilde{\primalspace}\) is convex if every \(f_{i}\) is convex.
This is because the lifted domain is formed by the intersection of \(N\) epigraphs defined by each \(f_{i}\).
Lifting described here is the sampling analogue of converting an optimisation problem of the form \(\min\limits_{x \in \primalspace} f(x)\) into \(\min\limits_{(x, t) \in \widetilde{\primalspace}} \bm{1}^{\top}t\).
The principle behind \emph{lifting} for sampling is the fact that the marginal distribution of the first \(d\) elements of \(y \sim \widetilde{\targetdist}\) coincides with the distribution of \(x \sim \targetdist\).
This is made evident by the calculation below.
\begin{equation*}
    \int_{t \in \widetilde{\primalspace}} e^{-\widetilde{\potential{}}((x, t))} \rmd t = \prod_{i=1}^{N}\int_{t_{i} \geq \potential{}_{i}(x)} e^{-t_{i}}\rmd t_{i} = \prod_{i=1}^{N} e^{-\potential{}_{i}(x)} = e^{-\potential{x}}~.
\end{equation*}
Thus, lifting is a versatile technique which enables working with general potentials \(\potential{}\).
However, this is not necessary to deal with non-uniform distributions supported over a constrained domain, particularly in settings where the potential \(\potential{}\) is ``compatible'' with geometric properties of the domain \(\primalspace\).
We expand on such notions of compatibility in the sequel.

In this work, we propose a method for the constrained sampling problem, and our method is motivated by the rich connection between optimisation methods and sampling algorithms.
The analogue of the constrained sampling problem in optimisation is the task of minimising \(\potential{}\) over a constrained feasibility set (\(\primalspace\)).
For this constrained optimisation problem, it is usually beneficial to impose a non-Euclidean geometric structure over this feasibility set through a \emph{metric} \(\metric{}\), and this defines a Riemannian manifold \((\primalspace, \metric{})\).
For example, when \(\primalspace\) is a polytope formed by \(m\) linear constraints, a natural choice of \(\metric{}\) is the Hessian of the log-barrier function of this polytope \citep{alvarez2004hessian}.
A consequence of imposing this metric is that standard first-order methods for minimising \(\potential{}\) like gradient descent have to be modified as the Euclidean gradient \(\gradpotential{}\) is no longer the direction of steepest descent.
\citet{amari1998natural} adapts the gradient descent method to incorporate the metric \(\metric{}\), which they refer to as the \emph{natural gradient descent} method.
In this work, we propose a new MCMC method inspired by the natural gradient descent method called the \emph{Metropolis-adjusted Preconditioned Langevin Algorithm} (\MAPLA{}) for the constrained sampling problem.
Each iteration of this algorithm is composed of the steps below.
\begin{enumerate}[leftmargin=*, itemsep=0pt, label=(\arabic*)]
\item Generate a proposal \(Z\) from \(X\) (current iterate) with one step of \ref{eq:PLA} with metric \(\metric{}\).
\item If \(Z \not\in \primalspace\), reject \(Z\) and set \(X\) to be \(X'\) (next iterate).
\item If \(Z \in \primalspace\), compute the Metropolis-Hastings acceptance probability \(p_{X \to Z}\).
\item With probability \(p_{X \to Z}\), set \(X' = Z\) (accept); otherwise set \(X' = X\) (reject).
\end{enumerate}

Steps (3) and (4) are commonly referred to as the Metropolis-Hastings filter, and the above steps form the Metropolis adjustment of the Preconditioned Langevin Algorithm (\ref{eq:PLA}) with respect to \(\targetdist\).
The Markov chain defined by \MAPLA{} has two essential properties: it is both reversible with respect to \(\targetdist\) and ergodic.
The reversibility is due to the fact that the Metropolis adjustment of any Markov chain with respect to an arbitrary distribution \(\nu\) creates a new Markov chain that is reversible with respect to \(\nu\).
The ergodicity of \MAPLA{} follows from the ergodicity of \textsf{PLA}.
Therefore, \MAPLA{} is an \emph{unbiased} MCMC algorithm; this means that the law of an iterate generated by \MAPLA{} as the number of iterations tend to infinity is the target distribution \(\targetdist\).
While \MAPLA{} draws inspiration from natural gradient methods in optimisation, it is not the first MCMC method based on incorporating geometric properties of \(\primalspace\) encoded in the metric \(\metric{}\).
\Dikin{} and \textsf{ManifoldMALA} are two other methods that directly use the metric \(\metric{}\) in its design, and both of these methods are formed by including a Metropolis-Hastings filter to accept/reject proposals as well. 
In \cref{tab:proposal-table}, we show the forms of the proposal distributions of these algorithms and that of \textsf{MAPLA}, which makes the relation of \textsf{MAPLA} to each of them more apparent.
\begin{table}[H]
\centering
\caption{Proposal distributions \(\calP_{X}\) of certain algorithms.
All of these proposal distributions are Gaussians with covariance \(2h \cdot \metric{X}^{-1}\) and only differ in the mean \(\frakm(X)\); \(h > 0\) is the step size.
}
\label{tab:proposal-table}
\renewcommand{\arraystretch}{1.3}
\begin{tabular}{cc}
Algorithm & Mean \(\frakm(X)\) \\
\hline
\(\underset{\text{This work}}{\MAPLA{}}\) & \(X - h \cdot \metric{X}^{-1}\gradpotential{X}\) \\
\(\underset{\text{\citet{girolami2011riemann}}}{\textsf{ManifoldMALA}}\) & \(X + h \cdot \{(\nabla \cdot \metric{}^{-1})(X) - \metric{X}^{-1}\gradpotential{X}\}\) \\
\(\underset{\text{\citet{kook2024gaussian}}}{\Dikin{}}\) & \(X\)
\end{tabular}
\end{table}

From this comparison, we can view \MAPLA{} as an interpolation between \textsf{ManifoldMALA} (based on the weighted Langevin algorithm introduced later) and \Dikin{} (based on a geometric random walk with metric \(\metric{}\)).
When \(\metric{}\) is identity, \Dikin{} is equivalent to the \textsf{MetropolisRandomWalk} \citep{mengersen1996rates}, and both \textsf{ManifoldMALA} and \MAPLA{} reduce to the Metropolis-adjusted Langevin Algorithm (\textsf{MALA}) \citep{roberts1996exponential}.
The key difference between \MAPLA{} and \textsf{ManifoldMALA} is the absence / presence of the \((\nabla \cdot \metric{}^{-1})\) term\footnote{\(\nabla \cdot M\) for a matrix function \(M\) is the vector-valued function defined as \((\nabla \cdot M)(x)_{i} = (\nabla \cdot M_{:,i})(x)\).} in the mean of the proposal distributions, respectively.
The omission of this term in \MAPLA{} is computationally advantageous, as this can be difficult to compute for choices of \(\metric{}\) suited for the constrained sampling problem, like for instance the Hessian of the log-barrier function of a polytope when \(\primalspace\) is that polytope.
On the other hand, while the proposal distribution of \Dikin{} is the simplest amongst the algorithms, it lacks any information about the target distribution \(\targetdist\), and hence \MAPLA{} can be interpreted as performing a natural drift correction to \Dikin{}.

Another algorithm that is related to \MAPLA{} is the Metropolis-adjusted Mirror Langevin algorithm (\MAMLA{}) \citep{srinivasan2024fast} that was recently proposed and analysed for the constrained sampling problem.
Mirror Langevin methods are based on another approach to solve constrained optimisation problems called mirror methods \citep{nemirovskii1983problem}, and their key feature is the use of a Legendre type function \citep[Chap. 26]{rockafellar1997convex} \(\mirrorfunc{}\) referred to as the \emph{mirror function}.
The primary role of the mirror function is to define a mirror map \(\primtodual{}\) which maps points in \(\primalspace\) (called the \emph{primal} space) to \(\primalspace^{\star} = \mathrm{range}(\primtodual{})\) (called the \emph{dual space}) where updates are performed, and are mapped back to \(\primalspace\) using the inverse of \(\primtodual{}\).
Since \(\mirrorfunc{}\) is of Legendre type, the inverse of \(\primtodual{}\) is \(\dualtoprim{}\), where \(\mirrorfuncdual{}\) is the convex conjugate of \(\mirrorfunc{}\).
The mirror function also induces non-Euclidean geometric structures on \(\primalspace\) and \(\primalspace^{\star}\) through metrics \(\hessmirror{}\) and \(\hessmirrorinv{}\) respectively.
The proposal distribution \(\calP_{X}\) of \MAMLA{} is the law of the iterate obtained with one step of the Mirror Langevin algorithm \citep{zhang2020wasserstein} from \(X\) and is defined as
\begin{equation*}
    \calP_{X} = \mathrm{Law}(\dualtoprim{Z}); \quad Z \sim \widetilde{\calP}_{X} = \gaussian{\primtodual{X} - h \cdot \gradpotential{X}}{2h \cdot \hessmirror{X}}~.
\end{equation*}
Let \(Y = \primtodual{X}\) and \(\potential{}^{\star} = f \circ \dualtoprim{}\).
Using the property that \(\primtodual{} \circ \dualtoprim{} = \mathrm{Id}\), it can be shown that the distribution \(\widetilde{\calP}_{X}\) above is equivalent to \(\gaussian{Y - h \cdot \hessmirrorinv{Y}^{-1}\nabla f^{\star}(Y)}{2h \cdot \hessmirrorinv{Y}^{-1}}\).
This lends to the interpretation of \MAMLA{} as the \emph{dual} version of \MAPLA{} with the metric \(\hessmirrorinv{}\) in the dual space, which also complements the connection between the mirror descent and natural gradient descent methods in optimisation \citep{raskutti2015information}.
However, a central challenge with mirror methods is computing \(\dualtoprim{}\) which in generality requires solving the convex program that defines \(\mirrorfuncdual{}\), although in a few cases, a closed form expression exists.

As noted previously, \MAPLA{} is unbiased.
The quality of a MCMC method is related to its \emph{mixing time} (a non-asymptotic notion) that is roughly defined as the minimum number of iterations of the method to generate an iterate whose distribution is ``close'' to the target distribution (see \cref{sec:prelims:mixing} for a precise definition).
From a practical standpoint, it is useful to understand the effect of problem parameters such as the dimension \(d\) and the properties of \(\targetdist\) on the mixing time of \MAPLA{}.
In this work, we also obtain non-asymptotic upper bounds for the mixing time of \MAPLA{} under certain sufficient conditions on the potential \(\potential{}\) and the metric \(\metric{}\) for the constrained sampling problem.
The central problem parameters as characterised by the sufficient conditions and their dependence on the \(\delta\)-mixing time of \MAPLA{} are summarised in \Cref{tab:mixing-time-scalings}.

\begin{table}[H]
\centering
\setlength{\tabcolsep}{2pt}
\begin{tabular}{rlc}
\multicolumn{2}{c}{Conditions} & Mixing time scaling \\
\hline \\[-0.5em]
\multirow{2}{*}{\(\potential{}\) is \(\begin{cases}
    (\mu,\metric{})\text{-curv. lower-bdd.} \\
    (\lambda,\metric{})\text{-curv. upper-bdd.} \\
    (\beta,\metric{})\text{-grad. upper-bdd.}
\end{cases}\) \& } & \(\metric{}\) is \(\begin{cases}
    \text{self-concordant} \\
    \nu\text{-symmetric}
\end{cases}\) & \(\underset{(\text{\scriptsize \Cref{thm:just-SC-mapcla}})}{\min\left\{\nu, \frac{1}{\mu}\right\} \cdot \max\{d^{3},d\lambda,\beta^{2}\}}\) \\
[2mm] \\
& \(\metric{}\) is \(\begin{cases}
    \text{self-concordant}_{++} \\
    \nu\text{-symmetric}
\end{cases}\) & \(\underset{(\text{\scriptsize \Cref{thm:more-than-SC-mapcla}})}{\min\left\{\nu, \frac{1}{\mu}\right\} \cdot \max\{d\beta,d\lambda,\beta^{2}\}}\) \\
\arrayrulecolor{lightgray}\hline
\rule{0pt}{1.75\normalbaselineskip}
\(\potential{}\) is linear ~~~~ \& & \(\metric{}\) is \(\begin{cases}
    \text{self-concordant}_{++} \\
    \nu\text{-symmetric}
\end{cases}\) & \(\underset{(\text{\scriptsize \Cref{thm:exp-densities-mapcla}})}{\nu \cdot d^{2}}\)
\end{tabular}
\caption{Summary of dependence of problem parameters on \(\delta\)-mixing times of \MAPLA{} from a warm start.
All scalings above hide a polynomial dependence on \(\log(\nicefrac{1}{\delta})\), and assume \(\mu, \beta \geq 1\).
}
\label{tab:mixing-time-scalings}
\end{table}

The precise definitions of the conditions are discussed in \cref{sec:prelims:precond,sec:prelims:func}.
To contextualise these mixing time scalings, we provide a brief comparison of these mixing time scalings to those derived for other constrained sampling algorithms.
We begin by remarking that \textsf{ManifoldMALA} has no known mixing time guarantees for either the unconstrained or the constrained sampling problem.
For \Dikin{}, \citet{kook2024gaussian} show that the mixing time of \Dikin{} scales as \(\min\left\{\nu, \nicefrac{1}{\mu}\right\} \cdot \max\{d, d\lambda\}\) under the same conditions on \(\potential{}\) and \(\metric{}\) as stated in the second row of \cref{tab:mixing-time-scalings}, modulo the \((\beta,\metric{})\)-gradient upper-bounded condition.
The ``self-concordant\textsubscript{++}'' condition in \cref{tab:mixing-time-scalings} is a collection of stronger notions of self-concordance that were identified by \citet{kook2024gaussian}.
These conditions are sufficient to establish mixing time guarantees for \Dikin{} beyond uniform sampling from polytopes, and interestingly also suffice to provide mixing time guarantees for \MAPLA{}.
These stronger notions of self-concordance yield a strictly better dependence on the dimension \(d\) on the mixing time than when \(\metric{}\) is considered to be just self-concordant (first row vs. second row).
As a special case, consider the choice where \(\metric{} = \hessmirror{}\) for a \emph{Legendre type function} \(\mirrorfunc{}\), and assume that potential \(\potential{}\) and \(\hessmirror{}\) satisfy the conditions stated in the first row of \cref{tab:mixing-time-scalings}.
In this setting, the mixing time scaling of \MAMLA{} with \(\phi\) as the mirror function scales as \(\min\{\nu, \nicefrac{1}{\mu}\} \cdot \max\{d^{3},d\lambda,\beta^{2}\}\), which matches the scaling derived for \MAPLA{}.
However, when \(\mirrorfunc{}\) is such that \(\hessmirror{}\) satisfies ``self-concordant\textsubscript{++}'', the analysis of \MAMLA{} fails to leverage the stronger self-concordant properties due to the \emph{dual} nature of the method as highlighted previously.
Lastly, we also derive mixing time guarantees for constrained sampling with exponential densities like those which arise from lifting.
Specifically, the potential here is defined as \(\potential{x} = \sigma^{\top}x\) for non-zero \(\sigma\), and the guarantees we derive are notably independent of \(\sigma\).
This independence of \(\sigma\) is also a feature of the mixing time guarantee for the Riemannian Hamiltonian Monte Carlo (\textsf{RHMC}) algorithm \citep{girolami2011riemann}, which is another MCMC method that also uses a geometric properties of the domain in the form of a metric \(\metric{}\), and can be applied to the constrained sampling problem.
\citet{kook2023condition} specifically show that the mixing time of discretised \textsf{RHMC} for sampling from \(\targetdist\) with a linear potential whose support \(\primalspace\) is a polytope formed by \(m\) constraints scales as \(m \cdot d^{3}\) when \(\metric{}\) is given by the Hessian of the log-barrier function of this polytope.
For this specific constrained sampling problem and choice of \(\metric{}\), the mixing time of \MAPLA{} scales as \(m \cdot d^{2}\) as this metric is  both self-concordant\textsubscript{++} and \(\nu\)-symmetric with \(\nu = m\).

\textbf{Organisation~~} The remainder of this paper is organised as follows. The next section (\cref{sec:back-rel-work}) is dedicated to a review of related work.
We formally introduce \MAPLA{} and discuss its implementation in \cref{sec:algo}, and state its mixing time guarantees in \cref{sec:mixing-guarantees}.
In \cref{sec:expts}, we provide some numerical experiments, and in \cref{sec:proofs}, we present detailed proofs of the theorems that appear in \cref{sec:algo}.
We conclude with a summary and some open questions in \Cref{sec:conclusion}.
Miscellaneous propositions used in the proofs that appear in \cref{sec:proofs} are deferred to \cref{sec:addendum}.

\section{Background and related work}
\label{sec:back-rel-work}
Many of the proposed MCMC methods for sampling can be roughly separated into two classes: zeroth-order and first-order methods.
Zeroth-order methods are based on querying the potential \(\potential{}\) or equivalently, the density up to normalisation constants, while first-order methods are based on querying the gradient of the potential \(\gradpotential{}\) in addition to \(\potential{}\).
For constrained sampling, \Dikin{} \citep{narayanan2017efficient,kook2024gaussian} is a practical and well-studied zeroth order method, which was proposed originally for the task of uniform sampling from \(\primalspace\) when \(\primalspace\) is a polytope \citep{kannan2009random}, and with applications in approximate volume computation.
This is a notable innovation over earlier methods for uniform sampling \(\primalspace\) applied to polytopes such as \textsf{Hit-And-Run} \citep{smith1984efficient,belisle1993hit,lovasz1999hit} and \textsf{BallWalk} \citep{lovasz1993random,kannan1997random} due to its independence on the conditioning of \(\primalspace\).
\textsf{MetropolisRandomWalk} \citep{mengersen1996rates,dwivedi2018log} is also another zeroth-order method related to \textsf{BallWalk} that is suited for unconstrained sampling.

Several first-order algorithms for sampling are based on discretisations of the Langevin dynamics and its variants.
For unconstrained sampling, the unadjusted Langevin algorithm (\ref{eq:ULA}) is a simple yet popular first-order method which is derived as the forward Euler-Maruyama discretisation of the continuous time Langevin dynamics (\ref{eq:LD}).
The Langevin dynamics is equivalent to the gradient flow of the KL divergence \(\mu \mapsto \KLdist(\mu \| \targetdist)\) in the space of probability measures equipped with the Wasserstein metric (\(\calP_{2}(\bbR^{d}), W_{2}\)) \citep{jordan1998variational,wibisono2018sampling}.
\begin{align*}
    \rmd X_{t} = -\nabla \potential{X_{t}} \rmd t + \sqrt{2}~\rmd B_{t}
    \enskip \Leftrightarrow &\enskip
    \partial_{t}\rho_{t} = \nabla \cdot \left( \rho_{t}\nabla \log \frac{\rho_{t}}{\targetdens}\right),~\rho_{t} = \text{Law}(X_{t})~.
    \tag{\textsf{LD}}\label{eq:LD}\\
    X_{k + 1} - X_{k} = -h \cdot \gradpotential{X_{k}} &+ \sqrt{2h} \cdot \xi_{k}; \quad \xi_{k} \sim \calN(\bm{0}, \rmI_{d \times d})~.
    \tag{\textsf{ULA}}\label{eq:ULA}
\end{align*}
When the Brownian motion \(\rmd B_{t}\) is excluded in \ref{eq:LD}, we obtain an ODE that is the gradient flow of \(f\).
Analogously, without the Gaussian vector \(\xi_{k}\) in \ref{eq:ULA}, the iteration resembles the gradient descent algorithm with step size \(h\).
\ref{eq:ULA} has been shown to be \emph{biased} i.e., the limiting distribution of \(X_{k}\) as \(k \to \infty\) does not coincide with \(\targetdist\) \citep{durmus2017convergence,dalalyan2017theoretical,dalalyan2019user}.
Applying the Metropolis-Hastings filter to the proposal distribution defined by a single step of \ref{eq:ULA} leads to the Metropolis-adjusted Langevin algorithm (\textsf{MALA}) \citep{roberts1996exponential} which is \emph{unbiased}.
Much like \ref{eq:ULA}, \textsf{MALA} has seen a variety of non-asymptotic analyses in recent years (for e.g., \citet{dwivedi2018log,chewi2021optimal,durmus2022geometric,wu2022minimax}).
Other first-order methods related to the Langevin dynamics (\ref{eq:LD}) include the underdamped Langevin MCMC \citep{cheng2018convergence,eberle2019couplings} and Hamiltonian Monte Carlo (\textsf{HMC}) \citep{neal2011mcmc,durmus2017convergence,bou2020coupling}.

Turning to the task of optimising \(f\), the gradient flow of \(\potential{}\) can be modified to incorporate geometric information about the state space encoded as a metric \(\metric{}\), and this results in the natural gradient flow \citep{amari1998natural}, where the gradient \(\gradpotential{}\) is replaced by the \emph{natural gradient} as mentioned briefly earlier.
This \emph{natural gradient} is defined as \(\metric{}^{-1}\gradpotential{}\), and corresponds to the direction of steepest ascent under the metric \(\metric{}\).
Similarly for sampling, the Langevin dynamics (\ref{eq:LD}) can also be modified to incorporate this geometric information, and this leads to the weighted Langevin dynamics (\ref{eq:WLD}) \citep{girolami2011riemann}.
The SDE that defines \ref{eq:WLD} differs from \ref{eq:LD} in both the drift and the diffusion matrix: the drift consists of the natural gradient of \(\potential{}\) and an additional correction term \((\nabla \cdot \metric{}^{-1})\) that accounts for change in the diffusion matrix in this SDE.
The correction term ensures that the stationary distribution of the SDE is \(\targetdist\), a fact that is apparent by considering the equivalent PDE of the weighted Langevin dynamics.
The forward Euler-Maruyama discretisation of \ref{eq:WLD} leads to the weighted Langevin algorithm (\ref{eq:WLA}).
\begin{gather*}
    \begin{aligned}
        \rmd X_{t} &= \frakm(X_{t}) \rmd t + \sqrt{2} \cdot \metric{X_{t}}^{-\nicefrac{1}{2}}~\rmd B_{t} \\
        \frakm(X_{t}) &= (\nabla \cdot \metric{}^{-1})(X_{t}) - \metric{X_{t}}^{-1}\gradpotential{X_{t}}
    \end{aligned}
    \enskip\Leftrightarrow\enskip
    \partial_{t}\rho_{t} = \nabla \cdot \left(\rho_{t}\metric{}^{-1}\nabla \log \frac{\rho_{t}}{\targetdens}\right), ~\rho_{t} = \mathrm{Law}(X_{t})~.
    \tag{\textsf{WLD}}\label{eq:WLD} \\
    X_{k + 1} - X_{k} = h \cdot \frakm(X_{k}) + \sqrt{2h} \cdot \metric{X_{k}}^{-\nicefrac{1}{2}}~\xi_{k}; \quad \xi_{k} \sim \gaussian{\bm{0}}{\rmI_{d \times d}}~.\tag{\textsf{WLA}}\label{eq:WLA}
\end{gather*}
If \(\metric{}\) is \emph{position-independent}, the correction term \(\nabla \cdot \metric{}^{-1}\) is identically \(\bm{0}\).
Well-chosen position-independent metrics have proven to be useful by yielding faster mixing of the Markov chain (either of \ref{eq:WLA} or its Metropolis-adjusted variant) \citep{cotter2013mcmc,titsias2018auxiliary,titsias2024optimal} when the distribution \(\targetdist\) is highly anisotropic.
On the other hand, \emph{position-dependent} metrics enable leveraging the geometry of the state space better.
The PDE form of \ref{eq:WLD} also lends to the interpretation of \ref{eq:WLD} as the gradient flow of \(\mu \mapsto \KLdist(\mu \| \targetdist)\) with respect to the Wasserstein metric on the Riemannian manifold defined by the state space and the metric \(\metric{}\).
\citet{girolami2011riemann} propose a Metropolis adjustment to \ref{eq:WLA} which they call \textsf{ManifoldMALA}, and observe that setting \(\metric{}\) to be the Fisher information matrix in \textsf{ManifoldMALA} results in quicker mixing compared to \textsf{MALA} on a collection of unconstrained sampling problems.
In general however, computing \(\nabla \cdot \metric{}^{-1}\) for \ref{eq:WLA} can be difficult, as also previously highlighted in \citet{bourabee2014metropolis} who note that \ref{eq:WLA} corresponds to an Ermak-Cammon scheme for Brownian dynamics with hydrodynamic interactions.
\citet{bourabee2014metropolis} propose a novel discretisation for \ref{eq:WLD} based on a two-stage Runge-Kutta scheme in combination with a Metropolis adjustment, and give asymptotic guarantees for this discretisation.
A simpler but inaccurate discretisation of \ref{eq:WLD} can be defined by merely omitting the \(\nabla \cdot \metric{}^{-1}\) term in \ref{eq:WLA}, which turns \ref{eq:WLA} into the \emph{Preconditioned Langevin Algorithm} (\ref{eq:PLA}) defined below.
\begin{equation*}
    X_{k + 1} - X_{k} = -h \cdot \metric{X_{k}}^{-1}\gradpotential{X_{k}} + \sqrt{2h} \cdot \metric{X_{k}}^{-\nicefrac{1}{2}}~\xi_{k}~; \quad \xi_{k} \sim \gaussian{\bm{0}}{\rmI_{d \times d}} \tag{\textsf{PLA}}\label{eq:PLA}~.
\end{equation*}
It is important to note that even as \(h \to 0\), the stationary distribution of \ref{eq:PLA} is not\footnote{unless \(\metric{}\) is position-independent.} \(\targetdist\), and is also likely to carry a higher bias than \ref{eq:WLA} on account of being an unfaithful discretisation of \ref{eq:WLD}.
Despite this, a single step of \ref{eq:PLA} serves as a useful proposal distribution for a Metropolis-adjusted scheme as we demonstrate later.
Without \(\xi_{k}\) in \ref{eq:PLA}, this coincides with the natural gradient step with step size \(h\), and when \(\metric{} = \nabla^{2}\potential{}\), this resembles the Newton method.

When dealing with constrained distributions, special care has to be taken when working with discretisations of \ref{eq:LD} or its weighted variant \ref{eq:WLD}.
This is because iterates generated by \ref{eq:ULA} / \ref{eq:WLA} / \ref{eq:PLA} could escape \(\primalspace\) since the Gaussian random vector is unconstrained.
The projected Langevin algorithm \citep{bubeck2018sampling} modifies \ref{eq:ULA} by including an explicit projection onto \(\primalspace\), and is motivated by the projected gradient descent algorithm for optimising functions over a constrained feasibility set.
Another class of approaches for constrained sampling are mirror Langevin algorithm (which were briefly introduced in \cref{sec:intro}), and are based on discretisations of the mirror Langevin dynamics (\textsf{MLD}) \citep{zhang2020wasserstein,chewi2020exponential}.
The mirror Langevin dynamics are defined as follows; the mirror function \(\phi\) here is a Legendre type function.
\begin{gather*}
Y_{t} = \primtodual{X_{t}}; \quad \rmd Y_{t} = -\gradpotential{X_{t}}\rmd t + \sqrt{2} \cdot \hessmirror{X_{t}}^{\nicefrac{1}{2}}~\rmd B_{t}~.\tag{\textsf{MLD}\textsubscript{1}}\label{eq:MLD-primal-dual} \\
\rmd X_{t} = ((\nabla \cdot M)(X_{t}) - M(X_{t})\gradpotential{X_{t}}) \rmd t + \sqrt{2} \cdot M(X_{t})^{\nicefrac{1}{2}}\rmd B_{t}; \enskip M = \hessmirror{}^{-1}\tag{\textsf{MLD}\textsubscript{2}}\label{eq:MLD-primal}
~.
\end{gather*}
The assumption that \(\mirrorfunc{}\) is a Legendre type in conjunction with It\^{o}'s lemma establishes the equivalence of \ref{eq:MLD-primal-dual} and \ref{eq:MLD-primal} \citep{zhang2020wasserstein,jiang2021mirror}.
Through \ref{eq:MLD-primal}, we see that the mirror Langevin dynamics is a special case of \ref{eq:WLD} with \(\metric{} = \hessmirror{}\).
The two-step definition (\ref{eq:MLD-primal-dual}) is particularly amenable to discretisation as this automatically ensures feasibility of iterates when \(\primalspace\) is compact\footnote{
This is due to the fact that the range of \(\dualtoprim{}\) is \(\bbR^{d}\) when \(\primalspace\) is compact \citep[Corr. 13.3.1]{rockafellar1997convex}.
}.
The Euler-Maruyama discretisation of \ref{eq:MLD-primal-dual} is the mirror Langevin algorithm (\textsf{MLA}) \citep{zhang2020wasserstein,li2022mirror}.
Prior works \citep{ahn2021efficient,jiang2021mirror} propose other novel discretisations of \ref{eq:MLD-primal-dual} which are not practically feasible as they involve simulating a SDE.
These discretisations including \textsf{MLA} are \emph{biased}; for fixed \(h > 0\), the law of the iterate \(X_{k}\) does not converge to \(\targetdist\), and this is also a feature of \ref{eq:ULA} for unconstrained sampling, and projected \ref{eq:ULA} for constrained sampling.
\citet{srinivasan2024fast} propose augmenting a Metropolis-Hastings filter to the Markov chain induced by a single step of \textsf{MLA} (\MAMLA{}, discussed previously in \cref{sec:intro}), which eliminates the bias that \textsf{MLA} carries.
Implicitly, both \textsf{MLA} and \MAMLA{} make two key assumptions: that \(\primalspace\) is compact, and that \(\dualtoprim{}\) can be computed exactly.
The compactness of \(\primalspace\) ensures that \(\primalspace^{\star} = \bbR^{d}\), which is necessary to generate a proposal.
When \(\primalspace\) is not compact, \(\primalspace^{\star}\) is not necessarily \(\bbR^{d}\), and an explicit projection onto \(\primalspace^{\star}\) is required for the proposal to be well-defined.
For \MAMLA{}, it suffices to induce an explicit rejection of dual proposals that don't belong in \(\primalspace^{\star}\), but this requires a membership oracle for \(\primalspace^{\star}\) instead, which is non-trivial to obtain.
With regards to computing \(\dualtoprim{}\), while closed form expression for \(\mirrorfuncdual{}\) exist in certain special cases \(\mirrorfunc{}\), it generally does cannot be derived. In such cases, \(\dualtoprim{}\) can only be computed approximately by solving the convex problem which defines the convex conjugate to within a certain tolerance.

In addition to proposing \textsf{ManifoldMALA}, \citet{girolami2011riemann} also adapt Hamiltonian Monte Carlo to the geometry of the state space and propose the \emph{Riemannian} Hamiltonian Monte Carlo algorithm (\textsf{RHMC}).
The primary challenge with \textsf{RHMC} pertains to implementation, as its discretisation is not as straightforward as \ref{eq:WLA} for \ref{eq:WLD}.
\citet{lee2018convergence,gatmiry2024sampling} derive guarantees for \textsf{RHMC} given an ideal integrator, while \citet{kook2023condition,noble2023unbiased} investigate discretised integrators for \textsf{RHMC}.
Recent advances in proximal methods for sampling \citep{chen2022improved} have also led to novel samplers for sampling over non-Euclidean spaces using the log-Laplace transform \citep{gopi2023algorithmic}, and for uniform sampling from compact \(\primalspace\) under minimal assumptions \citep{kook2024and}.
Another class of first-order approaches for constrained sampling \citep{brosse2017sampling,gurbuzbalaban2024penalized} propose obtaining a good unconstrained approximation \(\widetilde{\targetdist}\) of the target \(\targetdist\), but this does not eliminate the likelihood of iterates lying outside \(\primalspace\).
This enables running the simpler unadjusted Langevin algorithm over \(\widetilde{\targetdist}\) to obtain approximate samples from \(\targetdist\).
Recent work by \citet{bonet2024mirror} translate mirror and natural gradient methods for minimising functions over Euclidean spaces to instead minimise functionals over space of probability measures endowed with the Wasserstein metric.

\section{The Metropolis-adjusted Preconditioned Langevin Algorithm}
\label{sec:algo}
Here, we formally introduce the Metropolis-adjusted Preconditioned Langevin Algorithm (\nameref{alg:mapla}).
Prior to this, we first present some notation not introduced previously.

\textbf{Notation}~~The set of \(d \times d\) symmetric positive definite matrices is denoted by \(\bbS_{+}^{d}\).
For \(A \in \bbS_{+}^{d}\), and \(x, y \in \bbR^{d}\), we use \(\langle x, y\rangle_{A}\) to denote \(\langle x, Ay\rangle\) and \(\|x\|_{A} = \sqrt{\langle x, x\rangle_{A}}\).
When the subscript \(A\) is omitted as in \(\langle x, y\rangle\), then this corresponds to the \(A = \rmI_{d \times d}\).
The \(\ell_{p}\) norm of \(x\) is denoted by \(\|x\|_{p}\).
For a distribution \(\nu\), \(d\nu(x)\) denotes its density at \(x\) (if it exists) unless otherwise specified.

We first give an introduction to Markov chains and the Metropolis adjustment more generally, expanding on the brief description in \Cref{sec:intro}.
A (time-homogeneous) Markov chain over domain \(\primalspace\) is defined as a collection of transition probability measures \(\bfP = \{\calP_{x} : x \in \primalspace\}\), and we refer to \(\calP_{x}\) as the \emph{one-step} distribution associated with \(x\).
Assume that for every \(x \in \primalspace\), the proposal distribution \(\calP_{x}\) has density function \(p_{x}\).
The transition operator \(\bbT_{\bfP}\) on the space of probability measures is defined as
\begin{equation*}
    (\bbT_{\bfP}\mu)(S) = \int_{\calK} \calP_{y}(S) \cdot \rmd\mu(y)~\qquad \text{and } (\bbT_{\bfP}^{k}\mu)(S) = \left(\bbT_{\bfP}(\bbT_{\bfP}^{k-1}\mu)\right)(S)~.
\end{equation*}
A probability measure \(\nu\) is a \emph{stationary} measure of a Markov chain \(\bfP\) if \(\bbT_{\bfP}\nu = \nu\).
If \(\bfP\) is \emph{ergodic}, then \(\nu\) is the unique stationary measure.
A related but stronger notion is \emph{reversibility}; a Markov chain \(\bfP\) is reversible with respect to \(\nu\) if for any measurable subsets of \(A, B\) of \(\primalspace\),
\begin{equation*}
    \int_{A}\calP_{x}(B) \cdot \rmd\nu(x) = \int_{B}\calP_{y}(A) \cdot \rmd\nu(y)~.
\end{equation*}
If \(\bfP\) is ergodic and reversible with respect to \(\nu\), then \(\nu\) is the stationary measure of \(\bfP\).

The Metropolis adjustment is one technique to produce a new Markov chain that is reversible with respect to a certain distribution.
Formally, suppose we intend to adjust (i.e., make reversible) \(\bfP\) with respect to a distribution \(\targetdist\) whose density is \(\targetdens\).
The Metropolis adjustment works by including an explicit accept-reject step called the Metropolis-Hastings (MH) filter.
The implementation of the MH filter involves computing an \emph{acceptance ratio} between feasible points \(X\) and \(Z\) defined as
\begin{equation}
\label{eq:MH-accept-def}
    p_{\mathrm{accept}}(Z; X) \defeq \min\left\{1, \frac{\targetdens(Z) \cdot p_{Z}(X)}{\targetdens(X) \cdot p_{X}(Z)}\right\} = \min\left\{1, \frac{e^{-\potential{Z}}}{e^{-\potential{X}}} \cdot \frac{p_{Z}(X)}{p_{X}(Z)}\right\}~.
\end{equation}
Notably, this only requires the unnormalised target density function \(e^{-\potential{}}\) unlike rejection sampling.
The Markov chain formed by the Metropolis adjustment is also known to be optimal in a certain sense \citep{billera2001geometric}, and is generally preferred to other similar techniques like the Barker correction which uses \(\frac{A}{1 + A}\) in lieu of \(\min\{1, A\}\) in \cref{eq:MH-accept-def}.

Consider the Markov chain \(\bfP\) defined by a single step of \ref{eq:PLA}.
From any \(x \in \interior{\primalspace}\)\footnote{Since \(\primalspace\) is convex, \(\partial\primalspace\) is a Lebesgue null set, and hence points on the boundary can be safely disregarded.}, \ref{eq:PLA} returns
\begin{equation*}
    x' = x - h \cdot \metric{x}^{-1}\gradpotential{x} + \sqrt{2h} \cdot \metric{x}^{-\nicefrac{1}{2}}~\xi
\end{equation*}
where \(\xi\) is independently drawn from \(\gaussian{\bm{0}}{\rmI_{d \times d}}\).
Hence, for any \(x \in \interior{\primalspace}\), \(\calP_{x}\) is defined as
\begin{equation*}
    \calP_{x} = \gaussian{x - h \cdot \metric{x}^{-1}\gradpotential{x}}{2h \cdot \metric{x}^{-1}}~.
\end{equation*}
As introducted in \Cref{sec:intro}, \nameref{alg:mapla} results from performing a Metropolis-adjustment of the Markov chain \(\bfP\) with respect to \(\targetdist\).
We use \(\bfT = \{\calT_{x} : x \in \primalspace\}\) to denote this Metropolis-adjusted version of \(\bfP\), where \(\calT_{x}\) is the one-step distribution from \(x \in \primalspace\) per \nameref{alg:mapla}.

\begin{algorithm}[t]
\DontPrintSemicolon

\caption{Metropolis-adjusted Preconditioned Langevin Algorithm (\MAPLA{})}
\algotitle{\MAPLA{}}{alg:mapla}

\SetKwInOut{Input}{Input}
\SetKwInOut{Output}{Output}
\SetAlgoLined
\Input{Potential \(\potential{}\) of \(\targetdist\), convex support \(\primalspace \subset \bbR^{d}\), metric \(\metric{} : \interior{\primalspace} \to \bbS_{+}^{d}\), step size \(h > 0\), iterations \(K\), initial distribution \(\targetdist_{0}\)}

    Sample \(x_{0} \sim \Pi_{0}\).

    \For{\(k \leftarrow 0\) \KwTo \(K - 1\)}{
    Sample a random vector \(\xi_{k} \sim \gaussian{\bm{0}}{\rmI_{d \times d}}\).

    Generate proposal \(z = x_{k} - h \cdot \metric{x_{k}}^{-1}\gradpotential{x_{k}} + \sqrt{2h} \cdot \metric{x_{k}}^{-\nicefrac{1}{2}}~\xi_{k}\).\label{alg:proposal-step}

    \eIf{\(z \not\in \primalspace\)}{
        Set \(x_{k + 1} = x_{k}\).
    }{
        Compute acceptance ratio \(p_{\mathrm{accept}}(z; x_{k})\) defined in \cref{eq:MH-accept-def}, where \(p_{y}\) is the density of \(\gaussian{y - h \cdot \metric{y}^{-1}\gradpotential{y}}{2h \cdot \metric{y}^{-1}}\).\label{alg:acceptance-ratio-step}

        Obtain \(U \sim \mathrm{Unif}([0, 1])\).

        \eIf{\(U \leq p_{\mathrm{accept}}(z; x_{k})\)}{
            Set \(x_{k + 1} = z\).
        }{
            Set \(x_{k + 1} = x_{k}\).
        }
    }
}

\Output{\(x_{K}\)}
\end{algorithm}

\subsection{Examples of metric \(\metric{}\) for certain domains}
\label{sec:algo:examples}

Let \(\primalspace\) be a polytope of the form \(\{x : x \in \bbR^{d},~a_{i}^{\top}x \leq b_{i} ~\forall ~i \in [m]\}\) where \(m \geq d\).
A natural choice of \(\metric{}\) for this domain is the Hessian of its log-barrier function, and can be generalised to incorporate weights for each linear constraint as
\begin{equation*}
    \metric{}_{\{Ax \leq b\}}(x) = A_{x}^{\top}WA_{x} \text{ where } [A_{x}]_{i}^{\top} = \frac{a_{i}}{(b_{i} - a_{i}^{\top}x)} \in \bbR^{d} \text{ and } W = \mathrm{diag}(\bm{w})~.
\end{equation*}
The vector \(\bm{w} \in \bbR^{m}\) contains non-negative entries, and may depend on \(x\).
Special cases include
\begin{enumerate}[itemsep=0pt, leftmargin=*]
\item the Vaidya metric \citep{vaidya1996new}, where \(w_{i} = \frac{d}{m} + a_{i}^{\top}(A_{x}^{\top}A_{x})^{-1}a_{i}\),
\item the John metric \citep{gustafson2018john}, where \(\bm{w}\) is defined as
\begin{equation*}
    \bm{w} = \argmin_{w \in \bbR^{m}_{+}} \log \det A_{x}^{\top}WA_{x} \quad \text{such that } \bm{1}^{\top}w = m~.
\end{equation*}
\item the \(p\)-Lewis-weights metric \citep{lee2019solving} which modifies the John metric to ensure \(\bm{w}\) varies smoothly as a function \(x\).
More precisely, it is defined as
\begin{equation*}
    \bm{w} = \argmax_{w \in \bbR^{m}_{+}} -\log \det (A_{x}^{\top}W^{1 - \nicefrac{2}{p}}A_{x}) + (1 - \nicefrac{2}{p})\cdot \bm{1}^{\top}w
\end{equation*}
\end{enumerate}

Polytopes are regions defined by intersection of half-spaces defined by a collection of linear inequalities.
Moving past linear inequalities, we consider quadratic inequalities, which define ellipsoids.
An ellipsoid is defined as \(\left\{x : x \in \bbR^{d},~\|x - c\|_{D}^{2} \leq 1\right\}\) where \(D \in \bbR^{d \times d}\) is a symmetric positive definite matrix and \(c \in \bbR^{d}\).
Analogous to the polytope, a natural choice of \(\metric{}\) here is the Hessian of its log-barrier function.
An ellipsoid can also be interpreted at the \(1\)-sublevel set of a quadratic function.
Relatedly, the epigraph of this quadratic function is the indexed union of all its \(t\)-sublevel sets for \(t \geq 0\), and this arises when lifting the Gaussian distribution as remarked previously in \cref{sec:intro}.
An example of a metric for this domain is
\begin{equation*}
    \metric{}_{\mathrm{Ellip}(c, D)}(x, t) = \nabla^{2}_{(x, t)} \varphi_{\mathrm{Ellip}(c, D)}(x, t)~; \qquad \varphi_{\mathrm{Ellip}(c, D)}(x, t) = -\log(t - \|x - c\|_{D}^{2})~.
\end{equation*}

Generalising from the \(\ell_{1}\)-ball (a polytope formed by \(m = 2^{d}\) constraints), and the \(\ell_{2}\)-ball (an ellipsoid with \(c = \bm{0}\) and \(D = \rmI_{d \times d}\)), we have the \(\ell_{p}\)-ball for an arbitrary \(p \geq 1\) defined as \(\{x : x \in \bbR^{d}, ~\|x\|_{p}^{p} \leq 1\}\).
Using the Hessian of the log-barrier of \(-\log(1 - \|x\|_{p}^{p})\) as the metric for this domain poses non-trivial difficulties theoretically.
To circumvent this, we use the following equivalence
\begin{equation*}
    \|x\|_{p} \leq 1 \Leftrightarrow \exists ~v \in \bbR^{d} \text{ such that } |x_{i}|^{p} \leq v_{i}~\forall~ i \in \{1, \ldots, d\} \text{ and }\sum_{i=1}^{d}v_{i} \leq 1~,
\end{equation*}
which suggests considering an extended domain in \(\bbR^{2d}\).
This extended domain is the intersection of the halfspace \(\{v : v \in \bbR^{d}, \bm{1}^{\top}v \leq 1\}\) and the product of \(d\) subsets of \(\bbR^{2}\) given as \(\prod_{i=1}^{d}\{(x_{i}, v_{i}) : |x_{i}|^{p} \leq v_{i}\}\).
Therefore, we can define a metric for the extended domain given a metric \(\metric{}_{2,p}\) for the set \(\{(y, t) : |y|^{2} \leq t\} \subset \bbR^{2}\).
A popular choice for \(\metric{}_{2,p}(y, t)\) is \(\nabla_{(y, t)}^{2}\varphi_{\ell_{p}}(y, t)\) where \(\varphi_{\ell_{p}}(y, t) = -\log(t) - \log(t^{\nicefrac{2}{p}} - y^{2})\), with which we can be define the metric for the extended domain as
\begin{equation*}
    \metric{}_{p}(x, v) = \frakP \left(\bigoplus_{i=1}^{d}\metric{}_{2,p}(x_{i}, v_{i})\right) \frakP^{\top} +
    \bm{0}_{d \times d} \oplus \metric{}_{\{\bm{1}^{\top}v \leq 1\}}(v)~.
\end{equation*}
The operation \(A \oplus B\) creates a block diagonal matrix with \(A\) and \(B\) on the diagonal\footnote{also referred to as the \emph{direct sum}.}, and \(\frakP\) is a permutation matrix that ensures consistency with respect to the ordering of inputs (\(x\) first, \(v\) second).
This extension procedure is more generally applicable to level sets of separable functions \citep[\S 5.4.8]{nesterov2018lectures}.
Another example is the entropic ball defined as \(\{x : x \in \bbR^{d}_{+},~\sum_{i=1}^{d}x_{i}\log x_{i} \leq 1\}\).
Similar to the \(\ell_{p}\)-ball, the metric \(\metric{}_{\mathsf{ent}}\) for the extended entropic ball is dependent on the metric \(\metric{}_{2,\mathsf{ent}}(y, t)\) for the 2-dimensional set \(\{(y, t) : y\log y \leq t\}\).
A viable option for \(\metric{}_{2,\mathsf{ent}}(y, t)\) is \(\nabla_{(y, t)}^{2}\varphi_{\mathsf{ent}}(y, t)\) where \(\varphi_{\mathsf{ent}}(y, t) = -\log y - \log(t - y \log y)\).

We refer the reader to \citet[\S 5.2]{nesterov1994interior} for a general calculus to combine metrics.
Their specific focus is the self-concordance of these combinations, which is central to the analysis of interior-point methods for optimisation.
\citet{kook2024gaussian} investigate other properties (``self-concordance\textsubscript{++}'') of such metrics which are pertinent for constrained sampling, and are useful for deriving mixing time guarantees for \Dikin{} and as we show, for \nameref{alg:mapla}.

\subsection{Implementing \nameref{alg:mapla}}

Given a potential \(\potential{}\) and metric \(\metric{}\) and a black-box membership oracle for \(\primalspace\) after any pre-processing, the key computational steps in each iteration of \nameref{alg:mapla} are (1) generating the proposal \(z\) (line \ref{alg:proposal-step}), and (2) computing the acceptance ratio \(p_{\mathrm{accept}}(z; x_{k})\) (line \ref{alg:acceptance-ratio-step}) (if \(z \in \primalspace\)).
In this discussion, we give practically applicable approaches for the aforementioned key steps as follows.

\paragraph{Generating a proposal}
This step can be decomposed into three steps.
\begin{equation*}
    \underbrace{\widetilde{\xi_{k}} \sim \gaussian{\bm{0}}{\metric{x_{k}}^{-1}}}_{\text{Step \ref{alg:proposal-step}.1}}\quad ; \quad \underbrace{v_{x_{k}} = \metric{x_{k}}^{-1}\gradpotential{x_{k}}}_{\text{Step \ref{alg:proposal-step}.2}} \quad ; \quad \underbrace{z = x_{k} - h \cdot v_{x_{k}} + \sqrt{2h} \cdot \widetilde{\xi_{k}}}_{\text{Step \ref{alg:proposal-step}.3}}~.
\end{equation*}
\begin{description}
\item [Step \ref{alg:proposal-step}.1:] In the general case, the most computationally efficient and numerically stable method to sample from \(\gaussian{\bm{0}}{\metric{x_{k}}^{-1}}\) is to (1) compute the Cholesky decomposition of \(\metric{x_{k}}\), and (2) perform a triangular solve with \(\xi_{k} \sim \gaussian{\bm{0}}{\rmI_{d}}\).
The correctness of this method is due to the following: let \(L_{x_{k}}\) be the lower-triangular Cholesky factor satisfyin \(L_{x_{k}}L_{x_{k}}^{\top} = \metric{x_{k}}\).
Then, \(L_{x_{k}}^{-\top}\xi_{k}\) has distribution \(\gaussian{\bm{0}}{L_{x_{k}}^{-\top}L_{x_{k}}^{-1}} = \gaussian{\bm{0}}{\metric{x_{k}}^{-1}}\).

\item [Step \ref{alg:proposal-step}.2:] The vector \(v_{x_{k}}\) is the solution to the system \(\metric{x_{k}}v_{x_{k}} = \gradpotential{x_{k}}\) and with the Cholesky decomposition can be equivalently written as \(L_{x_{k}}L_{x_{k}}^{\top}v_{x_{k}} = \gradpotential{x_{k}}\).
Thus, \(v_{x_{k}}\) can be computed through two triangular solves.

\item [Step \ref{alg:proposal-step}.3:] This is a combination of scalar multiplication and vector addition operations.
\end{description}

\paragraph{Computing the acceptance ratio}

We instead work with log acceptance ratio as it is more numerically stable.
By definition,
\begin{multline*}
    \log \frac{\targetdens(z)p_{z}(x_{k})}{\targetdens(x_{k})p_{x_{k}}(z)} = f(x_{k}) - f(z) + \frac{1}{2}\log \det \metric{z}\metric{x_{k}}^{-1} \\
    + \frac{1}{4h} \cdot \left(\|z - x_{k} + h \cdot \metric{x_{k}}^{-1}\gradpotential{x_{k}}\|_{\metric{x_{k}}}^{2} - \|x_{k} - z + h \cdot \metric{z}^{-1}\gradpotential{z}\|_{\metric{z}}^{2}\right)~.
\end{multline*}

From the expression on the right hand side, there are two important quantities
\begin{equation*}
    \underbrace{\log \det \metric{z} \metric{x_{k}}^{-1}}_{\text{Step \ref{alg:acceptance-ratio-step}.1}}\quad ; \quad \underbrace{\|z - x_{k} + h \cdot \metric{x_{k}}^{-1}\gradpotential{x_{k}}\|_{\metric{x_{k}}}^{2} - \|x_{k} - z + h \cdot \metric{z}^{-1}\gradpotential{z}\|_{\metric{z}}^{2}}_{\text{Step \ref{alg:acceptance-ratio-step}.2}}~.
\end{equation*}

Let the Cholesky factor of \(\metric{z}\) be \(L_{z}\)\footnote{lower triangular by convention.} which satisfies \(L_{z}L_{z}^{\top} = \metric{z}\).
This Cholesky factor aids with both computational steps as we demonstrate below.
\begin{description}
    \item [Step \ref{alg:acceptance-ratio-step}.1:] Since \(\det A^{-1} = \frac{1}{\det A}\) and \(\det A = \det A^{\top}\), we have
    \begin{equation*}
        \frac{1}{2}\log \det \metric{z}\metric{x_{k}}^{-1} = \frac{1}{2}\log \det \metric{z} - \frac{1}{2}\log \det \metric{x_{k}} = \log \det L_{z} - \log \det L_{x_{k}}~.
    \end{equation*}
    Since \(L_{x_{k}}\) and \(L_{z}\) are triangular, their log determinants can be easily computed by summing the log of the values on their diagonals.

    \item [Step \ref{alg:acceptance-ratio-step}.2:] 
    Recall that \(z = x_{k} - h \cdot \metric{x_{k}}^{-1}\gradpotential{x_{k}} + \sqrt{2h} \cdot \metric{x_{k}}^{-\nicefrac{1}{2}}~\xi_{k}\), where \(\xi_{k}\) is from line 3.
    This leads to two essential simplifications.
    \begin{align*}
        \|z - x_{k} + h \cdot \metric{x_{k}}^{-1}\gradpotential{x_{k}}\|_{\metric{x_{k}}}^{2} &= 2h \cdot \|\xi_{k}\|^{2}~, \\
        \|x_{k} - z + h \cdot \metric{z}^{-1}\gradpotential{z}\|_{\metric{z}}^{2} &= \|h \cdot (\metric{z}^{-1}\gradpotential{z} + \metric{x_{k}}^{-1}\gradpotential{x_{k}}) - \sqrt{2h} \cdot \widetilde{\xi_{k}}\|_{\metric{z}}^{2} \\
        &= \|L_{z}\{h \cdot (v_{z} + v_{x_{k}}) - \sqrt{2h} \cdot \widetilde{\xi_{k}}\}\|^{2}~.
    \end{align*}
    In the second equation, \(\widetilde{\xi_{k}}\) is from Step \ref{alg:proposal-step}.1.
    Hence for this step, we require computing \(v_{z} = \metric{z}^{-1}\gradpotential{z}\), which can be obtained in the same manner as Step \ref{alg:proposal-step}.2, and an additional triangular matrix-vector product to compute \(L_{z}\{h \cdot (v_{z} + v_{x_{k}}) - \sqrt{2h} \cdot \widetilde{\xi_{k}}\}\).
    The squared norm of a vector is the sum of squares of its entries, which completes the computation.
\end{description}

Notwithstanding elementwise operations, this procedure suggests that the cost of lines \ref{alg:proposal-step} and \ref{alg:acceptance-ratio-step} together scale as \(5 \cdot \mathrm{Cost}_{\mathrm{TriSol}} + 2 \cdot \mathrm{Cost}_{\mathrm{CholDec}} + \mathrm{Cost}_{\mathrm{TriMatVec}}\).
This naive analysis disregards the likelihood of \(z\) being accepted, and rejection of \(z\) could either be due to the proposal escaping \(\primalspace\) (which eliminates the acceptance ratio calculation), or due to the Metropolis-Hastings accept-reject step.
If \(z\) is accepted, then \(L_{z}\) and \(v_{z}\) would not have to be recomputed in the following iteration as they can be cached.
We can draw a comparison to \Dikin{}, which is similar to \nameref{alg:mapla} as elucidated in \cref{sec:intro}.
The key difference here is that \Dikin{} avoids having to compute \(v_{z}\) and \(v_{x_{k}}\) which are based on triangular solves.
Nonetheless, the computation complexity of both \nameref{alg:mapla} and \Dikin{} is dominated by the Cholesky decomposition used to compute the difference in log determinants in the acceptance ratio.
We can also compare \MAMLA{} with mirror function \(\phi\) and \nameref{alg:mapla} with \(\metric{} = \hessmirror{}\).
The computational complexity of both these methods scale similar notwithstanding the cost of \(\dualtoprim{}\) in \MAMLA{}.
However, as highlighted previously in \cref{sec:back-rel-work}, it might not be possible to compute \(\dualtoprim{}\) in the general setting, and hence the complexity of this operation is difficult to ascertain.
We believe that in scenarios where a closed-form expression for \(\dualtoprim{}\) does not exist, this would be the computationally dominating step in \MAMLA{}.

\section{Mixing time guarantees for \nameref{alg:mapla}}
\label{sec:mixing-guarantees}
In this section, we state our main theorems that provide upper bounds on the mixing time for \nameref{alg:mapla} under certain sufficient conditions as previously alluded to in \cref{tab:mixing-time-scalings}.
We begin with certain preliminaries that include the definitions of the conditions that we assume on the metric \(\metric{}\) and potential \(\potential{}\) to provide mixing time guarantees.

\subsection{Preliminaries}
\label{sec:mixing-guarantees:prelims}
We first present some additional notation that will be used henceforth in this work.

\textbf{Additional Notation~~}
Let \(B\) be a \(d\times d\) matrix.
The operator and Frobenius norms of \(B\) are denoted by \(\|B\|_{\op}\) and \(\|B\|_{\frob}\) respectively.
For a smooth map \(g\) and \(x\) in its domain, the directional derivative of \(g\) at \(x\) in direction \(v\) is denoted by \(\rmD g(x)[v]\).
The (second-order) directional derivative of \(g\) at \(x\) in directions \(v, w\) is denoted by \(\rmD^{2} g(x)[v, w]\), which is equal to both \(\rmD(\rmD g(x)[v])[w]\) and \(\rmD(\rmD g(x)[w])[v]\) as \(g\) is smooth.
Given a set \(\calA\), the set of all of its measurable subsets is denoted by \(\calF(\calA)\), and the interior and boundary of \(\calA\) are denoted by \(\interior{\calA}\) and \(\partial\calA\) respectively.

\subsubsection{Classes of metrics}
\label{sec:prelims:precond}

In the rest of this paper, we assume the following regularity conditions about the metric: \(\metric{}\) is only defined on \(\interior{\primalspace}\) and becomes unbounded as it approaches the boundary i.e., \(\|\metric{x_{k}}\|_{\op} \to \infty\) for any sequence \(\{x_{k}\} \to \partial\primalspace\), and is twice differentiable.
The first two conditions ensures that the solution to the continuous time dynamics (\ref{eq:WLD}) stays within \(\primalspace\).
The examples of the metrics for various domains given in \cref{sec:algo} satisfy these regularity conditions.

\paragraph{Self-concordance}
This classical property is key in the analysis of interior points methods for constrained optimisation \citep{nesterov1994interior}, and quantifies the rate of change of a matrix-valued function in a certain sense as defined below.

\begin{definition}
\label{def:SC}
The metric \(\metric{} : \interior{\primalspace} \to \bbS_{+}^{d}\) is \emph{self-concordant} if for all \(x \in \interior{\primalspace}\) and \(v \in \bbR^{d}\)
\begin{equation*}
    |\rmD \metric{x}[v, v, v]| \leq 2 \cdot \|v\|_{\metric{x}}^{3}~.
\end{equation*}
\end{definition}

The design of \Dikin{} and \MAPLA{} implicitly assume the invertibility of the metric \(\metric{}\).
Notably, when \(\metric{}\) is self-concordant and its domain \(\interior{\primalspace}\) contains no straight lines, then \(\metric{}\) is always invertible; we give a proof of this assertion in \cref{sec:app:conseq-SC}.
The definition above is equivalent to the following \citep[Corr. 5.1.1]{nesterov2018lectures}.
\begin{equation*}
    \forall~x \in \interior{\primalspace},~v \in \bbR^{d}, \quad \|\metric{x}^{-\nicefrac{1}{2}}\rmD\metric{x}[v]\metric{x}^{-\nicefrac{1}{2}}\|_{\op} \leq 2 \cdot \|v\|_{\metric{x}}~.
\end{equation*}

\paragraph{Strong self-concordance}
This property \citep{laddha2020strong} and replaces the operator norm in the equivalent characterisation of self-concordance by the Frobenius norm.
This is a stronger notion than self-concordance due to the fact that \(\|A\|_{\op} \leq \|A\|_{\frob}\) for any matrix \(A\).

\begin{definition}
\label{def:SSC}
The metric \(\metric{} : \interior{\primalspace} \to \bbS_{+}^{d}\) is \emph{strongly self-concordant} if for all \(x \in \interior{\primalspace}\) and \(v \in \bbR^{d}\),
\begin{equation*}
    \|\metric{x}^{-\nicefrac{1}{2}}\rmD \metric{x}[v]\metric{x}^{-\nicefrac{1}{2}}\|_{\frob} \leq 2 \cdot \|v\|_{\metric{x}}~.
\end{equation*}
\end{definition}

\paragraph{Symmetry}
The role of the symmetrised set \(\primalspace \cap 2x - \primalspace\) for \(x \in \interior{\primalspace}\) was originally observed by \citet{gustafson2018john} in their study of the John walk, which was separately isolated by \citet{laddha2020strong} as property of metrics.
This property yields an isoperimetric inequality that results in mixing time bounds for several constrained sampling algorithms discussed previously.

For any \(x \in \interior{\primalspace}\), and \(r > 0\), the Dikin ellipsoid of radius \(r\) (denoted by \(\calE_{x}^{\metric{}}(r)\)) is defined as
\begin{equation}
\label{eq:dikin-ellipsoid}
    \calE_{x}^{\metric{}}(r) = \{y : \|y - x\|_{\metric{}} < r\}~.
\end{equation}

\begin{definition}
\label{def:symm}
The metric \(\metric{} : \interior{\primalspace} \to \bbS_{+}^{d}\) is said to be \emph{symmetric} with parameter \(\nu \geq 1\) if for any \(x \in \interior{\primalspace}\),
\begin{equation*}
    \calE_{x}^{\metric{}}(1) \subseteq \primalspace \cap (2x - \primalspace) \subseteq \calE_{x}^{\metric{}}(\sqrt{\nu})~.
\end{equation*}
\end{definition}

\paragraph{Lower trace and average self-concordance}

The properties were recently proposed in \citet{kook2024gaussian} and are abstractions from prior analyses of \Dikin{} \citep{sachdeva2016mixing,narayanan2017efficient}.
Specifically, these analyses were catered to the setting where \(\primalspace\) is a polytope and \(\metric{} = \hessmirror{}\) for \(\mirrorfunc{}\) being the log-barrier function of the polytope.
The salient features of these analyses that were abstracted by \citet{kook2024gaussian} for general metrics are: (1) a lower bound on the curvature of the function \(x \mapsto \log \det \metric{x}\), and (2) an upper bound on the likelihood of \(\|x - z\|_{\metric{x}}^{2} - \|x - z\|_{\metric{z}}^{2}\) being large for a Dikin proposal \(z\) from \(x\).
Lower trace self-concordance (along with strong self-concordance) yields the first property, and average self-concordance yields the second property, which are defined below.

\begin{definition}
\label{def:LTSC}
The metric \(\metric{} : \interior{\primalspace} \to \bbS_{+}^{d}\) is said to be \emph{lower trace self-concordant} with parameter \(\alpha \geq 0\) if for all \(x \in \interior{\primalspace}\) and \(v \in \bbR^{d}\),
\begin{equation*}
    \trace(\metric{x}^{-1}\rmD^{2}\metric{x}[v, v]) \geq -\alpha \cdot \|v\|_{\metric{x}}^{2}~.
\end{equation*}
\end{definition}

\begin{definition}
\label{def:ASC}
The metric \(\metric{} : \interior{\primalspace} \to \bbS_{+}^{d}\) is said to be \emph{average self-concordant} if for any \(x \in \interior{\primalspace}\) and \(\varepsilon > 0\), there exists \(r_{\varepsilon} > 0\) such that for any \(h \in (0, \frac{r_{\varepsilon}^{2}}{2d}]\),
\begin{equation*}
    \bbP_{\xi \sim \calN(x, 2h \cdot \metric{x}^{-1})}\left(\|\xi - x\|_{\metric{\xi}}^{2} - \|\xi - x\|_{\metric{x}}^{2} \leq 4h \cdot \varepsilon \right) \geq 1 - \varepsilon
\end{equation*}
\end{definition}

When the metric \(\metric{}\) satisfies strong, lower-trace and average self-concordance, it is said to satisfy \emph{self-concordant\textsubscript{++}} as stated in \cref{tab:mixing-time-scalings}.
The examples of metrics for various domains discussed previously in \cref{sec:algo:examples} are self-concordant\textsubscript{++} and symmetric.
A more comprehensive discussion about these properties are given in \citet[Sec. 3.4]{kook2024gaussian}.

\subsubsection{Function classes}
\label{sec:prelims:func}

\paragraph{Curvature lower and upper-boundedness}
This is a generalisation of the standard second-order definitions of smoothness and convexity where \(\metric{} = \rmI_{d \times d}\).
These are related to relative convexity and smoothness \citep{bauschke2017descent,lu2018relatively}.
Specifically, if \(\potential{}\) is \(\lambda\)-relatively smooth (or \(\mu\)-relatively convex) with respect to \(\psi\), then \(\potential{}\) satisfies a \((\lambda, \nabla^{2}\psi)\)-curvature upper bound (or \((\mu, \nabla^{2}\psi)\)-curvature lower bound) respectively.
When the metric \(\metric{}\) is self-concordant and the potential \(\potential{}\) satisfies \((\mu, \metric{})\)-curvature lower bound for \(\mu > 0\), \(\targetdist\) satisfies an isoperimetric inequality which is also key for our mixing time guarantees.

\begin{definition}
\label{def:curv-lower-bdd}
Given a metric \(\metric{} : \interior{\primalspace} \to \bbS_{+}^{d}\), the potential \(\potential{} : \interior{\primalspace} \to \bbR\) satisfies a \emph{\((\mu, \metric{})\)-curvature lower bound} where \(\mu \geq 0\) if for any \(x \in \interior{\primalspace}\),
\begin{equation*}
    \hesspotential{x} \succeq \mu \cdot \metric{x}~.
\end{equation*}
\end{definition}

\begin{definition}
\label{def:curv-upper-bdd}
Given a metric \(\metric{} : \interior{\primalspace} \to \bbS_{+}^{d}\), the potential \(\potential{} : \interior{\primalspace} \to \bbR\) satisfies a \emph{\((\lambda, \metric{})\)-curvature upper bound} where \(\lambda \geq 0\) if for any \(x \in \interior{\primalspace}\),
\begin{equation*}
    \hesspotential{x} \preceq \lambda \cdot \metric{x}~.
\end{equation*}
\end{definition}

\paragraph{Gradient upper bound}
This class of functions can be viewed as an extension of the standard notion of Lipschitz continuity for differentiable functions to take into account the metric.
This property has been useful in the analysis of algorithms based on the mirror Langevin dynamics \citep{ahn2021efficient,srinivasan2024fast} which is equivalent to setting \(\metric{} = \hessmirror{}\).

\begin{definition}
\label{def:rel-lips}
Given a metric \(\metric{} : \interior{\primalspace} \to \bbS_{+}^{d}\), the potential \(\potential{} : \interior{\primalspace} \to \bbR\) satisfies a \emph{\((\beta, \metric{})\)-gradient upper bound} where \(\beta \geq 0\) if for any \(x \in \interior{\primalspace}\),
\begin{equation*}
    \|\gradpotential{x}\|_{\metric{x}^{-1}} \leq \beta~.
\end{equation*}
\end{definition}

\subsubsection{Conductance and mixing time}
\label{sec:prelims:mixing}

Let \(s \in (0, \nicefrac{1}{2})\).
The \(s\)-conductance of a Markov chain \(\bfP = \{\calP_{x} : x \in \primalspace\}\) with stationary distribution \(\nu\) supported on \(\primalspace\) is defined as
\begin{equation*}
    \Phi_{\bfP}^{s} = \inf_{\substack{A \in \calF(\primalspace) \\
    \nu(A) \in (s, 1 - s)}} \frac{1}{\min\{\nu(A) - s,~1 - \nu(A) - s\}} \cdot \int_{A} \calP_{x}(\primalspace \setminus A) \cdot \rmd\nu(x)~.
\end{equation*}
The (ordinary) conductance \(\Phi_{\bfP}\) of \(\bfP\) is the limit of \(\Phi_{\bfP}^{s}\) as \(s \to 0\).

We use the total variation (TV) distance to quantify the mixing time of the Markov chain to its stationary distribution, which between two distribution \(\mu, \nu\) with support \(\primalspace\) is defined as
\begin{equation*}
    \TVdist(\mu, \nu) = \sup_{A \in \calF(\primalspace)} \mu(A) - \nu(A)~.
\end{equation*}

To obtain mixing time guarantees for \nameref{alg:mapla}, we use a strategy which assumes that the initial distribution \(\Pi_{0}\) is \emph{warm} relative to the stationary measure \(\targetdist\).
Two notions of warmness that are relevant to this technique are listed below \citep[Sec. 3]{vempala2005geometric}.
\begin{itemize}[leftmargin=*, itemsep=0pt]
    \item a distribution \(\mu_{0}\) is \((L_{\infty}, M)\)-warm w.r.t. \(\nu\) if \(\sup\limits_{A \in \calF(\primalspace)} \frac{\mu_{0}(A)}{\nu(A)} \leq M\).

    The set of all such \(\mu_{0}\) is denoted by \(\mathsf{Warm}(L_{\infty}, M, \nu)\).
    \item a distribution \(\mu_{0}\) is \((L_{1}, M)\)-warm w.r.t. \(\nu\) if \(\left\|\frac{\mu_{0}}{\nu}\right\|_{L^{1}(\mu_{0})} = \bbE_{\mu_{0}}\left[\frac{d\mu_{0}(x)}{d\nu(x)}\right] = M\).

    The set of all such \(\mu_{0}\) is denoted by \(\mathsf{Warm}(L_{1}, M, \nu)\).
\end{itemize}

For \(\delta \in (0, 1)\), the \(\delta\)-mixing time in TV distance of a Markov chain \(\bfP\) with stationary distribution \(\nu\) starting from a distribution \(\mu_{0}\) is the least number of applications of the operator \(\bbT_{\bfP}\) to \(\mu_{0}\) that achieves a distribution that is at most \(\delta\) away from \(\nu\).
For a class of distributions \(\calC\), we have
\begin{equation*}
    \mixingtime{\delta; \bfP, \calC} \defeq \sup_{\pi_{0} \in \calC} \inf \{k \geq 0 : \TVdist(\bbT_{\bfP}^{k}\pi_{0}, \nu) \leq \delta\}~.
\end{equation*}

\subsection{Main results}

Now, we state our main mixing time guarantees for \nameref{alg:mapla}.
For clarity, we state the assumptions made on the potential \(\potential{}\) of \(\targetdist\) separately below.

\begin{assumplist}
\item \label{assump:curv-lower-bdd} \(\potential{}\) satisfies \((\mu, \metric{})\)-curvature lower bound (\Cref{def:curv-lower-bdd}) and let \(\widetilde{\mu} = \frac{\mu}{8 + 4\sqrt{\mu}}\).
\item \label{assump:curv-upper-bdd} \(\potential{}\) satisfies \((\lambda, \metric{})\)-curvature upper bound (\Cref{def:curv-upper-bdd}).
\item \label{assump:grad-upper-bdd} \(\potential{}\) is \(\beta\)-relatively Lipschitz continuous w.r.t. \(\metric{}\) (\Cref{def:rel-lips}).
\end{assumplist}

\paragraph{Warmup: working with self-concordance of \(\metric{}\)}
Here, we assume that the metric \(\metric{}\) satisfies self-concordance (\cref{def:SC}), which is the most basic notion of self-concordance defined previously. 

\begin{theorem}
\label{thm:just-SC-mapcla}
Consider a distribution \(\targetdist\) supported over \(\primalspace\) that is a closed, convex subset of \(\bbR^{d}\) whose density is \(\targetdens(x) \propto e^{-\potential{x}}\).
Let the metric \(\metric{} : \interior{\primalspace} \to \bbS_{+}^{d}\) be self-concordant and \(\nu\)-symmetric, and assume that the potential \(\potential{} : \interior{\primalspace} \to \bbR\) satisfies \emph{\ref{assump:curv-lower-bdd}, \ref{assump:curv-upper-bdd}} and \emph{\ref{assump:grad-upper-bdd}}.
Define the quantity \(b_{\textsf{SC}}(d, \lambda, \beta)\)
\begin{equation*}
    b_{\textsf{SC}}(d, \lambda, \beta) \defeq c_{1} \cdot \min\left\{\frac{1}{d^{3}},~\frac{1}{d \cdot \lambda},~\frac{1}{\beta^{2}}~,\frac{1}{\beta^{\nicefrac{2}{3}}}~,\frac{1}{(\beta \cdot \lambda)^{\nicefrac{2}{3}}}\right\}~.
\end{equation*}
For precision \(\delta \in (0, \nicefrac{1}{2})\) and warmness parameter \(M \geq 1\), if the step size \(h\) is bounded as \(0 < h \leq b_{\textsf{SC}}(d, \lambda, \beta)\), \nameref{alg:mapla} satisfies for \(\calC \in \{\mathsf{Warm}(L_{\infty}, M, \targetdist), \mathsf{Warm}(L_{1}, M, \targetdist)\}\) that
\begin{equation*}
    \mixingtime{\delta; \bfT, \calC} = \frac{c_{2}}{h} \cdot \max\left\{1, \min\left\{\frac{1}{\widetilde{\mu}^{2}}, \nu\right\}\right\} \cdot \log\left(\frac{\frakM_{\calC}}{\delta}\right)~, \quad
    \frakM_{\calC} = \begin{cases}
        M^{\nicefrac{1}{2}}, & \calC = \mathsf{Warm}(L_{\infty}, M, \targetdist) \\
        M^{\nicefrac{1}{3}}, & \calC = \mathsf{Warm}(L_{1}, M, \targetdist)
    \end{cases}
\end{equation*}
where \(c_{1}, c_{2}\) are universal positive constants.
\end{theorem}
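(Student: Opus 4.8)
The plan is to follow the standard conductance-based template for analyzing Metropolis-adjusted Markov chains (as in Lov\'asz--Simonovits and subsequent works), which reduces the mixing time bound to two ingredients: (i) an isoperimetric inequality for the target \(\targetdist\) on \(\primalspace\), and (ii) a lower bound on the one-step overlap \(\TVdist(\calT_x, \calT_y)\) whenever \(x, y \in \interior{\primalspace}\) are close in a suitable local metric. Given these, the conductance \(\Phi_{\bfT}^{s}\) is bounded below, and then the general mixing time estimate for Metropolis chains from a warm start (with the \(M^{1/2}\) versus \(M^{1/3}\) distinction coming from the \(L_\infty\) versus \(L_1\) warmness, cf.\ \citet{vempala2005geometric} and refinements) yields the claimed bound with the \(\log(\frakM_{\calC}/\delta)\) factor.

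For the isoperimetry, I would combine two facts. The \(\nu\)-symmetry of \(\metric{}\) (\Cref{def:symm}) gives a cross-ratio/isoperimetric inequality for the \emph{uniform-like} geometry with constant scaling like \(\sqrt{\nu}\) — this is exactly the mechanism used for \Dikin{} in \citet{kook2024gaussian}. The \((\mu,\metric{})\)-curvature lower bound \ref{assump:curv-lower-bdd} with \(\mu>0\) gives, via a Bakry--\'Emery-type argument in the Hessian metric, a log-Sobolev/Poincar\'e-type inequality with constant governed by \(1/\mu\) (after the \(\sqrt{\mu}\) correction producing \(\widetilde{\mu}\)). Taking the better of the two explains the \(\min\{1/\widetilde\mu^{2}, \nu\}\) factor (and the outer \(\max\{1,\cdot\}\) for the trivial regime). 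The technical point is to state the isoperimetric profile in terms of a Euclidean-type distance that is comparable, inside a single Dikin ellipsoid, to the local norm \(\|\cdot\|_{\metric{x}}\); self-concordance guarantees this comparability on \(\calE_x^{\metric{}}(r)\) for \(r\) bounded away from \(1\).

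The heart of the argument — and the step I expect to be the main obstacle — is (ii): showing that if \(\|x-y\|_{\metric{x}} \lesssim \sqrt{h}\) (i.e.\ \(x, y\) within \(O(1)\) Dikin-ellipsoid steps), then \(\TVdist(\calT_x, \calT_y)\) is bounded away from \(1\). This requires three sub-estimates, each of which dictates one of the terms in \(b_{\textsf{SC}}(d,\lambda,\beta)\): (a) the \emph{feasibility/rejection} probability — the proposal \(z\) from \ref{eq:PLA} must land in \(\primalspace\) with probability bounded below, which by \(\nu\)-symmetry and concentration of the Gaussian \(\gaussian{\cdot}{2h\metric{x}^{-1}}\) needs \(h \lesssim 1/d\) and, to control the drift \(h\metric{x}^{-1}\gradpotential{x}\), needs \(h\beta \lesssim 1\) and \(h\lambda\lesssim 1\); (b) the \emph{accept-probability} lower bound — the log acceptance ratio (as expanded in the implementation section) decomposes into a log-det term \(\frac12\log\det\metric{z}\metric{x}^{-1}\) and a quadratic-forms difference; using self-concordance to control how \(\metric{}\) changes over a Dikin step, the log-det term is \(O(d\sqrt{h})\)-ish and the quadratic term is controlled via the average/strong self-concordance machinery, forcing \(h \lesssim 1/d^{3}\) under mere self-concordance (this is exactly where the \(1/d^3\) is lost, to be improved to \(1/(d\beta)\) under self-concordant\(_{++}\) in \Cref{thm:more-than-SC-mapcla}); (c) a \emph{proposal-overlap} bound comparing \(\calP_x\) and \(\calP_y\) as Gaussians with nearby means and covariances, again quantified by self-concordance and by \ref{assump:curv-upper-bdd}, \ref{assump:grad-upper-bdd}, contributing the \(\beta^{2/3}\) and \((\beta\lambda)^{2/3}\) terms. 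Once (a)–(c) give a uniform lower bound on the one-step coupling for nearby points, combining with the isoperimetry in the standard way (splitting \(\primalspace\) into the "deep" set where \(\calT_x\) is well-behaved and the thin bad set near the boundary, which has small \(\targetdist\)-measure by self-concordance) closes the conductance bound, and the mixing time statement follows. I would expect the bulk of the work to be in carefully propagating the self-concordance inequalities through the Gaussian tail bounds in sub-estimate (b), since that is where the suboptimal \(d^3\) appears and where the later theorems gain their improvement.
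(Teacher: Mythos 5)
Your overall architecture matches the paper's: bound \(\TVdist(\calT_{x},\calT_{y})\) for nearby \(x,y\) by the triangle inequality through the unadjusted proposals, i.e.\ \(\TVdist(\calT_{x},\calP_{x})+\TVdist(\calP_{x},\calP_{y})+\TVdist(\calT_{y},\calP_{y})\), feed the resulting one-step overlap into an isoperimetric inequality to lower-bound conductance, and finish with the Lov\'asz--Simonovits conversion (which is exactly where the \(M^{\nicefrac12}\) vs.\ \(M^{\nicefrac13}\) split comes from). Your sub-estimates (a)--(c) correspond to the paper's \Cref{lem:tx-px-small-SC} (which handles both infeasibility and the acceptance ratio in one quantity, \(1-\bbE[\min\{1,R\cdot\bm 1\{z\in\primalspace\}\}]\)) and \Cref{lem:px-py-small} (a Gaussian KL-plus-Pinsker computation). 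Two minor misattributions: the \(\beta^{\nicefrac23}\) and \((\beta\lambda)^{\nicefrac23}\) constraints come from the acceptance-ratio bound, not from the proposal-overlap step (which only forces \(h\lesssim 1/\beta^{2}\) and \(h\lesssim 1/d\)); and no ``deep set'' is needed in this theorem, since \ref{assump:grad-upper-bdd} makes the local gradient norm bounded on all of \(\interior{\primalspace}\) (the deep-set device is reserved for the linear-potential case).

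The one place where you genuinely diverge, and where your proposal is shakiest, is the \(\mu\)-dependent half of the isoperimetry. You invoke a Bakry--\'Emery/log-Sobolev argument, but the conductance machinery does not consume a Poincar\'e or log-Sobolev constant: it needs a Cheeger-type three-set inequality of the form \(\targetdist(S_{3})\gtrsim d_{\metric{}}(S_{1},S_{2})\cdot\min\{\targetdist(S_{1}),\targetdist(S_{2})\}\). Passing from a functional inequality to this isoperimetric profile (a Buser-type implication, here in a Hessian/Riemannian metric) is nontrivial and would not obviously produce the stated \(\widetilde{\mu}=\mu/(8+4\sqrt{\mu})\) scaling. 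The paper instead proves a new isoperimetric inequality directly (\Cref{lem:curv-lower-bdd-isoperimetry}), via the localisation/needle method: reduce to one-dimensional inequalities along segments, where the curvature lower bound becomes \(1\)-relative convexity of a univariate potential \(V\) with respect to a function \(\frakg\) satisfying \(|\frakg'|\le (2/\kappa)\frakg^{\nicefrac32}\), and then prove the one-dimensional estimate of \Cref{lem:one-dim-isoperimetry}. If you want your route to go through, you would either need to carry out that localisation argument or supply a rigorous Riemannian Buser-type step; as written, this part of your proposal is a gap rather than an alternative proof.
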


\paragraph{Beyond standard self-concordance of \(\metric{}\)}
Now, we assume that the metric \(\metric{}\) satisfies \emph{self-concordance\textsubscript{++}}, which is a combination of the stronger notions of self-concordance (\cref{def:SSC,def:LTSC,def:ASC}) as described previously.
Self-concordance\textsubscript{++} enables a larger bound on the step size \(h\) than \(b_{1}\) in \cref{thm:just-SC-mapcla}, which consequently yields better mixing time guarantees.

\begin{theorem}
\label{thm:more-than-SC-mapcla}
Consider a distribution \(\targetdist\) supported over \(\primalspace\) that is a closed, convex subset of \(\bbR^{d}\) whose density is \(\targetdens(x) \propto e^{-\potential{x}}\).
Let the metric \(\metric{} : \interior{\primalspace} \to \bbS_{+}^{d}\) be strongly, \(\alpha\)-lower trace, and average self-concordant and \(\nu\)-symmetric, and assume that the potential \(\potential{} : \interior{\primalspace} \to \bbR\) satisfies \emph{\ref{assump:curv-lower-bdd}, \ref{assump:curv-upper-bdd}} and \emph{\ref{assump:grad-upper-bdd}}.
Define the quantity \(b_{\textsf{SC\textsubscript{++}}}(d, \lambda, \alpha, \beta)\)
\begin{equation*}
    b_{\textsf{SC\textsubscript{++}}}(d, \lambda, \beta, \alpha) \defeq c_{1} \cdot \min\left\{\frac{1}{d \cdot \beta},~\frac{1}{d \cdot \lambda},~\frac{1}{d \cdot (\alpha + 4)},~\frac{1}{\beta^{2}},~\frac{1}{(\beta \cdot (\alpha + 4))^{\nicefrac{2}{3}}},~\frac{1}{(\beta \cdot \lambda)^{\nicefrac{2}{3}}}\right\}~.
\end{equation*}
For precision \(\delta \in (0, \nicefrac{1}{2})\) and warmness parameter \(M \geq 1\), if the step size \(h\) is bounded as \(0 < h \leq b_{\textsf{SC\textsubscript{++}}}(d, \lambda, \beta, \alpha)\), \nameref{alg:mapla} satisfies for \(\calC \in \{\mathsf{Warm}(L_{\infty}, M, \targetdist), \mathsf{Warm}(L_{1}, M, \targetdist)\}\) that
\begin{equation*}
    \mixingtime{\delta; \bfT, \calC} = \frac{c_{2}}{h} \cdot \max\left\{1, \min\left\{\frac{1}{\widetilde{\mu}^{2}}, \nu\right\}\right\} \cdot \log\left(\frac{\frakM_{\calC}}{\delta}\right)~, \quad
    \frakM_{\calC} = \begin{cases}
        M^{\nicefrac{1}{2}}, & \calC = \mathsf{Warm}(L_{\infty}, M, \targetdist) \\
        M^{\nicefrac{1}{3}}, & \calC = \mathsf{Warm}(L_{1}, M, \targetdist)
    \end{cases}
\end{equation*}
where \(c_{1}, c_{2}\) are universal positive constants.
\end{theorem}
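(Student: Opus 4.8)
The plan is to follow the standard conductance-based recipe for bounding MCMC mixing times from a warm start, which reduces the problem to two ingredients: (i) an isoperimetric inequality for $\targetdist$ on $\primalspace$, and (ii) a one-step overlap bound showing that the transition distributions $\calT_x$ and $\calT_y$ are close in total variation whenever $x$ and $y$ are close in a suitable cross-ratio / Dikin-ellipsoid metric. The isoperimetry comes essentially for free: since $\metric{}$ is $\nu$-symmetric, the domain $\primalspace$ with the local norm $\|\cdot\|_{\metric{x}}$ admits a log-Cheeger-type inequality with constant governed by $\nu$ (via the symmetrised-body argument of \citet{gustafson2018john,laddha2020strong}), and since $\potential{}$ satisfies \ref{assump:curv-lower-bdd} with parameter $\mu$, a Bakry--Émery-type strengthening gives an isoperimetric constant of order $\min\{1/\widetilde\mu,\sqrt\nu\}$; this is exactly where the $\max\{1,\min\{1/\widetilde\mu^2,\nu\}\}$ factor in the mixing time comes from. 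The remaining, and genuinely hard, work is the one-step coupling/overlap bound under the weaker step-size restriction $h\le b_{\textsf{SC\textsubscript{++}}}$.

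For the overlap bound I would proceed in the following steps. First, control the \emph{rejection-by-infeasibility} probability: starting from a typical point $x$, the Gaussian proposal $\calP_x=\gaussian{x-h\metric{x}^{-1}\gradpotential{x}}{2h\metric{x}^{-1}}$ stays inside the Dikin ellipsoid $\calE_x^{\metric{}}(O(1))$ with probability $\ge 1-\varepsilon$ once $h\lesssim 1/d$ and $h\lesssim 1/\beta^2$ (the drift term has $\metric{x}$-norm $h\|\gradpotential{x}\|_{\metric{x}^{-1}}\le h\beta$), and the Dikin ellipsoid of radius $1$ lies in $\primalspace$ by $\nu$-symmetry, so feasible proposals dominate. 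Second, bound the \emph{Metropolis rejection}: here I would show the log acceptance ratio is $O(1)$ with high probability by splitting it into the potential-difference term, the log-determinant term $\tfrac12\log\det\metric{z}\metric{x}^{-1}$, and the quadratic-form difference term, exactly as in the implementation section. The potential term is controlled by \ref{assump:curv-upper-bdd} and \ref{assump:grad-upper-bdd} (giving the $1/\lambda$, $1/\beta^2$, $1/(\beta\lambda)^{2/3}$ constraints via a Taylor expansion along the proposal step); the log-determinant term is where \emph{lower trace self-concordance} enters, bounding $\log\det\metric{z}\metric{x}^{-1}\ge -O(\alpha)\|z-x\|_{\metric{x}}^2$ and \emph{strong self-concordance} gives the matching upper bound of order $\|z-x\|_{\metric{x}}^2$ plus a Frobenius-norm fluctuation term that concentrates (this is what replaces a crude $d^3$ bound by a $d\beta$ / $d\alpha$ bound); the quadratic-form difference $\|z-x\|_{\metric{z}}^2-\|z-x\|_{\metric{x}}^2$ is precisely controlled in high probability by \emph{average self-concordance} (Definition~\ref{def:ASC}), which is tailored for exactly this quantity. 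Third, assemble these into a statement that from a warm interior point, with probability $\ge 3/4$ the chain accepts a move landing in $\calE_x^{\metric{}}(O(\sqrt h))$; a standard argument then shows $\TVdist(\calT_x,\calT_y)\le 7/8$ whenever $\|x-y\|$ is small in the cross-ratio metric, which combined with the isoperimetric inequality yields a conductance lower bound of order $\sqrt h\cdot\min\{1/\widetilde\mu,\sqrt\nu\}$ (up to constants), and hence via Lovász--Simonovits the claimed mixing time $\tfrac{c_2}{h}\max\{1,\min\{1/\widetilde\mu^2,\nu\}\}\log(\frakM_\calC/\delta)$ with the warmness-dependent prefactor $\frakM_\calC$ coming from the $L^\infty$- vs.\ $L^1$-warm conductance-to-mixing conversion.

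The main obstacle will be the second step, and specifically making the log-determinant and quadratic-form terms cooperate under the \emph{larger} step size $h\le b_{\textsf{SC\textsubscript{++}}}$ rather than $h\le b_{\textsf{SC}}$: in Theorem~\ref{thm:just-SC-mapcla} one can afford to bound $\|\metric{x}^{-1/2}\rmD\metric{x}[v]\metric{x}^{-1/2}\|_{\op}$ crudely, costing a factor $d$ per derivative and forcing $h\lesssim 1/d^3$; here I must instead track the Frobenius norm (via strong self-concordance) and the \emph{trace} (via lower trace self-concordance) separately, show that the relevant random variable — roughly $\langle \xi, \metric{x}^{-1/2}\rmD\metric{x}[\cdot]\metric{x}^{-1/2}\xi\rangle$ evaluated at the Gaussian proposal direction — concentrates around its mean with Gaussian-chaos-type tails, and verify that the mean is $O(1)$ rather than $O(d)$ once $h\lesssim 1/(d\beta)$ and $h\lesssim 1/(d(\alpha+4))$. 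Getting the constants and the high-probability event to line up so that the acceptance probability is bounded below by an absolute constant — while simultaneously retaining feasibility — is the delicate part; I expect the cleanest route is to condition on the good event from average self-concordance (Definition~\ref{def:ASC} with $\varepsilon$ a small absolute constant and $h\le r_\varepsilon^2/(2d)$), on the good event that $\|\xi\|^2\le O(d)$, and on $z\in\calE_x^{\metric{}}(O(\sqrt{hd}))$, and then show all three terms of the log acceptance ratio are $O(1)$ on the intersection.
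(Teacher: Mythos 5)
Your proposal is correct and follows essentially the same route as the paper: a Lov\'asz--Simonovits conductance argument combining (i) an isoperimetric inequality from $\nu$-symmetry and the $(\mu,\metric{})$-curvature lower bound with (ii) a one-step overlap bound obtained by controlling the rejection probability of $\calT_x$ relative to $\calP_x$ and the proposal overlap $\TVdist(\calP_x,\calP_y)$, with strong, lower-trace, and average self-concordance playing exactly the roles you assign them in the three pieces of the log acceptance ratio. The only cosmetic difference is in your anticipated "main obstacle": the paper avoids Gaussian-chaos concentration for the log-determinant term by Taylor-expanding $\log\det\metric{\cdot}$ to second order, bounding the second-order term deterministically via $\alpha$-lower-trace plus strong self-concordance, and handling the linear term with the bound $\|\metric{x}^{-\nicefrac{1}{2}}\nabla\log\det\metric{x}\|\le 2\sqrt{d}$ and an ordinary Gaussian tail, which is where the $h\lesssim 1/(d(\alpha+4))$ and $h\lesssim 1/(d\beta)$ constraints arise.
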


\paragraph{Handling linear \(\potential{}\)}
Here, we discuss the setting where \(\potential{x} = \left.\sigma^{\top}x\right|_{\primalspace}\) for \(\sigma \neq \bm{0}\).
Recall that both \Cref{thm:just-SC-mapcla,thm:more-than-SC-mapcla} assume that the potential \(\potential{}\) satisfies \((\beta, \metric{})\)-gradient upper bound (\ref{assump:grad-upper-bdd}), and hence directly invoking these theorems for this setting would result in a dependence on
\begin{equation*}
\beta(\sigma) \defeq \sup\limits_{x \in \interior{\primalspace}} \|\sigma\|_{\metric{x}^{-1}}~.
\end{equation*}
This uncovers two issues.
First, for any \(\frakc \in \bbR\), \(\beta(\frakc \cdot \sigma) = |\frakc| \cdot \beta(\sigma)\), which would imply that the scale of \(\sigma\) affects the mixing time guarantee.
Second, suppose \(\sigma\) is normalised (i.e., \(\|\sigma\| = 1\)) and \(\metric{}\) is self-concordant.
The quantity \(\beta(\sigma)\) could still depend on the size of \(\primalspace\) since
\begin{equation*}
    \sup_{x \in \interior{\primalspace}} \|\sigma\|_{\metric{x}^{-1}} = \sup_{\substack{x \in \interior{\primalspace} \\ \|v\|_{\metric{x}} \leq 1}} \langle \sigma, v\rangle = \sup_{\substack{x \in \interior{\primalspace} \\ y \in \calE_{x}^{\metric{}}(1)}} \langle \sigma, y - x\rangle \leq \sup_{x, y \in \interior{\primalspace}}\|y - x\|~.
\end{equation*}
Hence, it is crucial that the mixing time guarantees in this setting is independent of \(\beta(\sigma)\).
In the following theorem, we derive such a \emph{scale-independent} guarantee for \nameref{alg:mapla} which uses properties of densities whose potential \(\potential{}\) is linear \citep{kook2023condition}.

\begin{theorem}
\label{thm:exp-densities-mapcla}
Consider a distribution \(\targetdist\) supported over \(\primalspace\) that is a closed, convex subset of \(\bbR^{d}\) whose density is \(\targetdens(x) \propto e^{-\sigma^{\top}x}\).
Let the metric \(\metric{} : \interior{\primalspace} \to \bbS_{+}^{d}\) be strongly and average self-concordant and \(\nu\)-symmetric, and assume that it also satisfies \(\rmD^{2}\metric{x}[v, v] \succeq \bm{0}\) for all \(x \in \interior{\primalspace}\) and \(v \in \bbR^{d}\).
Define the quantity \(b_{\textsf{Exp}}(d, M, \delta)\)
\begin{equation*}
    b_{\textsf{Exp}}(d, M, \delta) \defeq c_{1} \cdot \frac{1}{d^{2} \log^{2}(\frac{M}{\delta})}~.
\end{equation*}
For precision \(\delta \in (0, \nicefrac{1}{2})\) and warmness parameter \(M \geq 1\), if the step size \(h\) is bounded as \(0 < h \leq b_{\textsf{Exp}}(d, M, \delta)\), \nameref{alg:mapla} satisfies
\begin{equation*}
    \mixingtime{\delta; \bfT, \mathsf{Warm}(L_{\infty}, M, \targetdist)} = \frac{c_{2}}{h} \cdot \max\left\{1, \nu\right\} \cdot \log\left(\frac{M}{\delta}\right)
\end{equation*}
where \(c_{1}, c_{2}\) are universal positive constants.
\end{theorem}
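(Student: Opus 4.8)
I would run the standard conductance argument used for \Cref{thm:just-SC-mapcla} and \Cref{thm:more-than-SC-mapcla}, specialised to a linear potential. A linear \(\potential{}\) has \(\hesspotential{}\equiv\bm{0}\), so \(\mu=\lambda=0\); hence the only isoperimetric content comes from \(\nu\)-symmetry (and \(\min\{1/\widetilde\mu^{2},\nu\}\) collapses to \(\nu\), matching the claimed bound), and the Metropolis--Hastings acceptance ratio simplifies. Substituting \(\gradpotential{x}=\gradpotential{z}=\sigma\) and the identity \(z-x+h\metric{x}^{-1}\sigma=\sqrt{2h}\,\metric{x}^{-\nicefrac{1}{2}}\xi\) into the formula for \(\log\frac{\targetdens(z)p_{z}(x)}{\targetdens(x)p_{x}(z)}\) from \Cref{sec:algo}, the linear terms \(\langle\sigma,z-x\rangle\) cancel against \(f(x)-f(z)\), leaving
\begin{equation*}
\log\frac{\targetdens(z)p_{z}(x)}{\targetdens(x)p_{x}(z)}=\tfrac12\log\det\!\big(\metric{z}\metric{x}^{-1}\big)+\tfrac{1}{4h}\big(\|z-x\|_{\metric{x}}^{2}-\|z-x\|_{\metric{z}}^{2}\big)+\tfrac{h}{4}\big(\|\sigma\|_{\metric{x}^{-1}}^{2}-\|\sigma\|_{\metric{z}^{-1}}^{2}\big).
\end{equation*}
Strong self-concordance controls the log-determinant term (it bounds \(\|\metric{x}^{-\nicefrac{1}{2}}\metric{z}\metric{x}^{-\nicefrac{1}{2}}-\rmI\|_{\frob}\) in terms of \(\|z-x\|_{\metric{x}}\), hence \(|\log\det(\metric{z}\metric{x}^{-1})|\lesssim\sqrt d\,\|z-x\|_{\metric{x}}\)); average self-concordance is exactly designed to make the middle term \(O(1)\) with high probability over \(z\sim\calP_{x}\); the last term, the only appearance of \(\sigma\), is the new difficulty.

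\textbf{Scale-independence via a good set.}
Let \(\beta_{\sigma}(x)\defeq\|\sigma\|_{\metric{x}^{-1}}\). Two places \(\sigma\) could hurt: the drift \(-h\metric{x}^{-1}\sigma\) has \(\metric{x}\)-length \(h\beta_{\sigma}(x)\), and the last term equals \(\tfrac h4(\beta_{\sigma}(x)^{2}-\beta_{\sigma}(z)^{2})\). Differentiating \(t\mapsto\|\sigma\|_{\metric{x+t(z-x)}^{-1}}^{2}\) twice, the hypothesis \(\rmD^{2}\metric{x}[v,v]\succeq\bm0\) kills a sign-indefinite contribution and leaves only a term bounded via self-concordance (\(\|\metric{x}^{-\nicefrac{1}{2}}\rmD\metric{x}[z-x]\metric{x}^{-\nicefrac{1}{2}}\|_{\op}\le2\|z-x\|_{\metric{x}}\)), so \(|\beta_{\sigma}(z)^{2}-\beta_{\sigma}(x)^{2}|\lesssim\|z-x\|_{\metric{x}}\,\beta_{\sigma}(x)^{2}\) up to lower-order terms; everything thus reduces to bounding \(\beta_{\sigma}\) on a set of large \(\targetdist\)-mass. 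Here I would invoke the exponential-density estimates of \citep{kook2023condition}: since \(\targetdens\propto e^{-\sigma^{\top}x}\), the mass of \(\targetdist\) concentrates on the sublevel set \(\primalspace_{\delta}\defeq\{x:\sigma^{\top}x\le\min_{x'\in\primalspace}\sigma^{\top}x'+T\}\) with \(T=O(d+\log(M/\delta))\), and on \(\primalspace_{\delta}\) — using that the unit Dikin ellipsoid lies in \(\primalspace\) (self-concordance) and that \(\metric{}\) blows up towards the face of \(\primalspace\) normal to \(\sigma\) on which the minimum is attained — one obtains \(\beta_{\sigma}(x)=O(\sqrt d+\log(M/\delta))\); in particular \(\targetdist(\primalspace\setminus\primalspace_{\delta})\le s\) for the value \(s\asymp\delta/M\) used below. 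The step-size budget \(h\le c_{1}/(d^{2}\log^{2}(M/\delta))\) is precisely what makes, on \(\primalspace_{\delta}\), all of the following hold with probability \(\ge1-O(\delta/M)\) over \(z\sim\calP_{x}\): the drift is negligible (\(h\beta_{\sigma}(x)\ll1\)); the proposal stays inside \(\calE_{x}^{\metric{}}(\nicefrac{1}{2})\subseteq\primalspace\) (since \(\mathbb E\|\xi\|^{2}=d\) and \(h(d+\log(M/\delta))\ll1\)); and each of the three terms in the displayed ratio is \(O(1)\) — the factor \(d^{2}\) coming from the log-determinant term (a \(\sqrt d\) from Cauchy--Schwarz times the \(\sqrt{hd}\) typical step length), the factor \(\log^{2}(M/\delta)\) from the radius \(\sqrt{2hd}\) at which average self-concordance must be applied.

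\textbf{One-step overlap.}
The previous two steps give the key lemma: there is \(\rho>0\) such that for every \(x\in\primalspace_{\delta}\) and every \(y\in\primalspace\) with \(\|x-y\|_{\metric{x}}\le\rho\sqrt h\), one has \(\TVdist(\calT_{x},\calT_{y})\le1-\tfrac18\). As usual one first bounds \(\TVdist(\calP_{x},\calP_{y})\) by comparing two Gaussians: their covariances \(2h\metric{x}^{-1}\) and \(2h\metric{y}^{-1}\) agree within a factor \(1\pm\tfrac1{10}\) by strong self-concordance, and their means differ by \(O(\sqrt h)\) in \(\metric{x}\)-norm, combining \(\|x-y\|_{\metric{x}}\le\rho\sqrt h\), the drift bound \(h\beta_{\sigma}\ll1\) on \(\primalspace_{\delta}\), and strong self-concordance to compare \(\metric{x}^{-1}\sigma\) with \(\metric{y}^{-1}\sigma\); then one adds the two ``rejection'' probabilities (escaping \(\primalspace\), and Metropolis-rejection), each \(O(\delta/M)\le\tfrac1{20}\) for both \(x\) and \(y\) by the previous paragraph.

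\textbf{Assembly, and the main obstacle.}
Finally, combine the one-step lemma with the isoperimetric inequality that \(\nu\)-symmetry yields for the log-concave measure \(\targetdist\) on \(\primalspace\): given measurable \(A\) with \(\targetdist(A)\in(s,\nicefrac{1}{2})\), split \(A\) into \(A_{1}\) (points from which the chain escapes with probability below a fixed constant) and the rest, and similarly \(A^{c}\) into \(A_{2}\) and the rest; any point of \(A_{1}\cap\primalspace_{\delta}\) and any point of \(A_{2}\cap\primalspace_{\delta}\) are more than \(\rho\sqrt h\) apart in the metric \(\metric{}\), so the isoperimetric inequality lower-bounds the remaining mass by \(\gtrsim\tfrac{\rho\sqrt h}{\sqrt\nu}\min\{\targetdist(A_{1}),\targetdist(A_{2})\}\) up to an \(O(s)\) correction from \(\primalspace\setminus\primalspace_{\delta}\); hence \(\Phi_{\bfT}^{s}\gtrsim\sqrt h/\sqrt{\max\{1,\nu\}}\). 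Feeding \(\Phi_{\bfT}^{s}\) and \(s\asymp\delta/M\) into the standard \(s\)-conductance-to-TV-mixing bound (e.g.\ \citep{vempala2005geometric}) from an \((L_{\infty},M)\)-warm start gives \(\mixingtime{\delta;\bfT,\mathsf{Warm}(L_{\infty},M,\targetdist)}=O\big(\tfrac1h\max\{1,\nu\}\log(M/\delta)\big)\). I expect the genuinely delicate step to be the scale-independence argument of the second paragraph — bounding \(\beta_{\sigma}\) on a \((1-s)\)-mass sublevel set and controlling both the drift and the \(\tfrac h4(\beta_{\sigma}(x)^{2}-\beta_{\sigma}(z)^{2})\) term — which is exactly where \(\rmD^{2}\metric{x}[v,v]\succeq\bm0\) and the exponential-density estimates are used, and where the \(d^{2}\log^{2}(M/\delta)\) step-size scaling originates.
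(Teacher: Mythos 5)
Your proposal follows the same skeleton as the paper's proof: bound \(\TVdist(\calT_{x},\calT_{y})\) by the triangle inequality through the unadjusted proposals \(\calP_{x},\calP_{y}\), establish the one-step overlap only on a high-mass convex subset where \(\|\sigma\|_{\metric{\cdot}^{-1}}\) is controlled, feed this into the \(\nu\)-symmetry isoperimetric inequality to lower-bound the \emph{\(s\)}-conductance with \(s\asymp\delta/M\), and finish with the \(s\)-conductance-to-mixing bound of \citet{vempala2005geometric}. The genuine divergence is in how the good set is built. The paper defines it directly as \(\primalspace_{s}=\{x:\|\sigma\|_{\metric{x}^{-1}}\le 5d\log(1/s)\}\) and imports both its mass bound and its convexity from \citet{kook2023condition} — convexity being exactly where the hypothesis \(\rmD^{2}\metric{x}[v,v]\succeq\bm{0}\) enters (together with its second use: it makes \(\metric{}\) \(0\)-lower-trace self-concordant, which the paper needs to invoke its refined log-determinant lemma). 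You instead take a sublevel set of the linear potential itself, which is automatically convex, and bound \(\|\sigma\|_{\metric{x}^{-1}}\) there geometrically; you also replace the refined log-determinant estimate by the cruder Frobenius-norm/Cauchy--Schwarz bound \(|\log\det(\metric{z}\metric{x}^{-1})|\lesssim\sqrt{d}\,\|z-x\|_{\metric{x}}\), which is adequate here because the final step size is \(h\lesssim d^{-2}\) anyway. Your route is arguably cleaner in that it appears to need \(\rmD^{2}\metric{}\succeq\bm{0}\) less centrally (you invoke it only to control \(\beta_{\sigma}(z)^{2}-\beta_{\sigma}(x)^{2}\), where in fact plain self-concordance already gives \(\|\sigma\|_{\metric{z}^{-1}}\le 2\|\sigma\|_{\metric{x}^{-1}}\) on the half Dikin ellipsoid, as the paper notes), at the cost of having to re-derive the mass and local-norm estimates for the sublevel set rather than citing them.

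One quantitative slip: on your sublevel set \(\{x:\sigma^{\top}x\le\min_{\primalspace}\sigma^{\top}+T\}\) with \(T=O(d+\log(M/\delta))\), the bound you can actually prove is \(\|\sigma\|_{\metric{x}^{-1}}\le\sigma^{\top}x-\min_{\primalspace}\sigma^{\top}=O(d+\log(M/\delta))\) (using that the symmetrised Dikin ellipsoid lies in \(\primalspace\)), not the \(O(\sqrt{d}+\log(M/\delta))\) you assert. If your \(\sqrt{d}\) claim were correct it would yield a step size \(\sim d^{-\nicefrac{3}{2}}\), strictly better than the theorem; with the corrected \(O(d+\log(M/\delta))\) bound the binding terms \(h\lesssim 1/(d\cdot\sfU_{\potential{},\calS})\) and \(h\lesssim 1/\sfU_{\potential{},\calS}^{2}\) both give \(h\lesssim d^{-2}\log^{-2}(M/\delta)\), matching \(b_{\textsf{Exp}}\). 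This is a fixable arithmetic issue, not a structural gap.
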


\subsubsection{A discussion of the results}
\label{sec:mixing-guarantees:discussion}

The underlying technique used to prove \cref{thm:just-SC-mapcla,thm:more-than-SC-mapcla,thm:exp-densities-mapcla} is due to is due to \citet{lovasz1999hit}.
Given a Markov chain \(\bfT\), \citeauthor{lovasz1999hit}'s technique involves showing that for any two points \(x, y\) close enough in a sufficiently large subset of \(\primalspace\), the TV distance between the one-step distributions \(\calT_{x}, \calT_{y}\) is uniformly bounded away from \(1\), and is hence referred to as the \emph{one-step overlap} technique.
Given this one-step overlap, we rely on isoperimetric inequalities which lead to a lower bound on the \(s\)-conductance / conductance of \(\bfT\), which results in mixing time guarantees by the classical result of \citet{lovasz1993random}.
The \(\nu\)-symmetry of the metric \(\metric{}\) results in an isoperimetric inequality for log-concave distributions \citep{laddha2020strong}, which can be complemented by another isoperimetric inequality when the potential satisfies a \((\mu, \metric{})\)-curvature lower bound (\ref{assump:curv-lower-bdd}) for a self-concordant \(\metric{}\).
One of our contributions is deriving the latter isoperimetric inequality, which is a generalisation of prior results by \citet[Lem. 7]{gopi2023algorithmic}.
Other conditions placed on \(\metric{}\) and the potential \(\potential{}\) namely self-concordance or self-concordance\textsubscript{++}, and the \((\lambda, \metric{})\)-curvature upper bound (\ref{assump:curv-upper-bdd}) and \((\beta, \metric{})\)-gradient upper bound (\ref{assump:grad-upper-bdd}) conditions on \(\potential{}\) yield bounds on the step size which ensures that the expected acceptance rate is away from \(0\) as \(d\) increases, which is related to the one-step overlap.
Our analysis reveals that \ref{assump:grad-upper-bdd} is not necessary to derive mixing time guarantees for \nameref{alg:mapla}, and also extends to the analysis of \MAMLA{} \citep{srinivasan2024fast} where \ref{assump:grad-upper-bdd} is also considered for the potential \(\potential{}\) following \citet{ahn2021efficient}.
More precisely, we find that it is sufficient if the function \(x \mapsto \|\gradpotential{x}\|_{\metric{x}^{-1}}\) is uniformly bounded in a sufficiently large convex subset of \(\primalspace\).
This weaker sufficient condition is satisfied in two cases which our theorems cover: (1) when \(\potential{}\) satisfies the \((\beta, \metric{})\)-gradient upper bound condition (as assumed in \cref{thm:just-SC-mapcla,thm:more-than-SC-mapcla}), and (2) when \(\potential{}\) is linear (\cref{thm:exp-densities-mapcla}).
Finally, the convexity-style assumption on the metric \(\metric{}\) in \cref{thm:exp-densities-mapcla} implies that \(\metric{}\) is \(0\)-lower trace self-concordance, which is used to guarantee that the large enough subset identified for the weaker sufficient condition described above is convex.

While we use self-concordance\textsubscript{++} in \cref{thm:more-than-SC-mapcla}, the proofs to establish mixing time guarantees for \Dikin{} in \citet{kook2024gaussian} and \nameref{alg:mapla} in this work differ in the details, specifically in how the one-step overlap is established.
In the analysis of \Dikin{} presented by \citet{kook2024gaussian}, they work with an exact analytical expression for the TV distance between the one-step distributions induced by an iteration of \Dikin{}.
More precisely, for a Markov chain \(\bfT\) formed by the Metropolis adjustment of \(\bfP\) w.r.t. \(\targetdist\) and \(p_{\mathrm{accept}}\) as defined in \cref{eq:MH-accept-def}, we have
\begin{align*}
    \TVdist(\calT_{x}, \calT_{y}) &= \frac{1}{2}(r_{x} + r_{y}) + \frac{1}{2}\int_{z \in \primalspace} \left|p_{\mathrm{accept}}(z; x)\cdot p_{x}(z) -  p_{\mathrm{accept}}(z; y)\cdot p_{y}(z)\right| \rmd z~, \\
    r_{x} &= 1 - \bbE_{z \sim \calP_{x}}[p_{\mathrm{accept}}(z; x) \cdot \bm{1}\{z \in \primalspace\}]~.
\end{align*}
\citet{kook2024gaussian} remark that working with this is essential to analyse \Dikin{}, and use sophisticated techniques to obtain a bound for the above quantity where \(\calP_{x} = \calN(x, 2h \cdot \metric{x}^{-1})\) and \(\bfP = \{\calP_{x} : x \in \primalspace\}\).
Due to the drift correction in \nameref{alg:mapla}, we are able to take a relatively simpler approach that involves giving bounds on \(\TVdist(\calT_{x}, \calP_{x})\) and \(\TVdist(\calP_{x}, \calP_{y})\) and noting that
\begin{equation*}
    \TVdist(\calT_{x}, \calT_{y}) \leq \TVdist(\calT_{x}, \calP_{x}) + \TVdist(\calP_{x}, \calP_{y}) + \TVdist(\calT_{y}, \calP_{y})~.
\end{equation*}
This simpler approach is known to yield a vacuous bound \(\TVdist(\calT_{x}, \calT_{y})\) for the analysis of \Dikin{}; however this is not the case for the proposal Markov chain in \nameref{alg:mapla} due to the inclusion of the drift correction \(-h \cdot \metric{}^{-1}\gradpotential{}\) to the proposal distribution of \Dikin{} that \nameref{alg:mapla} is based on, and results in a crisper and less complicated proof for \nameref{alg:mapla}.

\section{Numerical Experiments}
\label{sec:expts}
In this section, we discuss numerical experiments\footnote{Code for these experiments can be found at \url{https://github.com/vishwakftw/conspacesampler}.} performed to evaluate \nameref{alg:mapla} on a collection of problems.
Through these experiments, we aim to provide a comparison \nameref{alg:mapla} to \Dikin{} for these problems, 
and also verify our theoretical guarantees in \cref{sec:mixing-guarantees}.
The problems that we consider are approximate sampling from (1) Dirichlet distributions, and (2) posterior distributions arising from Bayesian logistic regression with constrained priors.

\subsection{Sampling from Dirichlet distributions}

\newcommand{\empwd}[1]{\widetilde{W_{2}^{2}}(\widehat{\bbT}^{k}_{\textsf{#1}}, \widehat{\targetdist})}
\newcommand{\emped}[1]{\mathrm{ED}(\widehat{\bbT}^{k}_{\textsf{#1}}, \widehat{\targetdist})}

The Dirichlet distribution is the multi-dimensional form of the Beta distribution which is supported on the simplex \(\Delta_{d + 1}\) defined as \(\Delta_{d + 1} \defeq \{x \in \bbR_{+}^{d} : \bm{1}^{\top}x \leq 1\}\).
This representation treats the simplex as a convex subset of \(\bbR^{d}\) with a non-empty interior and is equivalent to the canonical definition.
The Dirichlet distribution is parameterised by a concentration parameter \(\bm{a} \in \bbR^{d + 1}\) where \(a_{i} > -1\), and its density \(\targetdens\) is defined as follows
\begin{equation*}
    \pi(x) \propto \exp(-\potential{x}); \quad \potential{x} = -\sum_{i=1}^{d} a_{i}\log x_{i} - a_{d + 1}\log\left(1 - \sum_{i=1}^{d}x_{i}\right)~.
\end{equation*}
The Dirichlet distribution is log-concave when \(a_{i} \geq 0\) for all \(i \in [d + 1]\).
The experiments in this subsection are conducted with the metric \(\metric{}\) given by the Hessian of the log-barrier of \(\Delta_{d + 1}\).
This metric is natural for the Dirichlet potential \(\potential{}\) as \(\potential{}\) satisfies \((\bm{a}_{\min}, \metric{})\)-curvature lower bound, \((\bm{a}_{\max}, \metric{})\)-curvature upper bound, and \((\|\bm{a}\|, \metric{})\)-gradient upper bound for this metric.

Given \(N\) initial points drawn independently from an initial distribution, both algorithms return a collection of \(N\) independent samples after every iteration.
For \(\textsf{alg} \in \{\MAPLA{}, \Dikin{}\}\), let \(\widehat{\bbT}^{k}_{\textsf{alg}}\) be the empirical distribution of samples obtained after running \textsf{alg} for \(k\) iterations.
Let \(\widehat{\targetdist}\) be the empirical distribution formed by \(N\) samples from a Dirichlet distribution; several scientific computing packages provide functionality to obtain this.
We use two measures to assess the similarity between \(\widehat{\bbT}^{k}_{\textsf{alg}}\) and \(\widehat{\targetdist}\).
The first is the empirical 2-Wasserstein distance denoted by \(\empwd{alg}\), which does not have a closed form and is computed using the Sinkhorn-Knopp solver \citep{cuturi2013sinkhorn} with regularisation \(0.001\).
The second measure is the (empirical) energy distance \citep{szekely2013energy} which is based on E-statistics, and is defined as
\begin{equation*}
    \emped{alg} = \frac{2}{N^{2}}\sum_{i,j=1}^{N}\|X_{i} - Y_{j}\| - \frac{1}{N^{2}}\sum_{i,j=1}^{N}\|X_{i} - X_{j}\| - \frac{1}{N^{2}}\sum_{i,j=1}^{N}\|Y_{i} - Y_{j}\|
\end{equation*}
where \(\{X_{i}\}_{i=1}^{N}\) and \(\{Y_{i}\}_{i=1}^{N}\) are the supports of \(\widehat{\bbT}^{k}_{\textsf{alg}}\) and \(\widehat{\Pi}\) respectively.

\subsubsection{Comparison of mixing behaviours of \MAPLA{} and \Dikin{}}

Here, we aim to provide a comparison between the mixing behaviours of \nameref{alg:mapla} and \Dikin{} to the true Dirichlet distribution.
For a threshold \(\delta > 0\) and measure \(\textsf{dist} \in \{\widetilde{W_{2}^{2}}, \mathrm{ED}\}\), define the empirical mixing time as \(
    \widehat{\tau}_{\mathrm{mix}}^{\textsf{alg}}(\delta; \textsf{dist}) = \inf \{k \geq 1 : \textsf{dist}(\widehat{\bbT}^{k}_{\textsf{alg}}, \widehat{\Pi}) \leq \delta\}.\)
For a given \(d\), we perform \(20\) independent simulations with \(N = 2000\) initial points.
In each simulation, we set \(\bm{a}\) as 
\begin{equation*}
    a_{i} = \bm{a}_{\min} + \frac{(i - 1)}{d} \cdot (\bm{a}_{\max} - \bm{a}_{\min}) \text{ for } i \in [d + 1]~,\quad
    \bm{a}_{\min} = 1, \bm{a}_{\max} = 3~,
\end{equation*}
and run both methods with step size \(h = \frac{C_{h}}{\bm{a}_{\max} \cdot d}\) where \(C_{h} \in \{0.1, 0.2\}\).
In \cref{fig:mixing-time-dirichlet}, we plot the results of these simulations.

\begin{figure}[t]
    \centering
    \begin{subfigure}{0.48\linewidth}
        \centering
        \includegraphics[width=0.8\linewidth]{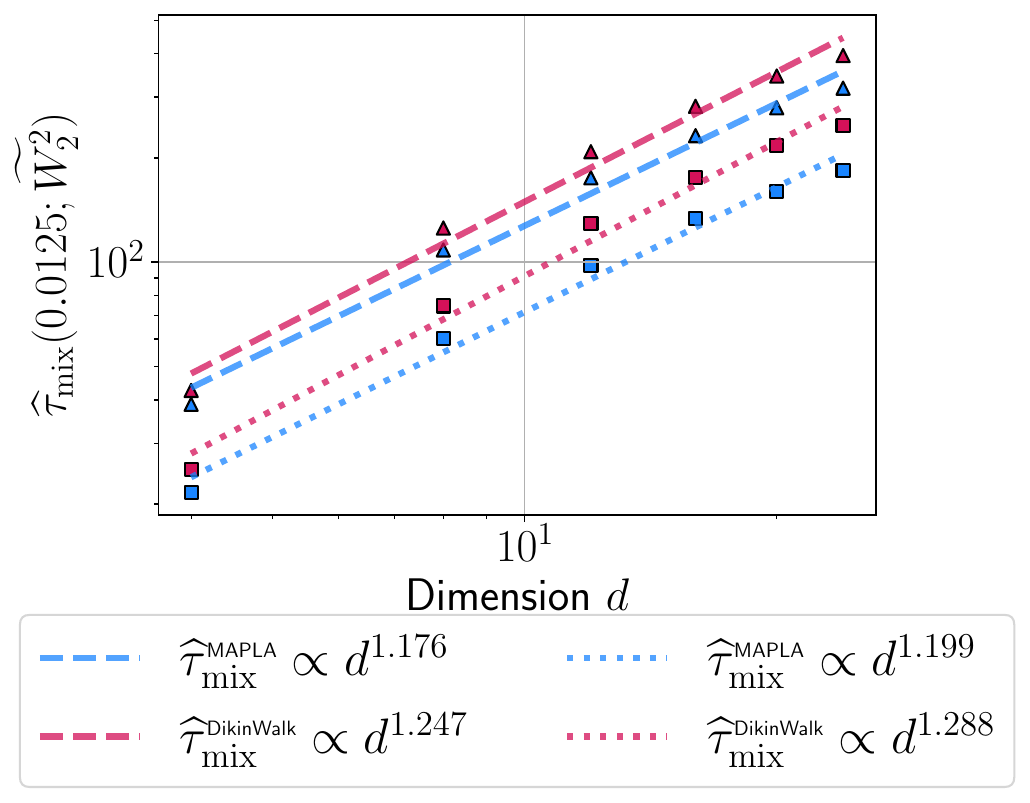}
    \end{subfigure}
    \hfill
    \begin{subfigure}{0.48\linewidth}
        \centering
        \includegraphics[width=0.8\linewidth]{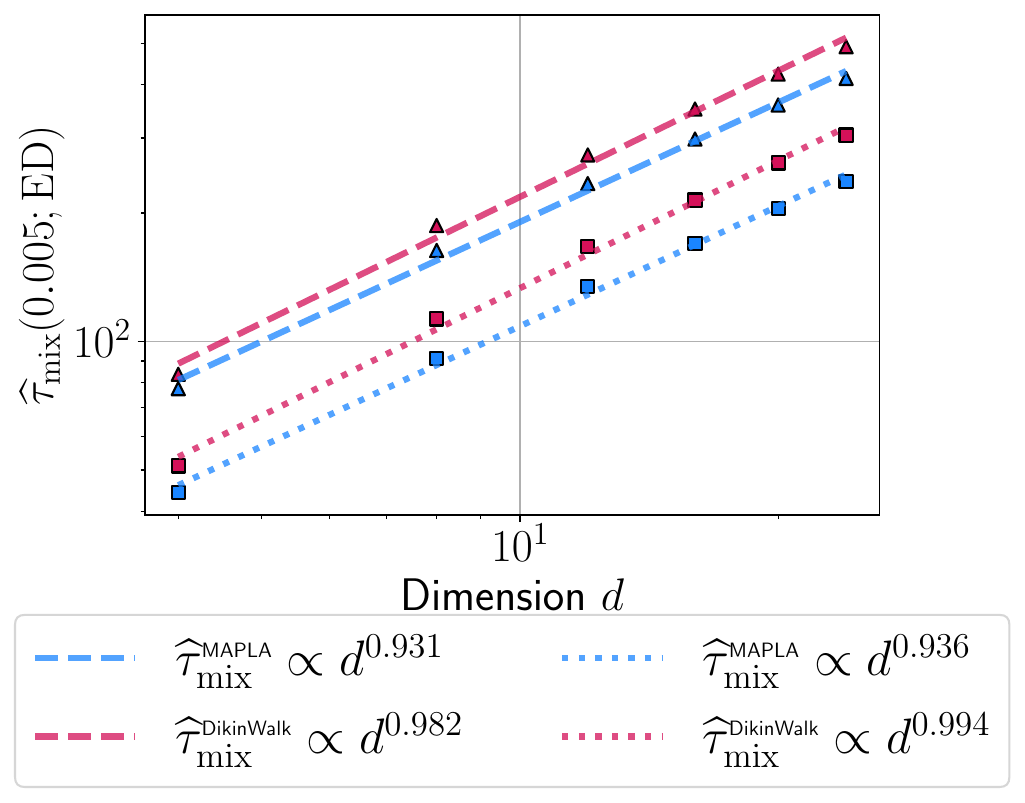}
    \end{subfigure}
    \caption{Variation of empirical mixing time, computed with \(\widetilde{W_{2}^{2}}\) (left) and \(\mathrm{ED}\) (right) for both \MAPLA{} and \Dikin{}.
    The dashed and dotted lines correspond to \(C_{h} = 0.1\) and \(0.2\) respectively.
    The ordinates of the markers indicate the average empirical mixing time over 20 simulations.}
    \label{fig:mixing-time-dirichlet}
\end{figure}

A few direct inferences can be made from \cref{fig:mixing-time-dirichlet}.
Irrespective of \(C_{h}\) and the measure \(\textsf{dist}\), the variation of the empirical mixing time with dimension is almost linear for both methods.
For both \nameref{alg:mapla} and \Dikin{}, we see that a larger \(C_{h}\) leads too a faster mixing, which is to be expected.
Most notably, we observe that for a fixed \(C_{h}\), \nameref{alg:mapla} mixes faster than \Dikin{}.
Intuitively, this is highly likely due to \nameref{alg:mapla} using more information about the potential \(\potential{}\) through its gradient than \Dikin{} in its proposal, which is a geometric random walk.

\begin{figure}[t]
    \centering
    \begin{subfigure}{0.49\linewidth}
        \centering
        \includegraphics[width=\linewidth]{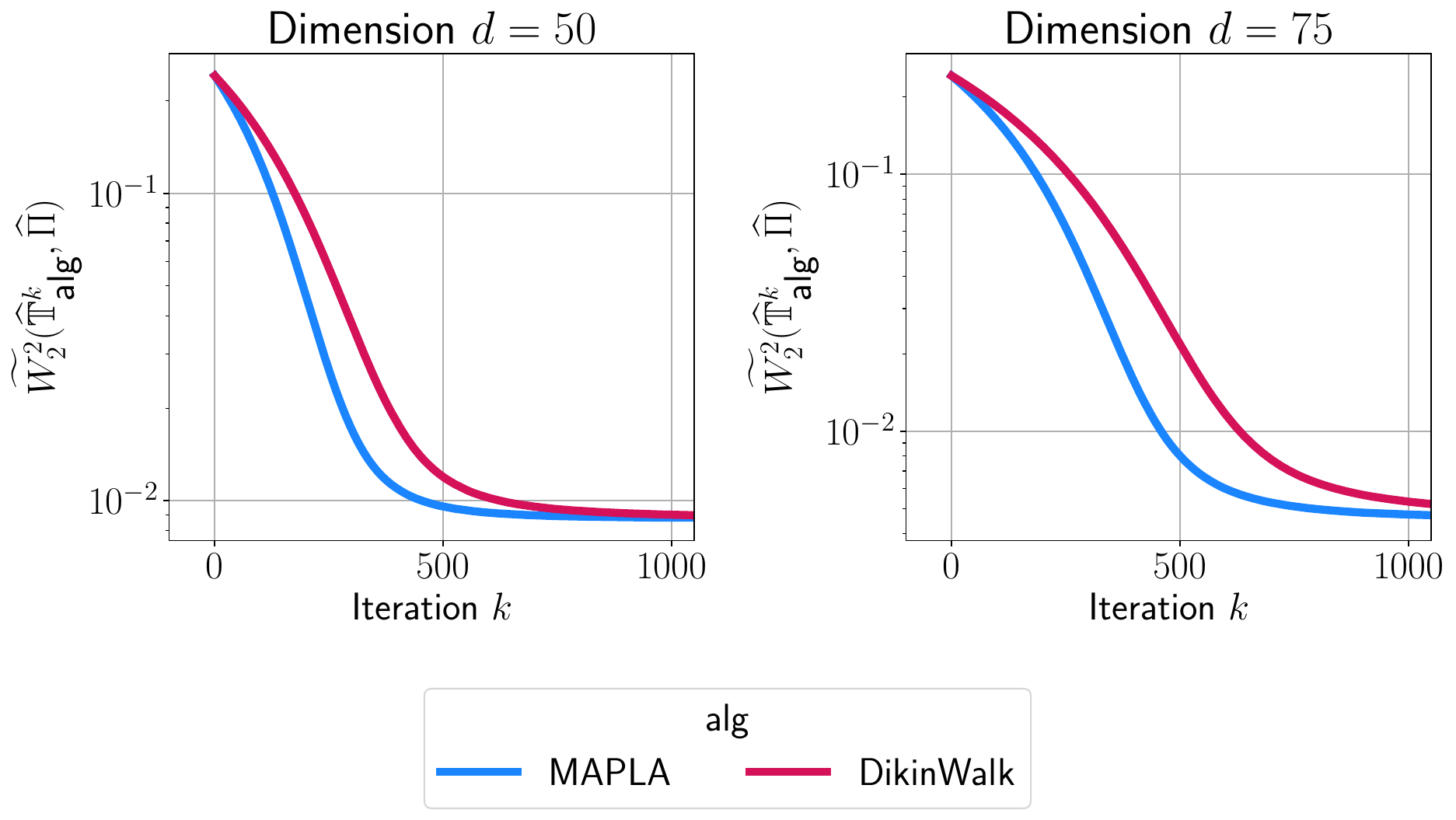}
    \end{subfigure}
    \hfill
    \begin{subfigure}{0.49\linewidth}
        \centering
        \includegraphics[width=\linewidth]{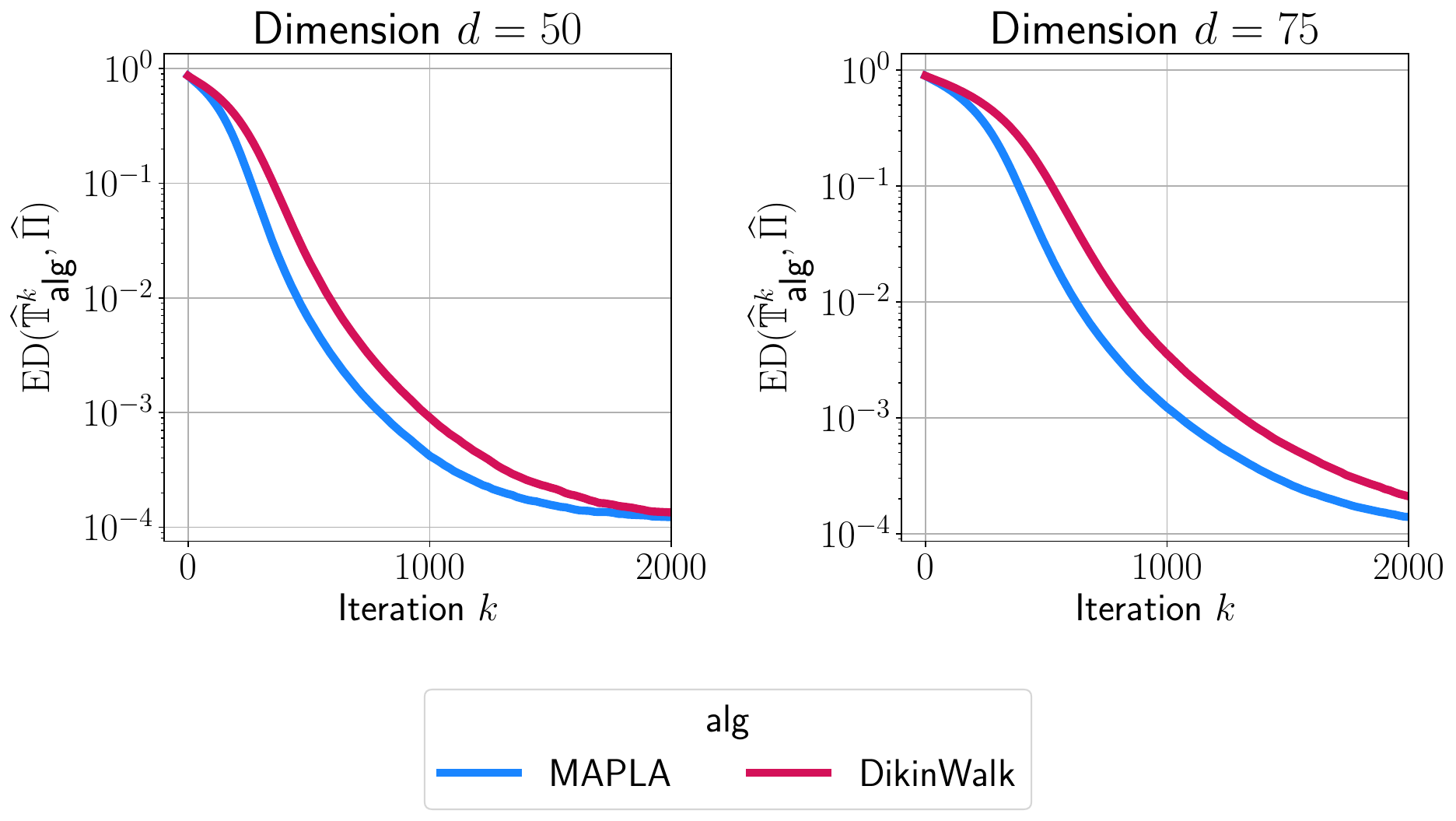}
    \end{subfigure}
    \caption{Variation of \(\textsf{dist}(\widehat{\bbT}^{k}_{\textsf{alg}}, \widehat{\Pi})\) for \(\textsf{dist} = \widetilde{W_{2}^{2}}\) (left) and \(\mathrm{ED}\) (right) with iteration \(k\).
    For the plots showing the variation of \(\empwd{}\), the \(x\)-axis is truncated to \(1000\) as the values converged.}
    \label{fig:dist-variation-d-50-75}
\end{figure}

To complement \cref{fig:mixing-time-dirichlet} which showcases the empirical mixing time, we also plot the variation in \(\textsf{dist}(\widehat{\bbT}^{k}_{\mathsf{alg}}, \widehat{\Pi})\) with the iteration \(k\) for larger values of \(d\) in \cref{fig:dist-variation-d-50-75}.
Here, we set the concentration parameter to \(\bm{2}_{d}\), and the step size \(h = \frac{1}{10 \bm{a}_{\max} \cdot d}\), and run both methods for \(2000\) iterations.
These plots reveal a discernible difference in the mixing behaviour of both methods, and this difference appears to widen as \(d\) increases.

\subsubsection{Step size scaling and accceptance rate}

The choice of metric \(\metric{}\) here satisfies self-concordance\textsubscript{++}, and therefore \cref{thm:more-than-SC-mapcla} suggests that the maximum step size for \nameref{alg:mapla} scales as \((d \cdot \|\bm{a}\|)^{-1}\).
This is due to the fact that \(\potential{}\) satisfies \((\|\bm{a}\|, \metric{})\)-gradient upper bound as stated previously.
When \(a_{i} = a\) for all \(i \in [d + 1]\), \(\|\bm{a}\| = a \cdot \sqrt{d + 1}\), and this leads to the maximum step size scaling with dimension \(d\) as \(d^{-\nicefrac{3}{2}}\).
We note that this bound on the step size is based on a \emph{uniform} bound on the local norm \(\|\gradpotential{}\|_{\metric{}^{-1}}\), and does not take into account other properties of the distribution which might be useful.
Hence, it would be useful to verify if this bound is indeed tight and if step sizes that scale better with dimension (say as \(d^{-\gamma}\) for \(\gamma < 1.5\)) yield non-vanishing acceptance rates.
This has implications for the mixing time guarantee of \nameref{alg:mapla}; in essence, if \(\bar{h}\) is the maximum step size that results in non-vanishing acceptance rates for \nameref{alg:mapla}, then its mixing time is proportional to \(\bar{h}^{-1}\), which in turn scales as \(d^{\gamma}\).

\begin{minipage}{0.5\linewidth}
Our diagnostic for this is the (empirical) average acceptance rate, which was previously used by \citet{dwivedi2018log} and \citet{srinivasan2024fast} to similarly verify guarantees for \textsf{MALA} and \MAMLA{} respectively.
For a step size \(h\) and algorithm \textsf{alg}, this (empirical) average acceptance rate \(\widehat{R}_{\mathrm{accept}}\) is defined as the average proportion of accepted proposals (out of \(N = 2000\) particles) over \(4500\) iterations of \textsf{alg} with step size \(h\) after a burn-in period of \(500\) iterations.
The concentration parameter \(\bm{a}\) is set to \(\bm{2}_{d}\).
For both \nameref{alg:mapla} and \Dikin{}, we set the step size as \(h = (10 \cdot d^{\gamma})^{-1}\) and \(\gamma \in \{0.75, 1, 1.5\}\).
We plot the average variation (over 10 independent simulations) of \(\widehat{R}_{\mathrm{accept}}\) with the dimension \(d\) in \cref{fig:accept-rate}.
\end{minipage}
\hfill
\begin{minipage}{0.47\linewidth}
\centering
\begin{figure}[H]
\raggedleft
\includegraphics[width=0.8\linewidth]{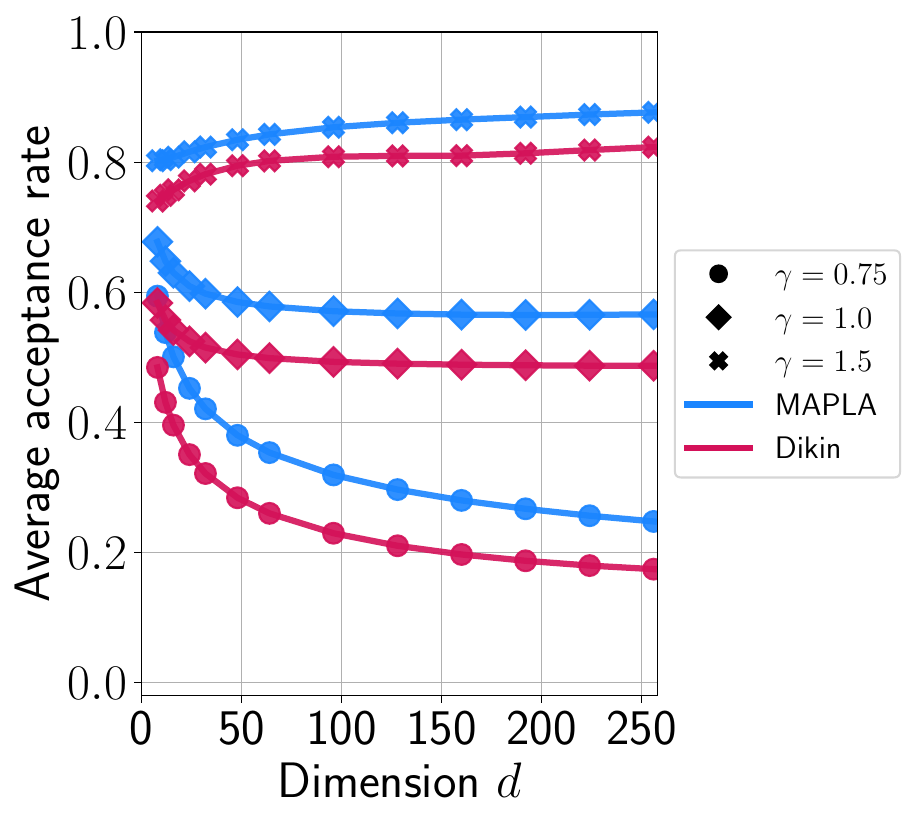}
\caption{Variation of \(\widehat{R}_{\mathrm{accept}}\) with dimension \(d\) when stepsize \(h \propto d^{-\gamma}\).}
\label{fig:accept-rate}
\end{figure}
\end{minipage}

From \cref{fig:accept-rate}, we observe similar trends for both \nameref{alg:mapla} and \Dikin{} for all values of \(\gamma\), and that the average acceptance rate is consistently higher for \nameref{alg:mapla}.
We see that \(\widehat{R}_{\mathrm{accept}}\) increases as the dimension \(d\) increases when \(\gamma = 1.5\), which suggests that this scaling is indeed moderately conservative.
More interestingly, we also observe that \(\gamma = 1.0\) yields non-vanishing acceptance rates like \Dikin{}, but when \(\gamma = 0.75\), we see that the average acceptance rate for both methods decrease steadily as the dimension \(d\) increases.

\subsection{Application to Bayesian logistic regression posteriors}

In logistic regression, we are given a collection of pairs \(D_{n} = \{(X^{(i)}, y^{(i)})\}_{i=1}^{n}\), where \(X^{(i)} \in \bbR^{d}\), and \(y^{(i)} \in \{0, 1\}\) for all \(i \in [n]\).
It is assumed that there exists a parameter space \(\Theta \subseteq \bbR^{d}\) and \(\theta^{\star} \in \Theta\) such that for each \(i \in [n]\), \(y^{(i)}\) is drawn according to a Bernoulli distribution with probability \(p_{i} = (1 + \exp(-\langle\theta^{\star}, X^{(i)}\rangle))^{-1}\).
The goal of (Bayesian) logistic regression is to obtain an estimate \(\widehat{\theta}_{n} \in \Theta\) of \(\theta^{\star}\) based on \(D_{n}\) alone, with high probability.
A natural approach is to obtain the maximum likelihood estimate of \(\theta^{\star}\), and this is consistent.
However, in low-data regimes, additional information about \(\theta^{\star}\) is beneficial as it imposes more structure on the problem.

In the Bayesian framework, this additional information is provided in the form of a prior distribution over \(\Theta\) with a density \(\nu(\theta)\).
This allows us to obtain the posterior density of \(\theta\) given \(D_{n}\), which is useful in making inferences about \(\theta^{\star}\) by drawing samples from the posterior due to the Bernstein von-Mises theorem when \(\nu(\theta^{\star}) > 0\).
The posterior density function \(\tilde{\nu}\) is defined as
\begin{equation}
\label{eq:potential-bayesian-logistic-regression}
    \tilde{\nu}(\theta) \propto \exp(-\potential{\theta; D_{n}} + \log \nu(\theta)); \quad \potential{\theta; D_{n}} = -\sum_{i = 1}^{n}\left(y^{(i)}\langle\theta, X^{(i)}\rangle - \log(1 + \exp(\langle\theta, X^{(i)}\rangle))\right).
\end{equation}

For the empirical study in this section, \(\Theta\) is a polytope formed by \(2d\) linear constraints.
More specifically, this is obtained performing by \(\lfloor \nicefrac{d}{2} \rfloor\) random rotations of the \(d\)-dimensional box \([-2, 2]^{d}\) followed by a fixed translation.
The prior \(\nu\) is a uniform prior over \(\Theta\) and hence, the posterior \(\tilde{\nu}\) is a constrained distribution over \(\Theta\).
For both \nameref{alg:mapla} and \Dikin{}, we choose the metric \(\metric{}\) as the Hessian of the log-barrier of \(\Theta\).
Let \(\overline{\metric{}}\) denote the Hessian of the log-barrier of \([-2, 2]^{d}\) given by
\begin{equation*}
    \overline{\metric{}}(\theta) = \sum_{i=1}^{d} \left(\frac{1}{(2 - \theta_{i})^{2}} + \frac{1}{(2 + \theta_{i})^{2}}\right)e_{i}e_{i}^{\top} \succeq \frac{1}{2}\rmI_{d \times d}~.
\end{equation*}
The metric \(\metric{}\) is related to \(\overline{\metric{}}\) as \(\metric{} = R^{\top}\overline{\metric{}}R\), where \(R\) is a product of \(\lfloor \nicefrac{d}{2}\rfloor\) Givens rotation matrices (which are orthonormal), and hence for any \(\theta \in \Theta\), \(\metric{\theta} \succeq \frac{1}{2}\rmI_{d \times d}\).
Let \(\lambda_{\max}\) be the maximum eigenvalue of \(\sum_{i=1}^{n} X^{(i)}{X^{(i)}}^{\top}\).
Since \(\frac{\exp(t)}{(1 + \exp(t))^{2}} \leq \frac{1}{4}\) for \(t \in \bbR\), we have
\begin{equation*}
    \hesspotential{\theta} = \sum_{i=1}^{n} \frac{\exp(\langle \theta, X^{(i)}\rangle)}{(1 + \exp(\langle \theta, X^{(i)}\rangle))} X^{(i)}{X^{(i)}}^{\top} \preceq \frac{\lambda_{\max}}{2} \cdot \frac{1}{2}\rmI_{d \times d} \preceq \frac{\lambda_{\max}}{2} \cdot \metric{\theta}~.
\end{equation*}

We initialise both algorithms with \(N\) initial points, and at each iteration both algorithms return a collection of \(N\) independent samples.
Let \(\{\theta^{\textsf{alg}}_{k, j}\}_{j=1}^{N}\) be the samples obtained at iteration \(k\) from algorithm \(\textsf{alg} \in \{\MAPLA{}, \Dikin{}\}\).
To assess the mixing behaviours, we use two measures -- the error between the sample mean of these samples and the true parameter \(\theta^{\star}\) denoted by \(\widehat{\mathrm{Err}}_{k}\), and the variation in the average log-likelihood of the data \(D_{n}\) denoted by \(\widehat{\mathrm{LL}}_{k}\).
\begin{equation*}
    \widehat{\mathrm{Err}}_{k} = \frac{1}{d} \cdot \left\|\widehat{\theta}_{k}^{\textsf{alg}} - \theta^{\star}\right\|_{1}~,\quad  \widehat{\mathrm{NLL}}_{k} = \frac{1}{n} \cdot f(\widehat{\theta}^{\textsf{alg}}_{k}; D_{n})~\quad \text{where }\widehat{\theta}_{k}^{\textsf{alg}} = \frac{1}{N}\sum_{j=1}^{N}\theta_{k, j}^{\textsf{alg}}~.
\end{equation*}

For a given dimension \(d\), we consider \(10\) independent simulations.
In each simulation, we generate a dataset \(D_{n}\), where \(n = 20d\).
The covariate \(X\) for \((X, y) \in D_{n}\) is a \(d\)-dimensional vector, whose entries are drawn independently and with equal propability from the set \(\{-\nicefrac{1}{\sqrt{d}}, \nicefrac{1}{\sqrt{d}}\}\).
The true parameter is set as \(\theta^{\star} = \bm{1}_{d}\), and the corresponding response \(y \in \{0, 1\}\) is drawn according to a Bernoulli distribution where the probability of \(y = 1\) is given by \((1 + \exp(-\langle \theta^{\star}, X\rangle))^{-1}\).
We run \nameref{alg:mapla} and \Dikin{} for 3000 iterations with stepsize \(h = \frac{C_{h}}{\lambda_{\max} \cdot d}\).
Under this covariate model, \(\lambda_{\max}\) is almost a constant for large enough \(d\) as Bai-Yin's law states that \(\lambda_{\max} \to \left(\sqrt{20} + 1 + o(1)\right)^{2}\) as \(d\) increases.
For \(\textsf{meas} \in \{\widehat{\mathrm{Err}}, \widehat{\mathrm{NLL}}\}\), we plot the average variation of \(\textsf{meas}_{k}\) with the iteration \(k\), where the average is taken over simulations.
\cref{fig:meas-variation-d-32-64} showcases this variation for \(d \in \{32, 64\}\).

\begin{figure}[t]
    \centering
    \begin{subfigure}{0.49\linewidth}
        \centering
        \includegraphics[width=\linewidth]{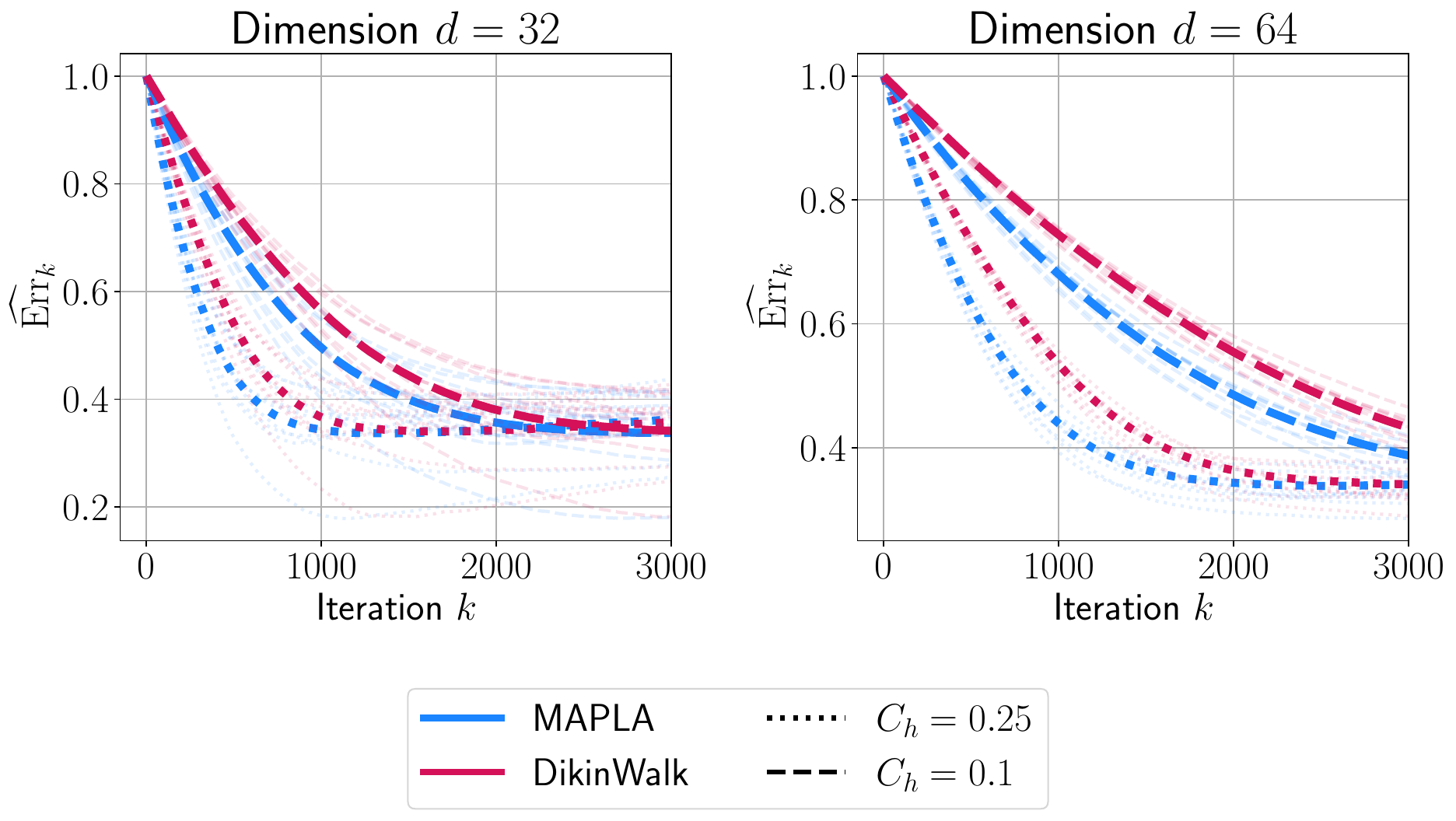}
    \end{subfigure}
    \hfill
    \begin{subfigure}{0.49\linewidth}
        \centering
        \includegraphics[width=\linewidth]{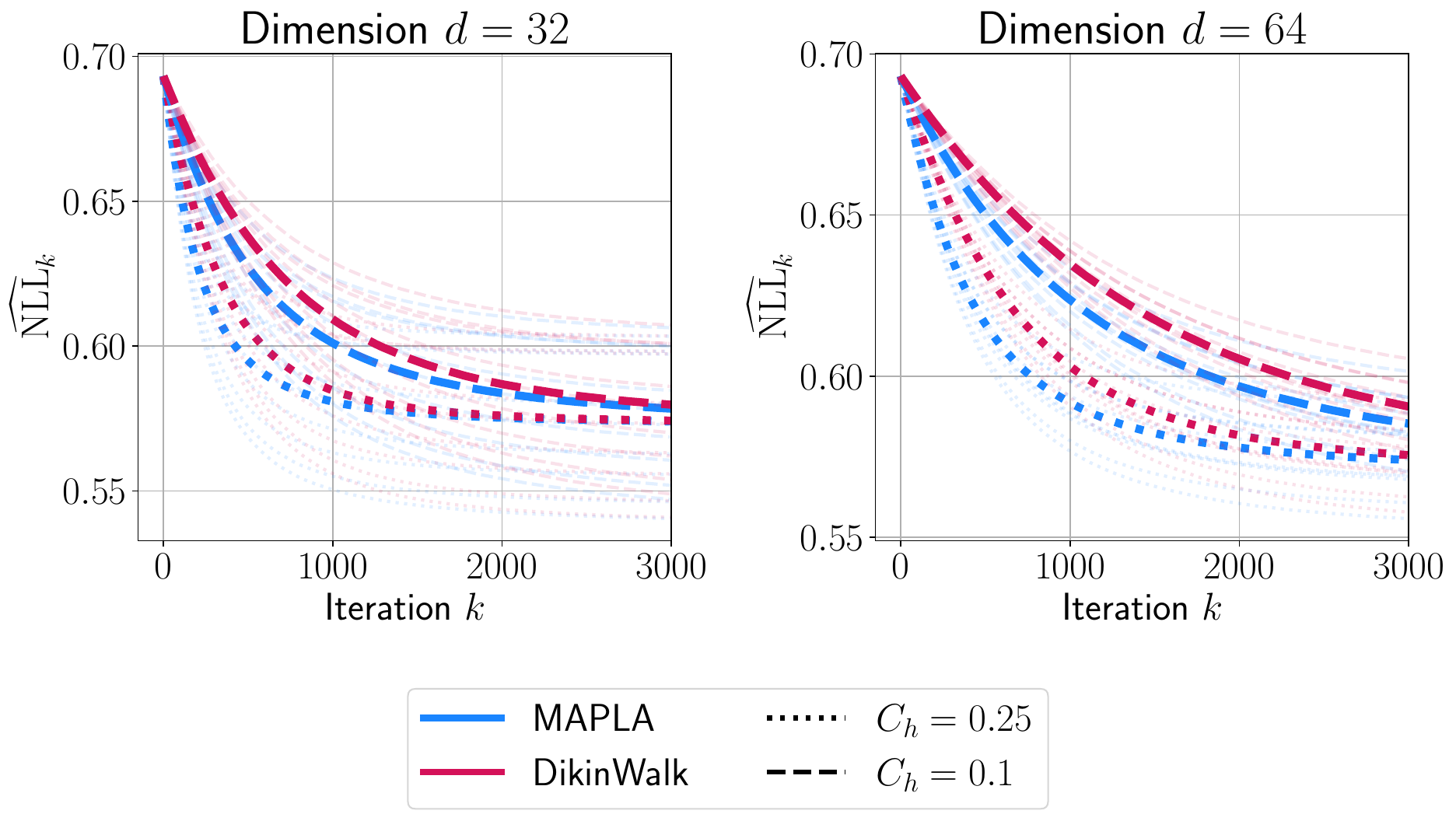}
    \end{subfigure}
    \caption{Average variation of \(\textsf{meas}_{k}\) for \(\textsf{meas} = \widehat{\mathrm{Err}}\) (left) and \(\widehat{\mathrm{NLL}}\) (right) with iteration \(k\).
    The faint lines shown depicts the variation over iterations per simulation.}
    \label{fig:meas-variation-d-32-64}
\end{figure}

We make the following observations from \cref{fig:meas-variation-d-32-64}, which mirror those made in the Dirichlet sampling setup.
First, for both algorithms, a larger value of \(C_{h}\) results in faster decrease in both measures \(\widehat{\mathrm{Err}}\) and \(\widehat{\mathrm{NLL}}\).
Second, for a fixed \(C_{h}\), we see that the rate of decrease of either measure is faster for \nameref{alg:mapla} than \Dikin{}.
This is fundamentally due to the fact that \nameref{alg:mapla} uses gradient information of \(\potential{\theta; D_{n}}\), thus demonstrating the utility of first-order information in addition to the unnormalised density.
Third, we see a larger difference in the rates of decrease between the methods for either measure when \(d = 64\) in comparison to when \(d = 32\).

\begin{wrapfigure}{r}{0.52\linewidth}
\centering
\includegraphics[width=\linewidth]{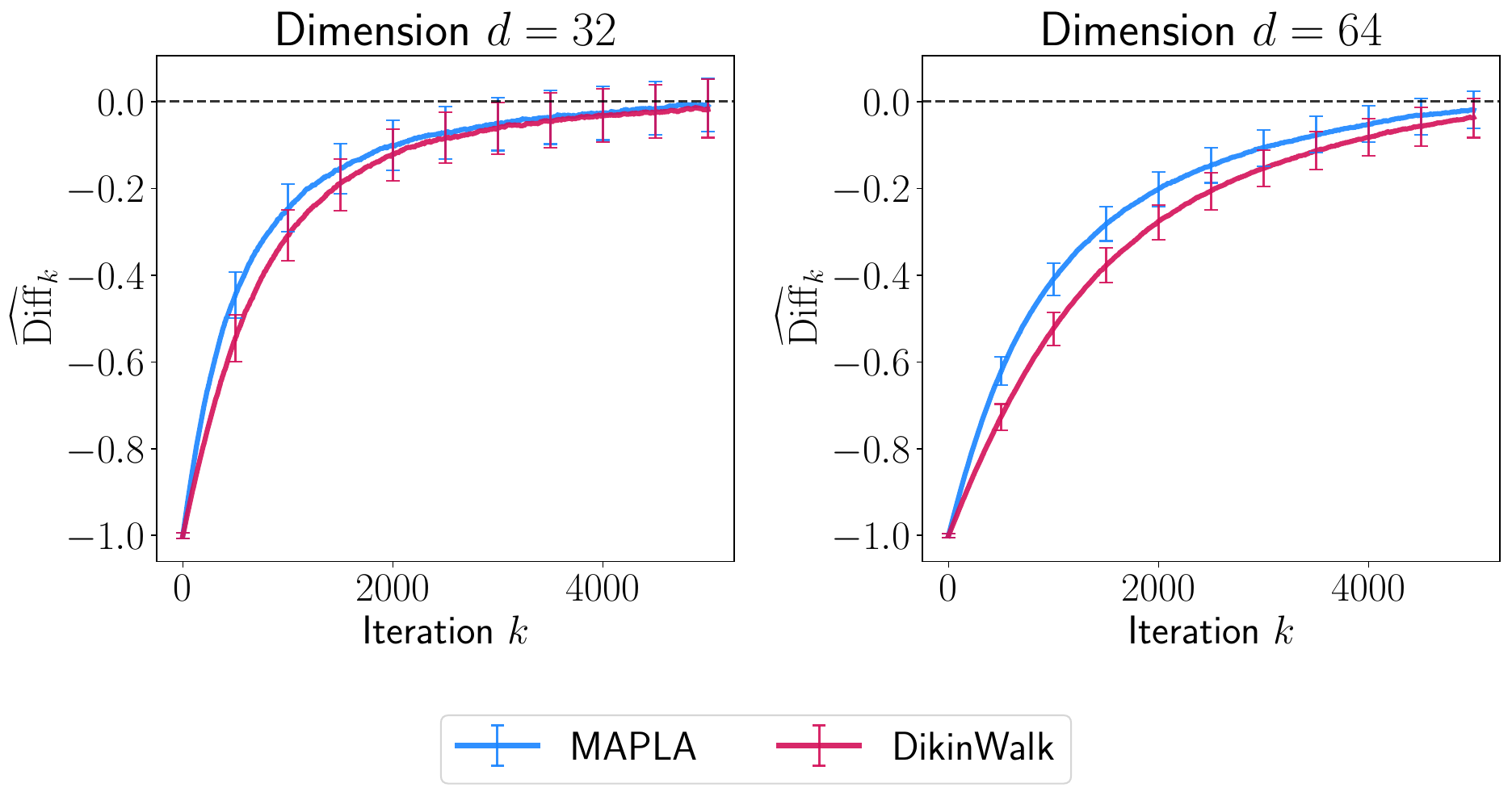}
\caption{Variation in the IQR of \(\widehat{\mathrm{Diff}}_{k}\) with iteration \(k\). The upper and lower bars indicate the 75\textsuperscript{th} and 25\textsuperscript{th} percentile of \(\widehat{\mathrm{Diff}}_{k}\) respectively.}
\label{fig:diff-quantile}
\end{wrapfigure}
In addition to the variation in \(\textsf{meas}_{k}\) with iteration \(k\), we also study the inter-quartile range (IQR) for the (dimension-wise average) error in the parameter following \citet{dwivedi2018log}.
In particular, given the set of samples \(\{\theta_{k,j}^{\textsf{alg}}\}_{j=1}^{N}\) generated by running \(\textsf{alg} \in \{\MAPLA{}, \Dikin{}\}\), we plot the variation in the IQR of \(\widehat{\mathrm{Diff}}_{k} := \left\{\frac{\bm{1}^{\top}(\theta_{k, j}^{\textsf{alg}} - \theta^{\star})}{d}\right\}_{j=1}^{N}\) with iteration \(k\).
For \(d \in \{32, 64\}\), we fix a randomly generated dataset \(D_{n}\) with \(n = 20d\), and run both methods with step size \(h = (10 \lambda_{\max} \cdot d)^{-1}\).
\cref{fig:diff-quantile} portrays a clear distinction in the ranges generated by \nameref{alg:mapla} and \Dikin{}, and this difference is again accentuated for a larger value of \(d\).

\section{Proofs}
\label{sec:proofs}
In this section, we give the proofs of the theorems in \cref{sec:mixing-guarantees}.
\cref{sec:proofs:key-lemmas} states the key lemmas which form the proofs of the theorems in \cref{sec:mixing-guarantees}, and we state these proofs in \cref{sec:proofs:complete-proof}.
Next in \cref{sec:proofs:proofs-of-key-lemmas}, we focus on the proofs of these key lemmas which involve the new isoperimetric inequality that we mentioned earlier, whose proof concludes this section in \cref{sec:proofs:proofs-isoperimetric-lemmas}.

\subsection{A pathway to obtain mixing time guarantees}
\label{sec:proofs:key-lemmas}

Let \(\bfQ = \{\calQ_{x} : x \in \primalspace\}\) be a Markov chain that is reversible with respect to \(\rho\) supported over \(\primalspace\).
The conductance / \(s\)-conductance of \(\bfQ\) (defined in \cref{sec:mixing-guarantees:prelims}) quantifies the likelihood of escaping a set in a worst-case sense, and hence intuitively determines how quickly a \(\bfQ\) mixes to its stationary distribution.
This intuition is made more precise in the following result from \citet{vempala2005geometric} (a collection of special cases arising from the classical analysis of \citet{lovasz1993random}) that relates the conductance / \(s\)-conductance of \(\bfQ\) to its mixing time.
Recall that \(\bbT_{\bfQ}^{k}\) is the transition operator defined by \(\bfQ\) applied \(k\) times.

\begin{proposition}[{\citet[Corr. 3.5]{vempala2005geometric}}]
\label{prop:conductance-to-mixing}
Let \(\bfQ = \{\calQ_{x} : x \in \primalspace\}\) be a lazy, reversible Markov chain with stationary distribution \(\rho\), and let \(\rho_{0}\) be a distribution whose support is contained in \(\primalspace\).
\begin{enumerate}[leftmargin=*]
\item If \(\rho_{0} \in \mathsf{Warm}(L_{\infty}, M, \rho)\), then
\begin{equation*}
    \TVdist(\bbT_{\bfQ}^{k}\rho_{0}, \rho) \leq \sqrt{M} \cdot \left(1 - \frac{{\Phi_{\bfQ}}^{2}}{2}\right)^{k}~.
\end{equation*}
\item If \(\rho_{0} \in \mathsf{Warm}(L_{1}, M, \rho)\), then for any \(\gamma > 0\)
\begin{equation*}
    \TVdist(\bbT_{\bfQ}^{k}\rho_{0}, \rho) \leq \gamma + \sqrt{\frac{M}{\gamma}} \cdot \left(1 - \frac{{\Phi_{\bfQ}}^{2}}{2}\right)^{k}~.
\end{equation*}
\item If \(\rho_{0} \in \mathsf{Warm}(L_{\infty}, M, \rho)\), then for any \(s \in (0, \nicefrac{1}{2})\)
\begin{equation*}
    \TVdist(\bbT_{\bfQ}^{k}\rho_{0}, \rho) \leq M \cdot s + M \cdot \left(1 - \frac{{\Phi_{\bfQ}^{s}}^{2}}{2}\right)^{k}~.
\end{equation*}
\end{enumerate}
\end{proposition}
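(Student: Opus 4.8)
The plan is to reconstruct the Lov\'asz--Simonovits conductance-to-mixing argument, of which all three parts are instances obtained by varying the hypothesis on the starting distribution. Write $\mu_k = \bbT_{\bfQ}^k\rho_0$ and let $h_k = \rmd\mu_k/\rmd\rho$ be the density ratio (well-defined since each warmness hypothesis forces $\rho_0 \ll \rho$). For $t\in[0,1]$ introduce the concave rearrangement
\[
  g_k(t) \;=\; \sup\{\mu_k(A) : A \in \calF(\primalspace),\ \rho(A) = t\} \;=\; \int_0^t \tilde h_k(u)\,\rmd u,
\]
where $\tilde h_k$ is the non-increasing rearrangement of $h_k$ with respect to $\rho$; thus $g_k$ is concave and non-decreasing with $g_k(0)=0$, $g_k(1)=1$, and the supremum is attained at superlevel sets $\{h_k \ge c\}$. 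Since $\TVdist(\mu_k,\rho) = \sup_A(\mu_k(A)-\rho(A)) = \sup_{t\in[0,1]}(g_k(t)-t)$, all three bounds reduce to controlling the ``defect'' $g_k(t)-t$.

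The core step is the one-step contraction: for every $t$,
\[
  g_k(t) \;\leq\; \tfrac12\Bigl(g_{k-1}\bigl(t - 2\,\Phi_{\bfQ}\,\omega(t)\bigr) + g_{k-1}\bigl(t + 2\,\Phi_{\bfQ}\,\omega(t)\bigr)\Bigr), \qquad \omega(t) \defeq \min\{t,1-t\},
\]
and the same with $\Phi_{\bfQ}^s$ in place of $\Phi_{\bfQ}$, valid only for $t\in[s,1-s]$. To prove this, fix $t$, let $S$ be a superlevel set of $h_k$ with $\rho(S)=t$ so that $g_k(t)=\mu_k(S)$, and write $\mu_k(S) = \int_{\primalspace}\calQ_x(S)\,\rmd\mu_{k-1}(x) = \mu_{k-1}(S) - \int_S \calQ_x(\primalspace\setminus S)\,\rmd\mu_{k-1}(x) + \int_{\primalspace\setminus S}\calQ_x(S)\,\rmd\mu_{k-1}(x)$. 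Laziness ($\calQ_x(\{x\})\geq\tfrac12$) ensures the two cross terms genuinely move mass, reversibility of $\bfQ$ with respect to $\rho$ lets one relate the flow out of $S$ to the flow into $S$, and the conductance bound turns this into a lower bound of order $\Phi_{\bfQ}\,\omega(t)$ on the relevant $\rho$-measure; bounding the $\mu_{k-1}$-mass of the thin shells around $\partial S$ by increments of the concave $g_{k-1}$ finally yields the displayed inequality. I expect this to be the main obstacle, since it is where laziness, reversibility, and concavity all have to be combined carefully (and it is exactly the place where the naive argument loses control unless laziness is used).

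Granting the contraction, the three parts are routine inductions. For part (1), $h_k\leq M$ pointwise holds for all $k$ because $\bbT_{\bfQ}$ preserves the pointwise bound, so $g_k(t)\leq\min(Mt,1)$; one then runs the ansatz $g_k(t)-t\leq C_k\,\psi(t)$ for the Lov\'asz--Simonovits test profile $\psi$ (concave, vanishing at $0$ and $1$, with $\sup\psi\leq 1$), checking $C_0=\mathcal{O}(\sqrt M)$ and, via concavity of $\psi$ and a second-order estimate around the maximizing point, $C_k\leq(1-\Phi_{\bfQ}^2/2)C_{k-1}$, which gives $\TVdist(\mu_k,\rho)\leq\sqrt M\,(1-\Phi_{\bfQ}^2/2)^k$. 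For part (3) the contraction is valid only on $[s,1-s]$, so the same induction is carried there with the modified bound $g_k(t)-t\leq Ms + M(1-(\Phi_{\bfQ}^s)^2/2)^k$, while for $t\leq s$ one uses $g_k(t)-t\leq g_k(t)\leq Mt\leq Ms$ directly (symmetrically for $t\geq1-s$). For part (2), I would not re-run the induction: given $\gamma>0$ set $\beta=M/\gamma$, let $p=\mu_0(\{h_0>\beta\})\leq M/\beta=\gamma$ by Markov's inequality (using $\bbE_{\mu_0}[h_0]=M$), and split $\mu_0=(1-p)\mu_0'+p\,\mu_0''$ where $\mu_0'$ is $\mu_0$ conditioned on $\{h_0\leq\beta\}$, so $\mu_0'\in\mathsf{Warm}(L_\infty,\beta/(1-p),\rho)$; then by convexity of $\TVdist(\cdot,\rho)$ in its first argument, linearity of $\bbT_{\bfQ}$, and $\TVdist\leq1$, together with part (1), $\TVdist(\bbT_{\bfQ}^k\mu_0,\rho)\leq p+(1-p)\TVdist(\bbT_{\bfQ}^k\mu_0',\rho)\leq\gamma+\sqrt{M/\gamma}\,(1-\Phi_{\bfQ}^2/2)^k$. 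What remains after these steps is only the calibration of universal constants in the inductions.
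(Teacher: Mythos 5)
The paper does not prove this proposition; it is imported verbatim as \citet[Corr.~3.5]{vempala2005geometric}, whose underlying argument is the Lov\'asz--Simonovits conductance machinery. Your reconstruction is exactly that argument --- the concave curve \(g_k(t)=\sup\{\mu_k(A):\rho(A)=t\}\), the one-step inequality \(g_k(t)\le\tfrac12\bigl(g_{k-1}(t-2\Phi\omega(t))+g_{k-1}(t+2\Phi\omega(t))\bigr)\), the \(\sqrt{\min\{t,1-t\}}\)-profile induction for parts (1) and (3), and the Markov-truncation reduction of part (2) to part (1) --- and all the pieces you state are correct, including the constant \(\sqrt{M}\) in the base case, the decay factor \(1-\Phi_{\bfQ}^2/2\), and the bound \(\sqrt{\beta(1-p)}\le\sqrt{M/\gamma}\) in part (2). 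The only place your sketch is looser than the source is the one-step lemma: rather than arguing via ``thin shells around \(\partial S\)'', the clean route is to split \(\calQ_x(S)=\tfrac12\min\{1,2\calQ_x(S)\}+\tfrac12\max\{0,2\calQ_x(S)-1\}\), note that laziness makes both summands \([0,1]\)-valued so each integral against \(\mu_{k-1}\) is bounded by \(g_{k-1}\) of its \(\rho\)-integral, and use reversibility plus the conductance bound to show those two \(\rho\)-integrals sum to \(2t\) and are separated by at least \(4\Phi_{\bfQ}\omega(t)\), after which concavity of \(g_{k-1}\) finishes; this is a completion of your outline, not a correction to it.
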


Therefore, given a lower bound on the conductance / \(s\)-conductance of \(\bfQ\), we can obtain non-asymptotic mixing time guarantees for the Markov chain \(\bfQ\) from a warm initial distribution.
Indeed, these lower bounds have to be bounded away from \(0\) for a meaningful mixing time guarantee.

As discussed briefly previously in \cref{sec:mixing-guarantees:discussion} we use the one-step overlap technique pioneered by \citet{lovasz1999hit} to obtain lower bounds on the conductance / \(s\)-conductance.
The following lemma formally states how the one-step overlap yields the likelihood of escaping a set.

\begin{lemma}
\label{lem:conductance-lower-isoperimetry}
Consider a Markov chain \(\bfT = \{\calT_{x} : x \in \primalspace\}\) that is reversible with respect to a log-concave distribution \(\targetdist\) supported on \(\primalspace\).
Let the metric \(\metric{} : \interior{\primalspace} \to \bbS_{+}^{d}\) be self-concordant and \(\nu\)-symmetric.
Assume that there exists a convex subset \(\calS\) of \(\interior{\primalspace}\) such that for any \(x, y \in \calS\),
\begin{equation*}
    \|x - y\|_{\metric{y}} \leq \Delta \Rightarrow \TVdist(\calT_{x}, \calT_{y}) \leq \frac{1}{4}~
\end{equation*}
where \(\Delta \leq \frac{1}{2}\).
If the potential \(\potential{}\) of \(\targetdist\) satisfies a \((\mu, \metric{})\)-curvature lower bound, then for any measurable partition \(\{A_{1}, A_{2}\}\) of \(\primalspace\),
\begin{equation*}
    \int_{A_{1}} \calT_{x}(A_{2})~\targetdens(x)\rmd x \geq \frac{3 \Delta}{16} \cdot \min\left\{1,\max\left\{\frac{\widetilde{\mu}}{4},~\frac{1}{8\sqrt{\nu}}\right\}\right\} \cdot \min\left\{\targetdist(A_{1} \cap \calS), \targetdist(A_{2} \cap \calS)\right\}; \enskip \widetilde{\mu} = \frac{\mu}{8 + 4\sqrt{\mu}}~.
\end{equation*}
\end{lemma}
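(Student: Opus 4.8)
The plan is to run the one-step overlap (``conductance by overlap'') argument of \citet{lovasz1999hit}. The idea is to locate, inside each half of the partition, the ``sticky'' points from which the chain almost never crosses to the other half; to argue that sticky points lying in $\calS$ must be separated in the $\metric{}$-norm because their one-step distributions have TV distance close to $1$; and then to feed this separation into the two isoperimetric inequalities available for $\targetdist$ — the log-concave one powered by $\nu$-symmetry \citep{laddha2020strong}, and the curvature-based one proved later in this section (generalising \citet[Lem.~7]{gopi2023algorithmic}) — to lower-bound the mass of the ``crossing'' region.

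Concretely, fix the partition $\{A_1,A_2\}$ and, using reversibility of $\bfT$ with respect to $\targetdist$, write $\int_{A_1}\calT_x(A_2)\,\targetdens(x)\,\rmd x=\tfrac12\big(\int_{A_1}\calT_x(A_2)\,\targetdens(x)\,\rmd x+\int_{A_2}\calT_x(A_1)\,\targetdens(x)\,\rmd x\big)$. Set $A_1'=\{x\in A_1:\calT_x(A_2)<\tfrac14\}$ and $A_2'=\{y\in A_2:\calT_y(A_1)<\tfrac14\}$. Every point of $\primalspace\setminus(A_1'\cup A_2')$ crosses the partition with probability at least $\tfrac14$, so the displayed quantity is at least $\tfrac18\,\targetdist(\primalspace\setminus A_1'\setminus A_2')\ge \tfrac18\,\targetdist(\calS\setminus A_1'\setminus A_2')$. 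We may assume $\targetdist(A_1\cap\calS)$ and $\targetdist(A_2\cap\calS)$ are both positive, else the claim is trivial. If $\targetdist(A_1'\cap\calS)<\tfrac12\,\targetdist(A_1\cap\calS)$ — or, by the symmetric argument using reversibility once more, if $\targetdist(A_2'\cap\calS)<\tfrac12\,\targetdist(A_2\cap\calS)$ — then the crossing region already contains $\targetdist$-mass at least $\tfrac12\min\{\targetdist(A_1\cap\calS),\targetdist(A_2\cap\calS)\}$, and since $\Delta\le\tfrac12$ the resulting bound $\tfrac18\min\{\targetdist(A_1\cap\calS),\targetdist(A_2\cap\calS)\}$ already dominates the right-hand side of the lemma.

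In the complementary case $\targetdist(A_i'\cap\calS)\ge\tfrac12\,\targetdist(A_i\cap\calS)$ for $i=1,2$, so both $A_1'\cap\calS$ and $A_2'\cap\calS$ are nonempty. For $x\in A_1'\cap\calS$ and $y\in A_2'\cap\calS$ we have $\calT_x(A_1)>\tfrac34$ and $\calT_y(A_1)<\tfrac14$, hence $\TVdist(\calT_x,\calT_y)>\tfrac12>\tfrac14$; the contrapositive of the overlap hypothesis (applied with each of $x,y$ in turn playing the role of ``$y$'') gives $\|x-y\|_{\metric{y}}>\Delta$, so $A_1'\cap\calS$ and $A_2'\cap\calS$ are at $\metric{}$-distance more than $\Delta$ inside the convex body $\calS$. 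Now apply the isoperimetric inequalities to the partition $\{A_1'\cap\calS,\ A_2'\cap\calS,\ \calS\setminus A_1'\setminus A_2'\}$ of $\calS$: the restriction of $\targetdist$ to the convex set $\calS$ remains log-concave and still satisfies the $(\mu,\metric{})$-curvature lower bound, the separation exceeds $\Delta$, and $\Delta\le\tfrac12$ keeps us within the linear part of both isoperimetric profiles, so they jointly yield $\targetdist(\calS\setminus A_1'\setminus A_2')\ge c\,\Delta\cdot\max\{\tfrac{\widetilde{\mu}}{4},\tfrac1{8\sqrt\nu}\}\cdot\min\{\targetdist(A_1'\cap\calS),\targetdist(A_2'\cap\calS)\}$. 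Using $\targetdist(A_i'\cap\calS)\ge\tfrac12\targetdist(A_i\cap\calS)$, chaining with $\int_{A_1}\calT_x(A_2)\,\targetdens(x)\,\rmd x\ge\tfrac18\,\targetdist(\calS\setminus A_1'\setminus A_2')$, and fixing the numerical constant so that this case and the previous one are both absorbed into the factor $\min\{1,\max\{\tfrac{\widetilde{\mu}}{4},\tfrac1{8\sqrt\nu}\}\}$, delivers the stated inequality.

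The main obstacle is the bookkeeping at the interface with the isoperimetric machinery on the subset $\calS$. In particular, $\nu$-symmetry is a property of $\metric{}$ relative to $\primalspace$ that does not transfer to $\calS$; the clean way around this is to invoke the general log-concave (cross-ratio) isoperimetric inequality on the convex body $\calS$ and then use $\nu$-symmetry of $\metric{}$ relative to $\primalspace$ only to bound the cross-ratio distance on $\calS$ (which dominates the one on $\primalspace$) below by $\tfrac1{\sqrt\nu}$ times the $\metric{}$-distance. Matching the constants in this chain — together with the curvature-based inequality, which contributes the $\widetilde{\mu}$ term and which itself must be established separately — against the precise coefficients $\tfrac14$, $\tfrac1{8\sqrt\nu}$ and the threshold $\Delta\le\tfrac12$ is where essentially all the effort goes; the overlap-to-separation step and the case split are routine.
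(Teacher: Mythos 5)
This is essentially the paper's proof: reversibility, the sticky sets, the half-mass case split, TV-to-metric separation, and the two isoperimetric inequalities applied to the three-way partition of the convex set \(\calS\), with \(\nu\)-symmetry entering only through the cross-ratio bound exactly as you describe. Two points of calibration: the paper takes the threshold defining \(A_i'\) to be \(\tfrac{3}{8}\) rather than \(\tfrac{1}{4}\) (still yielding TV separation \(1-\tfrac{3}{8}-\tfrac{3}{8}=\tfrac{1}{4}\)), which is what makes the first case produce \(\tfrac{3}{32}\min\{\cdot\}\) instead of your \(\tfrac{1}{16}\min\{\cdot\}\) and hence actually dominate \(\tfrac{3\Delta}{16}\) when \(\Delta=\tfrac{1}{2}\) and the curvature factor saturates at \(1\); and the curvature-based isoperimetric inequality is stated in geodesic distance, so one must convert \(\|x'-y'\|_{\metric{y'}}>\Delta\) into \(d_{\metric{}}(x',y')>\Delta/2\) via \cref{lem:nesterov-todd-result}, which is where the \(\tfrac{\widetilde{\mu}}{4}\) in the statement comes from.
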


We give a proof of \Cref{lem:conductance-lower-isoperimetry} in \Cref{sec:proof:lem:conductance-lower-isoperimetry}.
This technique has been widely employed to obtain mixing time upper bounds for several MCMC algorithms that induce Markov chains which are reversible with respect to the target distribution which can either be constrained or unconstrained.

We recall that our focus is on obtaining lower bounds on the conductance / \(s\)-conductance of Markov chain \(\bfT\) induced by \nameref{alg:mapla}, which results from the Metropolis adjustment of the Markov chain \(\bfP\) defined by a single step of \ref{eq:PLA}.
To use \cref{lem:conductance-lower-isoperimetry} for \(\bfT\), we require checking the existence of a convex subset \(\calS\) where the one-step overlap assumed in its statement holds.
We identify \(\calS\) based on properties of the \(\potential{}\) of the target \(\targetdist\) and state these in the proofs of the theorems.
Our strategy to establish the one-step overlap in this subset is to bound \(\TVdist(\calP_{x}, \calP_{y})\) and \(\TVdist(\calT_{x}, \calP_{x})\) for any \(x, y \in \calS\), and then use the triangle inequality for the TV distance like so:
\begin{equation*}
    \TVdist(\calT_{x}, \calT_{y}) \leq \TVdist(\calT_{x}, \calP_{x}) + \TVdist(\calP_{x}, \calP_{y}) + \TVdist(\calT_{y}, \calP_{y})~.
\end{equation*}
Note that \(\bfT\) is defined based on the potential \(\potential{}\) of \(\targetdist\), the metric \(\metric{}\), and the step size \(h > 0\).

\begin{lemma}
\label{lem:px-py-small}
Let the metric \(\metric{} : \interior{\primalspace} \to \bbS_{+}^{d}\) be self-concordant.
Assume that there exists a convex set \(\calS \subseteq \interior{\primalspace}\) and \(\sfU_{\potential{}, \calS} \geq 0\) such that for any \(x \in \calS\)
\begin{equation*}
    \sup_{z :~ \|z - x\|_{\metric{x}} \leq \frac{1}{2}} \|\gradpotential{z}\|_{\metric{z}^{-1}} \leq \sfU_{\potential{}, \calS}~.
\end{equation*}
Then, for any \(x, y \in \calS\) such that \(\|x - y\|_{\metric{y}} \leq \frac{\sqrt{h}}{10}\) with \(h < 1\),
\begin{equation*}
    \TVdist(\calP_{x}, \calP_{y}) \leq \frac{1}{2} \cdot \sqrt{\frac{h \cdot d}{20} + \frac{1}{80} + \frac{9h \cdot \sfU_{\potential{}, \calS}^{2}}{2}}~.
\end{equation*}
\end{lemma}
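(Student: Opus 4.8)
The plan is to apply Pinsker's inequality, so that $\TVdist(\calP_x, \calP_y) \le \sqrt{\tfrac{1}{2}\KLdist(\calP_x \,\|\, \calP_y)}$, and then bound the KL divergence using the closed form for Gaussians. Writing $m_x = x - h \cdot \metric{x}^{-1}\gradpotential{x}$ and $m_y = y - h \cdot \metric{y}^{-1}\gradpotential{y}$ for the two means, so that $\calP_x = \gaussian{m_x}{2h \cdot \metric{x}^{-1}}$ and $\calP_y = \gaussian{m_y}{2h \cdot \metric{y}^{-1}}$, the Gaussian KL formula gives
\begin{equation*}
    \KLdist(\calP_x \,\|\, \calP_y) = \frac{1}{2}\left(\trace(\metric{y}\metric{x}^{-1}) - d + \log\det(\metric{x}\metric{y}^{-1}) + \frac{1}{2h}\|m_x - m_y\|_{\metric{y}}^{2}\right)~.
\end{equation*}
It therefore suffices to control the covariance-mismatch term $\trace(\metric{y}\metric{x}^{-1}) - d + \log\det(\metric{x}\metric{y}^{-1})$ and the mean-mismatch term $\tfrac{1}{2h}\|m_x - m_y\|_{\metric{y}}^{2}$ separately, with the former producing the $\tfrac{hd}{20}$ summand and the latter producing the $\tfrac{1}{80} + \tfrac{9h\sfU_{\potential{},\calS}^{2}}{2}$ summands in the claimed bound.

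For the covariance-mismatch term I would first invoke the standard quantitative consequence of self-concordance of $\metric{}$ (a Dikin-ellipsoid comparison; see \cref{sec:addendum}): since $r \defeq \|x - y\|_{\metric{y}} \le \tfrac{\sqrt h}{10} < \tfrac12$ (using $h < 1$), we have $(1-r)^{2}\metric{y} \preceq \metric{x} \preceq (1-r)^{-2}\metric{y}$, hence also $(1-r)^{2}\metric{x} \preceq \metric{y} \preceq (1-r)^{-2}\metric{x}$. Consequently every eigenvalue $\lambda$ of $\metric{y}\metric{x}^{-1}$ lies in $[(1-r)^{2}, (1-r)^{-2}]$, and since $\trace(\metric{y}\metric{x}^{-1}) - d + \log\det(\metric{x}\metric{y}^{-1}) = \sum_{i}(\lambda_i - 1 - \log\lambda_i)$, the scalar inequality $t - 1 - \log t \le (t-1)^{2}$ (valid for $t \ge \tfrac12$) bounds this by $d \cdot \max\{((1-r)^{-2} - 1)^{2}, (1 - (1-r)^{2})^{2}\}$, which is $O(dr^{2}) = O(hd)$ with a small absolute constant because $r \le \tfrac{1}{10}$.

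For the mean-mismatch term, the triangle inequality for $\|\cdot\|_{\metric{y}}$ gives
\begin{equation*}
    \|m_x - m_y\|_{\metric{y}} \le \|x - y\|_{\metric{y}} + h\|\metric{x}^{-1}\gradpotential{x}\|_{\metric{y}} + h\|\metric{y}^{-1}\gradpotential{y}\|_{\metric{y}}~.
\end{equation*}
The first term is $r \le \tfrac{\sqrt h}{10}$. The third term equals $h\|\gradpotential{y}\|_{\metric{y}^{-1}} \le h\sfU_{\potential{},\calS}$, by applying the hypothesis at $y \in \calS$ with $z = y$. For the second term, $\|\metric{x}^{-1}\gradpotential{x}\|_{\metric{y}}^{2} = \langle \gradpotential{x}, \metric{x}^{-1}\metric{y}\metric{x}^{-1}\gradpotential{x}\rangle \le (1-r)^{-2}\|\gradpotential{x}\|_{\metric{x}^{-1}}^{2} \le (1-r)^{-2}\sfU_{\potential{},\calS}^{2}$, using $\metric{y} \preceq (1-r)^{-2}\metric{x}$ and the hypothesis at $x \in \calS$ with $z = x$. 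Combining the three bounds and then using $(a+b)^{2} \le 2a^{2} + 2b^{2}$, $h < 1$, and $r \le \tfrac{1}{10}$, one obtains $\tfrac{1}{2h}\|m_x - m_y\|_{\metric{y}}^{2} \le \tfrac{1}{80} + O(h\sfU_{\potential{},\calS}^{2})$. Substituting the two estimates into the KL formula and then into Pinsker's inequality (with the constants arranged so the $\tfrac12\sqrt{\cdot}$ prefactor appears) yields the claim.

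The main obstacle is the covariance-mismatch estimate: one needs the quantitative metric-stability consequence of self-concordance stated directly in terms of the $\metric{y}$-norm of $x - y$, combined with a scalar inequality sharp enough that the per-coordinate contribution $\lambda_i - 1 - \log\lambda_i$, summed over all $d$ eigenvalues, stays within the budgeted $O(hd)$ — this is precisely where the constant $\tfrac{1}{10}$ in the hypothesis is used. Everything else is a routine chain of triangle inequalities together with the Gaussian KL formula.
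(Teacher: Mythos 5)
Your proposal follows essentially the same route as the paper's proof: Pinsker's inequality, the closed-form Gaussian KL divergence, the self-concordance Loewner comparison between \(\metric{x}\) and \(\metric{y}\) to confine the eigenvalues of \(\metric{y}\metric{x}^{-1}\) to \([(1-r)^{2},(1-r)^{-2}]\), and a triangle inequality on the mean difference using the local gradient bound at \(x\) and \(y\). The one substantive nit is your scalar bound \(t-1-\log t\le (t-1)^{2}\): the maximising eigenvalue is the upper endpoint \((1-r)^{-2}\), where \(((1-r)^{-2}-1)^{2}\le 4r^{2}/(1-r)^{4}\approx h/16.4\) for \(r=\sqrt h/10\), which overshoots the budgeted \(h/20\); the paper uses the sharper \(t-1-\log t\le (t-1)^{2}/t\), giving \(4r^{2}/(1-r)^{2}\le h/20\) — a cosmetic fix that does not affect the structure of the argument.
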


\begin{lemma}
\label{lem:tx-px-small-SC}
Let the metric \(\metric{} : \interior{\primalspace} \to \bbS_{+}^{d}\) be self-concordant, and the potential \(\potential{} : \interior{\primalspace} \to \bbR\) satisfy \((\lambda, \metric{})\)-curvature upper bound.
Assume that there exists a convex set \(\calS \subseteq \interior{\primalspace}\) and \(\sfU_{\potential{}, \calS} \geq 0\) such that for any \(x \in \calS\),
\begin{equation*}
    \sup_{z :~ \|z - x\|_{\metric{x}} \leq \frac{1}{2}} \|\gradpotential{z}\|_{\metric{z}^{-1}} \leq \sfU_{\potential{}, \calS}~.
\end{equation*}
If the step size \(h > 0\) is bounded from above as
\begin{equation*}
    h \leq c \cdot \min\left\{\frac{1}{\sfU_{\potential{}, \calS}^{2}},~\frac{1}{\sfU_{\potential{}, \calS}^{\nicefrac{2}{3}}},~\frac{1}{(\sfU_{\potential{}, \calS} \cdot \lambda)^{\nicefrac{2}{3}}},~\frac{1}{d \cdot \lambda},~\frac{1}{d^{3}}\right\}; \quad c \leq \frac{1}{20}
\end{equation*}
then for any \(x \in \calS\)
\begin{equation*}
    \TVdist(\calT_{x}, \calP_{x}) \leq \frac{1}{16}~.
\end{equation*}
\end{lemma}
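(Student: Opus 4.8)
The plan is to exploit the fact that, for the Metropolis-adjusted chain $\bfT$, the one-step law $\calT_{x}$ differs from the proposal law $\calP_{x}$ only through rejections, so that $\TVdist(\calT_{x},\calP_{x})$ equals the total rejection probability $r_{x}=1-\bbE_{z\sim\calP_{x}}[p_{\mathrm{accept}}(z;x)\,\bm{1}\{z\in\primalspace\}]$; indeed $\calT_{x}=(1-r_{x})\widehat{\calP}_{x}+r_{x}\,\delta_{x}$ for some probability measure $\widehat{\calP}_{x}$, and the density of $\calT_{x}$ is pointwise at most that of $\calP_{x}$ away from $x$, which forces $\TVdist(\calT_{x},\calP_{x})=r_{x}$. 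It therefore suffices to show $r_{x}\le\frac{1}{16}$, which I would get from the split $r_{x}\le\bbP_{z\sim\calP_{x}}(z\notin\primalspace)+\bbE_{z\sim\calP_{x}}[(1-p_{\mathrm{accept}}(z;x))\bm{1}\{z\in\primalspace\}]$, bounding each summand by $\frac{1}{32}$. For the feasibility summand, self-concordance of $\metric{}$ together with the blow-up of $\metric{}$ at $\partial\primalspace$ gives $\calE_{x}^{\metric{}}(1)\subseteq\primalspace$ (if the segment from $x$ left $\primalspace$ before Dikin-radius $1$, the self-concordance comparison $\metric{x+tu}\preceq(1-t\|u\|_{\metric{x}})^{-2}\metric{x}$ would keep $\metric{}$ bounded while the blow-up forces it to infinity), so it is enough that $\|z-x\|_{\metric{x}}<1$ with probability $\ge1-\frac{1}{64}$; since $z-x=-h\,\metric{x}^{-1}\gradpotential{x}+\sqrt{2h}\,\metric{x}^{-1/2}\xi$ with $\xi\sim\gaussian{\bm{0}}{\rmI_{d \times d}}$, we have $\|z-x\|_{\metric{x}}\le h\,\sfU_{\potential{},\calS}+\sqrt{2h}\,\|\xi\|_{2}$ (using $x\in\calS$ and the hypothesis of the lemma), and a standard Gaussian tail bound $\|\xi\|_{2}\lesssim\sqrt{d}$ together with $h\,\sfU_{\potential{},\calS}\le c$ (from $h\le c\min\{\sfU_{\potential{},\calS}^{-2},\sfU_{\potential{},\calS}^{-2/3}\}$) and $h\,d\le c$ (from $h\le c\,d^{-3}$) makes this $<1$ for $c$ a small enough universal constant.

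The substance is the Metropolis summand. I would start from the simplified form of the log acceptance ratio already recorded in \cref{sec:algo}: writing $u=z-x$, $\widetilde{\xi}=\metric{x}^{-1/2}\xi$, $E=\metric{x}^{-1/2}\metric{z}\metric{x}^{-1/2}-\rmI$, $v_{x}=\metric{x}^{-1}\gradpotential{x}$ and $v_{z}=\metric{z}^{-1}\gradpotential{z}$,
\begin{equation*}
A(z;x)=\big(\potential{x}-\potential{z}\big)+\tfrac{1}{2}\log\det(\rmI+E)-\tfrac{1}{2}\,\xi^{\top}E\,\xi-\tfrac{h}{4}\,\|v_{z}+v_{x}\|_{\metric{z}}^{2}+\tfrac{\sqrt{2h}}{2}\,\langle v_{z}+v_{x},\widetilde{\xi}\rangle_{\metric{z}},
\end{equation*}
and $1-p_{\mathrm{accept}}(z;x)=(1-e^{A(z;x)})_{+}\le(-A(z;x))_{+}$. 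Because the lemma assumes no lower curvature bound on $\potential{}$, I would not seek a deterministic lower bound on $A$; instead fix a small absolute constant $\rho\le\frac{1}{4}$ and define the good event $\calG$ on which (a) $z\in\calE_{x}^{\metric{}}(\rho)$, (b) $\|\xi\|_{2}\le C\sqrt{d}$, and (c) the linear-in-$\xi$ contributions to $A$ lie within $\epsilon$ of their means. Each of these fails with probability $O(e^{-cd})$ once $h$ is small, so $\bbP(\calG^{c})\le\frac{1}{128}$, and on $\calG$ one has $\metric{x+tu}\preceq(1-\rho)^{-2}\metric{x}$ and $\|E\|_{\op}\lesssim\rho$ by self-concordance.

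On $\calG$ I would bound the five terms of $A$ as follows. (i) By the $(\lambda,\metric{})$-curvature upper bound and $\metric{x+tu}\preceq(1-\rho)^{-2}\metric{x}$, $\potential{x}-\potential{z}\ge-\langle\gradpotential{x},u\rangle-2\lambda\,\|u\|_{\metric{x}}^{2}\ge-\langle\gradpotential{x},u\rangle-2\lambda\rho^{2}$. (ii) By the gradient hypothesis $\|\gradpotential{z}\|_{\metric{z}^{-1}},\|\gradpotential{x}\|_{\metric{x}^{-1}}\le\sfU_{\potential{},\calS}$ and $\metric{z}\preceq(1-\rho)^{-2}\metric{x}$, $-\tfrac{h}{4}\|v_{z}+v_{x}\|_{\metric{z}}^{2}\ge-C\,h\,\sfU_{\potential{},\calS}^{2}$. (iii) The linear-in-$\xi$ quantities — namely $-\langle\gradpotential{x},u\rangle=h\langle\gradpotential{x},v_{x}\rangle-\sqrt{2h}\,\langle\gradpotential{x},\widetilde{\xi}\rangle$ from (i) kept together with $\tfrac{\sqrt{2h}}{2}\langle v_{z}+v_{x},\widetilde{\xi}\rangle_{\metric{z}}$ — are, up to $O(h\,\sfU_{\potential{},\calS}^{2})$ deterministic pieces, centered Gaussian-type variables of standard deviation $O(\sqrt{h}\,\sfU_{\potential{},\calS})$, hence $\ge-\epsilon$ on $\calG$. (iv) For the $\log\det$/quadratic pair I would cancel the traces, $\tfrac{1}{2}\log\det(\rmI+E)-\tfrac{1}{2}\xi^{\top}E\xi=\tfrac{1}{2}[\log\det(\rmI+E)-\trace E]-\tfrac{1}{2}[\xi^{\top}E\xi-\trace E]$, bound the first bracket below by $-\|E\|_{\frob}^{2}\ge-d\,\|E\|_{\op}^{2}$ and the centered quadratic form in the second bracket by $\tfrac{1}{2}(\|\xi\|_{2}^{2}\|E\|_{\op}+\sqrt{d}\,\|E\|_{\frob})\le C\,d\,\|E\|_{\op}$ on $\{\|\xi\|_{2}\le C\sqrt{d}\}$, and use $\|E\|_{\op}\lesssim\rho\lesssim\sqrt{hd}+h\,\sfU_{\potential{},\calS}$. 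Summing, $A\ge-2\lambda\rho^{2}-C\,h\,\sfU_{\potential{},\calS}^{2}-C\,d(\sqrt{hd}+h\,\sfU_{\potential{},\calS})-\epsilon$, and each piece is $\le\epsilon$ once $h\le c\min\{(d\lambda)^{-1},\sfU_{\potential{},\calS}^{-2},d^{-3},(\sfU_{\potential{},\calS}\lambda)^{-2/3},\sfU_{\potential{},\calS}^{-2/3}\}$ for a small universal $c$ (with $\rho$ itself taken $\lesssim\sqrt{c}$). Hence $\bbE_{z\sim\calP_{x}}[(1-p_{\mathrm{accept}})\bm{1}\{z\in\primalspace\}]\le\bbP(\calG^{c})+\epsilon\le\frac{1}{128}+\epsilon\le\frac{1}{32}$, and combining with the feasibility bound gives $\TVdist(\calT_{x},\calP_{x})\le\frac{1}{16}$.

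The main obstacle is item (iv): with only the operator-norm form of self-concordance there is no better estimate than $\|E\|_{\frob}^{2}\le d\,\|E\|_{\op}^{2}$, so the residual of $\log\det(\rmI+E)-\xi^{\top}E\xi$ left after the $\trace E$ cancellation carries an extra factor $d$ on top of $\|E\|_{\op}^{2}\sim\rho^{2}\sim hd$, which is exactly what forces the $h=O(d^{-3})$ term in the step-size bound. One must also take care that $E$ depends on $\xi$, so the centered quadratic form $\xi^{\top}E\xi-\trace E$ cannot be handled by a Hanson--Wright-type concentration inequality (which would retain only a polylogarithmic factor); the crude deterministic bound on $\calG$ above is the price, and it is precisely a Frobenius-norm self-concordance hypothesis that removes this loss and yields the improved $d$-dependence in \cref{thm:more-than-SC-mapcla}. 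A secondary technical point is the bookkeeping in item (iii): the $\langle\gradpotential{x},u\rangle$ term from (i) and the cross term in $A$ must be grouped so that their $\xi$-linear parts are seen as small-variance fluctuations rather than being crudely bounded by $\sqrt{d}\,\sfU_{\potential{},\calS}$.
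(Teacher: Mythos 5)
Your overall strategy coincides with the paper's: identify \(\TVdist(\calT_{x},\calP_{x})\) with the total rejection probability, restrict to a high-probability Gaussian event, and lower-bound the log acceptance ratio term by term, with the \(h\lesssim d^{-3}\) requirement arising exactly where you locate it — under plain self-concordance the metric-change terms can only be bounded through \(d\cdot\|E\|_{\op}\sim\sqrt{hd^{3}}\) (the paper's version of your item (iv) is the bound \(\log\det\metric{z}\metric{x}^{-1}\ge-3d\|z-x\|_{\metric{x}}\) together with \(\|z-x\|_{\metric{x}}^{2}-\|z-x\|_{\metric{z}}^{2}\ge-6\|z-x\|_{\metric{x}}^{3}\)). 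Your identity \(\TVdist(\calT_{x},\calP_{x})=r_{x}\), your expansion of \(A(z;x)\), the feasibility bound via \(\calE_{x}^{\metric{}}(1)\subseteq\interior{\primalspace}\), and items (i), (ii), (iv) all go through.

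The genuine gap is in item (iii). The cross term \(\tfrac{\sqrt{2h}}{2}\langle v_{z}+v_{x},\widetilde{\xi}\rangle_{\metric{z}}\) contains \(\tfrac{\sqrt{2h}}{2}\langle\gradpotential{z},\widetilde{\xi}\rangle\), and \(\gradpotential{z}\) is a function of \(\xi\); this is not a linear functional of \(\xi\), hence not a ``centered Gaussian-type variable of standard deviation \(O(\sqrt{h}\,\sfU_{\potential{},\calS})\)'', and it does not combine with the \(-\langle\gradpotential{x},u\rangle\) produced by your item (i) to form one. Without further structure the only available estimate is Cauchy--Schwarz, \(|\langle\gradpotential{z},\widetilde{\xi}\rangle|\le\sfU_{\potential{},\calS}\,\|\widetilde{\xi}\|_{\metric{z}}\lesssim\sfU_{\potential{},\calS}\sqrt{d}\) on your good event, which contributes \(-\Theta(\sqrt{hd}\,\sfU_{\potential{},\calS})\) to \(A\); the quantity \(hd\,\sfU_{\potential{},\calS}^{2}\) is \emph{not} controlled by the lemma's step-size bound (take \(\sfU_{\potential{},\calS}=d^{10}\), \(\lambda\) bounded, \(h=c\,d^{-20}\): then \(hd\,\sfU_{\potential{},\calS}^{2}=cd\to\infty\)). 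The paper avoids this term entirely by splitting \(\potential{x}-\potential{z}=\tfrac12(\potential{x}-\potential{z})+\tfrac12(\potential{x}-\potential{z})\), Taylor-expanding one half around \(x\) and lower-bounding the other half by convexity at \(z\), so that \(\tfrac12\langle\gradpotential{z},x-z\rangle\) and \(-\tfrac12\langle\gradpotential{x},z-x\rangle\) cancel \emph{exactly} against the matching terms in the proposal-density ratio — this exact cancellation is precisely the benefit of the drift correction that the paper emphasises. Your route can be repaired without the symmetrised expansion by writing \(\gradpotential{z}=\gradpotential{x}+(\gradpotential{z}-\gradpotential{x})\), keeping the \(\gradpotential{x}\) part as the genuine Gaussian fluctuation, and using the \((\lambda,\metric{})\)-curvature upper bound to get \(\|\gradpotential{z}-\gradpotential{x}\|_{\metric{x}^{-1}}\lesssim\lambda\|z-x\|_{\metric{x}}\), so the non-Gaussian remainder contributes \(O(h\lambda d)\le O(c)\); but as written the step does not stand. (A minor further point: both your item (i) and this patch use convexity of \(\potential{}\), which the paper's proof also invokes tacitly and which holds under \ref{assump:curv-lower-bdd} wherever the lemma is applied.)
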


\begin{lemma}
\label{lem:tx-px-small-beyond-SC}
Let the metric \(\metric{} : \interior{\primalspace} \to \bbS_{+}^{d}\) be strongly, \(\alpha\)-lower trace, and average self-concordant, and the potential \(\potential{} : \interior{\primalspace} \to \bbR\) satisfy \((\lambda, \metric{})\)-curvature upper bound.
Assume that there exists a convex set \(\calS \subseteq \interior{\primalspace}\) and \(\sfU_{\potential{}, \calS} \geq 0\) such that for any \(x \in \calS\),
\begin{equation*}
    \sup_{z :~ \|z - x\|_{\metric{x}} \leq \frac{1}{2}} \|\gradpotential{z}\|_{\metric{z}^{-1}} \leq \sfU_{\potential{}, \calS}~.
\end{equation*}
If the step size \(h > 0\) is bounded from above as
\begin{equation*}
    h \leq c \cdot \min\left\{\frac{1}{\sfU_{\potential{}, \calS}^{2}},~\frac{1}{(\sfU_{\potential{}, \calS} \cdot (\alpha + 4))^{\nicefrac{2}{3}}},~\frac{1}{(\sfU_{\potential{}, \calS} \cdot \lambda)^{\nicefrac{2}{3}}},~\frac{1}{d \cdot \lambda},~\frac{1}{d \cdot \sfU_{\potential{}, \calS}},~\frac{1}{d \cdot (\alpha + 4)}\right\}; \quad c \leq \frac{1}{20}~,
\end{equation*}
then for any \(x \in \calS\)
\begin{equation*}
    \TVdist(\calT_{x}, \calP_{x}) \leq \frac{1}{16}~.
\end{equation*}
\end{lemma}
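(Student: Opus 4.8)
The plan is to identify $\TVdist(\calT_x,\calP_x)$ with the one-step rejection probability and bound that directly, rather than going through an exact expression for $\TVdist(\calT_x,\calT_y)$ as one must for \Dikin{}. Since $\calP_x$ is absolutely continuous while $\calT_x$ puts mass $r_x = 1-\bbE_{z\sim\calP_x}[p_{\mathrm{accept}}(z;x)\bm{1}\{z\in\primalspace\}]$ at $x$ and otherwise has density $p_{\mathrm{accept}}(z;x)p_x(z)\bm{1}\{z\in\primalspace\}\le p_x(z)$, one has $\TVdist(\calT_x,\calP_x)=r_x$, and
\begin{equation*}
r_x=\bbP_{z\sim\calP_x}(z\notin\primalspace)+\bbE_{z\sim\calP_x}\big[(1-p_{\mathrm{accept}}(z;x))\bm{1}\{z\in\primalspace\}\big].
\end{equation*}
I would write the proposal as $z=x+\delta$, $\delta=\sqrt{2h}\,\metric{x}^{-\nicefrac12}\xi-h\,\metric{x}^{-1}\gradpotential{x}$ with $\xi\sim\gaussian{\bm{0}}{\rmI_{d\times d}}$, and introduce the good event $\calG=\{\|\xi\|\le 2\sqrt d,\ h\|\gradpotential{x}\|_{\metric{x}^{-1}}\le\tfrac14\}$, on which $\|\delta\|_{\metric{x}}\le\tfrac12$ and hence (strong self-concordance implies self-concordance, so $\calE_x^{\metric{}}(1)\subseteq\primalspace$) $z\in\primalspace$. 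A Gaussian tail estimate together with $h\lesssim 1/d$ and $h\lesssim 1/\beta$ makes $\bbP(\calG^c)$ a small universal constant, so the escape term is controlled and, using $1-\min\{1,e^t\}\le\max\{0,-t\}\le|t|$, the filter term reduces to bounding $\bbE[\,|\log A(z;x)|\,\bm{1}_\calG\,]$, where $\log A$ is the log acceptance ratio displayed in \cref{sec:algo}.

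Next I would reorganise $\log A$ to expose cancellations. Writing $\mu_x=x-h\metric{x}^{-1}\gradpotential{x}$, one has $\|z-\mu_x\|_{\metric{x}}^2=2h\|\xi\|^2$, and a short computation re-expressing this via $\|\delta\|_{\metric{x}}^2$ cancels the $2h\|\xi\|^2$ against the leading part of $\|x-\mu_z\|_{\metric{z}}^2$, leaving
\begin{equation*}
\log A(z;x)=\tfrac12\langle\delta,\gradpotential{z}-\gradpotential{x}\rangle-R_f+\tfrac12\log\det(\metric{z}\metric{x}^{-1})+\tfrac1{4h}\big(\|\delta\|_{\metric{x}}^2-\|\delta\|_{\metric{z}}^2\big)+\tfrac h4\big(\|\gradpotential{x}\|_{\metric{x}^{-1}}^2-\|\gradpotential{z}\|_{\metric{z}^{-1}}^2\big),
\end{equation*}
with $R_f$ the integral second-order Taylor remainder of $\potential{}$ at $x$ along $\delta$. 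On $\calG$, self-concordance gives $\metric{\zeta}\preceq(1-\|\delta\|_{\metric{x}})^{-2}\metric{x}\preceq 4\metric{x}$ for $\zeta\in[x,z]$, so by the $(\lambda,\metric{})$-curvature upper bound $|R_f|$ and $|\langle\delta,\gradpotential{z}-\gradpotential{x}\rangle|$ are $O(\lambda\|\delta\|_{\metric{x}}^2)$, and the last bracket is $\le\sfU_{\potential{},\calS}^2$ in absolute value (both gradient norms are $\le\sfU_{\potential{},\calS}$ on $\calG$ by hypothesis); taking expectations with $\bbE\|\delta\|_{\metric{x}}^2\lesssim hd+h^2\sfU_{\potential{},\calS}^2$ makes these contributions $O(\lambda hd)+O(h\sfU_{\potential{},\calS}^2)$.

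The crux is the term $T:=\tfrac12\log\det(\metric{z}\metric{x}^{-1})+\tfrac1{4h}(\|\delta\|_{\metric{x}}^2-\|\delta\|_{\metric{z}}^2)$. I would Taylor-expand $\metric{z}-\metric{x}=\rmD\metric{x}[\delta]+\tfrac12\rmD^2\metric{x}[\delta,\delta]+(\text{h.o.t.})$ and $\tfrac12\log\det(\rmI+N)=\tfrac12\trace N-\tfrac14\|N\|_{\frob}^2+(\text{h.o.t.})$ with $N=\metric{x}^{-\nicefrac12}(\metric{z}-\metric{x})\metric{x}^{-\nicefrac12}$. The pieces of $T$ linear in $\delta$ assemble into Gaussian chaoses of degree $\le 3$ in $\xi$, and the decisive point is that \emph{strong} self-concordance (\cref{def:SSC}) bounds the relevant third-order tensor in \emph{Frobenius} norm by $O(\sqrt d)$ — rather than the $O(d)$ one gets from the operator norm via plain self-concordance — so chaos moment estimates give these an expected size of $O(\sqrt{hd})$; this is exactly where self-concordance\textsubscript{++} improves on \cref{lem:tx-px-small-SC}. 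The deterministic drift-induced part of $\tfrac12\trace(\metric{x}^{-1}\rmD\metric{x}[\delta])$ is $O(h\,\sfU_{\potential{},\calS}\sqrt d)$ (needing $h\lesssim 1/(d\beta)$), the $\rmD^2\metric{x}[\delta,\delta]$ trace part is squeezed between $-\alpha\|\delta\|_{\metric{x}}^2$ (lower-trace self-concordance, \cref{def:LTSC}) and an $O(\|\delta\|_{\metric{x}}^2)$ upper bound from strong self-concordance, giving $O((\alpha+4)hd)$, and the $\|N\|_{\frob}^2$ correction is $O(hd)$. For the lower tail of $\|\delta\|_{\metric{x}}^2-\|\delta\|_{\metric{z}}^2$ I would invoke \emph{average} self-concordance (\cref{def:ASC}) applied to the \emph{un-drifted} proposal $z_0=x+\sqrt{2h}\metric{x}^{-\nicefrac12}\xi\sim\gaussian{x}{2h\metric{x}^{-1}}$, which gives $\|z_0-x\|_{\metric{z_0}}^2-\|z_0-x\|_{\metric{x}}^2\le 4h\varepsilon$ with probability $1-\varepsilon$, and transfer it to $z=z_0-h\metric{x}^{-1}\gradpotential{x}$ via metric stability and $\|z-z_0\|_{\metric{x}}=h\sfU_{\potential{},\calS}$; the upper tail uses $\|\delta\|_{\metric{z}}^2\ge(1-\|\delta\|_{\metric{x}})^2\|\delta\|_{\metric{x}}^2$. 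Summing, every contribution is a universal constant times one of $hd,\ h\lambda d,\ h(\alpha+4)d,\ h\beta d,\ h\beta^2,\ (h\beta\lambda)^{\nicefrac23},\ (h\beta(\alpha+4))^{\nicefrac23}$ (the fractional powers arising from a factor $\lambda$ or $\alpha$ times $\bbE\|\delta\|_{\metric{x}}^3\lesssim(hd)^{\nicefrac32}$), which is precisely the list whose inverses define $b_{\textsf{SC\textsubscript{++}}}(d,\lambda,\beta,\alpha)$ with $\sfU_{\potential{},\calS}$ in the role of $\beta$; taking the universal constant $c$ small enough then forces $\bbE[|\log A|\bm{1}_\calG]+2\bbP(\calG^c)\le\tfrac1{16}$. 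The main obstacle is exactly this crux step: getting the cubic-in-$\delta$ terms down to the $\sqrt{hd}$ rate instead of $\sqrt h\,d^{\nicefrac32}$, which requires replacing the pointwise self-concordance inequality by Frobenius-norm tensor control plus Gaussian chaos concentration, and transferring average self-concordance from the un-drifted to the drifted Gaussian while tracking all the $\sfU_{\potential{},\calS}$-dependent corrections.
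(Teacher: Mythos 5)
Your proposal is correct and follows essentially the same route as the paper's proof: identify \(\TVdist(\calT_{x},\calP_{x})\) with the one-step rejection probability, condition on a good event built from Gaussian concentration, the local gradient bound, and average self-concordance (transferred from the un-drifted to the drifted proposal exactly as in \cref{lem:sq-dist-diff-ASC-SC}), and control the log-acceptance ratio term by term — curvature upper bound for the potential, strong self-concordance for the \(2\sqrt{d}\) bound on \(\|\metric{x}^{-\nicefrac{1}{2}}\nabla\log\det\metric{x}\|\), and lower-trace self-concordance for the quadratic remainder of \(\log\det\). The only differences are cosmetic: you bound \(\bbE[|\log A|\bm{1}_{\calG}]\) directly where the paper applies Markov's inequality to a pointwise high-probability lower bound \(\log R_{x\to z}\geq -2\varepsilon\), and your Gaussian-chaos framing of the \(\log\det\) expansion is a heavier description of what \cref{lem:SSC-LTSC-lemma} accomplishes with a first-order Taylor expansion.
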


\subsection{Complete proofs of \Cref{thm:just-SC-mapcla,thm:more-than-SC-mapcla,thm:exp-densities-mapcla}}
\label{sec:proofs:complete-proof}

With the key lemmas \cref{lem:conductance-lower-isoperimetry,lem:px-py-small,lem:tx-px-small-SC,lem:tx-px-small-beyond-SC}, we can now prove the main theorems.

\begin{proof}[Proofs of \Cref{thm:just-SC-mapcla,thm:more-than-SC-mapcla}]

In both theorems, the potential is assumed to satisfy the \((\beta,\metric{})\)-gradient upper bound condition, which implies by definition that
\begin{equation*}
    \sup_{x \in \interior{\primalspace}} \|\gradpotential{x}\|_{\metric{x}^{-1}} \leq \beta~.
\end{equation*}
Thus, we can now use \cref{lem:tx-px-small-SC,lem:tx-px-small-beyond-SC} with \(\calS \leftarrow \interior{\primalspace}\), and \(\sfU_{\potential{}, \calS} \leftarrow \beta\).
The resulting upper bounds on the step size from these lemmas are exactly the bounds \(b_{\mathsf{SC}}(d, \lambda, \beta)\) and \(b_{\textsf{SC\textsubscript{++}}}\) in \cref{thm:just-SC-mapcla,thm:more-than-SC-mapcla} respectively.
Hence, for any \(x \in \interior{\primalspace}\), we have
\begin{equation*}
    \TVdist(\calT_{x}, \calP_{x}) \leq \frac{1}{16}~.
\end{equation*}

In \cref{thm:just-SC-mapcla}, the metric \(\metric{}\) is assumed to be self-concordant, and in \cref{thm:more-than-SC-mapcla}, the metric \(\metric{}\) is assumed to be strongly self-concordant which implies that it is self-concordant as well.
Due to this, we can use \cref{lem:px-py-small} with \(\calS \leftarrow \interior{\primalspace}\) and \(\sfU_{\potential{}, \calS} \leftarrow \beta\) to obtain for any \(x, y \in \interior{\primalspace}\) that satisfy \(\|x - y\|_{\metric{y}} \leq \frac{\sqrt{h}}{10}\) for \(h < 1\),
\begin{equation*}
    \TVdist(\calP_{x}, \calP_{y}) \leq \frac{1}{2} \cdot \sqrt{\frac{h \cdot d}{20} + \frac{1}{80} + \frac{9h \cdot \beta^{2}}{2}}~.
\end{equation*}
The bounds \(b_{\textsf{SC}}(d, \lambda, \beta)\) and \(b_{\textsf{SC\textsubscript{++}}}\) ensure that \(h \leq \frac{1}{20\beta^{2}}\) and \(h \leq \frac{1}{20d}\), which yields
\begin{equation*}
    \TVdist(\calP_{x}, \calP_{y}) \leq \frac{1}{2} \cdot \sqrt{\frac{h \cdot d}{20} + \frac{1}{80} + \frac{9h \cdot \beta^{2}}{2}} \leq \frac{1}{2} \cdot \sqrt{\frac{1}{400} + \frac{1}{80} + \frac{9}{40}} \leq \frac{1}{8}~.
\end{equation*}

Combining the bounds on \(\TVdist(\calT_{x}, \calP_{x})\) and \(\TVdist(\calP_{x}, \calP_{y})\), we obtain that for any \(x, y \in \interior{\primalspace}\) satisfying \(\|x - y\|_{\metric{y}} \leq \frac{\sqrt{h}}{10}\),
\begin{equation*}
    \TVdist(\calT_{x}, \calT_{y}) \leq \TVdist(\calT_{x}, \calP_{x}) + \TVdist(\calP_{x}, \calP_{y}) + \TVdist(\calT_{y}, \calP_{y}) \leq \frac{1}{16} + \frac{1}{8} + \frac{1}{16} = \frac{1}{4}~.
\end{equation*}
With this result, we apply \cref{lem:conductance-lower-isoperimetry} with \(\Delta \leftarrow \frac{\sqrt{h}}{10}\), \(\calS \leftarrow \interior{\primalspace}\) as the potential \(\potential{}\) is assumed to satisfy a \((\mu, \metric{})\)-curvature lower bound and the metric \(\metric{}\) is assumed to be \(\nu\)-symmetric.
From this, we know that there exists a universal constant \(C > 0\) such that for any \(A \subseteq \primalspace\),
\begin{equation*}
    \int_{A} \calT_{x}(\primalspace \setminus A)\targetdens(x) ~\rmd x \geq C \cdot \sqrt{h} \cdot \min\left\{1, \max\left\{\widetilde{\mu},~\frac{1}{\sqrt{\nu}}\right\}\right\} \cdot \min\{\targetdist(A), \targetdist(\primalspace \setminus A)\}~.
\end{equation*}
Above, we have used the fact that \(\primalspace\) is a convex subset of \(\bbR^{d}\) and \(\targetdist\) is log-concave, and hence
\begin{equation*}
    \targetdist(A \cap \interior{\primalspace}) = \targetdist(A); \quad \targetdist((\primalspace \setminus A) \cap \interior{\primalspace}) = \targetdist(\primalspace \setminus A)~.
\end{equation*}
This directly implies by definition that
\begin{equation}
\label{eq:conductance-lower-1}
    \Phi_{\bfT} \geq C \cdot \sqrt{h} \cdot \min\left\{1, \max\left\{\widetilde{\mu},~ \frac{1}{\sqrt{\nu}}\right\}\right\}~.
\end{equation}

For the final step of this proof, we invoke \cref{prop:conductance-to-mixing}.
We have the following cases.
\begin{description}
    \item [For \(\Pi_{0} \in \calC := \mathsf{Warm}(L_{\infty}, M, \targetdist)\):]
    \begin{equation*}
        \TVdist(\bbT_{\bfT}^{k}\targetdist_{0}, \targetdist) \leq \sqrt{M} \cdot \exp\left(-\frac{k \cdot {\Phi_{\bfT}}^{2}}{2}\right)~.
    \end{equation*}
    By setting the right hand side to be less than \(\delta\) for \(\delta \in (0, \nicefrac{1}{2})\) and using \Cref{eq:conductance-lower-1}, we get
    \begin{equation*}
        \mixingtime{\delta; \bfT, \calC} = \frac{2}{{\Phi_{\bfT}}^{2}} \cdot \log\left(\frac{\sqrt{M}}{\delta}\right) \leq \frac{C'}{h} \cdot \max\left\{1, \min\left\{\frac{1}{\widetilde{\mu}^{2}},~\nu\right\}\right\} \cdot \log\left(\frac{\sqrt{M}}{\delta}\right)
    \end{equation*}
    for some universal positive constant \(C'\).

    \item [For \(\Pi_{0} \in \calC := \mathsf{Warm}(L_{1}, M, \targetdist)\):]
    \begin{equation*}
        \TVdist(\bbT_{\bfT}^{k}\Pi_{0}, \targetdist) \leq \gamma + \sqrt{\frac{M}{\gamma}} \cdot \exp\left(-\frac{k \cdot {\Phi_{\bfT}}^{2}}{2}\right)
    \end{equation*}
    for any \(\gamma > 0\).
    For \(\delta \in (0, \nicefrac{1}{2})\), by setting \(\gamma = \frac{\delta}{2}\) and the second term on the right hand side to be less than \(\frac{\delta}{2}\), the upper bound is at most \(\delta\).
    With \Cref{eq:conductance-lower-1}, we get
    \begin{equation*}
        \mixingtime{\delta; \bfT, \calC} = \frac{6}{{\Phi_{\bfT}}^{2}} \cdot \log\left(\frac{(2M)^{\nicefrac{1}{3}}}{\delta}\right) \leq \frac{C''}{h} \cdot \max\left\{1, \min\left\{\frac{1}{\widetilde{\mu}^{2}}, \nu\right\}\right\} \cdot \log\left(\frac{M^{\nicefrac{1}{3}}}{\delta}\right)
    \end{equation*}
    for some universal positive constant \(C''\).
\end{description}
\end{proof}

\begin{proof}[Proof of \Cref{thm:exp-densities-mapcla}]
For \(s \in (0, \nicefrac{1}{2})\),
\begin{equation*}
    \primalspace_{s} \defeq \left\{x \in \interior{\primalspace} : \|\sigma\|^{2}_{\metric{x}^{-1}} \leq 25d^{2} \cdot \log^{2}\left(\frac{1}{s}\right)\right\}~,
\end{equation*}
where \(\sigma = \gradpotential{x}\) for all \(x \in \interior{\primalspace}\).
Equivalently,
\begin{equation*}
    \sup_{x \in \calC_{s}} \|\gradpotential{x}\|_{\metric{x}^{-1}} \leq 5d \cdot \log\left(\frac{1}{s}\right)~.
\end{equation*}
In \citet{kook2023condition}, we have the following properties of \(\primalspace_{s}\):
\begin{description}[itemsep=0pt]
    \item [Lem. 43:] If \(\metric{}\) is self-concordant, \(\targetdist(\primalspace_{s}) \geq 1 - s\).
    \item [Lem. 42:] If \(\metric{}\) satisfies \(\rmD^{2}\metric{x}[v, v] \succeq \bm{0}\) for all \(x \in \interior{\primalspace}\) and \(v \in \bbR^{d}\), \(\primalspace_{s}\) is convex.
\end{description}
In the setting of \cref{thm:exp-densities-mapcla}, the metric \(\metric{}\) is assumed to satisfy both preconditions in the properties above.
Additionally, since the metric \(\metric{}\) is self-concordant, from \cref{lem:dikin1-in-domain-self-concordant-loewner} we have for any \(z \in \calE_{x}^{\metric{}}(1)\) (defined in \cref{eq:dikin-ellipsoid}) and \(u \in \bbR^{d}\) that
\begin{equation*}
    (1 - \|z - x\|_{\metric{x}})^{2} \cdot \metric{x} \preceq \metric{z} \Leftrightarrow \|u\|_{\metric{x}} \leq \frac{\|u\|_{\metric{z}}}{(1 - \|z - x\|_{\metric{x}})}~.
\end{equation*}
As a corollary, if \(\|z - x\|_{\metric{x}} \leq \frac{1}{2}\) and \(\|u\|_{\metric{z}} \leq 1\), then \(\|u\|_{\metric{x}} \leq \frac{1}{(1 - \|z - x\|_{\metric{x}})} \leq 2\), and
\begin{align*}
    \|\gradpotential{z}\|_{\metric{z}^{-1}} = \|\sigma\|_{\metric{z}^{-1}} &= \sup_{u :~\|u\|_{\metric{z}} \leq 1} \langle \sigma, u\rangle \\
    &\leq \sup_{u :~\|u\|_{\metric{x}} \leq 2} \langle \sigma, u\rangle \\
    &=2 \cdot \|\sigma\|_{\metric{x}^{-1}} = 2 \cdot \|\gradpotential{x}\|_{\metric{x}^{-1}}~.\\
\end{align*}
Therefore, for any \(x \in \primalspace_{s}\),
\begin{equation*}
    \sup_{z:~\|z - x\|_{\metric{x}} \leq \frac{1}{2}} \|\gradpotential{z}\|_{\metric{z}^{-1}} \leq 2 \cdot 5d \cdot \log\left(\frac{1}{s}\right) = 10d \cdot \log\left(\frac{1}{s}\right)~.
\end{equation*}

The property \(\rmD^{2}\metric{x}[v, v] \succeq \bm{0}\) for all \(x \in \interior{\primalspace}\) and \(v \in \bbR^{d}\) implies that \(\metric{}\) is \(0\)-lower trace self-concordant, due to the following calculation.
\begin{equation*}
    \trace(\metric{x}^{-\nicefrac{1}{2}}\rmD^{2}\metric{x}[v, v]\metric{x}^{-\nicefrac{1}{2}}) = \sum_{i=1}^{d}\rmD^{2}\metric{x}[v, v, \metric{x}^{-\nicefrac{1}{2}}e_{i}, \metric{x}^{-\nicefrac{1}{2}}e_{i}] \geq 0~.
\end{equation*}
For warmness parameter \(M \geq 1\) and error tolerance \(\delta \in (0, \nicefrac{1}{2})\), we pick \(s = \frac{\delta}{2M} \leq \frac{1}{4}\).
Since \(\potential{x} = \sigma^{\top}x\), it satisfies a \((0,\metric{})\)-curvature lower and upper bounds.
The step size bound \(b_{\textsf{Exp}}(d, M, \delta)\) in the theorem exactly matches the bound arising from applying \cref{lem:tx-px-small-beyond-SC} with \(\calS \leftarrow \primalspace_{\nicefrac{\delta}{2M}}\) and \(\sfU_{\potential{}, \calS} = 10d \cdot \log(\nicefrac{2M}{\delta})\), and we have for any \(x \in \calS\) that
\begin{equation*}
    \TVdist(\calT_{x}, \calP_{x}) \leq \frac{1}{16}~.
\end{equation*}
We use \cref{lem:px-py-small} with \(\calS \leftarrow \primalspace_{\nicefrac{\delta}{2M}}\) and \(\sfU_{\potential{}, \calS} \leftarrow 10d \cdot \log(\nicefrac{2M}{\delta})\) to obtain for any \(x, y \in \calS\) such that \(\|x - y\|_{\metric{y}} \leq \frac{\sqrt{h}}{10}\),
\begin{equation*}
    \TVdist(\calP_{x}, \calP_{y}) \leq \frac{1}{2} \cdot \sqrt{\frac{h \cdot d}{20} + \frac{1}{80} + \frac{9h^{2} \cdot d^{2} \cdot \log^{2}(\nicefrac{2M}{\delta})}{2}}~.
\end{equation*}
The bound \(b_{\textsf{Exp}}(d, M, \delta)\) ensures that \(h \cdot d \leq h \cdot d^{2} \leq \frac{1}{2000 \cdot \log^{2}(\nicefrac{2M}{\delta})}\)
and therefore
\begin{equation*}
    \TVdist(\calP_{x}, \calP_{y}) \leq \frac{1}{2} \cdot \sqrt{\frac{h \cdot d}{20} + \frac{1}{80} + \frac{9h \cdot d^{2}}{2}} \leq \frac{1}{2} \cdot \sqrt{\frac{1}{400} + \frac{1}{80} + \frac{9}{40}} \leq \frac{1}{8}~.
\end{equation*}
Therefore, for any \(x, y \in \primalspace_{\nicefrac{\delta}{2M}}\) such that \(\|x - y\|_{\metric{y}} \leq \frac{\sqrt{h}}{10}\),
\begin{equation*}
    \TVdist(\calT_{x}, \calT_{y}) \leq \TVdist(\calT_{x}, \calP_{x}) + \TVdist(\calP_{x}, \calP_{y}) + \TVdist(\calT_{y}, \calP_{y}) \leq \frac{1}{16} + \frac{1}{8} + \frac{1}{16} = \frac{1}{4}~.
\end{equation*}

With these results, we can obtain a lower bound on the \(s\)-conductance of \(\bfT\) for \(s = \frac{\delta}{2M}\) as indicated previously.
Using \cref{lem:conductance-lower-isoperimetry} with \(\Delta \leftarrow \frac{\sqrt{h}}{10}\) and \(\calS \leftarrow \primalspace_{s}\), we have for any \(A \subseteq \primalspace\)
\begin{equation}
\label{eq:exp-density-conductance-lower}
    \int_{A} \calT_{x}(\primalspace \setminus A) \targetdens(x)~\rmd x \geq C \cdot \sqrt{h} \cdot \min\left\{1, \frac{1}{\sqrt{\nu}}\right\} \cdot \min\{\targetdist(A \cap \primalspace_{s}), \targetdist((\primalspace \setminus A) \cap \primalspace_{s})\}
\end{equation}
for a universal constant \(C\).
Note that for any subset \(B\) of \(\primalspace\), it holds that \(\targetdist(B) = \targetdist(B \cap \primalspace_{s}) + \targetdist(B \cap (\primalspace \setminus \primalspace_{s}))\) as \(\targetdist\) is supported on \(\primalspace\).
Since \(\targetdist(\primalspace_{s}) \geq 1 - s\), we get
\begin{gather*}
    \targetdist(B \cap \primalspace_{s}) = \targetdist(B) - \targetdist(B \cap (\primalspace \setminus \primalspace_{s})) \geq \targetdist(B) - \targetdist(\primalspace \setminus \primalspace_{s}) \geq \targetdist(B) - s~.
\end{gather*}
We apply this fact for \(B \leftarrow A\) and \(B \leftarrow (\primalspace \setminus A)\) for \(A \subseteq \primalspace\), which gives
\begin{equation*}
    \targetdist(A \cap \primalspace_{s}) \geq \targetdist(A) - s, \quad \targetdist((\primalspace \setminus A) \cap \primalspace_{s}) \geq \targetdist(\primalspace \setminus A) - s~.
\end{equation*}

Substituting these inequalities in \cref{eq:exp-density-conductance-lower}, we get for any \(A \subset \primalspace\) satisfying \(\targetdist(A) \in (s, 1 - s)\) that
\begin{equation*}
    \int_{A}\calT_{x}(\primalspace \cap A)\targetdens(x) ~\rmd x \geq C \cdot \sqrt{h} \cdot \min\left\{1, \frac{1}{\sqrt{\nu}}\right\} \cdot \min\left\{\targetdist(A) - s, \targetdist(\primalspace \setminus A) - s\right\}~,
\end{equation*}
which implies by definition that
\begin{equation}
\label{eq:s-conductance-lower}
    \Phi_{\bfT}^{s} \geq C \cdot \sqrt{h} \cdot \min\left\{1, \frac{1}{\sqrt{\nu}}\right\}~, \quad s = \frac{\delta}{2M}~.
\end{equation}

Finally, we call \cref{prop:conductance-to-mixing} which states that for \(\targetdist_{0} \in \calC := \mathsf{Warm}(L_{\infty}, M, \targetdist)\),
\begin{equation*}
    \TVdist(\bbT_{\bfT}^{k}\targetdist_{0}, \targetdist) \leq M \cdot s + M \cdot \left(1 - \frac{{\Phi_{\bfT}^{s}}^{2}}{2}\right)^{k} \leq \frac{\delta}{2} + M \cdot \exp\left(-\frac{k \cdot {\Phi_{\bfT}^{s}}^{2}}{2}\right)~.
\end{equation*}
Setting the second term on the right hand side to be less than \(\frac{\delta}{2}\), and using \Cref{eq:s-conductance-lower}, we obtain
\begin{equation*}
    \mixingtime{\delta; \bfT, \calC} = \frac{2}{{\Phi_{\bfT}^{s}}^{2}} \cdot \log\left(\frac{2M}{\delta}\right) \leq \frac{C'''}{h} \cdot \max\left\{1, \nu\right\} \cdot \log\left(\frac{M}{\delta}\right)
\end{equation*}
for some universal positive constant \(C'''\).
\end{proof}

\subsection{Proving the key lemmas in \cref{sec:proofs:key-lemmas}}
\label{sec:proofs:proofs-of-key-lemmas}

\subsubsection{Proof of \Cref{lem:conductance-lower-isoperimetry}}
\label{sec:proof:lem:conductance-lower-isoperimetry}

\cref{lem:conductance-lower-isoperimetry} is derived from two isoperimetric inequalities.
The first inequality is a consequence of log-concavity of \(\targetdist\) and \(\nu\)-symmetry of the metric \(\metric{}\), and the other holds when the potential \(\potential{}\) of \(\targetdist\) satisfies \((\mu,\metric{})\)-curvature lower bound for a self-concordant metric \(\metric{}\).

\begin{lemma}
\label{lem:log-concave-isoperimetry}
Let \(\targetdist\) be a log-concave distribution whose support is \(\primalspace\), and consider a metric \(\metric{} : \interior{\primalspace} \to \bbS_{+}^{d}\) that is \(\nu\)-symmetric.
Let \(\targetdist_{\calA}\) denote the restriction of \(\targetdist\) to \(\calA \subseteq \primalspace\).
For any partition \(\{S_{1}, S_{2}, S_{3}\}\) of a convex subset \(\calS\) of \(\primalspace\), we have
\begin{equation*}
    \targetdist_{\calS}(S_{3}) \geq \frac{1}{\sqrt{\nu}} \cdot \inf_{y \in S_{2}, x \in S_{1}} \|x - y\|_{\metric{y}} \cdot \targetdist_{\calS}(S_{2}) \cdot \targetdist_{\calS}(S_{1})~.
\end{equation*}
\end{lemma}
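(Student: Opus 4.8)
The plan is to deduce this from the classical cross-ratio isoperimetric inequality for log-concave measures, using the second inclusion in the definition of $\nu$-symmetry to convert the cross-ratio distance into the metric distance $\|\cdot\|_{\metric{y}}$.

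First I would observe that, since $\calS$ is convex and $\targetdist$ is log-concave, the restriction $\targetdist_{\calS}$ is again a log-concave distribution with support $\calS$; hence the Lov\'asz--Vempala cross-ratio isoperimetric inequality applies on $\calS$ and gives
\[
  \targetdist_{\calS}(S_3) \;\geq\; \Bigl(\inf_{x \in S_1,\, y \in S_2} \mathrm{CR}_{\calS}(x,y)\Bigr)\cdot \targetdist_{\calS}(S_1)\,\targetdist_{\calS}(S_2),
\]
where for $x,y \in \calS$ we write $\mathrm{CR}_{\calS}(x,y) = \frac{\|x-y\|_2\,\|p-q\|_2}{\|p-x\|_2\,\|q-y\|_2}$ with $p,q$ the endpoints of the chord of $\calS$ through $x$ and $y$, ordered as $p,x,y,q$. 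Thus it remains to show $\mathrm{CR}_{\calS}(x,y) \geq \tfrac{1}{\sqrt{\nu}}\|x-y\|_{\metric{y}}$ for all $x \in S_1,\, y \in S_2$.

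Next I would reduce from $\calS$ to $\primalspace$: because $\calS \subseteq \primalspace$, the chord of $\calS$ through $x,y$ is contained in the chord of $\primalspace$ through $x,y$, so the two boundary distances only grow, and a short monotonicity computation (writing $\mathrm{CR}$ as $\|x-y\|_2(\tfrac1{\|p-x\|_2}+\tfrac1{\|q-y\|_2}) + \tfrac{\|x-y\|_2^2}{\|p-x\|_2\|q-y\|_2}$) gives $\mathrm{CR}_{\calS}(x,y) \geq \mathrm{CR}_{\primalspace}(x,y)$. The heart of the argument is then to bound $\mathrm{CR}_{\primalspace}(x,y)$ below by the metric distance. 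For this I restrict the inclusion $\primalspace \cap (2y - \primalspace) \subseteq \calE_y^{\metric{}}(\sqrt{\nu})$ from \Cref{def:symm} to the line $\ell$ through $x$ and $y$: letting $u$ be the unit direction of $\ell$ and $p',q'$ the points where $\ell$ meets $\partial\primalspace$, the left side restricted to $\ell$ is the interval about $y$ of half-length $\min\{\|y-p'\|_2,\|y-q'\|_2\}$ and the right side is the interval about $y$ of half-length $\sqrt{\nu}/\|u\|_{\metric{y}}$, so $\|u\|_{\metric{y}}\cdot\min\{\|y-p'\|_2,\|y-q'\|_2\} \leq \sqrt{\nu}$; multiplying by $\|x-y\|_2$ yields $\|x-y\|_{\metric{y}} \leq \sqrt{\nu}\cdot \|x-y\|_2 / \min\{\|y-p'\|_2,\|y-q'\|_2\}$. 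Finally an elementary chord inequality — splitting on whether $\|y-p'\|_2$ or $\|y-q'\|_2$ is smaller and using $\|y-p'\|_2 \geq \|x-p'\|_2$ and $\|p'-q'\|_2 \geq \|y-q'\|_2$ — shows $\mathrm{CR}_{\primalspace}(x,y) \geq \|x-y\|_2 / \min\{\|y-p'\|_2,\|y-q'\|_2\}$; chaining the last two bounds and taking the infimum over $x \in S_1,\, y \in S_2$ closes the proof.

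The main obstacle I anticipate is this last step: keeping straight that $\nu$-symmetry is a property of $\primalspace$ (not of the possibly smaller $\calS$, which is why the monotonicity reduction is needed), correctly identifying the one-dimensional sections of the symmetrised set and of the Dikin ellipsoid, and verifying the purely geometric comparison $\mathrm{CR}_{\primalspace}(x,y) \geq \|x-y\|_2/\min\{\|y-p'\|_2,\|y-q'\|_2\}$ in all cases of the chord ordering. The remaining ingredients — log-concavity of the restriction, the cross-ratio isoperimetric inequality, and the domain-monotonicity of the cross-ratio distance — are standard and I would simply invoke them.
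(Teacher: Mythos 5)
Your proposal is correct and follows essentially the same route as the paper: both apply the Lov\'asz--Vempala cross-ratio isoperimetric inequality to the log-concave restriction \(\targetdist_{\calS}\) on the convex set \(\calS\) and then convert the cross-ratio into \(\|x-y\|_{\metric{y}}/\sqrt{\nu}\) via the second inclusion in the definition of \(\nu\)-symmetry --- the paper simply cites \citet[Lem.~2.3]{laddha2020strong} for this conversion, whereas you re-derive it (correctly) by intersecting the symmetrised set and the Dikin ellipsoid with the chord and invoking monotonicity of the cross-ratio under set inclusion. The only ingredient the paper includes that you omit is the truncation of \(\calS\) to \(\calS \cap \bbB(0,r)\) followed by \(r \to \infty\) with dominated convergence, needed because the cited cross-ratio inequality is stated for bounded convex bodies; this is a routine limiting argument and does not affect the substance of your proof.
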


For the following lemma, we introduce some additional notation.
The geodesic distance between \(x, y \in \interior{\primalspace}\) with respect to the metric \(\metric{}\) is denoted by \(d_{\metric{}}(x, y)\).

\begin{lemma}
\label{lem:curv-lower-bdd-isoperimetry}
Consider a log-concave distribution \(\targetdist\) whose support \(\primalspace\) and a metric \(\metric{} : \interior{\primalspace} \to \bbS_{+}^{d}\).
If \(\metric{}\) is self-concordant and the potential \(\potential{}\) of \(\targetdist\) satisfies \((\mu,\metric{})\)-curvature lower bound, then for any partition \(\{S_{1}, S_{2}, S_{3}\}\) of \(\primalspace\),
\begin{equation*}
    \targetdist(S_{3}) \geq \frac{\mu}{8 + 4\sqrt{\mu}} \cdot \inf_{y \in S_{2}, x \in S_{1}} d_{\metric{}}(x, y) \cdot \min\left\{\targetdist(S_{2}), \targetdist(S_{1})\right\}~.
\end{equation*}
\end{lemma}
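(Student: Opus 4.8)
The plan is to deduce the inequality from two ingredients: a functional (Poincaré-type) inequality for \(\targetdist\) produced by the curvature lower bound, together with the self-concordance of \(\metric{}\), which is what allows one to phrase the conclusion in terms of the Riemannian distance \(d_{\metric{}}\) and to keep every constant dimension-free. Throughout I would use that a minimizing \(\metric{}\)-geodesic joining two points of \(\interior{\primalspace}\) stays inside \(\interior{\primalspace}\) — this is where it matters that \(\metric{}\) is self-concordant and blows up towards \(\partial\primalspace\) (cf.\ the consequences of self-concordance recorded earlier and \cref{lem:dikin1-in-domain-self-concordant-loewner}) — so that \(d_{\metric{}}\) is realized by interior geodesics and the \(\metric{}\)-arclength of any interior curve is an upper bound for it.

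First I would fix a measurable partition \(\{S_1,S_2,S_3\}\) of \(\primalspace\); by the symmetry \(g\leftrightarrow 1-g\) below, assume \(\targetdist(S_1)\le\targetdist(S_2)\) and set \(d\defeq\inf_{y\in S_2,\,x\in S_1}d_{\metric{}}(x,y)\). Consider the test function \(g(z)\defeq\min\{1,\,d_{\metric{}}(z,S_1)/d\}\), which is \(0\) on \(S_1\), is \(1\) on \(S_2\), has transition region \(\{0<g<1\}\subseteq S_3\), and is \(1/d\)-Lipschitz in the metric \(\metric{}\), so \(\|\nabla g(z)\|_{\metric{z}^{-1}}\le 1/d\) almost everywhere. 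The Brascamp–Lieb inequality for \(\targetdist\propto e^{-\potential{}}\), combined with the curvature lower bound \(\hesspotential{z}\succeq\mu\cdot\metric{z}\) (hence \((\hesspotential{z})^{-1}\preceq\mu^{-1}\metric{z}^{-1}\)), gives
\begin{equation*}
    \mathrm{Var}_{\targetdist}(g) \le \mathbb{E}_{\targetdist}[\langle \nabla g(z),(\hesspotential{z})^{-1}\nabla g(z)\rangle] \le \frac{1}{\mu}\,\mathbb{E}_{\targetdist}[\|\nabla g(z)\|_{\metric{z}^{-1}}^{2}] \le \frac{\targetdist(S_3)}{\mu\,d^{2}}~,
\end{equation*}
while a direct computation with the values of \(g\) on \(S_1,S_3,S_2\) yields \(\mathrm{Var}_{\targetdist}(g)\gtrsim\targetdist(S_1)\) up to an absorbable multiple of \(\targetdist(S_3)\). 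This already proves a version of the claim, but with the factor \(\mu d^{2}\) in place of \(\tfrac{\mu}{8+4\sqrt{\mu}}\,d\); it is the correct bound only in the short-distance regime and must be sharpened from a quadratic to a linear dependence on \(d\).

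To obtain the linear-in-\(d\) constant I would pass to the one-dimensional picture, either (a) via the Lovász–Simonovits localization lemma — legitimate since \(\targetdist\) is log-concave — reducing the partition inequality to its one-dimensional version along Euclidean segments, then reparametrizing each segment by \(\metric{}\)-arclength and noting that \(\hesspotential{}\succeq\mu\metric{}\) makes the restricted density strongly log-concave at rate \(\mu\) in the arclength variable, with the reparametrization and Jacobian errors being first order in \(\gradpotential{}\) and therefore controllable after integration by parts against \(e^{-\potential{}}\) (using \(|\rmD\metric{x}[v,v,v]|\le 2\|v\|_{\metric{x}}^{3}\) and its operator-norm form, and crucially \emph{not} requiring any pointwise bound on \(\|\gradpotential{}\|_{\metric{}^{-1}}\)); or (b), closer to the argument of \citet{gopi2023algorithmic} that this lemma generalizes, by foliating a neighbourhood of the minimizing \(\metric{}\)-geodesic from \(S_1\) to \(S_2\) with the geodesic flow, transporting \(\targetdist\)-mass along it, and estimating the attendant Jacobians by self-concordance. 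In either route, applying the sharp one-dimensional isoperimetric inequality for a strongly-log-concave measure on an interval of \(\metric{}\)-length at least \(d\) produces the stated constant: the interpolation in that one-dimensional estimate between the short-interval (Euclidean-type) and long-interval (Gaussian-type, scale \(1/\sqrt{\mu}\)) regimes is exactly what becomes \(\tfrac{\mu}{8+4\sqrt{\mu}}\), which behaves like \(\mu\) when \(\mu\le 1\) and like \(\sqrt{\mu}\) when \(\mu\ge 1\).

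The main obstacle is precisely this sharpening. Plain localization tends to collapse the one-dimensional problem onto log-linear densities, discarding the curvature, so one must either use a curvature-aware needle decomposition (in the spirit of Klartag's Riemannian needle decompositions, with the requisite curvature-dimension input furnished by self-concordance) or carry out the geodesic-flow argument of route (b); in both cases the delicate point is to keep the first-order and Jacobian error terms uniformly under control — via integration by parts / a \(\Gamma_{2}\)-type identity on \((\interior{\primalspace},\metric{})\) — given that, unlike in the theorems of \cref{sec:mixing-guarantees}, no bound on \(\|\gradpotential{}\|_{\metric{}^{-1}}\) is assumed here.
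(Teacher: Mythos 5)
Your overall strategy---localize to one dimension and then apply a sharp one-dimensional isoperimetric inequality whose constant interpolates between a \(\mu\)-regime and a \(\sqrt{\mu}\)-regime---is the paper's strategy, and your reading of where \(\tfrac{\mu}{8+4\sqrt{\mu}}\) comes from is correct. But there are two genuine problems. First, the central obstacle you identify, namely that ``plain localization collapses the one-dimensional problem onto log-linear densities, discarding the curvature,'' so that Klartag-style Riemannian needles, geodesic-flow foliations, or \(\Gamma_{2}\)-type identities would be needed, is a misconception that sends you down a much harder road than necessary. The four-function localization lemma in the form used here (\citet[Lem.~8]{gopi2023algorithmic}) reduces the set inequality to needles carrying the density \(t \mapsto e^{-\potential{a+t(b-a)}-\gamma t}\) on Euclidean segments; nothing is discarded. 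Setting \(V(t) = \potential{a+t(b-a)} - \gamma t\) and \(\frakg(t) = \mu\,(b-a)^{\top}\metric{a+t(b-a)}(b-a)\), the curvature lower bound gives \(V''\geq \frakg\) pointwise along the segment, self-concordance of \(\metric{}\) gives \(|\frakg'| \leq \tfrac{2}{\sqrt{\mu}}\,\frakg^{\nicefrac{3}{2}}\), and \(\int\sqrt{\frakg}\) over the middle piece of the needle dominates \(\sqrt{\mu}\cdot d_{\metric{}}(S_{1},S_{2})\). No arclength reparametrization, Jacobian control, or integration by parts against \(e^{-\potential{}}\) is required, and the absence of any bound on \(\|\gradpotential{}\|_{\metric{}^{-1}}\) causes no difficulty.

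Second, and more importantly, the one ingredient that actually produces the constant is never supplied: the one-dimensional statement that if \(V''\geq\frakg\) and \(|\frakg'|\leq\tfrac{2}{\kappa}\frakg^{\nicefrac{3}{2}}\), then \(e^{-V(x)}/\sqrt{\frakg(x)} \geq \tfrac{\kappa}{8+4\kappa}\cdot\min\bigl\{\int_{-\infty}^{x}e^{-V},\ \int_{x}^{\infty}e^{-V}\bigr\}\) (the paper's \cref{lem:one-dim-isoperimetry}, applied with \(\kappa=\sqrt{\mu}\); multiplying by the extra \(\sqrt{\mu}\) from the rescaled distance yields \(\tfrac{\mu}{8+4\sqrt{\mu}}\)). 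This is proved by an elementary calculus argument---on the interval of length \(\tfrac{\kappa}{4\sqrt{\frakg(x)}}\) adjacent to \(x\) one has \(\frakg\geq\tfrac{1}{2}\frakg(x)\), whence a quadratic lower bound on \(V\) there and a linear one beyond---not by a delicate interpolation inside a needle decomposition. Your Brascamp--Lieb opening, as you yourself note, only yields the \(d^{2}\)-dependence and is a dead end for the stated bound. As written, the proposal is a plan in which the decisive step is missing and the reduction step is made to look substantially harder than it is.
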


\cref{lem:log-concave-isoperimetry} is the combination of a isoperimetric inequality for log-concave distributions \citep[Thm. 2.2]{lovasz2003hit} and \(\nu\)-symmetry \citep[Lem. 2.3]{laddha2020strong}.
\cref{lem:curv-lower-bdd-isoperimetry} is a new isoperimetric inequality, and is a generalisation of the result in \citet[Lem. 9]{gopi2023algorithmic} and \citet[Lem. B.7]{kook2024gaussian}.
These prior results assume \(\mu\)-relative convexity with respect to a self-concordant function \(\psi\) whose Hessian is approximately the metric \(\metric{}\).
It is interesting to note that when \(\mu\) is small, \(\frac{\mu}{8 + 4\sqrt{\mu}}\) scales as \(\mu\), whereas when \(\mu\) is large, this ratio scales as \(\sqrt{\mu}\) instead.
To work with \cref{lem:curv-lower-bdd-isoperimetry}, we require relating \(d_{\metric{}}(x, y)\) to \(\|x - y\|_{\metric{}}\), and this is possible when \(d_{\metric{}}(x, y)\) is sufficiently small as given in the following lemma.

\begin{lemma}[{\citet[Lem. 3.1]{nesterov2002riemannian}}]
\label{lem:nesterov-todd-result}
Let the metric \(\metric{} : \interior{\primalspace} \to \bbS_{+}^{d}\) be self-concordant.
For \(x, y \in \interior{\primalspace}\) and \(\kappa \in [0, 1)\), if \(d_{\metric{}}(x, y) \leq \kappa - \frac{\kappa^{2}}{2}\), then \(\|x - y\|_{\metric{x}} \leq \kappa < 1\).
\end{lemma}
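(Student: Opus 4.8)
The plan is to track the straight-line radius $r(t) = \|\gamma(t) - x\|_{\metric{x}}$ of a path $\gamma$ joining $x$ to $y$ and to show that self-concordance forces $r$ to grow no faster than the path's metric length. The single ingredient I would invoke is the Loewner comparison from \cref{lem:dikin1-in-domain-self-concordant-loewner}: whenever $z \in \calE_{x}^{\metric{}}(1)$, i.e. $\|z - x\|_{\metric{x}} < 1$, the ellipsoid lies in $\interior{\primalspace}$ and $(1 - \|z-x\|_{\metric{x}})^{2}\,\metric{x} \preceq \metric{z}$, which rearranges to $\|w\|_{\metric{x}} \le (1 - \|z-x\|_{\metric{x}})^{-1}\|w\|_{\metric{z}}$ for every $w$. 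I would fix a smooth path $\gamma : [0,1] \to \interior{\primalspace}$ with $\gamma(0) = x$ and $\gamma(1) = y$ whose metric length $L = \int_{0}^{1}\|\dot\gamma(t)\|_{\metric{\gamma(t)}}\,\rmd t$ is within $\varepsilon$ of $d_{\metric{}}(x,y)$, noting that $r(1) = \|y - x\|_{\metric{x}}$ is independent of the chosen path.

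Since $r(t)$ is the composition of the fixed norm $\|\cdot\|_{\metric{x}}$ with the smooth curve $t \mapsto \gamma(t) - x$, it is Lipschitz with $|\dot r(t)| \le \|\dot\gamma(t)\|_{\metric{x}}$ for almost every $t$, which sidesteps the nondifferentiability of $r$ at $r = 0$. Whenever $r(t) < 1$ I feed the Loewner bound in with $z = \gamma(t)$ and $w = \dot\gamma(t)$ to get $\|\dot\gamma(t)\|_{\metric{x}} \le (1 - r(t))^{-1}\|\dot\gamma(t)\|_{\metric{\gamma(t)}}$. Multiplying through by the nonnegative factor $1 - r(t)$ then gives $\frac{\rmd}{\rmd t}\bigl(r(t) - \tfrac{1}{2}r(t)^{2}\bigr) = (1 - r(t))\dot r(t) \le \|\dot\gamma(t)\|_{\metric{\gamma(t)}}$. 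Writing $g(r) = r - \tfrac{1}{2}r^{2}$ and integrating from $0$ to $t$ (with $r(0) = 0$) yields $g(r(t)) \le \int_{0}^{t}\|\dot\gamma(s)\|_{\metric{\gamma(s)}}\,\rmd s \le L$.

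To make this rigorous I must first guarantee $r(t) < 1$ throughout, since the Loewner comparison is only available inside the unit Dikin ellipsoid; this bootstrapping step is where the real care lies. I would set $t^{\star} = \sup\{t : r(s) < 1 \text{ for all } s \le t\}$, which is positive since $r(0) = 0$. On $[0, t^{\star})$ the displayed inequality gives $g(r(t)) \le L \le g(\kappa) < g(1) = \tfrac{1}{2}$, where $\varepsilon$ is chosen small enough that $L < \tfrac{1}{2}$ (possible because $d_{\metric{}}(x,y) \le \kappa - \tfrac{\kappa^{2}}{2} < \tfrac{1}{2}$). As $g$ is strictly increasing on $[0,1)$, this forces $r(t) \le \kappa < 1$ on $[0, t^{\star})$, so $r$ cannot reach $1$ and continuity gives $t^{\star} = 1$. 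Hence $g(r(1)) \le L$; letting $\varepsilon \to 0$ gives $g(r(1)) \le d_{\metric{}}(x,y) \le g(\kappa)$, and monotonicity of $g$ yields $\|y - x\|_{\metric{x}} = r(1) \le \kappa$. The main obstacle is precisely this continuity/bootstrapping argument, together with the mild point that a near-minimizing path of finite metric length may be taken inside $\interior{\primalspace}$, which the Dikin-ellipsoid containment also secures; the remaining steps are a Cauchy--Schwarz estimate and a one-dimensional comparison.
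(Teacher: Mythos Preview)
Your proof is correct and essentially the same as the paper's: both use the self-concordance Loewner bound \((1-r)^{2}\metric{x}\preceq\metric{\gamma(t)}\) to obtain \((1-r(t))\,\dot r(t)\le\|\dot\gamma(t)\|_{\metric{\gamma(t)}}\) and integrate the derivative of \(g(r)=r-\tfrac12 r^{2}\) along the path. The paper frames this as a contrapositive (if the geodesic exits \(\calE_{x}^{\metric{}}(r)\) its length is at least \(g(r)\)), whereas you argue directly with a continuity/bootstrapping step to keep \(r<1\); your treatment of path existence and the regularity of \(r(t)\) is more explicit than the paper's, which defers those details to Nesterov--Todd.
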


Equipped with the above lemmas that we will prove later, we now state the proof of \Cref{lem:conductance-lower-isoperimetry}.

\begin{proof}
Our proof follows the same structure as prior results based on the one-step overlap.
In particular, here we follow \citet[Proof of Lem. 2]{dwivedi2018log}.
Due to the reversibility of \(\bfT\),
\begin{equation*}
    \int_{A_{1}} \calT_{x}(A_{2})\targetdens(x)~\rmd x = \int_{A_{2}} \calT_{y}(A_{1})\targetdens(y)~\rmd y~.
\end{equation*}

Define the sets
\begin{equation*}
    A_{1}' \defeq \left\{x \in A_{1} \cap \calS ~:~ \calT_{x}(A_{2}) < \frac{3}{8}\right\} \qquad A_{2}' \defeq \left\{y \in A_{2} \cap \calS ~:~ \calT_{y}(A_{1}) < \frac{3}{8}\right\}~.
\end{equation*}

Using the definitions above, and the fact that \(A_{i} \cap (\calS \setminus A_{i}') \subseteq A_{i}\),
\begin{align*}
    \int_{A_{1}} \calT_{x}(A_{2})\targetdens(x)~\rmd x &= \frac{1}{2}\left(\int_{A_{1}} \calT_{x}(A_{2})\targetdens(x)~\rmd x + \int_{A_{2}} \calT_{y}(A_{1})\targetdens(y) ~\rmd y\right) \\
    &\geq \frac{1}{2}\left(\int_{A_{1} \cap (\calS \setminus A_{1}')} \calT_{x}(A_{2})\targetdens(x)~\rmd x + \int_{A_{2} \cap (\calS \setminus A_{2}')} \calT_{y}(A_{1})\targetdens(y) ~\rmd y\right) \\
    &\geq \frac{3}{16} \cdot \left(\int_{A_{1} \cap (\calS \setminus A_{1}')}\targetdens(x)~\rmd x + \int_{A_{2} \cap (\calS \setminus A_{2}')}\targetdens(y)~\rmd y\right) \\
    &= \frac{3}{16} \cdot (\targetdist(A_{1} \cap (\calS \setminus A_{1}')) + \targetdist(A_{2} \cap (\calS \setminus A_{2}')))~\numberthis\label{eq:conductance-numerator-lower-primitive}.
\end{align*}

If \(\targetdist(A_{i}') \leq \frac{\targetdist(A_{i} \cap \calS)}{2}\) for given \(i \in \{1, 2\}\),
\begin{gather*}
    \targetdist(A_{i} \cap (\calS \setminus A_{1}')) \geq \targetdist(A_{i} \cap \calS) - \targetdist(A_{i}') \geq \frac{1}{2} \cdot \targetdist(A_{i} \cap \calS)~.
\end{gather*}
Hence, if \(\targetdist(A_{1}') \leq \frac{\targetdist(A_{1} \cap \calS)}{2}\) or \(\targetdist(A_{2}') \leq \frac{\targetdist(A_{2} \cap \calS)}{2}\), continuing from \Cref{eq:conductance-numerator-lower-primitive},
\begin{equation}
\label{eq:conductance-numerator-lower-1}
    \int_{A_{1}}\calT_{x}(A_{2}) \targetdens(x) ~\rmd x \geq \frac{3}{16} \cdot \min\left\{\targetdist(A_{1} \cap \calS)~, \targetdist(A_{2} \cap \calS)\right\}~.
\end{equation}

An alternative lower bound continuing from \Cref{eq:conductance-numerator-lower-primitive} is
\begin{align*}
    \int_{A_{1}} \calT_{x}(A_{2})\targetdens(x)~\rmd x &\geq \frac{3}{16} \cdot (\targetdist(A_{1} \cap (\calS \setminus A_{1}')) + \targetdist(A_{2} \cap (\calS \setminus A_{2}'))) \\
    &= \frac{3}{16} \cdot \targetdist(\calS \setminus A_{1}' \setminus A_{2}') \\
    &= \frac{3}{16} \cdot \targetdist(\calS) \cdot \targetdist_{\calS}(\calS \setminus A_{1}' \setminus A_{2}')~\numberthis\label{eq:conductance-numerator-lower-2-prelim}
\end{align*}
where recall that \(\targetdist_{\calS}\) is the restriction of \(\targetdist\) to \(\calS\) with density \(\targetdens_{\calS}(x) = \frac{e^{-\potential{}_{|\calS}(x)}}{\targetdist(\calS)}\).
By definition of \(A_{1}'\) and \(A_{2}'\), for any \(x' \in A_{1}'\) and \(y' \in A_{2}'\),
\begin{equation*}
    \TVdist(\calT_{x'}, \calT_{y'}) \geq \calT_{x'}(A_{1}) - \calT_{y'}(A_{1}) = 1 - \calT_{x'}(A_{2}) - \calT_{y'}(A_{1}) > 1 - \frac{3}{8} - \frac{3}{8} = \frac{1}{4}~.
\end{equation*}
By the one-step overlap assumed in the statement, we have that \(\|x' - y'\|_{\metric{y'}} > \Delta\) for \(\Delta \leq \frac{1}{2}\).
The remainder of the proof deals with the case where \(\targetdist(A_{i}') \geq \frac{\targetdist(A_{i} \cap \calS)}{2}\) for both \(i = 1, 2\).

\paragraph{When \(\metric{}\) is \(\nu\)-symmetric}

Since \(x' \in A_{1}'\) and \(y' \in A_{2}'\) are arbitrary, we have
\begin{equation*}
    \inf_{x' \in A_{1}', y' \in A_{2}'} \|x' - y'\|_{\metric{y'}} \geq \Delta~.
\end{equation*}
Now, we use \Cref{lem:log-concave-isoperimetry} for the partition \(\{A_{1}', A_{2}', \calS \setminus A_{1}' \setminus A_{2}'\}\) of \(\calS\).
\begin{align*}
    \targetdist_{\calS}(\calS \setminus A_{1}' \setminus A_{2}') &\geq \frac{\Delta}{\sqrt{\nu}} \cdot \targetdist_{\calS}(A_{1}') \cdot \targetdist_{\calS}(A_{2}') \\
    &\geq \frac{\Delta}{\sqrt{\nu}} \cdot \frac{\targetdist(A_{1}')}{\targetdist(\calS)} \cdot \frac{\targetdist(A_{2}')}{\targetdist(\calS)} \\
    &\geq \frac{\Delta}{4\sqrt{\nu}} \cdot \frac{\targetdist(A_{1} \cap \calS)}{\targetdist(\calS)} \cdot \frac{\targetdist(A_{2} \cap \calS)}{\targetdist(\calS)}~,
\end{align*}
where the last inequality is due to the assumption that \(\targetdist(A_{i}') \geq \frac{\targetdist(A_{i} \cap \calS)}{2}\) for \(i \in \{1, 2\}\).
Since \(A_{1}\) and \(A_{2}\) form a partition of \(\primalspace\),
\begin{equation*}
    \targetdist(A_{1} \cap \calS) + \targetdist(A_{2} \cap \calS) = \targetdist(\calS) \Leftrightarrow
    \frac{\targetdist(A_{2} \cap \calS)}{\targetdist(\calS)} = 1 - \frac{\targetdist(A_{1} \cap \calS)}{\targetdist(\calS)}~.
\end{equation*}
Using the algebraic fact that \(t(1 - t) \geq \frac{1}{2} \min\{t, 1 - t\}\) for \(t \in [0, 1]\), we obtain the lower bound
\begin{equation}
\label{eq:conductance-numerator-lower-2-sym}
    \targetdist_{\calS}(\calS \setminus A_{1}' \setminus A_{2}') \geq \frac{\Delta}{8\sqrt{\nu}} \cdot \frac{\min\{\targetdist(A_{1} \cap \calS), \targetdist(A_{2} \cap \calS)\}}{\targetdist(\calS)}~.
\end{equation}

\paragraph{When \(\potential{}\) satisfies \((\mu, \metric{})\)-curvature lower bound}

Since \(\Delta \leq \frac{1}{2}\) and \(\|x' - y'\|_{\metric{y'}} \geq \Delta\), we get from \cref{lem:nesterov-todd-result} that
\begin{equation*}
    d_{\metric{}}(y', x') > \Delta - \frac{\Delta^{2}}{2} \Rightarrow \inf_{y' \in A_{2}', x' \in A_{1}'} d_{\metric{}}(y', x') > \Delta - \frac{\Delta^{2}}{2} \geq \frac{\Delta}{2}
\end{equation*}
where the last inequality uses the algebraic fact that \(t - \frac{t^{2}}{2} \geq \frac{t}{2}\) for \(t \in [0, 1]\).

By definition of \(\targetdist_{\calS}\), its potential \(f_{|\calS}\) satisfies \((\mu, \metric{})\)-curvature lower bound over \(\calS\).
This permits using \Cref{lem:curv-lower-bdd-isoperimetry} for \(\targetdist_{\calS}\) for the partition \(\{A_{1}', A_{2}', \calS \setminus A_{1}' \setminus A_{2}'\}\), which results in
\begin{align*}
    \targetdist_{\calS}(\calS \setminus A_{1}' \setminus A_{2}') &\geq \frac{\mu}{8 + 4\sqrt{\mu}} \cdot \inf_{y' \in A_{2'}, x' \in A_{1}'} d_{\metric{}}(y', x') \cdot \min\left\{\targetdist_{\calS}(A_{1}'), \targetdist_{\calS}(A_{2}')\right\}~\\
    &\geq \frac{\mu}{8 + 4\sqrt{\mu}} \cdot \frac{\Delta}{2} \cdot \min\left\{\targetdist_{\calS}(A_{1}'), \targetdist_{\calS}(A_{2}')\right\} \\
    &= \frac{\mu}{8 + 4\sqrt{\mu}} \cdot \frac{\Delta}{2} \cdot \frac{\min\{\targetdist(A_{1}'), \targetdist(A_{2}')\}}{\targetdist(\calS)} \\
    &\geq \frac{\mu}{8 + 4\sqrt{\mu}} \cdot \frac{\Delta}{4} \cdot \frac{\min\{\targetdist(A_{1} \cap \calS), \targetdist(A_{2} \cap \calS)\}}{\targetdist(\calS)}~,\numberthis\label{eq:conductance-numerator-lower-2-relconv}
\end{align*}
where the last inequality is due to the assumption that \(\targetdist(A_{i}') \geq \frac{\targetdist(A_{i} \cap \calS)}{2}\) for \(i \in \{1, 2\}\).

Combining \Cref{eq:conductance-numerator-lower-2-relconv} and \Cref{eq:conductance-numerator-lower-2-sym} in \Cref{eq:conductance-numerator-lower-2-prelim}, we get the net lower bound in conjunction with \Cref{eq:conductance-numerator-lower-1}
\begin{equation}
\label{eq:conductance-number-lower}
\int_{A_{1}} \calT_{x}(A_{2})\targetdens(x) ~\rmd x \geq \frac{3\Delta}{16} \cdot \min\left\{1, \max\left\{\frac{\widetilde{\mu}}{4},~ \frac{1}{8\sqrt{\nu}}\right\}\right\} \cdot \min\left\{\targetdist(A_{1} \cap \calS), \targetdist(A_{2} \cap \calS)\right\}~.
\end{equation}
\end{proof}

\subsubsection{Proof of \Cref{lem:px-py-small}}

\begin{proof}
The KL divergence between two distribution \(\rho_{1}\) and \(\rho_{2}\) whre \(\rho_{1}\) is absolutely continuous with respect to \(\rho_{2}\) is defined as
\begin{equation*}
    \KLdist(\rho_{1} \| \rho_{2}) = \int \rmd\rho_{1}(x) \log \frac{\rmd\rho_{1}(x)}{\rmd\rho_{2}(x)}~.
\end{equation*}
Pinsker's inequality yields a bound on the TV distance as
\begin{equation*}
    \TVdist(\gaussian{\mu_{1}}{\Sigma_{1}}, \gaussian{\mu_{2}}{\Sigma_{2}}) \leq \sqrt{\frac{1}{2} \cdot \KLdist(\gaussian{\mu_{1}}{\Sigma_{1}} ~\|~ \gaussian{\mu_{2}}{\Sigma_{2}})}~.
\end{equation*}
Hence, to obtain a bound on \(\TVdist(\calP_{y}, \calP_{x})\), it suffices to bound \(\KLdist(\calP_{y} \| \calP_{x})\).

When \(\rho_{1}\), \(\rho_{2}\) are Gaussian distributions, the KL divergence has a closed form that is stated below.
\begin{equation*}
    \KLdist(\gaussian{\mu_{1}}{\Sigma_{1}} ~\|~ \gaussian{\mu_{2}}{\Sigma_{2}}) = \frac{1}{2}\left(\trace(\Sigma_{1}\Sigma_{2}^{-1} - \rmI_{d \times d}) - \log \det \Sigma_{1}\Sigma_{2}^{-1} + \|\mu_{2} - \mu_{1}\|^{2}_{\Sigma_{2}^{-1}}\right)~.
\end{equation*}
Since \(\calP_{y}\) and \(\calP_{x}\) are Gaussian distributions, we can obtain \(\KLdist(\calP_{y} ~\|~ \calP_{x})\) by substituting
\begin{equation*}
    \mu_{1} \leftarrow y - h \cdot \metric{y}^{-1}\gradpotential{y}~, \enskip \mu_{2} \leftarrow x - h \cdot \metric{x}^{-1}\gradpotential{x}~, \enskip \Sigma_{1} \leftarrow 2h \cdot \metric{y}~, \enskip \Sigma_{2} \leftarrow 2h \cdot \metric{x}^{-1}~,
\end{equation*}
in the above formula, which gives
\begin{align*}
    \KLdist(\calP_{y} ~\|~ \calP_{x}) &= \frac{1}{2} \cdot \left(\trace(\metric{x}\metric{y}^{-1} - \rmI_{d \times d}) - \log \det \metric{x}\metric{y}^{-1}\right) \\
    &\quad + \frac{1}{4h} \cdot \left\|(x - h \cdot \metric{x}^{-1}\gradpotential{x}) - (y - h \cdot \metric{y}^{-1}\gradpotential{y})\right\|^{2}_{\metric{x}}~\\
    &= \frac{1}{2} \cdot \underbrace{\left(\trace(\metric{y}^{-\nicefrac{1}{2}}\metric{x}\metric{y}^{-\nicefrac{1}{2}} - \rmI_{d \times d}) - \log \det \metric{y}^{-\nicefrac{1}{2}}\metric{x}\metric{y}^{-\nicefrac{1}{2}}\right)}_{T_{1}^{P}} \\
    &\qquad + \frac{1}{4h} \cdot \underbrace{\left\|(x - h \cdot \metric{x}^{-1}\gradpotential{x}) - (y - h \cdot \metric{y}^{-1}\gradpotential{y})\right\|_{\metric{x}}^{2}}_{T_{2}^{P}}~.
\end{align*}

In the lemma, the metric \(\metric{}\) is assumed to be self-concordant, and \(x, y\) are such that \(\|x - y\|_{\metric{y}} \leq \frac{\sqrt{h}}{10} < 1\).
Hence, from \Cref{lem:dikin1-in-domain-self-concordant-loewner}(2) we have

\begin{subequations}
\begin{align}
    (1 - \|x - y\|_{\metric{y}})^{2} \cdot \metric{y} &\preceq \metric{x} \preceq \frac{1}{(1 - \|x - y\|_{\metric{y}})^{2}} \cdot \metric{y} \label{eq:loewner-1}\\
    \Leftrightarrow 
    (1 - \|x - y\|_{\metric{y}})^{2} \cdot \rmI_{d \times d} \preceq \metric{y}^{-\nicefrac{1}{2}}&\metric{x}\metric{y}^{-\nicefrac{1}{2}} \preceq \frac{1}{(1 - \|x - y\|_{\metric{y}})^{2}} \cdot \rmI_{d \times d}~.\label{eq:loewner-2}
\end{align}
\end{subequations}

For convenience, \(M\) to denote \(\metric{y}^{-\nicefrac{1}{2}}\metric{x}\metric{y}^{-\nicefrac{1}{2}}\), and \(r_{y}\) to denote \(\|x - y\|_{\metric{y}}\).
\cref{eq:loewner-2} asserts that all eigenvalues of \(M\) lie in the range \([(1 - r_{y})^{2}, (1 - r_{y})^{-2}]\).
\cref{auxlem:log-inequality,auxlem:small-quad} imply that
\begin{equation*}
    \forall~i \in [d]~,\quad \lambda_{i}(M) - 1 - \log \lambda_{i}(M) \leq \frac{((1 - r_{y})^{2} - 1)^{2}}{(1 - r_{y})^{2}} \leq \frac{4r_{y}^{2}}{(1 - r_{y})^{2}}~.
\end{equation*}
This leads to a bound for \(T_{1}^{P}\) as follows when \(h \leq 1\).
\begin{align*}
    T_{1}^{P} = \trace(M - I) - \log \det M &= \sum_{i=1}^{d} (\lambda_{i}(M) - 1 - \log \lambda_{i}(M)) \\
    &\leq d \cdot \max_{i \in [d]} 
    ~(\lambda_{i}(M) - 1 - \log \lambda_{i}(M)) \\
    &\leq d \cdot \frac{4r_{y}^{2}}{(1 - r_{y})^{2}}  \\
    &\leq d \cdot \frac{4 \cdot \frac{h}{100}}{(1 - \frac{\sqrt{h}}{10})^{2}} \leq \frac{h \cdot d}{20}~.
\end{align*}

Using \cref{eq:loewner-1} and the fact that \(x, y \in \calS\), we also obtain a bound for \(T_{2}^{P}\) as shown below.
\begin{align*}
    T_{2}^{P} &= \|(x - h \cdot \metric{x}^{-1}\gradpotential{x}) - (y - h \cdot \metric{y}^{-1}\gradpotential{y})\|_{\metric{x}}^{2} \\
    &\leq 2 \cdot \|x - y\|_{\metric{x}}^{2} + 2h^{2} \cdot \|\metric{y}^{-1}\gradpotential{y} - \metric{x}^{-1}\gradpotential{x}\|_{\metric{x}}^{2} \\
    &\leq 2 \cdot \|x - y\|_{\metric{x}}^{2} + 4h^{2} \cdot \|\metric{x}^{-1}\gradpotential{x}\|_{\metric{x}}^{2} + 4h^{2} \cdot \|\metric{y}^{-1}\gradpotential{y}\|_{\metric{x}}^{2} \\
    &\leq \frac{2 \cdot \|x - y\|_{\metric{y}}^{2}}{(1 - r_{y})^{2}} + 4h^{2} \cdot \|\gradpotential{x}\|_{\metric{x}^{-1}}^{2} + 4h^{2} \cdot \frac{\|\gradpotential{y}\|^{2}_{\metric{y}^{-1}}}{(1 - r_{y})^{2}}  \\
    &\leq 2 \cdot \frac{r_{y}^{2}}{(1 - r_{y})^{2}} + 4h^{2} \cdot \sfU_{\potential{}, \calS}^{2} + \frac{4h^{2} \cdot \sfU_{\potential{}, \calS}^{2}}{(1 - r_{y})^{2}}~.
\end{align*}
Since \(r_{y} \leq \frac{\sqrt{h}}{10}\) for \(h \leq 1\), \(\frac{1}{(1 - r_{y})^{2}} \leq \frac{5}{4}\).
Hence, we have the net bound
\begin{equation*}
    \KLdist(\calP_{y} ~\|~ \calP_{x}) = \frac{1}{2} \cdot T_{1}^{P} + \frac{1}{4h} \cdot T_{2}^{P} = \frac{1}{2} \cdot \left(\frac{h \cdot d}{20} + \frac{1}{80} + \frac{9h \cdot \sfU_{\potential{}, \calS}^{2}}{2}\right)
\end{equation*}
which results in
\begin{equation*}
    \TVdist(\calP_{x}, \calP_{y}) \leq \frac{1}{2}\sqrt{\frac{h \cdot d}{20} + \frac{1}{80} + \frac{9h \cdot \sfU_{\potential{}, \calS}^{2}}{2}}~.
\end{equation*}
\end{proof}

\subsubsection{Proofs of \Cref{lem:tx-px-small-SC,lem:tx-px-small-beyond-SC}}

First, we state a collection of facts about metrics that satisfy various notions of self-concordance defined in \cref{sec:mixing-guarantees:prelims}.
These will come in handy in proving \cref{lem:tx-px-small-SC,lem:tx-px-small-beyond-SC}.

\begin{lemma}
\label{lem:SC-lemma}
Let the metric \(\metric{}\) be self-concordant.
For any \(x, y \in \interior{\primalspace}\) such that \(\|y - x\|_{\metric{x}} \leq \frac{3}{10}\),
\begin{align*}
    \log \det \metric{y} - \log \det \metric{x} &\geq -3d \cdot \|y - x\|_{\metric{x}}~, \\
    \|y - x\|_{\metric{x}}^{2} - \|y - x\|_{\metric{y}}^{2} &\geq -6 \cdot \|y - x\|_{\metric{x}}^{3}~.
\end{align*}
\end{lemma}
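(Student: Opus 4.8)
The plan is to read off both inequalities from the two-sided Löwner comparison for self-concordant metrics, \cref{lem:dikin1-in-domain-self-concordant-loewner}(2), rather than integrating anything. Write $v = y - x$ and $r = \|v\|_{\metric{x}}$; by hypothesis $r \le \tfrac{3}{10} < 1$, so $y$ lies in the open Dikin ellipsoid $\calE_{x}^{\metric{}}(1)$, which is contained in $\interior{\primalspace}$, and \cref{lem:dikin1-in-domain-self-concordant-loewner}(2) gives
\begin{equation*}
(1-r)^{2}\cdot\metric{x} \preceq \metric{y} \preceq \frac{1}{(1-r)^{2}}\cdot\metric{x}~.
\end{equation*}
Everything then reduces to two elementary one-variable estimates.

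For the log-determinant inequality I would use $\log\det\metric{y} - \log\det\metric{x} = \log\det\!\big(\metric{x}^{-1/2}\metric{y}\metric{x}^{-1/2}\big)$, and observe that the lower Löwner bound forces every eigenvalue of $\metric{x}^{-1/2}\metric{y}\metric{x}^{-1/2}$ to be at least $(1-r)^{2}$, hence $\log\det\metric{y} - \log\det\metric{x} \ge 2d\log(1-r)$. It then remains to check the scalar inequality $2\log(1-r) \ge -3r$ for $r \in [0,\tfrac{3}{10}]$; this follows from $\log(1-r) \ge -r - r^{2}$ (valid for $r \le \tfrac12$, by comparing the two sides and their derivatives at $0$) together with $2 + 2r \le 3$ on the same range, which gives $2\log(1-r) \ge -r(2+2r) \ge -3r$.

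For the quadratic-form inequality I would write $\|v\|_{\metric{x}}^{2} - \|v\|_{\metric{y}}^{2} = \langle v, (\metric{x} - \metric{y})v\rangle$ and apply the upper Löwner bound $\metric{y} \preceq (1-r)^{-2}\metric{x}$ together with $\|v\|_{\metric{x}}^{2} = r^{2}$ to get
\begin{equation*}
\|v\|_{\metric{x}}^{2} - \|v\|_{\metric{y}}^{2} \ge \Big(1 - \tfrac{1}{(1-r)^{2}}\Big)\,r^{2} = -\,\frac{r^{3}(2-r)}{(1-r)^{2}}~,
\end{equation*}
using $\tfrac{1}{(1-r)^{2}} - 1 = \tfrac{r(2-r)}{(1-r)^{2}}$. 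The claimed lower bound $-6r^{3}$ then follows from the crude estimate $\tfrac{2-r}{(1-r)^{2}} \le \tfrac{2}{(7/10)^{2}} = \tfrac{200}{49} < 6$, valid for $r \le \tfrac{3}{10}$.

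I do not expect a genuine obstacle here: the whole content is invoking \cref{lem:dikin1-in-domain-self-concordant-loewner}(2) (so that $y$ is in the domain and the metric comparison applies) and then bookkeeping the two scalar inequalities, the only mild care being to keep the constants honest so that the threshold $\tfrac{3}{10}$ and the constants $3d$ and $6$ come out exactly as stated. An alternative, slightly longer route avoids the Löwner lemma and instead integrates $t \mapsto \log\det\metric{x+tv}$ and $t \mapsto \|v\|_{\metric{x+tv}}^{2}$ over $[0,1]$, bounding their derivatives by $2d\,\|v\|_{\metric{x+tv}}$ and $2\,\|v\|_{\metric{x+tv}}^{3}$ respectively via the equivalent characterization of self-concordance in \cref{def:SC}, with the same Löwner control of $\|v\|_{\metric{x+tv}}$ along the segment; I would use the shorter first route.
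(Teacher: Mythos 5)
Your proof is correct and follows essentially the same route as the paper: both arguments invoke the two-sided Löwner comparison of \cref{lem:dikin1-in-domain-self-concordant-loewner}(2) and then reduce each inequality to an elementary scalar estimate in $r = \|y-x\|_{\metric{x}} \le \tfrac{3}{10}$ (the paper packages these as \cref{auxlem:log-is-linear} and \cref{auxlem:almost-cubic}, where you verify them inline with the same constants). No gaps.
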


\begin{lemma}
\label{lem:SSC-LTSC-lemma}
Let the metric \(\metric{}\) be strongly self-concordant.
Then, for any \(x \in \interior{\primalspace}\),
\begin{equation*}
    \|\metric{x}^{-\nicefrac{1}{2}}\nabla \log \det \metric{x}\| \leq 2\sqrt{d}~.
\end{equation*}
Additionally, if \(\metric{}\) is also \(\alpha\)-lower trace self-concordant, then for any \(y \in \calE_{x}^{\metric{}}(1)\),
\begin{equation*}
    \log \det \metric{y}\metric{x}^{-1} \geq \langle \nabla \log \det \metric{x}, y - x\rangle - \frac{17}{8} \cdot (\alpha + 4) \cdot \|y - x\|_{\metric{x}}^{2}~.
\end{equation*}
\end{lemma}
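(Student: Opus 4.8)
The statement splits into two independent claims, which I would prove in turn.

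\medskip

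\textbf{First claim (gradient bound).} The plan is to use the dual-norm characterization
\begin{equation*}
    \|\metric{x}^{-\nicefrac12}\nabla\log\det\metric{x}\| = \sup_{v:\ \|v\|_{\metric{x}}\le 1}\ \langle \nabla\log\det\metric{x},\, v\rangle,
\end{equation*}
then apply Jacobi's formula $\langle \nabla\log\det\metric{x}, v\rangle = \rmD(\log\det\metric{})(x)[v] = \trace(\metric{x}^{-1}\rmD\metric{x}[v]) = \trace(\metric{x}^{-\nicefrac12}\rmD\metric{x}[v]\metric{x}^{-\nicefrac12})$. Writing $S \defeq \metric{x}^{-\nicefrac12}\rmD\metric{x}[v]\metric{x}^{-\nicefrac12}$ (symmetric, since $\metric{}$ is symmetric-matrix-valued), Cauchy--Schwarz against the identity gives $\trace(S) \le \sqrt{d}\,\|S\|_{\frob}$, and strong self-concordance gives $\|S\|_{\frob}\le 2\|v\|_{\metric{x}}\le 2$. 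Taking the supremum over $\|v\|_{\metric{x}}\le 1$ yields the bound $2\sqrt{d}$. This part is short and purely a matter of unwinding definitions.

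\medskip

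\textbf{Second claim (second-order lower bound).} Set $v = y - x$, $\gamma(t) = x + tv$ for $t\in[0,1]$, and $g(t) = \log\det\metric{\gamma(t)}$. Since $y\in\calE_x^{\metric{}}(1)$ and $\|\gamma(t)-x\|_{\metric{x}} = t\|v\|_{\metric{x}} < 1$, the whole segment lies in $\calE_x^{\metric{}}(1)\subseteq\interior{\primalspace}$, so $g$ is twice differentiable on $[0,1]$. Taylor's theorem with integral remainder then gives
\begin{equation*}
    \log\det\metric{y}\metric{x}^{-1} - \langle\nabla\log\det\metric{x},\, y - x\rangle \;=\; g(1) - g(0) - g'(0) \;=\; \int_0^1 (1-t)\,g''(t)\,\rmd t,
\end{equation*}
so it suffices to lower-bound $g''(t)$. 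Writing $G = \metric{\gamma(t)}$, $G' = \rmD\metric{\gamma(t)}[v]$, $G'' = \rmD^2\metric{\gamma(t)}[v,v]$, a direct differentiation gives $g''(t) = \trace(G^{-1}G'') - \|G^{-\nicefrac12}G'G^{-\nicefrac12}\|_{\frob}^2$. Now $\alpha$-lower trace self-concordance bounds $\trace(G^{-1}G'') = \trace(\metric{\gamma(t)}^{-1}\rmD^2\metric{\gamma(t)}[v,v]) \ge -\alpha\|v\|_{\metric{\gamma(t)}}^2$, and strong self-concordance bounds $\|G^{-\nicefrac12}G'G^{-\nicefrac12}\|_{\frob} \le 2\|v\|_{\metric{\gamma(t)}}$; hence $g''(t) \ge -(\alpha+4)\|v\|_{\metric{\gamma(t)}}^2$.

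\medskip

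Finally I would pass back to the base point: strong self-concordance implies self-concordance, so \cref{lem:dikin1-in-domain-self-concordant-loewner} gives $\metric{\gamma(t)} \preceq (1 - t\|v\|_{\metric{x}})^{-2}\metric{x}$, whence $\|v\|_{\metric{\gamma(t)}}^2 \le \|v\|_{\metric{x}}^2/(1-t\|v\|_{\metric{x}})^2$. Plugging this in and writing $r = \|v\|_{\metric{x}} < 1$,
\begin{equation*}
    \log\det\metric{y}\metric{x}^{-1} - \langle\nabla\log\det\metric{x},\, y - x\rangle \;\ge\; -(\alpha+4)\, r^2 \int_0^1 \frac{1-t}{(1-tr)^2}\,\rmd t,
\end{equation*}
and a direct evaluation of the integral (equal to $-\ln(1-r)/r^2 - 1/r$) shows it is at most $\tfrac{17}{8}$ over the relevant range, giving the stated inequality. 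The main work is the second-derivative identity for $\log\det$ together with correctly matching each term to lower-trace / strong self-concordance evaluated along $\gamma$, and then the elementary but slightly delicate integral estimate at the end; everything else is routine matrix calculus and Cauchy--Schwarz.
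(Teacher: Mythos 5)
Your proof is correct and follows essentially the same route as the paper: the same second-derivative identity $g''(t)=\trace(G^{-1}G'')-\|G^{-\nicefrac{1}{2}}G'G^{-\nicefrac{1}{2}}\|_{\frob}^{2}$ along the segment, the same pairing of the trace term with lower-trace self-concordance and of the Frobenius term with strong self-concordance, and the same Loewner bound from \cref{lem:dikin1-in-domain-self-concordant-loewner} to return to the base point; the only difference is that you use the integral form of Taylor's remainder where the paper uses the Lagrange form, which gives you the sharper factor $\int_{0}^{1}\frac{1-t}{(1-tr)^{2}}\,\rmd t=\frac{-\ln(1-r)-r}{r^{2}}$ in place of the paper's $\frac{1}{(1-r)^{2}}$, and your trace--Frobenius Cauchy--Schwarz argument for the first claim is exactly the intended one (the paper's proof does not spell it out). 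One caveat you share with the paper: neither $\frac{-\ln(1-r)-r}{r^{2}}$ nor $\frac{1}{(1-r)^{2}}$ is bounded by $\frac{17}{8}$ on all of $\calE_{x}^{\metric{}}(1)$ (both diverge as $r=\|y-x\|_{\metric{x}}\to 1$), so the stated constant only holds with $r$ bounded away from $1$ --- harmless here because the lemma is only ever invoked with $\|y-x\|_{\metric{x}}\le\frac{3}{10}$, where the paper's factor is $\le\frac{100}{49}\le\frac{17}{8}$ and your integral factor is even smaller.
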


\begin{lemma}
\label{lem:sq-dist-diff-ASC-SC}
Let the metric \(\metric{}\) be self-concordant, and \(x \in \interior{\primalspace}\).
Consider \(y, w \in \calE_{x}^{\metric{}}(1)\), and define \(\Delta(u; x) = \|u - x\|^{2}_{\metric{u}} - \|u - x\|^{2}_{\metric{x}}\).
There exists \(t^{\star} \in (0, 1)\) such that
\begin{equation*}
    \Delta(w; x) - \Delta(y;x) \leq 6 \cdot \frac{\|y + t^{\star}(w - y) - x\|_{\metric{x}}^{2} \cdot \|w - y\|_{\metric{x}}}{(1 - \|y + t^{\star}(w - y) - x\|_{\metric{x}})^{3}}~.
\end{equation*}
\end{lemma}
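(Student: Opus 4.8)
The plan is to reduce the claim to a one–dimensional mean value argument along the segment joining $y$ and $w$. Set $u_{t} = y + t(w-y)$ for $t \in [0,1]$ and $\psi(t) = \Delta(u_{t}; x)$, so that $\psi(0) = \Delta(y;x)$ and $\psi(1) = \Delta(w;x)$. Since the Dikin ellipsoid $\calE_{x}^{\metric{}}(1)$ is a ball in the $\metric{x}$-norm it is convex, so $u_{t} \in \calE_{x}^{\metric{}}(1) \subseteq \interior{\primalspace}$ for every $t$; write $r_{t} = \|u_{t} - x\|_{\metric{x}} < 1$. As $\metric{}$ is twice differentiable, $\psi$ is continuously differentiable on $[0,1]$, and the mean value theorem supplies $t^{\star} \in (0,1)$ with $\Delta(w;x) - \Delta(y;x) = \psi'(t^{\star})$. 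It then suffices to bound $\psi'(t^{\star})$ by the right-hand side of the lemma, which is exactly $6r^{2}(1-r)^{-3}\|w-y\|_{\metric{x}}$ with $r := r_{t^{\star}} = \|y + t^{\star}(w-y) - x\|_{\metric{x}}$, since $\|u_{t^{\star}} - x\|_{\metric{x}}^{2} = r^{2}$.

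Next I would differentiate. Writing $p_{t} = u_{t}-x$ and $v = w-y$, a direct computation using the chain rule $\frac{d}{dt}\metric{u_{t}} = \rmD\metric{u_{t}}[v]$ and the symmetry of $\metric{u_{t}}$ gives
\begin{equation*}
\psi'(t) = 2\langle v, (\metric{u_{t}} - \metric{x})p_{t}\rangle + \langle p_{t}, \rmD\metric{u_{t}}[v]\, p_{t}\rangle,
\end{equation*}
the term $2\langle v, \metric{u_{t}}p_{t}\rangle$ from differentiating $\|p_{t}\|_{\metric{u_{t}}}^{2}$ being partially cancelled by $2\langle v, \metric{x}p_{t}\rangle$ from $\|p_{t}\|_{\metric{x}}^{2}$. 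I would then bound the two summands at $t = t^{\star}$ separately.

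The self-concordance estimates of \cref{lem:dikin1-in-domain-self-concordant-loewner} do the rest. From the two-sided Loewner bound $(1-r)^{2}\metric{x} \preceq \metric{u_{t^{\star}}} \preceq (1-r)^{-2}\metric{x}$, the operator norm of $\metric{x}^{-1/2}(\metric{u_{t^{\star}}}-\metric{x})\metric{x}^{-1/2}$ is at most $(1-r)^{-2}-1 \le 2r(1-r)^{-2}$, so Cauchy--Schwarz in the $\metric{x}$-inner product bounds the first summand by $2\cdot 2r(1-r)^{-2}\cdot r\cdot\|v\|_{\metric{x}} = 4r^{2}(1-r)^{-2}\|v\|_{\metric{x}}$. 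For the second summand I would use the equivalent characterization of self-concordance, $\|\metric{z}^{-1/2}\rmD\metric{z}[a]\metric{z}^{-1/2}\|_{\op} \le 2\|a\|_{\metric{z}}$, which after conjugating by $\metric{z}^{1/2}$ gives $|\langle b, \rmD\metric{z}[a]\, b\rangle| \le 2\|a\|_{\metric{z}}\|b\|_{\metric{z}}^{2}$ for all $a,b$ (no symmetry of the third derivative is needed, as $b$ appears twice); applying this with $z = u_{t^{\star}}$, $a = v$, $b = p_{t^{\star}}$ and converting the $\metric{u_{t^{\star}}}$-norms to $\metric{x}$-norms through the same Loewner bounds (each $\|\cdot\|_{\metric{u_{t^{\star}}}}$ costing one factor $(1-r)^{-1}$) bounds it by $2\cdot(1-r)^{-1}\|v\|_{\metric{x}}\cdot r^{2}(1-r)^{-2} = 2r^{2}(1-r)^{-3}\|v\|_{\metric{x}}$. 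Bounding $(1-r)^{-2} \le (1-r)^{-3}$ in the first estimate (valid since $0 < 1-r < 1$) and summing yields $\psi'(t^{\star}) \le 6r^{2}(1-r)^{-3}\|v\|_{\metric{x}}$, which is exactly the claimed inequality.

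The computation is essentially routine; the only points needing care are (i) deriving the trilinear bound $|\langle b, \rmD\metric{z}[a]\, b\rangle| \le 2\|a\|_{\metric{z}}\|b\|_{\metric{z}}^{2}$ cleanly from the operator-norm form of self-concordance, and (ii) tracking the two-sided Loewner constants so that the two summands add up to precisely the stated factor $6$ rather than a weaker constant. I do not expect a genuine obstacle here; the structure mirrors standard self-concordance calculus.
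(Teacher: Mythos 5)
Your proof is correct and the constant $6$ comes out exactly as claimed. The overall skeleton matches the paper's: parametrise along the segment $t \mapsto y + t(w-y)$, apply the mean value theorem to get $\Delta(w;x)-\Delta(y;x) = \psi'(t^{\star})$, and observe that $\psi'$ splits into the cross term $2\langle v, (\metric{u_{t^{\star}}}-\metric{x})p_{t^{\star}}\rangle$ plus the derivative-of-metric term $\langle p_{t^{\star}}, \rmD\metric{u_{t^{\star}}}[v]p_{t^{\star}}\rangle$. Where you diverge is in the treatment of the cross term: the paper applies a \emph{second} mean value theorem along the segment from $x$ to $y(t^{\star})$, turning $2\langle p_{t^{\star}}, v\rangle_{\metric{y(t^{\star})}} - 2\langle p_{t^{\star}}, v\rangle_{\metric{x}}$ into $2\,\rmD\metric{u(\bar{s};t^{\star})}[p_{t^{\star}},p_{t^{\star}},v]$ at an intermediate point $u(\bar{s};t^{\star})$, and then invokes self-concordance a second time (giving the $4$ out of $4+2=6$). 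You instead bound this difference directly from the two-sided Loewner sandwich $(1-r)^{2}\metric{x} \preceq \metric{u_{t^{\star}}} \preceq (1-r)^{-2}\metric{x}$, via $\|\metric{x}^{-\nicefrac{1}{2}}(\metric{u_{t^{\star}}}-\metric{x})\metric{x}^{-\nicefrac{1}{2}}\|_{\op} \leq (1-r)^{-2}-1 \leq 2r(1-r)^{-2}$ and Cauchy--Schwarz, which yields $4r^{2}(1-r)^{-2}\|v\|_{\metric{x}}$ — in fact marginally sharper than the paper's $4r^{2}(1-r)^{-3}\|v\|_{\metric{x}}$ before you relax $(1-r)^{-2}\leq(1-r)^{-3}$ to match the stated form. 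Your route uses one fewer mean-value application and only the Loewner consequence of self-concordance for that term; the paper's route keeps everything expressed through third-order directional derivatives. Both are valid, and the second summand is handled identically (your trilinear bound $|\langle b,\rmD\metric{z}[a]b\rangle|\leq 2\|a\|_{\metric{z}}\|b\|_{\metric{z}}^{2}$ follows from the operator-norm characterisation exactly as you say, with no symmetrisation issue since $b$ appears twice).
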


The proofs of \Cref{lem:tx-px-small-SC,lem:tx-px-small-beyond-SC} follow a similar structure, and we will state them as one to avoid redundancy, and highlight the point at which they branch out.

\begin{proof}
We adopt the shorthand notation \(R_{x \to z}\) for the ratio \(\frac{\targetdens(z)p_{z}(x)}{\targetdens(x)p_{x}(z)}\) for the convenience.
The transition distribution \(\calT_{x}\) has an atom at \(x\) and satisfies
\begin{equation*}
    \calT_{x}(\{x\}) = 1 - \bbE_{z \sim \calP_{x}}\left[\min\left\{1, R_{x \to z} \cdot \bm{1}\{z \in \primalspace\}\right\}\right]~.
\end{equation*}
Consequently, the TV distance \(\TVdist(\calT_{x}, \calP_{x})\) for any \(x \in \interior{\primalspace}\) can be given as
\begin{align*}
    \TVdist(\calT_{x}, \calP_{x}) &= \frac{1}{2}\left(\calT_{x}(\{x\}) + \int_{\bbR^{d} \setminus \{x\}} \left(1 - \min\left\{1, R_{x \to z} \cdot \bm{1}\{z \in \primalspace\}\right\}\right) p_{x}(z)~\rmd z \right) \\
    &= 1 - \underbrace{\bbE_{z \sim \calP_{x}}\left[\min\left\{1, R_{x \to z} \cdot \bm{1}\{z \in \primalspace\}\right\}\right]}_{\calQ}~.\numberthis\label{eq:prelim-ub-TV-px-tx}
\end{align*}
Thus, a lower bound for \(\calQ\) implies an upper bound for \(\TVdist(\calT_{x}, \calP_{x})\).
One strategy to obtain a lower bound is using Markov inequality; for any \(\tau \in (0, 1)\),
\begin{align*}
    \bbE_{z \sim \calP_{x}}\left[\min\left\{1, R_{x \to z} \cdot \bm{1}\{z \in \primalspace\}\right\}\right] &\geq \tau \cdot \bbP_{z \sim \calP_{x}}\left(R_{x \to z} \cdot \bm{1}\{z \in \primalspace\} \geq \tau\right) \\
    &\geq \tau \cdot \bbP_{z \sim \calP_{x}}\left(\left.R_{x \to z} \cdot \bm{1}\{z \in \primalspace\} \geq \tau ~\right|~ \frakE\right) \cdot \bbP_{z \sim \calP_{x}}(\frakE)~.\numberthis\label{eq:markov-q2}
\end{align*}
Therefore, it suffices to identify an event \(\frakE\) that satisfies three key desiderata: (1) implies that \(z \in \primalspace\), (2) yields a lower bound \(\tau\) that is bounded away from \(0\), and (3) occurs with high probability.
From the definition of \(\calP_{x}\), any \(z \sim \calP_{x}\) is distributionally equivalent to the random vector
\begin{equation*}
    x - h \cdot \metric{x}^{-1}\gradpotential{x} + \sqrt{2h} \cdot \metric{x}^{-\nicefrac{1}{2}}~\gamma, \quad \gamma \sim \calN(0, \rmI_{d \times d})~.
\end{equation*}
Due to this, we can view \(z\) as a function of \(\gamma\), and henceforth we consider the underlying random variable to be \(\gamma\), and the event \(\frakE\) is defined in terms of this \(\gamma\).
Before defining \(\frakE\), we first provide a simplified lower bound for \(R_{x \to z}\) using certain properties of the potential \(\potential{}\) assumed in the lemmas.
\begin{align*}
    \log R_{x \to z} &= \overbrace{f(x) - f(z)}^{T_{F}} + \frac{1}{2}\log\det \metric{z}\metric{x}^{-1} \\
    & \qquad +~\underbrace{\frac{\|z - x + h \cdot \metric{x}^{-1}\gradpotential{x}\|_{\metric{x}}^{2} - \|x - z + h \cdot \metric{z}^{-1}\gradpotential{z}\|_{\metric{z}}^{2}}{4h}}_{T_{D}}~.
\end{align*}
First, we work with \(T_{F}\). We have \(z \in \primalspace\), otherwise these calculations would be irrelevant.
Let \(x_{t} = x + t \cdot (z - x)\) for \(t \in [0, 1]\).
\begin{align*}
    T_{F} :&= f(x) - f(z) \\
    &= \frac{1}{2}(f(x) - f(z)) + \frac{1}{2}(f(x) - f(z)) \\
    &\overset{(a)}\geq \frac{1}{2} \cdot \langle \gradpotential{z}, x- z\rangle - \frac{1}{2} \cdot \langle \gradpotential{x}, z - x\rangle - \frac{1}{4} \cdot \langle z - x, \nabla^{2}\potential{x_{t^{\star}}}(z - x)\rangle \\
    &\overset{(b)}\geq \frac{1}{2} \cdot \langle \gradpotential{z}, x- z\rangle - \frac{1}{2} \cdot \langle \gradpotential{x}, z - x\rangle - \frac{\lambda}{4} \cdot \langle z - x, \metric{x_{t^{\star}}}(z - x)\rangle~.
\end{align*}
Inequality \((a)\) uses the fact that \(\potential{}\) is convex, and consider a second-order Taylor expansion of \(\potential{}\) around \(x\).
Inequality \((b)\) uses the fact that \(\potential{}\) satisfies \((\lambda, \metric{})\)-curvature upper bound.
Next for \(T_{D}\),
\begin{align*}
    T_{D} :&= \frac{1}{4h} \cdot \left(\|z - x + h \cdot \metric{x}^{-1}\gradpotential{x}\|_{\metric{x}}^{2} - \|x - z + h \cdot \metric{z}^{-1}\gradpotential{z}\|_{\metric{z}}^{2}\right) \\
    &= \frac{1}{4h} \cdot \left(\|z - x\|_{\metric{x}}^{2} - \|z - x\|_{\metric{z}}^{2} + 2h \cdot \langle z - x, \gradpotential{x}\rangle + 2h \cdot \langle z - x, \gradpotential{z}\rangle\right. \\
    &\qquad \qquad \left.+~h^{2} \cdot \|\gradpotential{x}\|_{\metric{x}^{-1}}^{2} - h^{2} \cdot \|\gradpotential{z}\|_{\metric{z}^{-1}}^{2} \right) \\
    &\overset{(a)}\geq \frac{\|z - x\|_{\metric{x}}^{2} - \|z - x\|_{\metric{z}}^{2}}{4h} - \frac{h}{4} \cdot \|\gradpotential{z}\|_{\metric{z}^{-1}}^{2} \\
    &\qquad \qquad + \frac{1}{2} \cdot \langle z - x, \gradpotential{x}\rangle + \frac{1}{2} \cdot \langle z - x, \gradpotential{z}\rangle~.
\end{align*}
Inequality \((a)\) simply uses the fact that \(\|\gradpotential{x}\|_{\metric{x}^{-1}}^{2} \geq 0\)~.
This results in the lower bound
\begin{multline}
\label{eq:log-accept-ratio-initial-lower}
\log R_{x \to z} \geq
- \frac{h}{4} \cdot \underbrace{\|\gradpotential{z}\|_{\metric{z}^{-1}}^{2}}_{T_{0}^{A}} - \frac{\lambda}{4} \cdot \underbrace{\|z - x\|_{\metric{x_{t^{\star}}}}^{2}}_{T_{1}^{A}} \\+ \frac{1}{2} \cdot \underbrace{\log \det \metric{z}\metric{x}^{-1}}_{T_{2}^{A}} + \frac{1}{4h} \cdot \underbrace{\left(\|z - x\|_{\metric{x}}^{2} - \|z - x\|_{\metric{z}}^{2}\right)}_{T_{3}^{A}}~.
\end{multline}

For \(\gamma \sim \gaussian{\bm{0}}{\rmI_{d \times d}}\), define \(\xi  = x + \sqrt{2h} \cdot \metric{x}^{-\nicefrac{1}{2}}\gamma\).
The proposal \(z \sim \calP_{x}\) is distributionally equivalent to \(\xi - h \cdot \metric{x}^{-1}\gradpotential{x}\).
For \(\varepsilon > 0\), define
\begin{equation}
\label{eq:Neps-Ieps-def}
    \sfN_{\varepsilon} = 1 + 2\log \frac{1}{\varepsilon} + 2\sqrt{\log \frac{1}{\varepsilon}} \qquad \sfI_{\varepsilon} = \sqrt{2 \log \frac{1}{\varepsilon}}~
\end{equation}
and the events
\begin{center}
\begin{tabular}{c@{\hskip 1.2cm}c}
\(\frakE_{1} := \|\gamma\|^{2} \leq d \cdot \sfN_{\varepsilon}\) & \(\frakE_{2} := -\langle \gamma, \metric{x}^{-\nicefrac{1}{2}}\gradpotential{x}\rangle \leq \sfU_{\potential{}, \calS} \cdot \sfI_{\varepsilon}\) \\
[1mm] \\
\(\frakE_{3} := -\langle \gamma, \metric{x}^{-\nicefrac{1}{2}}\nabla \log \det \metric{x}\rangle \leq 2\sqrt{d} \cdot \sfI_{\varepsilon}\) & \(\frakE_{4} := \|\xi - x\|_{\metric{\xi}}^{2} - \|\xi - x\|_{\metric{x}}^{2} \leq 4h \cdot \varepsilon\)
\end{tabular}
\end{center}
Let \(\gamma \sim \gaussian{\bm{0}}{\rmI_{d \times d}}\).
Then, we have the following facts about \(\frakE_{i}\) for \(i \in [4]\).
\begin{itemize}[itemsep=0pt,leftmargin=*]
\item By \cref{auxlem:chisquared}, \(\bbP(\frakE_{1}) \geq 1 - \varepsilon\).
\item Since \(x \in \calS\) according to the statements of the lemmas,
\begin{equation*}
\|\metric{x}^{-\nicefrac{1}{2}}\gradpotential{x}\| = \|\gradpotential{x}\|_{\metric{x}^{-1}} \leq \sfU_{\potential{}, \calS} \Rightarrow \bbP(\frakE_{2}) \geq 1- \varepsilon
\end{equation*}
and the implication is due to \cref{auxlem:gaussian-inner}.
\item If \(\metric{}\) is strongly self-concordant (as assumed in \cref{lem:tx-px-small-beyond-SC}), then from \cref{lem:SSC-LTSC-lemma},
\begin{equation*}
    \|\metric{x}^{-\nicefrac{1}{2}}\nabla \log \det \metric{x}\| \leq 2\sqrt{d} \Rightarrow \bbP(\frakE_{3}) \geq 1 - \varepsilon
\end{equation*}
where the implication again follows from \cref{auxlem:gaussian-inner}.
\item When \(\metric{}\) is average self-concordant and \(h \leq \frac{r_{\varepsilon}^{2}}{2d}\), then by definition \(\bbP(\frakE_{4}) \geq 1 - \varepsilon\).
\end{itemize}

Conditioning on \(\frakE_{1}\) and \(\frakE_{2}\), we get
\begin{align*}
    \|z - x\|_{\metric{x}}^{2} &= \|x -h \cdot \metric{x}^{-1}\gradpotential{x} + \sqrt{2h} \cdot \metric{x}^{-\nicefrac{1}{2}}\gamma - x\|_{\metric{x}}^{2} \\
    &= h^{2} \cdot \|\gradpotential{x}\|_{\metric{x}^{-1}}^{2} - (2h)^{\nicefrac{3}{2}} \cdot \langle \metric{x}^{-\nicefrac{1}{2}}\gradpotential{x}, \gamma\rangle + 2h \cdot \|\gamma\|^{2} \\
    &\leq h^{2} \cdot \sfU_{\potential{}, \calS}^{2} + (2h)^{\nicefrac{3}{2}} \cdot \sfU_{\potential{}, \calS} \cdot \sfI_{\varepsilon} + 2h \cdot d \cdot \sfN_{\varepsilon}~\numberthis\label{eq:z-minus-x-local-sq}.
\end{align*}
When the step size \(h\) satisfies the bound
\begin{equation*}
h \leq \min\left\{\frac{1}{6\sfU_{\potential{}, \calS}},~\frac{3}{200\sfN_{\varepsilon} \cdot d},~\frac{1}{25\sfI_{\varepsilon}^{\nicefrac{2}{3}}\sfU_{\potential{}, \calS}^{\nicefrac{2}{3}}}\right\}\tag{\textsf{B\textsubscript{0}}}\label{eq:B0-bound}~,
\end{equation*}
we have \(\|z - x\|_{\metric{x}} \leq \frac{3}{10}\) which implies \(z \in \calE_{x}^{\metric{}}(1)\).
In the setting of both \cref{lem:tx-px-small-SC,lem:tx-px-small-beyond-SC}, the metric \(\metric{}\) is self-concordant, and by \Cref{lem:dikin1-in-domain-self-concordant-loewner}(1).
Moreover, this also shows that \(x_{t}\) also satisfies \(\|x_{t} - x\|_{\metric{x}} = t \cdot \|z - x\|_{\metric{x}} \leq \frac{3}{10}\) for \(t \in [0, 1]\).

All of the calculations henceforth in this proof are performed when conditioning on \(\frakE_{1}\) and \(\frakE_{2}\), and that the step size satisfies the bound in \ref{eq:B0-bound}, as this ensures that \(\|z - x\|_{\metric{x}} \leq \frac{3}{10}\).
Now, we work from \Cref{eq:log-accept-ratio-initial-lower} to find a suitable \(\tau\) in \Cref{eq:markov-q2}.

For \(T_{0}^{A}\), we directly have from the definition of \(\calS\) that
\begin{equation*}
    T_{0}^{A} = \|\gradpotential{z}\|_{\metric{z}^{-1}}^{2} \leq \sup_{z:~\|z - x\|_{\metric{x}} \leq \frac{1}{2}} \|\gradpotential{z}\|^{2}_{\metric{z}^{-1}} = \sfU_{\potential{}, \calS}^{2}~.
\end{equation*}
For \(T_{1}^{A}\), we use the self-concordance of the metric \(\metric{}\) (\cref{lem:dikin1-in-domain-self-concordant-loewner}(2)) to obtain
\begin{align*}
    T_{1}^{A} &= \|z - x\|_{\metric{x_{t^{\star}}}}^{2} \\
    &\leq \frac{1}{(1 - \|z - x\|_{\metric{x}})^{2}} \cdot \|z - x\|_{\metric{x}}^{2} \\
    &\leq \frac{17}{8} \cdot (h^{2} \cdot \sfU_{\potential{}, \calS}^{2} + (2h)^{\nicefrac{3}{2}} \cdot \sfU_{\potential{}, \calS} \cdot \sfI_{\varepsilon} + 2h \cdot d \cdot \sfN_{\varepsilon})~.
\end{align*}
The last inequality uses \cref{eq:z-minus-x-local-sq} and that \(\|z - x\|_{\metric{x}} \leq \frac{3}{10}\).
For \(T_{2}^{A}\) and \(T_{3}^{A}\) in \cref{eq:log-accept-ratio-initial-lower}, we use different properties of metric corresponding to the settings of \cref{lem:tx-px-small-SC,lem:tx-px-small-beyond-SC}.

\paragraph{When \(\metric{}\) is self-concordant (\Cref{lem:tx-px-small-SC})}

Since \(\|z - x\|_{\metric{x}} \leq \frac{3}{10}\), we use \cref{lem:SC-lemma} to get
\begin{align*}
    T_{2}^{A} &= \log \det \metric{z} - \log \det \metric{x} \\
    &\geq -3d \cdot \|x - z\|_{\metric{x}} \\
    &= -3d \cdot \sqrt{h^{2} \cdot \sfU_{\potential{}, \calS}^{2} + 2h \cdot d \cdot \sfN_{\varepsilon} + (2h)^{\nicefrac{3}{2}} \cdot \sfU_{\potential{}, \calS} \cdot \sfI_{\varepsilon}} \\
    &\overset{(a)}\geq -3 \cdot \left(h \cdot d \cdot \sfU_{\potential{}, \calS} + \sqrt{2h \cdot d^{3} \cdot \sfN_{\varepsilon}} + \sqrt{(2h)^{\nicefrac{3}{2}} \cdot d^{2} \cdot \sfU_{\potential{}, \calS} \cdot \sfI_{\varepsilon}}\right) \\
    \\
    T_{3}^{A} &= \|z - x\|_{\metric{x}}^{2} - \|z - x\|_{\metric{z}}^{2} \\
    &\geq -6 \cdot \|x - z\|_{\metric{x}}^{3} \\
    &= -6 \cdot \left(h^{2} \cdot \sfU_{\potential{}, \calS}^{2} + 2h \cdot d \cdot \sfN_{\varepsilon} + (2h)^{\nicefrac{3}{2}} \cdot \sfU_{\potential{}, \calS} \cdot \sfI_{\varepsilon}\right)^{\nicefrac{3}{2}} \\
    &\overset{(b)}\geq -6\sqrt{3} \cdot \left(h^{3} \cdot \sfU_{\potential{}, \calS}^{3} + (2h \cdot d \cdot \sfN_{\varepsilon})^{\nicefrac{3}{2}} + ((2h)^{\nicefrac{3}{2}} \cdot \sfU_{\potential{}, \calS} \cdot \sfI_{\varepsilon})^{\nicefrac{3}{2}} \right)~.
\end{align*}
Inequality \((a)\) uses the fact that \(\sqrt{a + b + c} \leq \sqrt{a} + \sqrt{b} + \sqrt{c}\) for \(a, b, c \geq 0\), and inequality \((b)\) uses the convexity of \(t \mapsto t^{\nicefrac{3}{2}}\).
Collecting the bounds for each \(T_{i}^{A}\), \(i \in \{0, 1, 2, 3\}\), we can give a lower bound for \(\log R_{x \to z}\).
Recall that this holds when conditioned on \(\frakE_{1}\) and \(\frakE_{2}\), and when the step size \(h\) satisfies the bound \ref{eq:B0-bound}.
\begin{align*}
    \log R_{x \to z} &= -\frac{h}{4} \cdot T_{0}^{A} - \frac{17\lambda}{32} \cdot T_{1}^{A} + \frac{1}{2} \cdot T_{2}^{A} + \frac{1}{4h} \cdot T_{3}^{A} \\
    &\geq - \frac{h \cdot \sfU_{\potential{}, \calS}^{2}}{4} - \frac{\lambda}{4} \cdot (h^{2} \cdot \sfU_{\potential{}, \calS}^{2} + 2h \cdot d \cdot \sfN_{\varepsilon} + (2h)^{\nicefrac{3}{2}} \cdot \sfU_{\potential{}, \calS} \cdot \sfI_{\varepsilon}) \\
    &\quad -\frac{3}{2} \cdot \left(h \cdot d \cdot \sfU_{\potential{}, \calS} + \sqrt{2h \cdot d^{3} \cdot \sfN_{\varepsilon}} + \sqrt{(2h)^{\nicefrac{3}{2}} \cdot d^{2} \cdot \sfU_{\potential{}, \calS} \cdot \sfI_{\varepsilon}}\right) \\
    &\qquad -\frac{3\sqrt{3}}{2} \cdot \left(h^{2} \cdot \sfU_{\potential{}, \calS}^{3} + (2\sfN_{\varepsilon})^{\nicefrac{3}{2}} \cdot \sqrt{h \cdot d^{3}} + (2^{\nicefrac{3}{2}}\sfI_{\varepsilon})^{\nicefrac{3}{2}} \cdot (h^{5} \cdot \sfU_{\potential{}, \calS}^{6})^{\nicefrac{1}{4}}\right)~.\numberthis\label{eq:SC-primitive-lower}
\end{align*}

Let \(C_{1}(\varepsilon) = \frac{\varepsilon^{2}}{486 \cdot \sfN_{\varepsilon}^{3}}\)~.
Define the function \(b_{1}(d, \sfU_{\potential{}, \calS}, \lambda)\) as
\begin{equation*}
b_{1}(d, \sfU_{\potential{}, \calS}, \lambda) \defeq C_{1}(\varepsilon) \cdot \min\left\{\frac{1}{\sfU_{\potential{}, \calS}^{2}}, ~\frac{1}{\sfU_{\potential{}, \calS}^{\nicefrac{2}{3}}}, ~\frac{1}{(\sfU_{\potential{}, \calS} \cdot \lambda)^{\nicefrac{2}{3}}}, ~\frac{1}{d^{3}}, ~\frac{1}{d \cdot \lambda}\right\}~.
\end{equation*}
If the step size satisfies \(h \leq b_{1}(d, \sfU_{\potential{}, \calS}, \lambda)\), then \ref{eq:B0-bound} holds since
\begin{equation*}
    C_{1}(\varepsilon) \leq \min\left\{\frac{1}{6},~\frac{3}{200\sfN_{\varepsilon}},~\frac{1}{25\sfI_{\varepsilon}^{\nicefrac{2}{3}}}\right\}~.
\end{equation*}
Moreover, when \(h \leq b_{1}(d, \sfU_{\potential{}, \calS}, \lambda)\), we have a lower bound for \(\log R_{x \to z}\) in terms of \(\varepsilon\) alone.
From \cref{auxlem:b1-implication} with \(x \leftarrow h, V \leftarrow C_{1}(\varepsilon), a \leftarrow \sfU_{\potential{}, \calS}, b \leftarrow \lambda\), we have 
\begin{equation*}
    \log R_{x \to z} \geq -4\varepsilon~. 
\end{equation*}

Therefore, choosing \(\tau = e^{-4\varepsilon}\) and \(\frakE = \frakE_{1} \cap \frakE_{2}\), we have for \(x \in \calS\) using \cref{eq:markov-q2} that
\begin{align*}
   \bbE_{z \sim \calP_{x}}[\min\{1, R_{x \to z} \cdot \bm{1}\{z \in \primalspace\}\}] &\geq e^{-4\varepsilon} \cdot (1 - 2\varepsilon) \\
    &\geq (1 - 4\varepsilon) \cdot (1 - 2\varepsilon) \geq 1 - 6\varepsilon~.
\end{align*}
From \cref{eq:prelim-ub-TV-px-tx}, we get \(\TVdist(\calT_{x}, \calP_{x}) \leq 1 - (1 - 6\varepsilon) = 6\varepsilon\).
We set \(\varepsilon = \frac{1}{96}\), and note that \(C_{1}(\nicefrac{1}{96}) \leq \frac{1}{20}\) to complete the proof of \cref{lem:tx-px-small-SC}.

\paragraph{When \(\metric{}\) is self-concordant\textsubscript{++} (\Cref{lem:tx-px-small-beyond-SC})}

Recall that self-concordant\textsubscript{++} is equivalent to strongly, \(\alpha\)-lower trace, and average self-concordant.
Since \(\|z - x\|_{\metric{x}} \leq \frac{3}{10}\), we have a lower bound for \(T_{2}^{A}\) through \cref{lem:SSC-LTSC-lemma}, and the definition of \(z\) in terms of \(\gamma \sim \gaussian{\bm{0}}{\rmI_{d \times d}}\).
\begin{align*}
    T_{2}^{A} &= \log \det \metric{z} - \log \det \metric{x} \\
    &\geq \langle \nabla \log \det \metric{x}, z - x\rangle - (\alpha + 4) \cdot \frac{\|z - x\|^{2}_{\metric{x}}}{(1 - \|z - x\|_{\metric{x}})^{2}} \\
    &= -h \cdot \langle \nabla \log \det \metric{x}, \metric{x}^{-1}\gradpotential{x}\rangle + \sqrt{2h} \cdot \langle \nabla \log \det \metric{x}, \metric{x}^{-\nicefrac{1}{2}}\gamma\rangle \\
    &\qquad - (\alpha + 4) \cdot \frac{\|z - x\|^{2}_{\metric{x}}}{(1 - \|z - x\|_{\metric{x}})^{2}} \\
    &\geq -h \cdot \|\metric{x}^{-\nicefrac{1}{2}}\nabla \log \det \metric{x}\| \cdot \|\metric{x}^{-\nicefrac{1}{2}}\gradpotential{x}\| + \sqrt{2h} \cdot \langle \nabla \log \det \metric{x}, \metric{x}^{-\nicefrac{1}{2}}\gamma\rangle \\
    &\qquad -\frac{17}{8} \cdot (\alpha + 4) \cdot \|z - x\|_{\metric{x}}^{2}~.
\end{align*}
The final inequality uses the Cauchy-Schwarz inequality.
Additionally, by conditioning on \(\frakE_{3}\) and using the fact that \(x \in \calS\),
\begin{equation*}
    T_{2}^{A} \geq -2h \cdot \sqrt{d} \cdot \sfU_{\potential{}, \calS} - 2\sqrt{2h} \cdot \sqrt{d} \cdot \sfI_{\varepsilon} - \frac{17}{8} \cdot (\alpha + 4) \cdot (h^{2} \cdot \sfU_{\potential{},\calS}^{2} + 2h \cdot d \cdot \sfN_{\varepsilon} + (2h)^{\nicefrac{3}{2}} \cdot \sfU_{\potential{}, \calS} \cdot \sfI_{\varepsilon})~.
\end{equation*}

For \(T_{3}^{A}\), we first note that a Dikin proposal \(\xi = x + \sqrt{2h} \cdot \metric{x}^{-\nicefrac{1}{2}}\gamma\) satisfies
\begin{equation*}
    \|\xi - x\|_{\metric{x}}^{2} = \|\sqrt{2h} \cdot \metric{x}^{-\nicefrac{1}{2}}\gamma\|^{2}_{\metric{x}} = 2h \cdot \|\gamma\|^{2}~.
\end{equation*}
Since the calculations here are considered when conditioning on \(\frakE_{1}\) and \(\frakE_{2}\), and when the step size satisfies the bound \ref{eq:B0-bound},
\begin{equation*}
    \|\xi - x\|_{\metric{x}}^{2} \leq 2h \cdot d \cdot \sfN_{\varepsilon} \leq \frac{6}{200}~,
\end{equation*}
which implies \(\|\xi - x\|_{\metric{x}} \leq \frac{3}{10}\).
\cref{lem:sq-dist-diff-ASC-SC} with \(w \leftarrow z\) and \(y \leftarrow \xi\) states that there exists \(\bar{t} \in (0, 1)\) such that
\begin{align*}
    T_{3}^{A} &= \|z - x\|_{\metric{x}}^{2} - \|z - x\|_{\metric{z}}^{2} \\
    &\geq \|\xi - x\|_{\metric{x}}^{2} - \|\xi - x\|_{\metric{\xi}}^{2} - 6 \cdot \frac{\|\xi + \bar{t}(z - \xi) - x\|_{\metric{x}}^{2} \cdot \|z - \xi\|_{\metric{x}}}{(1 - \|\xi + \bar{t}(z - \xi) - x\|_{\metric{x}})^{3}}~.
\end{align*}
Note that \(z - \xi = -h \cdot \metric{x}^{-1}\gradpotential{x}\), and therefore
\begin{align*}
    \|\xi + \bar{t}(z - \xi) - x\|_{\metric{x}}^{2} &\leq \max\{\|z - x\|_{\metric{x}}^{2}, \|\xi - x\|_{\metric{x}}^{2}\} \\
    &\leq h^{2} \cdot \sfU_{\potential{}, \calS}^{2} + 2h \cdot d \cdot \sfN_{\varepsilon} + (2h)^{\nicefrac{3}{2}} \cdot \sfU_{\potential{}, \calS} \cdot \sfI_{\varepsilon} \\
    \|z - \xi\|_{\metric{x}} &\leq h \cdot \sfU_{\potential{}, \calS}~.
\end{align*}
When conditioning on the event \(\frakE_{4}\) and assuming that the step size \(h\) additionally satisfies \(h \leq \frac{r_{\varepsilon}^{2}}{2d}\) for \(r_{\varepsilon}\) in the definition of average self-concordance
\begin{equation*}
    T_{3}^{A} \geq -18h \cdot \sfU_{\potential{}, \calS} \cdot (h^{2} \cdot \sfU_{\potential{}, \calS}^{2} + 2h \cdot d \cdot \sfN_{\varepsilon} + (2h)^{\nicefrac{3}{2}} \cdot \sfU_{\potential{}, \calS} \cdot \sfI_{\varepsilon}) -4h \cdot \varepsilon~.
\end{equation*}

In summary, let the step size \(h\) satisfy the bound \ref{eq:B0-bound} and \(h \leq \frac{r_{\varepsilon}^{2}}{2d}\).
When conditioned on the events \(\frakE_{i}\) for \(i \in [4]\), then continuing from \cref{eq:log-accept-ratio-initial-lower}
\begin{align*}
    \log R_{x \to z} &\geq -\frac{h}{4} \cdot T_{0}^{A} - \frac{\lambda}{4} \cdot T_{1}^{A} + \frac{1}{2} \cdot T_{2}^{A} + \frac{1}{4h} \cdot T_{3}^{A} \\
    &\geq -\frac{h \cdot \sfU_{\potential{}, \calS}^{2}}{4} - \frac{17\lambda}{32} \cdot (h^{2} \cdot \sfU_{\potential{}, \calS}^{2} + 2h \cdot d \cdot \sfN_{\varepsilon} + (2h)^{\nicefrac{3}{2}} \cdot \sfU_{\potential{}, \calS} \cdot \sfI_{\varepsilon}) \\
    &\quad -h \cdot \sqrt{d} \cdot \sfU_{\potential{}, \calS} - \sqrt{2h \cdot d \cdot \sfI_{\varepsilon}^{2}} - \frac{17}{16} \cdot (\alpha + 4) \cdot (h^{2} \cdot \sfU_{\potential{}, \calS}^{2} + 2h \cdot d \cdot \sfN_{\varepsilon} + (2h)^{\nicefrac{3}{2}} \cdot \sfU_{\potential{}, \calS} \cdot \sfI_{\varepsilon}) \\
    &\qquad -\varepsilon - \frac{9}{2} \cdot (h^{2} \cdot \sfU_{\potential{}, \calS}^{3} + 2h \cdot d \cdot \sfU_{\potential{}, \calS} \cdot \sfN_{\varepsilon} + (2h)^{\nicefrac{3}{2}} \cdot \sfU_{\potential{}, \calS}^{2} \cdot \sfI_{\varepsilon})~.\numberthis\label{eq:SC++-primitive-lower}
\end{align*}

Let \(C_{2}(\varepsilon) = \min\left\{\frac{\varepsilon^{2}}{100\cdot \sfI_{\varepsilon}^{2}}, r_{\varepsilon}^{2}\right\}\).
Define the function \(b_{2}(d, \sfU_{\potential{}, \calS}, \lambda, \alpha)\) as 
\begin{equation*}
b_{2}(d, \sfU_{\potential{}, \calS}, \lambda, \alpha) \defeq C_{2}(\varepsilon) \cdot \min\left\{\frac{1}{d \cdot (\alpha + 4)},\frac{1}{d \cdot \sfU_{\potential{}, \calS}},\frac{1}{d \cdot \lambda},\frac{1}{(\lambda \cdot \sfU_{\potential{}, \calS})^{\nicefrac{2}{3}}},\frac{1}{(\sfU_{\potential{}, \calS} \cdot (\alpha + 4))^{\nicefrac{2}{3}}},\frac{1}{\sfU_{\potential{}, \calS}^{2}}\right\}~.
\end{equation*}
If \(h \leq b_{2}(d, \sfU_{\potential{}, \calS}, \lambda, \alpha)\), then \ref{eq:B0-bound} holds since \(\alpha \geq 0\) and
\begin{equation*}
    C_{2}(\varepsilon) \leq \min\left\{\frac{1}{6}, \frac{3}{200\sfN_{\varepsilon}}, \frac{1}{25\sfI_{\varepsilon}^{\nicefrac{2}{3}}}\right\}~.
\end{equation*}
Also, when \(h \leq b_{2}(d, \sfU_{\potential{}, \calS}, \lambda, \alpha)\), we have a lower bound for \(\log R_{x \to z}\) solely in terms of \(\varepsilon\).
From \cref{auxlem:b2-implication} with \(x \leftarrow h, V \leftarrow C_{1}(\varepsilon), a \leftarrow \sfU_{\potential{}, \calS}, b \leftarrow \lambda, c \leftarrow (\alpha + 4)\), we have 
\begin{equation*}
    \log R_{x \to z} \geq -2\varepsilon~. 
\end{equation*}
Therefore, choosing \(\tau = e^{-2\varepsilon}\) and \(\frakE = \bigcap_{i=1}^{4}\frakE_{i}\), we have for \(x \in \calS\) using \cref{eq:markov-q2} that
\begin{align*}
    \bbE_{z \sim \calP_{x}}[\min\{1, R_{x \to z} \cdot \bm{1}\{z \in \primalspace\}\}] &\geq e^{-2\varepsilon} \cdot (1 - 4\varepsilon) \\
    &\geq (1 - 2\varepsilon) \cdot (1 - 4\varepsilon) \geq 1 - 6\varepsilon~.
\end{align*}
From \cref{eq:prelim-ub-TV-px-tx}, we get \(\TVdist(\calT_{x}, \calP_{x}) \leq 1 - (1 - 6\varepsilon) = 6\varepsilon\).
We set \(\varepsilon = \frac{1}{96}\), and note that \(C_{2}(\nicefrac{1}{96}) \leq \frac{1}{20}\) to complete the proof of \cref{lem:tx-px-small-beyond-SC}.
\end{proof}

\subsection{Proofs of isoperimetry lemmas in \Cref{sec:proofs:proofs-of-key-lemmas}}
\label{sec:proofs:proofs-isoperimetric-lemmas}

Here, we give the proofs of \cref{lem:log-concave-isoperimetry,lem:curv-lower-bdd-isoperimetry} which were used to give lower bounds on the conductance of the reversible Markov chain \(\bfT\) formed by an iteration of \nameref{alg:mapla}.

\subsubsection{Proof of \Cref{lem:log-concave-isoperimetry}}

The proof of this lemma follows the proof of \citet[Lem. B.6]{kook2024gaussian}, with care given to restrictions of the distribution to subsets of the support.

\begin{proof}
Let \(\bbB(0, r)\) be a ball of radius \(r > 0\) centered at the the origin, and define the set \(\calD_{|r} \defeq \calD \cap \bbB(0, r)\) for a set \(\calD \subseteq \primalspace\).
Consider the restriction \(\targetdist_{\calS_{|r}}\) of \(\targetdist\) over \(\calS_{|r}\).
Since \(\calS\) is convex, \(\targetdist_{\calS_{|r}}\) is log-concave.
For any partition \(\{A_{1}, A_{2}, A_{3}\}\) of \(\calS\), \(\{{A_{1}}_{|r}, {A_{2}}_{|r}, {A_{3}}_{|r}\}\) forms a partition of \(\calS_{|r}\), and by \citet[Thm. 2.2]{lovasz2003hit}, we have
\begin{equation*}
    \targetdist_{\calS_{|r}}({A_{3}}_{|r}) \geq \crossratio{{A_{2}}_{|r}}{{A_{1}}_{|r}}{\calS_{|r}} \cdot \targetdist_{\calS_{|r}}({A_{2}}_{|r}) \cdot \targetdist_{\calS_{|r}}({A_{1}}_{|r})~.
\end{equation*}
Above, \(\crossratio{y}{x}{\calD}\) is the cross-ratio between \(x\) and \(y\) in \(\calD\), and is defined as
\begin{equation*}
    \crossratio{y}{x}{\calD} = \frac{\|y - x\| \cdot \|p - q\|}{\|y - p\| \cdot \|x - q\|}.
\end{equation*}
where \(p\) and \(q\) are the end points of the extensions of the line segment between \(y\) and \(x\) to the boundary of \(\calD\) respectively.
The cross ratio between sets is defined as
\[\crossratio{\calD_{1}}{\calD_{2}}{\calD} = \inf_{y \in \calD_{1}, x \in \calD_{2}} \crossratio{y}{x}{\calD}~.\]
Since \(\metric{}\) is \(\nu\)-symmetric, from \citet[Lem. 2.3]{laddha2020strong}, we have for any \(x \in {A_{1}}_{|r}, y \in {A_{2}}_{|r}\) that
\begin{equation*}
    \crossratio{y}{x}{\calS_{|r}} \geq \frac{\|y - x\|_{\metric{y}}}{\sqrt{\nu}} \Rightarrow \crossratio{{A_{2}}_{|r}}{{A_{1}}_{|r}}{\calS_{|r}} \geq \inf_{y \in {A_{2}}_{|r}, x \in {A_{1}}_{|r}} \frac{\|y - x\|_{\metric{y}}}{\sqrt{\nu}}~.
\end{equation*}
Since \({A_{i}}_{|r} \subseteq A_{i}\), the lower bound above is at least \(\inf_{y \in A_{2}, x \in A_{1}} \frac{\|x - y\|_{\metric{y}}}{\sqrt{\nu}}\).
By setting \(r \to \infty\), the dominated convergence theorem implies that
\begin{equation*}
    \targetdist_{\calS}(A_{3}) \geq \inf_{y \in A_{2}, x \in A_{1}} \frac{\|y - x\|_{\metric{y}}}{\sqrt{\nu}} \cdot \targetdist_{\calS}(A_{2}) \cdot \targetdist_{\calS}(A_{1})
\end{equation*}
which is the statement of the lemma.
\end{proof}

\subsubsection{Proof of \Cref{lem:curv-lower-bdd-isoperimetry}}

We first state a one-dimensional inequality that the proof of \cref{lem:curv-lower-bdd-isoperimetry} we give relies on.
The proof of the below one-dimensional inequality is given in \cref{sec:proofs:proof-one-dim-isoperimetry}.

\begin{lemma}
\label{lem:one-dim-isoperimetry}
Consider a differentiable function \(\frakg : \bbR \to (0, \infty)\).
Assume that there exists \(\kappa > 0\) such that for all \(x \in \bbR\), \(|\frakg'(x)| \leq \frac{2}{\kappa} \cdot \frakg(x)^{\nicefrac{3}{2}}\).
If a twice differentiable function \(V : \bbR \to \bbR\) satisfies \((1,\frakg)\)-curvature lower bound, then for all \(x \in \bbR\),
\begin{equation*}
    \frac{\exp(-V(x))}{\sqrt{\frakg(x)}} \geq C_{\kappa} \cdot \min\left\{\int_{-\infty}^{x}\exp(-V(t))\rmd t, \int_{x}^{\infty} \exp(-V(t))\rmd t\right\}
\end{equation*}
where \(C_{\kappa} = \frac{\kappa}{8 + 4\kappa}\).
\end{lemma}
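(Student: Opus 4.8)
The plan is to reduce the statement to a one-dimensional tail bound for the log-concave measure with density $\pi := e^{-V}$. The first step is to rephrase the hypothesis on $\frakg$: since $\frac{d}{dx}\bigl(\frakg(x)^{-1/2}\bigr) = -\tfrac12\frakg(x)^{-3/2}\frakg'(x)$, the assumption $|\frakg'| \le \tfrac{2}{\kappa}\frakg^{3/2}$ says exactly that $x \mapsto \frakg(x)^{-1/2}$ is $\tfrac1\kappa$-Lipschitz, which I will use in the form $\frakg(x-w) \ge \bigl(\frakg(x)^{-1/2} + w/\kappa\bigr)^{-2}$ for $w \ge 0$. The second reduction is a reflection symmetry: both the hypothesis on $\frakg$ and the curvature lower bound $V'' \ge \frakg$ are invariant under $x \mapsto -x$, while the two tail integrals are swapped, so it suffices to handle the case $V'(x) \le 0$ and prove $\frac{\pi(x)}{\sqrt{\frakg(x)}} \ge C_\kappa\int_{-\infty}^x\pi$; the case $V'(x) \ge 0$ then follows by applying this to the reflected function. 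This case split covers all $x$ and avoids having to locate the minimizer of $V$ or the median of $\pi$.

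The core estimate is a lower bound on $V(x-r)-V(x)$ for $r > 0$ when $V'(x)\le 0$. Writing the increment as $\int_0^r(-V'(x-u))\,du$, using $-V'(x-u) \ge -V'(x) + \int_0^u \frakg(x-w)\,dw \ge \int_0^u\frakg(x-w)\,dw$ (the last step uses $V'(x)\le 0$), and applying Fubini, I get $V(x-r)-V(x) \ge \int_0^r (r-w)\,\frakg(x-w)\,dw$. Plugging in the Lipschitz lower bound for $\frakg(x-w)$, performing the substitution $s = w\sqrt{\frakg(x)}$, and setting $R := r\sqrt{\frakg(x)}$, this becomes $V(x-r)-V(x) \ge \Phi_\kappa(R)$ with $\Phi_\kappa(R) := \int_0^R \frac{R-s}{(1+s/\kappa)^2}\,ds$. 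A direct integration gives the closed form
\[
    \Phi_\kappa(R) = \kappa R - \kappa^2\log\!\bigl(1 + R/\kappa\bigr) \ \ge\ \frac{\kappa R^2}{2(\kappa+R)},
\]
where the inequality is the elementary bound $\log(1+t) \le t - \frac{t^2}{2(1+t)}$ for $t \ge 0$ applied with $t = R/\kappa$.

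Combining, after the substitution $R = r\sqrt{\frakg(x)}$,
\[
    \int_{-\infty}^x \pi(t)\,dt = \pi(x)\int_0^\infty e^{-(V(x-r)-V(x))}\,dr \ \le\ \frac{\pi(x)}{\sqrt{\frakg(x)}}\int_0^\infty e^{-\Phi_\kappa(R)}\,dR \ \le\ \frac{\pi(x)}{\sqrt{\frakg(x)}}\int_0^\infty e^{-\frac{\kappa R^2}{2(\kappa+R)}}\,dR .
\]
I bound the last integral by splitting at $R = \kappa$: for $R \le \kappa$ the exponent is at least $R^2/4$, contributing at most $\int_0^\infty e^{-R^2/4}\,dR = \sqrt\pi$; for $R \ge \kappa$ the exponent is at least $\kappa R/4$, contributing at most $\int_0^\infty e^{-\kappa R/4}\,dR = 4/\kappa$. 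Hence $\int_0^\infty e^{-\Phi_\kappa(R)}\,dR \le \sqrt\pi + \tfrac4\kappa \le 2 + \tfrac4\kappa = \tfrac{2\kappa+4}{\kappa} \le \tfrac{1}{C_\kappa}$, which yields $\frac{\pi(x)}{\sqrt{\frakg(x)}} \ge C_\kappa\int_{-\infty}^x\pi \ge C_\kappa\min\{\int_{-\infty}^x\pi,\int_x^\infty\pi\}$, the claim (in fact with the slightly sharper constant $\tfrac{\kappa}{2\kappa+4}$).

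The part needing the most care is the middle step: one must extract an explicit closed form for $\Phi_\kappa$ and a lower bound for it strong enough to make $\int_0^\infty e^{-\Phi_\kappa}$ finite with an explicit $\kappa$-dependent value. The naive estimate $(1+t/N)^N \le e^t$ (equivalently $\Phi_\kappa \ge 0$) is useless here, so the refinement $\log(1+t)\le t-\frac{t^2}{2(1+t)}$, which turns $\Phi_\kappa$ into the quadratic-over-linear lower bound above, is the essential ingredient; everything else is bookkeeping around the reflection reduction and the sign-of-$V'(x)$ case split.
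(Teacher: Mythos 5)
Your proof is correct. It follows the same overall template as the paper's argument --- split on the sign of \(V'(x)\), lower-bound the growth of \(V\) into the small tail using \(V''\geq \frakg\) together with the \(\nicefrac{1}{\kappa}\)-Lipschitzness of \(\frakg^{-\nicefrac{1}{2}}\), and integrate --- but the execution of the key estimate is genuinely different. The paper fixes the explicit splitting point \(r = x + \frac{\kappa}{4\sqrt{\frakg(x)}}\), shows \(\frakg \geq \frac{1}{2}\frakg(x)\) on \([x,r]\) to get quadratic growth of \(V\) there and a derivative lower bound \(V'(r)\geq \frac{\kappa}{8}\sqrt{\frakg(x)}\) to get linear growth beyond, and then bounds the two resulting integrals separately. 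You instead keep the exact integral representation \(V(x-r)-V(x)\geq \int_0^r (r-w)\frakg(x-w)\,\rmd w\), substitute the Lipschitz bound to obtain the closed form \(\Phi_\kappa(R)=\kappa R-\kappa^2\log(1+R/\kappa)\), and convert it into the quadratic-over-linear bound via \(\log(1+t)\leq t-\frac{t^2}{2(1+t)}\) (which I verified; it reduces to \(0\leq t^2\) after differentiating). Both proofs still end up splitting the final integral at \(R\asymp\kappa\) into a Gaussian piece and an exponential piece, but your route avoids the ad hoc choice of \(r\) and the factor-\(2\) loss from the crude bound \(\frakg\geq\frac{1}{2}\frakg(x)\), which is why you land on the constant \(\frac{\kappa}{2\kappa+4}=2C_\kappa\) rather than \(C_\kappa\). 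Your reflection reduction is also a small economy over the paper, which writes out the \(V'(x)\leq 0\) case in full; the symmetry argument is valid since both hypotheses are invariant under \(x\mapsto -x\) while the two tails are swapped.
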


\begin{proof}
This proof follows the structure of the proofs of \citet[Lem. 9]{gopi2023algorithmic} and \citet[Lem. B.7]{kook2024gaussian}, and we give a few more details for the convenience of the reader.
We begin by first noting that \(\potential{}\) is \(1\)-relatively convex with respect to itself.
Let the extended version of \(\potential{}\) be \(\tilde{\potential{}} : \bbR^{d} \to \bbR\) which is defined as
\begin{equation*}
    \tilde{\potential{}}(x) = \begin{cases} \potential{x} & x \in \interior{\primalspace} \\
    \infty & x \not\in \interior{\primalspace} \end{cases}
\end{equation*}
and the associated extended density \(\tilde{\targetdens} \propto \exp(-\tilde{\potential{}})\).
For positive functions \(f_{1}, f_{2}, f_{3}, f_{4}\) where \(f_{1}\) and \(f_{2}\) are upper semicontinuous, and \(f_{3}\) and \(f_{4}\) are lower continuous, we have the following equivalence  from \citet[Lem. 8]{gopi2023algorithmic}:
\begin{gather*}
    \left(\int f_{1}(x)\tilde{\targetdens}(x)\rmd x\right) \cdot \left(\int f_{2}(x)\tilde{\targetdens}(x)\rmd x\right) \leq \left(\int f_{3}(x)\tilde{\targetdens}(x)\rmd x\right) \cdot \left(\int f_{4}(x)\tilde{\targetdens}(x)\rmd x\right) \\
    \Updownarrow \\
    \left(\int_{E}f_{1}\exp(-\tilde{\potential{}})\right) \cdot \left(\int_{E}f_{2}\exp(-\tilde{\potential{}})\right) \leq \left(\int_{E}f_{3}\exp(-\tilde{\potential{}})\right) \cdot \left(\int_{E}f_{4}\exp(-\tilde{\potential{}})\right)
\end{gather*}
for any \(a, b \in \bbR^{d}\), \(\gamma \in \bbR\), where \(\int_{E}h \defeq \int_{0}^{1} h(a + t(b - a))\exp(-\gamma t)\rmd t\).
Since \(\tilde{\targetdens}(x) = \exp(-\tilde{\potential{}}(x)) = 0\) for \(x \not\in \interior{\primalspace}\), this equivalence also holds with the substitutions \(\tilde{\targetdens} \leftarrow \targetdens\) and \(\tilde{\potential{}} \leftarrow \potential{}\).
Let \(C_{\sqrt{\mu}} = \frac{\sqrt{\mu}}{8 + 4\sqrt{\mu}}\) and \(\tilde{\metric{}} = \mu \cdot \metric{}\) (which implies that \(d_{\tilde{\metric{}}} = \sqrt{\mu} \cdot d_{\metric{}}\)).
The inequality we would like to show can be expressed in terms of indicator functions as
\begin{equation*}
    \int \targetdens(x) \bm{1}_{S_{3}}(x)\rmd x \cdot \int (C_{\sqrt{\mu}} \cdot d_{\tilde{\metric{}}}(S_{1}, S_{2}))^{-1} \targetdens(x) \rmd x \geq \int \targetdens(x) \bm{1}_{S_{2}}(x)\rmd x \cdot \int \targetdens(x) \bm{1}_{S_{1}}(x) \rmd x~.
\end{equation*}
Consider \(f_{1} = \bm{1}_{\overline{S_{1}}}\), \(f_{2} = \bm{1}_{\overline{S_{2}}}\), \(f_{3} = \bm{1}_{\interior{\primalspace} \setminus \overline{S_{1}} \setminus \overline{S_{2}}}\), and \(f_{4} = (C_{\sqrt{\mu}} \cdot d_{\tilde{\metric{}}}(S_{1}, S_{2}))^{-1}\), where \(\overline{S}\) is closure of a set \(S\).
With this construction, \(f_{1}\), \(f_{2}\) are upper-semicontinuous, and \(f_{3}\), \(f_{4}\) are lower-semicontinuous.
If the top inequality in the equivalence holds with these \(\{f_{i}\}_{i=1}^{4}\), then this implies our required inequality as
\begin{align*}
    \int \targetdens(x) \bm{1}_{S_{3}}\rmd x \cdot \int (C_{\sqrt{\mu}} \cdot d_{\tilde{\metric{}}}(S_{1}, S_{2}))^{-1} \targetdens(x) \rmd x &= \int f_{3}(x) \targetdens(x) \rmd x \cdot \int f_{4}(x) \targetdens(x) \rmd x \\
    &\geq \int f_{1}(x)\targetdens(x)\rmd x \cdot \int f_{2}(x) \targetdens(x) \rmd x \\
    &\geq \int \targetdens(x)\bm{1}_{S_{1}}\rmd x \cdot \int \targetdens(x)\bm{1}_{S_{2}}(x)\rmd x~.
\end{align*}
Due to the equivalence from the localisation lemma, as noted in \citet[Lem. B.7]{kook2024gaussian} it suffices to establish that for all \(\gamma \in \bbR\) and \(a, b \in \bbR^{d}\), and \(a + t \cdot (b - a) \in \interior{\primalspace}\) for \(t \in [0, 1]\)
\begin{multline*}
    C_{\sqrt{\mu}} \cdot d_{\tilde{\metric{}}}(S_{1}, S_{2}) \int_{0}^{1}e^{\gamma t - \potential{a + t \cdot (b - a)}} \bm{1}_{\overline{S_{1}}}(a + t \cdot (b - a))\rmd t \cdot \int_{0}^{1}e^{\gamma t - \potential{a + t \cdot (b - a)}} \bm{1}_{\overline{S_{2}}}(a + t \cdot (b - a))\rmd t \\
    \leq \int_{0}^{1}e^{\gamma t - \potential{a + t \cdot (b - a)}}\rmd t \cdot \int_{0}^{1}e^{\gamma t - \potential{a + t \cdot (b - a)}} \bm{1}_{S_{3}}(a + t \cdot (b - a))\rmd t~.
\end{multline*}
We refer to the above inequality as the ``needle'' inequality.

Now we define the following quantities for arbitrary \(a, b \in \bbR^{d}\) and \(\gamma \in \bbR\).
\begin{align*}
    V(t) &\defeq \potential{a + t \cdot (b - a)} - \gamma t \\
    T_{i} &\defeq \{t \in [0, 1] : a + t \cdot (b - a) \in S_{i}\} \qquad \text{for each } i \in [3] \\
    \frakg(t) &\defeq (b - a)^{\top}\tilde{\metric{}}(a + t \cdot (b - a))(b - a)~.
\end{align*}
Note that \(V\) is \(1\)-relatively convex with respect to \(\frakg\) as
\begin{align*}
    V''(t) &= (b - a)^{\top}\hesspotential{a + t \cdot (b - a)}(b - a) \\
    &\geq \mu \cdot (b - a)^{\top}\metric{a + t \cdot (b - a)}(b - a) \\
    &= (b - a)^{\top}\tilde{\metric{}}{(a + t \cdot (b - a))}(b - a) = \frakg(t)~.
\end{align*}
Additionally note the following property of \(\frakg\).
\begin{align*}
    |\frakg'(t)| &= |\rmD\tilde{\metric{}}(a + t \cdot (b - a))[b - a, b - a, b - a]| \\
    &\leq \mu \cdot |\rmD\metric{a + t \cdot (b - a)}[b - a, b - a, b- a]| \\
    &\leq 2\mu \cdot \|b - a\|_{\metric{a + t \cdot (b - a)}}^{3} \\
    &\leq \frac{2}{\sqrt\mu} \cdot \|b - a\|_{\tilde{\metric{}}(a + t \cdot (b - a))}^{3} = \frac{2}{\sqrt{\mu}} \cdot \frakg(t)^{\nicefrac{3}{2}}~.
\end{align*}

The ``needle'' inequality can be expressed in terms of one-dimensional integrals as shown below.
\begin{equation*}
    C_{\sqrt{\mu}} \cdot d_{\tilde{\metric{}}}(S_{1}, S_{2}) \cdot \int_{t \in T_{1}} e^{-V(t)} \rmd t \cdot  \int_{t \in T_{2}} e^{-V(t)} \rmd t \leq  \int_{t \in [0, 1]} e^{-V(t)} \rmd t \cdot  \int_{t \in T_{3}} e^{-V(t)} \rmd t~.
\end{equation*}
Let \(d_{\frakg}(v_{1}, v_{2}) = \int_{v_{1}}^{v_{2}} \sqrt{\frakg(t)} \rmd t\).
As remarked in the proof of \citet[Lem. 9]{gopi2023algorithmic},
\begin{equation*}
    d_{\frakg}(T_{1}, T_{2}) \geq d_{\tilde{\metric{}}}(S_{1}, S_{2})~.
\end{equation*}

Therefore, if
\begin{equation*}
    C_{\sqrt{\mu}} \cdot d_{\frakg}(T_{1}, T_{2}) \cdot \int_{t \in T_{1}} e^{-V(t)} \rmd t \cdot \int_{t \in T_{2}} e^{-V(t)} \rmd t \leq  \int_{t \in [0, 1]} e^{-V(t)} \rmd t \cdot  \int_{t \in T_{3}} e^{-V(t)} \rmd t~
\end{equation*}
is true, then the ``needle'' inequality is true, which implies the inequality in the statement of the lemma.
The remainder of the proof follows the proof of \citet[Lem. 9]{gopi2023algorithmic}.
Assume that \(T_{3}\) is a single interval.
This implies that \(T_{1} = [c, c'], T_{3} = [c', d'], T_{2} = [d', d]\) for \(0 \leq c < c' < d' < d \leq 1\).
Then,
\begin{equation*}
    d_{\frakg}(T_{1}, T_{2}) = \inf_{u \in T_{1}, v \in T_{2}} d_{\frakg}(u, v) = d_{\frakg}(c', d') = \int_{c'}^{d'} \sqrt{\frakg(t)}\rmd t~.
\end{equation*}
With \Cref{lem:one-dim-isoperimetry} with \(\kappa \leftarrow \sqrt{\mu}\), we show the inequality in this setting as shown below.
\begin{align*}
    \int_{c'}^{d'}\exp(-V(t))\rmd t &\geq \min_{t \in [c', d']} \frac{\exp(-V(t))}{\sqrt{\frakg(t)}} \cdot \int_{c'}^{d'}\sqrt{\frakg(t)} \rmd t\\
    &\geq \min_{t \in [c', d']} C_{\sqrt{\mu}} \cdot \min\left\{\int_{-\infty}^{t}e^{-V(t)}\rmd t, \int_{t}^{\infty}\exp(-V(t))\rmd t \right\} \cdot \int_{c'}^{d'}\sqrt{\frakg(t)} \rmd t \\
    &\geq C_{\sqrt{\mu}} \cdot \min\left\{\int_{c}^{c'}e^{-V(t)}\rmd t, \int_{d'}^{d}e^{-V(t)}\rmd t \right\} \cdot \int_{c'}^{d'}\sqrt{\frakg(t)} \rmd t \\
    &= C_{\sqrt{\mu}} \cdot \min\left\{\frac{\int_{c}^{c'} \exp(-V(t))\rmd t}{Z}, ~\frac{\int_{d'}^{d}\exp(-V(t))\rmd t}{Z}\right\} \cdot Z \cdot \int_{c'}^{d'}\sqrt{\frakg(t)} \rmd t \\
    &\geq C_{\sqrt{\mu}} \cdot \int_{c}^{c'} \exp(-V(t))\rmd t \cdot \int_{d'}^{d}\exp(-V(t))\rmd t \cdot  \int_{c'}^{d'}\sqrt{\frakg(t)} \rmd t \cdot \frac{1}{Z}~.
\end{align*}
where \(Z = \int_{c}^{d}e^{-V(t)}\rmd t\).
The final statement uses \(\min\{a, b\} \geq ab\) for \(0 < a, b \leq 1\).
When \(T_{3}\) is a collection of disjoint intervals, the general trick from \citet{lovasz1993random} applies here as performed in \citet{gopi2023algorithmic}, and this inequality is applied to each interval in \(T_{3}\) and its neighbouring intervals.
Thus, we have shown the inequality, which implies the ``needle'' inequality, and therefore proving the original statement of the lemma.
\end{proof}

\subsubsection{Proof of \Cref{lem:one-dim-isoperimetry}}
\label{sec:proofs:proof-one-dim-isoperimetry}

\begin{proof}
This proof is inspired by \citet[Proof of Lem. 7]{gopi2023algorithmic}.
Assume that \(V'(x) \geq 0\), and define \(r = x + \frac{\kappa}{4\sqrt{\frakg(x)}}\).
By the property of \(\frakg\), we have for any \(t \in [x, r]\).
\begin{equation*}
    \frac{1}{\sqrt{\frakg(t)}} - \frac{1}{\sqrt{\frakg(x)}} =-\int_{x}^{t} \frac{\frakg'(s)}{2\frakg(s)^{\nicefrac{3}{2}}} \leq \frac{t - x}{\kappa} \leq \frac{r - x}{\kappa} = \frac{1}{4\sqrt{\frakg(x)}} \Rightarrow \frakg(t) \geq \frac{1}{2}\frakg(x)~.
\end{equation*}

Since \(V\) is \(1\)-relatively convex w.r.t. \(\frakg\), for all \(t \in [x, r]\),
\begin{equation*}
    V''(t) \geq \frakg(t) \geq \frac{1}{2}\frakg(x)~.
\end{equation*}
and with the assumption that \(V'(x) \geq 0\), we get
\begin{gather*}
    V(t) \geq V(x) + V'(x)(t - x) + \int_{x}^{t}(t - s)V''(s)\rmd s \geq V(x) + \frac{(t - x)^{2}}{4}\frakg(x)~,\\
    V'(r) = V'(x) + \int_{x}^{r}V''(s)\rmd s \geq\frac{r - x}{2} \frakg(x) = \frac{\kappa}{8}\sqrt{\frakg(x)}~.
\end{gather*}
We also have for \(t > r\) that
\begin{equation*}
    V(t) \geq V(r) + V'(r)(t - r) \geq V(x) + V'(x)(r - x) + V'(r)(t - r) \geq V(x) + \frac{\kappa(t - r)}{8}\sqrt{\frakg(x)}~.
\end{equation*}
With these, we get
\begin{align*}
    \int_{x}^{\infty}\exp(-V(t)) \rmd t &= \int_{x}^{r} \exp(-V(t)) \rmd t + \int_{r}^{\infty} \exp(-V(t)) \rmd t \\
    &\leq \exp(-V(x)) \int_{x}^{r} \exp\left(-\frac{(t - x)^{2}}{4}\frakg(x)\right) \rmd t \\
    &\qquad + \int_{r}^{\infty}\exp(-V(x))\exp\left(-\frac{\kappa(t - r)\sqrt{\frakg(x)}}{8}\right) \rmd t \\
    &\leq \exp(-V(x)) \cdot \frac{4 + \frac{8}{\kappa}}{\sqrt{\frakg(x)}}~.
\end{align*}

For the setting \(V'(x) \leq 0\), define \(r = x - \frac{\alpha}{4\sqrt{\frakg(x)}}\) and for any \(t \in [r, x]\),
\begin{equation*}
    \frac{1}{\sqrt{\frakg(x)}} - \frac{1}{\sqrt{\frakg(t)}} = -\int_{t}^{x} \frac{\frakg'(s)}{2\frakg(s)^{\nicefrac{3}{2}}} \geq -\frac{(x - t)}{\kappa} \geq \frac{r - x}{\kappa} = -\frac{1}{4\sqrt{\frakg(x)}} \Rightarrow \frakg(t) \geq \frac{1}{2}\frakg(x)~.
\end{equation*}
With the assumption that \(V'(x) \leq 0\), we get
\begin{align*}
    V(t) \geq V(x) + V'(x)(t - x) + \int_{x}^{t}(t - s)V''(s)\rmd s \geq V(x) &\geq \int_{t}^{x}(t - s)V''(s)\rmd s \\
    &\geq \frac{(t - x)^{2}}{4}\frakg(x)~,\\
    V'(r) = V'(x) + \int_{x}^{r}V''(s)\rmd s \leq -\int_{r}^{x}V''(s)\rmd s \leq \frac{r - x}{2}\frakg(x) &= -\frac{\kappa}{8}\sqrt{\frakg(x)}~.
\end{align*}
We also have for \(t < r\) that
\begin{align*}
    V(t) &\geq V(r) + V'(r)(t - r) \\
    &\geq V(x) + V'(x)(r - x) + V'(r)(t - r) \\
    &\geq V(x) + \frac{\kappa(r- t)}{8}\sqrt{\frakg(x)}~.
\end{align*}
Analogous to the case \(V'(x) \geq 0\), consider the integrals \(\int_{-\infty}^{r}\exp(-V(t))\rmd t\) and \(\int_{r}^{x}\exp(-V(t))\rmd t\).
This also results in
\begin{equation*}
    \int_{-\infty}^{x} \exp(-V(t)) \rmd t \leq \exp(-V(x)) \cdot \frac{4 + \frac{8}{\kappa}}{\sqrt{\frakg(x)}}~.
\end{equation*}
\end{proof}

\subsection{Other technical lemmas and their proofs}

\begin{proof}[Proof of \Cref{lem:nesterov-todd-result}]
Consider any geodesic \(\xi : [0, 1] \to \interior{\primalspace}\) such that \(\xi(0) = x\) and \(\xi(1) = y\).
Let \(\bar{t}\) be the time when \(\xi(\bar{t})\) hits the boundary of \(\calE_{x}^{\metric{}}(r)\), and note that \(\bar{t} \leq 1\) as \(\xi(1) = y\).
Then, for \(\delta(t) := \|\xi(t) - x\|_{\metric{x}}\),
\begin{equation*}
    \frac{\rmd}{\rmd t}\delta(t)^{2} = 2\delta(t)\delta'(t) = 2\langle\xi'(t), \xi(t) - x\rangle_{\metric{x}} \leq 2\|\xi'(t)\|_{\metric{x}}\|\xi(t) - x\|_{\metric{x}}
\end{equation*}
and this implies that \(\delta'(t) \leq \|\xi'(t)\|_{\metric{x}}\).
Therefore,
\begin{equation*}
    d_{\metric{}}(x, y) = \int_{0}^{1}\|\xi'(t)\|_{\metric{\xi(t)}} \rmd t \geq \int_{0}^{\bar{t}}\|\xi'(t)\|_{\metric{\xi(t)}} \rmd t \geq \int_{0}^{\bar{t}} \|\xi'(t)\|_{\metric{x}} \cdot (1 - \|\xi(t) - x\|_{\metric{x}}) \rmd t~,
\end{equation*}
where the last inequality is due to the self-concordance of \(\metric{}\) (\cref{lem:dikin1-in-domain-self-concordant-loewner}(2)) and that \(\|\xi(t) - x\|_{\metric{x}} < 1\) for \(t \in (0, \bar{t})\).
The final integral in the above chain is at least
\begin{equation*}
    \int_{0}^{\bar{t}} \delta'(t)(1 - \delta(t))\rmd t = \delta(\bar{t}) - \frac{1}{2}\delta(\bar{t})^{2} = r - \frac{1}{2}r^{2}~.
\end{equation*}
The second statement of the lemma directly follows from the remainder of proof of \citet[Lem. 3.1]{nesterov2002riemannian}.
\end{proof}

\begin{proof}[Proof of \cref{lem:SC-lemma}]
From \Cref{lem:dikin1-in-domain-self-concordant-loewner}(2) and the assumption that \(\|y - x\|_{\metric{x}} < 1\), we have
\begin{equation*}
    \metric{y} \succeq (1 - \|y - x\|_{\metric{x}})^{2} \cdot \metric{x} \Leftrightarrow \metric{x}^{-\nicefrac{1}{2}}\metric{y}\metric{x}^{-\nicefrac{1}{2}} \succeq (1 - \|y - x\|_{\metric{x}})^{2} \cdot \rmI_{d \times d}~.
\end{equation*}
Hence, for \(M = \metric{x}^{-\nicefrac{1}{2}}\metric{y}\metric{x}^{-\nicefrac{1}{2}}\), we have
\begin{align*}
    \log \det \metric{y}\metric{x}^{-1} = \log \det M &= \sum_{i=1}^{d} \log \lambda_{i}(M) \\
    &\geq 2d \cdot \log(1 - \|y - x\|_{\metric{x}}) \geq -3d \cdot \|y - x\|_{\metric{x}}~,
\end{align*}
where the last step uses \Cref{auxlem:log-is-linear}.
Also, from \Cref{lem:dikin1-in-domain-self-concordant-loewner}(2), we have
\begin{equation*}
    \metric{y} \preceq \frac{1}{(1 - \|y - x\|_{\metric{x}})^{2}} \cdot \metric{x} \Rightarrow \|x - y\|_{\metric{y}}^{2} \leq \frac{\|x - y\|_{\metric{x}}^{2}}{(1 - \|y - x\|_{\metric{x}})^{2}}~.
\end{equation*}

Using \Cref{auxlem:almost-cubic}, this implies that
\begin{equation*}
    \|x - y\|_{\metric{x}}^{2} - \|x - y\|_{\metric{y}}^{2} \geq \|x - y\|_{\metric{x}}^{2} - \frac{1}{(1 - \|x - y\|_{\metric{x}})^{2}} \cdot \|x - y\|_{\metric{x}}^{2} \geq -6 \cdot \|x - y\|_{\metric{x}}^{3}~.
\end{equation*}
\end{proof}

\begin{proof}[Proof of \cref{lem:SSC-LTSC-lemma}]
Let \(s(t) = x + t \cdot (y - x)\), and \(\varphi(t) = \log \det \metric{s(t)}\).
By Taylor's theorem, there exists \(t^{\star} \in (0, 1)\) such that
\begin{equation*}
    \log \det \metric{y}\metric{x}^{-1} = \varphi(1) - \varphi(0) = \varphi'(0) + \varphi''(t^{\star})~.
\end{equation*}
We have the following expressions for \(\varphi'(0)\) and \(\varphi''(t)\).
\begin{align*}
    \varphi'(0) &= \langle \nabla \log \det \metric{x}, y - x\rangle \\
    \varphi''(t) &= \trace(\metric{s(t)}^{-1}\rmD^{2}\metric{s(t)}[y - x, y - x])\\
    &\qquad \quad - \|\metric{s(t)}^{-\nicefrac{1}{2}}\rmD\metric{s(t)}[y - x]\metric{s(t)}^{-\nicefrac{1}{2}}\|_{\frob}^{2}~.
\end{align*}
As strong self-concordance implies self-concordance, \Cref{lem:dikin1-in-domain-self-concordant-loewner}(2) implies that for any \(t \in [0, 1]\)
\begin{equation*}
    \metric{s(t)} \preceq \frac{\metric{x}}{(1 - \|s(t) - x\|_{\metric{x}})^{2}} \preceq \frac{\metric{x}}{(1 - \|y - x\|_{\metric{x}})^{2}}~,
\end{equation*}
where the last step uses the fact that \(\|s(t) - x\|_{\metric{x}} = t \cdot \|y - x\|_{\metric{x}}\).
Finally, we use the definitions of strong self-concordance and \(\alpha\)-lower trace self-concordance to get
\begin{align*}
    \varphi''(t^{\star}) &= \trace(\metric{s(t^{\star})}^{-1}\rmD^{2}\metric{s(t^{\star})}[y - x, y - x]) \\
    &\qquad - \|\metric{s(t^{\star})}^{-\nicefrac{1}{2}}\rmD\metric{s(t^{\star})}[y - x]\frakG(s(t^{\star}))^{-\nicefrac{1}{2}}\|_{\frob}^{2} \\
    &\geq -\alpha \cdot \|y - x\|_{\metric{s(t^{\star})}}^{2} - 4 \cdot \|y - x\|_{\metric{s(t^{\star})}}^{2} \\
    &\geq -(\alpha + 4) \cdot \frac{\|y - x\|_{\metric{x}}^{2}}{(1 - \|y - x\|_{\metric{x}})^{2}}~.
\end{align*}
\end{proof}

\begin{proof}[Proof of \cref{lem:sq-dist-diff-ASC-SC}]
For \(t, s \in [0, 1]\), define the following functions.
\begin{gather*}
    y(t) = y + t \cdot (w - y), \quad p(t) = \|y_{t} - x\|_{\metric{y_{t}}}^{2} - \|y_{t} - x\|_{\metric{x}}^{2} \\
    u(s; t) = x + s \cdot (y(t) - x), \quad q(s; t) = 2 \langle y(t) - x, (w  - y)\rangle_{\metric{u(s; t)}}~.
\end{gather*}
By Taylor's theorem, there exists \(t^{\star} \in (0, 1)\) such that
\begin{align*}
    \Delta(w; x) - \Delta(y; x) &= p(1) - p(0) \\
    &= p'(t^{\star}) \\
    &= 2\langle y(t^{\star}) - x, w - y\rangle_{\metric{y(t^{\star})}} - 2\langle y(t^{\star}) - x, w - y\rangle_{\metric{x}} \\
    &\qquad + \rmD\metric{y(t^{\star})}[y(t^{\star}) - x, y(t^{\star}) - x, w - y] \\
    &= q(1; t^{\star}) - q(0; t^\star) + \rmD\metric{y(t^{\star})}[y(t^{\star}) - x, y(t^{\star}) - x, w - y]~.
\end{align*}
Since \(q(.; t^{\star})\) is differentiable, we also have by Taylor's theorem that there exists \(\bar{s} \in (0, 1)\).
\begin{align*}
    q(1; t^{\star}) - q(0; t^{\star}) &= q'(\bar{s}; t^{\star}) \\
    &= 2\rmD\metric{u(\bar{s}; t^{\star})}[y(t^{\star}) - x, y(t^{\star}) - x, w - y]~.
\end{align*}
The self-concordance of \(\metric{}\) enables the bounds
\begin{align*}
    \Delta(w; x) - \Delta(y; x) &= 2\rmD\metric{u(\bar{s}; t^{\star})}[y(t^{\star}) - x, y(t^{\star}) - x, w - y] \\
    &\qquad + \rmD\metric{y(t^{\star})}[y(t^{\star}) - x, y(t^{\star}) - x, w - y] \\
    &\leq 4\|y(t^{\star}) - x\|_{\metric{u(\bar{s}; t^{\star})}}^{2} \cdot \|w - y\|_{\metric{u(\bar{s}; t^{\star})}} \\
    &\qquad + 2\|y(t^{\star}) - x\|^{2}_{\metric{y(t^{\star})}} \cdot \|w - y\|_{\metric{y(t^{\star})}}~.
\end{align*}

Note that
\begin{equation*}
    \|u(\bar{s}; t^{\star}) - x\|_{\metric{x}} = \bar{s} \cdot \|y(t^{\star}) - x\|_{\metric{x}}~.
\end{equation*}
Since \(y, w \in \calE_{x}^{\metric{}}(1)\) which is convex subset of \(\primalspace\), \(y(t^{\star}) \in \calE_{x}^{\metric{}}(1)\), and this implies that \(u(\bar{s}; t^{\star}) \in \calE_{x}^{\metric{}}(1)\) as well.
Therefore, by \cref{lem:dikin1-in-domain-self-concordant-loewner}(2),
\begin{equation*}
    \Delta(w; x) - \Delta(y;x) \leq 6 \cdot \frac{\|y(t^{\star}) - x\|_{\metric{x}}^{2} \cdot \|w - y\|_{\metric{x}}}{(1 - \|y(t^{\star}) - x\|_{\metric{x}})^{3}}~.
\end{equation*}
\end{proof}

\section{Conclusion}
\label{sec:conclusion}
In summary, we propose a new first-order algorithm for the constrained sampling problem called \nameref{alg:mapla}, which is algorithmically motivated by the natural gradient descent algorithm in optimisation.
This method performs a Metropolis adjustment of the Markov chain resulting from an approximate version of the preconditioned Langevin algorithm (\ref{eq:PLA}), and supersedes the Metropolis-adjusted Mirror Langevin algorithm proposed by \citet{srinivasan2024fast} by working with a general metric \(\metric{}\).
We derive non-asymptotic mixing time guarantees for our method under a variety of assumptions made on the target distribution \(\targetdist\) and the metric \(\metric{}\).
We find that when \(\metric{}\) satisfies certain stronger notions of self-concordance, the dimension dependence in the mixing time guarantee is strictly better than that obtain with standard self-concordance.
Our numerical experiments showcase how including first-order information about the \(\potential{}\) through \(\gradpotential{}\) using the natural gradient can be beneficial in comparison to \Dikin{} which only uses \(\potential{}\), and could motivate the design of more sophisticated first-order methods for constrained sampling.

Several open questions remain.
We exclude the correction term \((\nabla \cdot \metric{}^{-1})\) in \ref{eq:PLA}, which is the key difference compared to \textsf{ManifoldMALA}.
Notwithstanding the computational difficulty, it would be interesting to see the what the effect of including this correction term would be on the mixing time.
More theoretically, drawing from the discussion of the results, it would be interesting to identify other scenarios where the weaker sufficient condition pertaining to \(\|\gradpotential{\cdot}\|_{\metric{\cdot}^{-1}}\) holds, and its implications for the mixing time of \nameref{alg:mapla}.
Another course to eliminating the gradient upper bound is showing that the above local norm quantity concentrates when \(\potential{}\) and \(\metric{}\) satisfy certain properties such as the \((\mu, \metric{})\) and \((\lambda, \metric{})\)-curvature lower and upper bounds, as done in more recent analyses \citep{lee2020logsmooth} in the case where \(\metric{} = \rmI_{d \times d}\) i.e., \textsf{MALA}.
Algorithmically, it would be also be interesting to find other candidate proposal Markov chains that can yield similar or better mixing time guarantees.
While \ref{eq:PLA} serves as a useful proposal Markov chain, its efficacy as a standalone algorithm (with a projection to ensure feasibility) is not investigated in this work. 
As noted earlier, \ref{eq:PLA} is likely to be biased, but whether this bias is \emph{vanishing} (i.e., when the bias \(\to 0\) as \(h \to 0\)) under certain conditions on the metric \(\metric{}\) would be interesting to check.

\section*{Acknowledgments}
The authors would like to thank the reviewers at ALT 2025 for their feedback, and Nawaf Bou-Rabee for helpful remarks.
The authors acknowledge the MIT SuperCloud and Lincoln Laboratory Supercomputing Center for providing high-performance computing resources that have contributed to the experimental results reported within this work.
Vishwak Srinivasan was supported by a Simons Foundation Collaboration on Theory of Algorithmic Fairness Grant.
Andre Wibisono was supported by NSF Award CCF \#2403391.

\bibliography{references.bib}

\appendix
\section{Addendum}
\label{sec:addendum}
\subsection{Concentration inequalities}

\begin{lemma}
\label{auxlem:chisquared}
Let \(\gamma \sim \calN(0, I_{d})\).
For \(\varepsilon \in (0, 1)\), the event \(\|\gamma\|^{2} \leq d \cdot \sfN_{\varepsilon}\) occurs with probability \(1 - \varepsilon\).
\end{lemma}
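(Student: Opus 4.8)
The plan is to reduce the claim to the classical Laurent--Massart tail bound for chi-squared random variables. First I would note that $\|\gamma\|^{2}$ is distributed as $\chi^{2}_{d}$, and recall the standard one-sided concentration inequality: for any $t > 0$,
\begin{equation*}
    \bbP\left(\|\gamma\|^{2} \geq d + 2\sqrt{dt} + 2t\right) \leq e^{-t}~.
\end{equation*}
Setting $t = \log(\nicefrac{1}{\varepsilon})$ so that $e^{-t} = \varepsilon$ yields
\begin{equation*}
    \bbP\left(\|\gamma\|^{2} \geq d + 2\sqrt{d \log(\nicefrac{1}{\varepsilon})} + 2\log(\nicefrac{1}{\varepsilon})\right) \leq \varepsilon~.
\end{equation*}

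Next I would reconcile this threshold with $d \cdot \sfN_{\varepsilon} = d + 2d\log(\nicefrac{1}{\varepsilon}) + 2d\sqrt{\log(\nicefrac{1}{\varepsilon})}$, using $\sfN_{\varepsilon}$ as defined in \cref{eq:Neps-Ieps-def}. Since the ambient dimension satisfies $d \geq 1$, we have $\sqrt{d} \leq d$, hence $\sqrt{d\log(\nicefrac{1}{\varepsilon})} \leq d\sqrt{\log(\nicefrac{1}{\varepsilon})}$ and $\log(\nicefrac{1}{\varepsilon}) \leq d\log(\nicefrac{1}{\varepsilon})$; adding these termwise gives
\begin{equation*}
    d + 2\sqrt{d\log(\nicefrac{1}{\varepsilon})} + 2\log(\nicefrac{1}{\varepsilon}) \leq d + 2d\sqrt{\log(\nicefrac{1}{\varepsilon})} + 2d\log(\nicefrac{1}{\varepsilon}) = d \cdot \sfN_{\varepsilon}~.
\end{equation*}
Therefore the event $\{\|\gamma\|^{2} \geq d \cdot \sfN_{\varepsilon}\}$ is contained in $\{\|\gamma\|^{2} \geq d + 2\sqrt{d\log(\nicefrac{1}{\varepsilon})} + 2\log(\nicefrac{1}{\varepsilon})\}$, so it has probability at most $\varepsilon$, and taking complements gives $\bbP(\|\gamma\|^{2} \leq d \cdot \sfN_{\varepsilon}) \geq 1 - \varepsilon$.

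There is essentially no obstacle here; the only point worth spelling out is that the factor $d$ multiplying all three terms in the definition of $\sfN_{\varepsilon}$ is precisely what is needed to absorb the Laurent--Massart threshold after invoking $d \geq 1$. If one prefers to avoid citing Laurent--Massart directly, an alternative is a Chernoff/sub-exponential argument: $\|\gamma\|^{2} - d$ is sub-exponential with parameters ensuring $\bbP(\|\gamma\|^{2} - d \geq u) \leq \exp(-\tfrac{1}{8}\min\{u^{2}/d, u\})$, and then one checks that $u = d \cdot \sfN_{\varepsilon} - d = 2d\log(\nicefrac{1}{\varepsilon}) + 2d\sqrt{\log(\nicefrac{1}{\varepsilon})}$ makes the right-hand side at most $\varepsilon$; this is a routine computation but slightly messier than the Laurent--Massart route, so I would present the former.
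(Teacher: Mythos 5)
Your proof is correct and is essentially identical to the paper's: both invoke the Laurent--Massart one-sided $\chi^2_d$ tail bound, substitute $t = \log(\nicefrac{1}{\varepsilon})$, and use $d \geq 1$ to upgrade the threshold $d + 2\sqrt{dt} + 2t$ to $d\,(1 + 2\sqrt{t} + 2t) = d \cdot \sfN_{\varepsilon}$. The paper just writes the comparison in the normalised form $\nicefrac{t}{d} \leq t$ rather than termwise as you do.
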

\begin{proof}
From \citet[Lem. 1]{laurent2000adaptive}, we have for any \(t > 0\),
\begin{equation*}
    \bbP\left(\|\gamma\|^{2} > d \cdot \left\{1 + 2\sqrt{\frac{t}{d}} + 2\frac{t}{d}\right\}\right) \leq e^{-t}
\end{equation*}
As \(d \geq 1\), \(\frac{t}{d} \leq t\), and hence
\begin{equation*}
    \bbP\left(\|\gamma\|^{2} > d \cdot \{1 + 2\sqrt{t} + 2t\}\right) \leq e^{-t}~.
\end{equation*}
Substituting \(t = \log\left(\frac{1}{\varepsilon}\right)\) completes the proof.
\end{proof}
    
\begin{lemma}
\label{auxlem:gaussian-inner}
Let \(\gamma \sim \calN(0, I_{d})\), and \(v \in \bbR^{d}\) be a vector such that \(\|v\| \leq \frakB\).
For \(\varepsilon \in (0, 1)\), the events \(\langle v, \gamma\rangle \leq \frakB \cdot \sfI_{\varepsilon}\) and \(\langle v, \gamma\rangle \geq -\frakB \cdot \sfI_{\varepsilon}\) each occur with probability at least \(1 - \varepsilon\).
\end{lemma}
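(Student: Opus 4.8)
The plan is to reduce the claim to the classical one-dimensional sub-Gaussian tail bound. First I would note that since $\gamma \sim \calN(\bm{0}, \rmI_{d \times d})$ and $v \in \bbR^{d}$ is fixed, the scalar $\langle v, \gamma\rangle$ is a centered Gaussian with variance $\|v\|^{2} \le \frakB^{2}$. If $v = \bm{0}$ both events hold deterministically, so assume $v \ne \bm{0}$ and set $N \defeq \langle v, \gamma\rangle / \|v\| \sim \calN(0, 1)$.

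Next I would invoke the standard tail estimate $\bbP(N > t) \le e^{-t^{2}/2}$ valid for all $t > 0$. Recalling that $\sfI_{\varepsilon} = \sqrt{2\log(1/\varepsilon)}$ from \cref{eq:Neps-Ieps-def} and using $\|v\| \le \frakB$ (so that $\frakB \cdot \sfI_{\varepsilon} / \|v\| \ge \sfI_{\varepsilon}$), we obtain
\[ \bbP\!\left(\langle v, \gamma\rangle > \frakB \cdot \sfI_{\varepsilon}\right) = \bbP\!\left(N > \tfrac{\frakB \cdot \sfI_{\varepsilon}}{\|v\|}\right) \le \bbP\!\left(N > \sfI_{\varepsilon}\right) \le e^{-\sfI_{\varepsilon}^{2}/2} = e^{-\log(1/\varepsilon)} = \varepsilon . \]
Hence $\bbP(\langle v, \gamma\rangle \le \frakB \cdot \sfI_{\varepsilon}) \ge 1 - \varepsilon$, which is the first assertion.

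For the second event I would exploit the symmetry of the Gaussian: $-\gamma$ has the same law as $\gamma$, so $\bbP(\langle v, \gamma\rangle < -\frakB \cdot \sfI_{\varepsilon}) = \bbP(\langle v, \gamma\rangle > \frakB \cdot \sfI_{\varepsilon}) \le \varepsilon$, giving $\bbP(\langle v, \gamma\rangle \ge -\frakB \cdot \sfI_{\varepsilon}) \ge 1 - \varepsilon$. There is no genuine obstacle in this argument; the only points needing a line of care are the degenerate case $v = \bm{0}$ and the monotonicity step $\frakB \cdot \sfI_{\varepsilon}/\|v\| \ge \sfI_{\varepsilon}$, which is precisely where the hypothesis $\|v\| \le \frakB$ is used.
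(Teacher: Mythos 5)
Your proposal is correct and is essentially the paper's own argument: both reduce to the sub-Gaussian tail bound for the centered Gaussian \(\langle v, \gamma\rangle\) with variance \(\|v\|^{2} \le \frakB^{2}\) and use symmetry for the lower-tail event. The only cosmetic differences are that you normalise to a standard normal and explicitly treat the degenerate case \(v = \bm{0}\), which the paper leaves implicit.
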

\begin{proof}
Note that \(\langle v, \gamma\rangle\) is a Gaussian random variable with mean \(0\) and variance \(\|v\|^{2}\).
Thus, for any \(t > 0\),
\begin{equation*}
    \bbP(\langle v, \gamma\rangle < -t) = \bbP(\langle v, \gamma\rangle > t) \leq e^{-\frac{t^{2}}{2\|v\|^{2}}} \leq e^{-\frac{t^{2}}{2\frakB^{2}}}~.
\end{equation*}
Substituting \(t = \frakB \cdot \sfI_{\varepsilon}\) recovers the second statement.
\end{proof}

\subsection{Facts about self-concordant metrics}
\label{sec:app:conseq-SC}

\begin{lemma}
\label{lem:dikin1-in-domain-self-concordant-loewner}
Consider a self-concordant metric \(\metric{} : \interior{\primalspace} \to \bbS_{+}^{d}\).
This satisfies the following properties
\begin{enumerate}[label=(\arabic*)]
\item For every \(x \in \interior{\primalspace}\),
\[\calE_{x}^{\metric{}}(1) \subseteq \interior{\primalspace}~.\]
\item For any pair \(x, y \in \interior{\primalspace}\) such that \(\|x - y\|_{\metric{x}} < 1\), then
\begin{equation*}
    (1 - \|x - y\|_{\metric{x}})^{2} \cdot \metric{x} \preceq \metric{y} \preceq \frac{1}{(1 - \|x - y\|_{\metric{x}})^{2}} \cdot \metric{x}~.
\end{equation*}
\item If \(\primalspace\) does not contain a straight line, then \(\metric{x}\) is non-degenerate for all \(x \in \interior{\primalspace}\).
\end{enumerate}
\end{lemma}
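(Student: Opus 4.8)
The plan is to prove part~(1) first from a one‑dimensional differential inequality along line segments, and then to read off parts~(2) and~(3) as consequences; all three are the matrix‑field counterparts of the classical Loewner estimates for self‑concordant functions \citep[Sec. 5.1]{nesterov2018lectures}.

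For part~(1) I would fix $x \in \interior{\primalspace}$ and $v \in \bbR^{d}$ with $r := \|v\|_{\metric{x}} < 1$, set $x(s) = x + sv$, and let $T := \sup\{t \ge 0 : x(s) \in \interior{\primalspace}\text{ for all }s \in [0,t]\}$, which is positive since $\interior{\primalspace}$ is open; the goal is $T > 1$, giving $x(1) = x+v \in \interior{\primalspace}$. On $[0,T)$ put $g(s) = \|v\|_{\metric{x(s)}}^{2}$, so $g'(s) = \rmD\metric{x(s)}[v,v,v]$ and \cref{def:SC} yields $|g'(s)| \le 2g(s)^{3/2}$; hence $g^{-1/2}$ is $1$‑Lipschitz wherever $g>0$, whence $g(s) \le r^{2}/(1-rs)^{2}$ and in particular $\sqrt{g(s)} \le r/(1-r)$ on $[0,T)\cap[0,1)$ (the case $r=0$ is trivial, since then $g\equiv 0$ and $\metric{}$ is constant along the segment). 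Next, for any fixed $w$, the polarized form of self‑concordance \citep[Corr. 5.1.1]{nesterov2018lectures} gives $|\rmD\metric{x(s)}[v,w,w]| \le 2\sqrt{g(s)}\,\langle w,\metric{x(s)}w\rangle$, hence $\bigl|\tfrac{\rmd}{\rmd s}\log\langle w,\metric{x(s)}w\rangle\bigr| \le 2\sqrt{g(s)}$, so $\langle w,\metric{x(s)}w\rangle \le (1-rs)^{-2}\langle w,\metric{x}w\rangle$; taking the supremum over unit $w$ shows $\|\metric{x(s)}\|_{\op}$ is bounded uniformly on $[0,T)$. If $T \le 1$ held, then by convexity of $\primalspace$ the point $x(s)$ would converge as $s\uparrow T$ to a boundary point $x(T)\in\partial\primalspace$, and the regularity assumption $\|\metric{x_{k}}\|_{\op}\to\infty$ for $x_{k}\to\partial\primalspace$ would contradict the uniform bound just obtained; hence $T>1$ and $\calE_{x}^{\metric{}}(1)\subseteq\interior{\primalspace}$.

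Part~(2) then runs the same computation on the segment from $x$ to $y$, which by part~(1) lies entirely in $\interior{\primalspace}$: with $r=\|x-y\|_{\metric{x}}<1$ and $v=y-x$, integrating $\bigl|\tfrac{\rmd}{\rmd s}\log\langle w,\metric{x(s)}w\rangle\bigr| \le 2\sqrt{g(s)} \le 2r/(1-rs)$ over $s\in[0,1]$ gives $\bigl|\log\bigl(\langle w,\metric{y}w\rangle/\langle w,\metric{x}w\rangle\bigr)\bigr| \le -2\log(1-r)$, i.e. $(1-r)^{2}\langle w,\metric{x}w\rangle \le \langle w,\metric{y}w\rangle \le (1-r)^{-2}\langle w,\metric{x}w\rangle$ for all $w$, which is the stated Loewner sandwich. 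For part~(3), suppose $\metric{x_{0}}$ is degenerate at some $x_{0}\in\interior{\primalspace}$, say $\metric{x_{0}}v=0$ with $v\neq 0$; then $\|tv\|_{\metric{x_{0}}}=0<1$ for every $t\in\bbR$, so $x_{0}+tv\in\calE_{x_{0}}^{\metric{}}(1)\subseteq\interior{\primalspace}$ by part~(1), meaning $\primalspace$ contains the entire line $\{x_{0}+tv:t\in\bbR\}$, contrary to hypothesis. Hence $\metric{x}$ is non‑degenerate throughout $\interior{\primalspace}$.

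The main obstacle is the escape‑time argument inside part~(1): one must argue that a segment leaving $\interior{\primalspace}$ in finite time genuinely accumulates at a point of $\partial\primalspace$ — this is where convexity of $\primalspace$ and nonemptiness of its interior are used — so that the blow‑up clause of the regularity assumption can be invoked against the uniform operator‑norm bound produced by the differential inequalities. The remaining care points are routine: justifying the passage from the diagonal self‑concordance inequality of \cref{def:SC} to its polarized operator‑norm form, and treating the degenerate directions ($\langle w,\metric{x}w\rangle=0$, along which $\metric{}$ is literally constant in the direction $v$) as trivial cases of the same estimates.
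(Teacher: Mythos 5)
Your proof is correct and follows essentially the same route as the paper's: the $1$-Lipschitz property of $s \mapsto \|v\|_{\metric{x+sv}}^{-1}$ derived from \cref{def:SC} for part (1), integration of the polarized bound $\bigl|\tfrac{\rmd}{\rmd s}\log\langle w,\metric{x(s)}w\rangle\bigr| \le 2\|v\|_{\metric{x(s)}}$ for part (2), and the line-inside-the-unit-Dikin-ellipsoid contradiction for part (3). The one place where you diverge is the mechanism for concluding $x+v\in\interior{\primalspace}$ in part (1): the paper perturbs to $\metric{}_{\epsilon}=\metric{}+\epsilon\,\rmI$ (which sidesteps the division-by-zero issues you handle by case analysis) and concludes, Nesterov-style, from the positivity of $\phi(t)=\|v\|_{\metric{}_{\epsilon}(x+tv)}^{-1}$ on $(-\phi(0),\phi(0))$, whereas you run an explicit escape-time argument, upgrade the directional bound to a uniform operator-norm bound via polarization, and invoke the stated regularity assumption that $\|\metric{\cdot}\|_{\op}$ blows up at $\partial\primalspace$. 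Your version makes the role of that boundary blow-up assumption explicit, which the paper leaves implicit; both are sound.
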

\begin{proof}
Define the matrix function \(\metric{}_{\epsilon} : \interior{\primalspace} \to \bbS_{+}^{d}\), where \(\metric{}_{\epsilon}(x) = \metric{x} + \epsilon \cdot \rmI_{d \times d}\).
For any \(x \in \interior{\primalspace}\) and \(v \in \bbR^{d}\), 
\begin{equation*}
    |\rmD\metric{}_{\epsilon}(x)[v, v, v]| = |\rmD\metric{x}[v, v, v]| \leq 2 \cdot \|v\|^{3}_{\metric{x}} \leq 2 \cdot \|v\|^{3}_{\metric{}_{\epsilon}(x)}~
\end{equation*}
and this shows that \(\metric{}_{\epsilon}\) is self-concordant.
For a given \(x \in \interior{\primalspace}\), let \(v \in \bbR^{d}\) be such that \(\|v\|_{\metric{x}} > 0\).
Then, it holds that \(\|v\|_{\metric{}_{\epsilon}(x)} > 0\) as well.
With such \(x\) and \(v\), consider the univariate function \(\phi(t) = \langle v, \metric{}_{\epsilon}(x + tv)v\rangle^{-\nicefrac{1}{2}}\). The derivative \(\phi'(t)\) satisfies
\begin{equation*}
    \phi'(t) = -\frac{\rmD\metric{}_{\epsilon}(x + tv)[v, v, v]}{2\|v\|_{\metric{}_{\epsilon}}^{3}} \Rightarrow |\phi'(t)| \leq 1~.
\end{equation*}
Note that \(t \in (-\phi(0), \phi(0))\) belongs in the domain of \(\phi\).
This is due to Taylor's theorem, the observation about \(\phi'\), which states for any \(t\) that \(\phi(t) - \phi(0) \geq -|t|\), and the fact that \(\phi(t) > 0\).
Therefore, any point of the form \(x + tv\) for \(t^{2} \leq \phi(0)^{2}\) belongs in \(\interior{\primalspace}\).
In other words,
\begin{equation*}
    \left\{x + tv : t^{2}\|v\|^{2}_{\metric{x}} + \epsilon \cdot t^{2}\|v\|^{2} \leq 1\right\} \subseteq \interior{\primalspace}.
\end{equation*}
As \(\epsilon\) can be arbitrarily close to \(0\), setting \(\epsilon \to 0\) proves the first part of the lemma.

For the second statement of the lemma, consider \(v = y - x\) from the statement, and note that \(\phi(0) > 1\) by the assumption that \(\|y - x\|_{\metric{x}} < 1\).
As a result,
\begin{equation*}
\phi(1) - \phi(0) \geq -1 \Rightarrow \frac{1}{\|y - x\|_{\metric{y}}} \geq \frac{1}{\|y - x\|_{\metric{x}}} - 1~.
\end{equation*}
This is equivalent to
\begin{equation*}
    \|y - x\|_{\metric{y}} \leq \frac{\|y - x\|_{\metric{x}}}{1 - \|y - x\|_{\metric{x}}}~.
\end{equation*}
Let \(x_{t} = x + t \cdot (y - x)\), and define \(\psi(t) = v^{\top}\metric{x_{t}}v\) for some arbitrary \(v \in \bbR^{d}\).
Note that \(x_{t} - x = t \cdot (y - x)\).
By self-concordance of \(\metric{}\),
\begin{equation*}
    \psi'(t) \leq \rmD\metric{x_{t}}[v, v, y - x] \leq 2 \cdot \|v\|_{\metric{x_{t}}}^{2} \cdot \|y - x\|_{\metric{x_{t}}}\leq \frac{2\psi(t)}{t} \cdot \frac{\|y - x\|_{\metric{x}}}{1 - t \cdot \|y- x\|_{\metric{x}}}~.
\end{equation*}
The remainder of the proof follows from the proof of \citet[Thm. 5.1.7]{nesterov2018lectures}.

For the third assertion of the lemma, if for any \(x \in \interior{\primalspace}\), \(\metric{}(x)\) is degenerate, then there exists \(v \in \bbR^{d}\) that is non-zero such that \(\metric{}(x)v = \bm{0}\).
Consider \(\bar{x} = x + r \cdot v\) for \(r \in \bbR\).
Note that \(\|\bar{x} - x\|_{\metric{}(x)} = r \cdot \|v\|_{\metric{}(x)} = 0\), and hence the line \(\{x + r \cdot v : r \in \bbR\}\) belongs in \(\calE_{x}^{\metric{}}(1)\).
Since \(\calE_{x}^{\metric{}}(1) \subseteq \interior{\primalspace}\) by \Cref{lem:dikin1-in-domain-self-concordant-loewner}(1), this is a contradiction of the assumption that \(\primalspace\) does not contain any straight lines, and therefore the assumption that there exists \(x \in \interior{\primalspace}\) such that \(\metric{}(x)\) is degenerate is false.
\end{proof}

\subsection{Miscellaneous algebraic lemmas}

\begin{lemma}
\label{auxlem:log-inequality}
Let \(f(x) = x - 1 - \log(x)\).
Then, for any \(a \in (0, 1)\),
\begin{equation*}
    \max_{x \in [a, a^{-1}]} f(x) \leq \frac{(a - 1)^{2}}{a}~.
\end{equation*}
\end{lemma}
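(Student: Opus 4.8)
The plan is to prove the bound $\max_{x \in [a, a^{-1}]} f(x) \leq \frac{(a-1)^2}{a}$ by exploiting convexity of $f(x) = x - 1 - \log x$. Since $f''(x) = 1/x^2 > 0$, the function $f$ is strictly convex on $(0,\infty)$, so on the closed interval $[a, a^{-1}]$ its maximum is attained at one of the two endpoints. Thus it suffices to bound $\max\{f(a), f(a^{-1})\}$.

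First I would compute the two endpoint values: $f(a) = a - 1 - \log a$ and $f(a^{-1}) = a^{-1} - 1 - \log(a^{-1}) = a^{-1} - 1 + \log a$. I would then argue that for $a \in (0,1)$ we have $f(a^{-1}) \geq f(a)$ — intuitively because $a^{-1}$ is farther from the minimizer $x=1$ than $a$ is. Concretely, $f(a^{-1}) - f(a) = a^{-1} - a + 2\log a$; one checks this is nonnegative on $(0,1)$ by noting it vanishes at $a=1$ and has derivative $-a^{-2} - 1 + 2/a = -(1/a - 1)^2 \le 0$, so the difference is decreasing and hence $\ge 0$ for $a \le 1$. Therefore the maximum over the interval equals $f(a^{-1}) = \frac{1}{a} - 1 + \log a$.

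It then remains to show $\frac{1}{a} - 1 + \log a \leq \frac{(a-1)^2}{a} = a - 2 + \frac{1}{a}$, i.e. that $\log a \leq a - 1$. This is the standard elementary inequality $\log x \le x - 1$ for all $x > 0$, which follows from concavity of $\log$ (the tangent line at $x=1$ lies above the graph), and it is presumably already available in the paper's stock of algebraic lemmas (cf.\ \Cref{auxlem:log-is-linear}); if not, it is a one-line argument. Combining the three steps gives $\max_{x\in[a,a^{-1}]} f(x) = f(a^{-1}) \le \frac{(a-1)^2}{a}$.

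I do not expect any genuine obstacle here — the only point requiring a small computation is verifying $f(a^{-1}) \ge f(a)$ on $(0,1)$, which the derivative trick above settles cleanly. An alternative to that step, if one prefers to avoid the comparison, is simply to bound \emph{both} endpoints: $f(a) = a - 1 - \log a \le a - 1 + (1/a - 1) = \frac{(a-1)^2}{a}$ using $-\log a \le 1/a - 1$, and $f(a^{-1}) \le \frac{(a-1)^2}{a}$ using $\log a \le a - 1$ as above; then convexity finishes it without needing to identify which endpoint is larger. I would likely present this second route since it is marginally shorter and uses only the single inequality $\log x \le x-1$ applied at $x = a$ and at $x = 1/a$.
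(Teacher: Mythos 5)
Your proof is correct, but it takes a slightly different route from the paper's. The paper first invokes the pointwise inequality \(f(x) \leq \frac{(x-1)^{2}}{x}\) for all \(x > 0\) (citing an earlier reference for that fact), and then maximises the \emph{upper bound} \(x \mapsto \frac{(x-1)^{2}}{x}\) over \([a, a^{-1}]\) using its convexity; the two endpoint values of that bound happen to coincide, both equalling \(\frac{(a-1)^{2}}{a}\). You instead use convexity of \(f\) itself to reduce to the endpoints and then bound \(f(a)\) and \(f(a^{-1})\) directly via the single inequality \(\log x \leq x - 1\) (applied at \(x = a\) and \(x = a^{-1}\)). Your second, symmetric variant is the cleaner of your two options and is arguably more self-contained than the paper's argument, since it does not rely on the imported pointwise bound \(f(x) \leq \frac{(x-1)^{2}}{x}\). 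Your first variant (identifying which endpoint is larger via the derivative computation \(-(a^{-1}-1)^{2} \leq 0\)) is also correct but does unnecessary work, as you yourself note. One small caveat: the inequality \(\log x \leq x-1\) is not literally the content of \cref{auxlem:log-is-linear} in the paper (that lemma states \(\log(1-t) \geq -\frac{3}{2}t\) on \([0,0.3]\)), so you would need to supply the one-line tangent-line argument yourself, which you correctly anticipate.
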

\begin{proof}
For we have for every \(x > 0\) (see \citet[Lem. 22]{srinivasan2024fast} for a proof)
\begin{equation*}
    f(x) \leq \frac{(x - 1)^{2}}{x} \quad \Rightarrow 
    \max_{x \in [a, a^{-1}]} f(x) \leq \max_{x \in [a, a^{-1}]} \frac{(x - 1)^{2}}{x}~.
\end{equation*}
The function \(\frac{(x - 1)^{2}}{x}\) is convex, and hence the maximum over \([a, a^{-1}]\) is attained at the end points.
\begin{equation*}
    \max_{x \in [a, a^{-1}]} \frac{(x - 1)^{2}}{x} = \max\left\{\frac{(a - 1)^{2}}{a}, \frac{(a^{-1} - 1)^{2}}{a^{-1}}\right\} = \frac{(a - 1)^{2}}{a}~.
\end{equation*}
\end{proof}

\begin{lemma}
\label{auxlem:small-quad}
Let \(f(x) = |(1 - x)^{2} - 1|\). If \(x \in [0, 4]\),
\begin{equation*}
    f(x) \leq 2x
\end{equation*}
\end{lemma}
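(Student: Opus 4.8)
The plan is to expand and factor the quadratic rather than argue via calculus. The key observation is that $(1-x)^2 - 1 = x^2 - 2x = x(x-2)$, so that $f(x) = |x(x-2)| = |x|\cdot|x-2|$. First I would record this identity, and then restrict attention to $x \in [0,4]$: on this interval $|x| = x$, and moreover $x - 2 \in [-2,2]$, so $|x-2| \le 2$. Multiplying these two facts gives $f(x) = x\,|x-2| \le 2x$, which is exactly the desired bound.

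There is essentially no obstacle here — the argument is a one-line algebraic manipulation together with the elementary estimate $|x-2| \le 2$ valid precisely on $[0,4]$. One could note in passing that equality holds at $x = 0$ (and at $x = 4$), which shows the constant $2$ in the statement is the best possible on this interval, so no sharper inequality of this form is available; but this remark is not needed for the proof itself.
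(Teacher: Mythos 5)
Your proof is correct and, up to cosmetics, matches the paper's: you factor \((1-x)^2-1 = x(x-2)\) and bound \(|x-2|\le 2\) on \([0,4]\), while the paper squares both sides and uses \(x^4 \le 4x^3\) for \(x\in[0,4]\) — the same elementary fact in a different guise. Your version is arguably the cleaner of the two; nothing is missing.
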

\begin{proof}
\begin{equation*}
    \{(1 - x)^{2} - 1\}^{2} = \{2x - x^{2}\}^{2} = 4x^{2} + x^{4} - 4x^{3} \leq 4x^{2}
\end{equation*}
since \(x^{4} \leq 4x^{3}\) for \(x \in [0, 4]\).
\end{proof}

\begin{lemma}
\label{auxlem:log-is-linear}
For any \(t \in [0, 0.3]\),
\begin{equation*}
    \log(1 - t) \geq -\frac{3}{2}t~.
\end{equation*}
\end{lemma}
\begin{proof}
Let \(f(t) = e^{-\frac{3}{2}t} + t - 1\).
The derivative of \(f(t)\) is \(-\frac{3}{2}\exp(-\frac{3}{2}t) + 1\).
For any \(t \in [0, 0.3]\), \(f'(t) < 0\), and consequently, \begin{equation*}
    f(t) \leq f(0) = 0 \Rightarrow e^{-\frac{3}{2}t} + t - 1 \leq 0 \Leftrightarrow \log(1 - t) \geq -\frac{3}{2}t~.
\end{equation*}
\end{proof}

\begin{lemma}
\label{auxlem:almost-cubic}
For any \(t \in [0, 0.3]\),
\begin{equation*}
    t^{2} - \frac{t^{2}}{(1 - t)^{2}} \geq -6t^{3}~.
\end{equation*}
\end{lemma}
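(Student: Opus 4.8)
\textbf{Proof proposal for \Cref{auxlem:almost-cubic}.}

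The plan is to reduce the inequality $t^2 - \frac{t^2}{(1-t)^2} \geq -6t^3$ to a polynomial inequality and then verify it on the interval $[0, 0.3]$. First I would note that for $t = 0$ both sides vanish, so I may assume $t \in (0, 0.3]$, and in particular $0 < 1 - t < 1$, so $(1-t)^2 > 0$ and I can multiply through by $(1-t)^2$ without changing the direction of the inequality. Multiplying, the claim becomes $t^2(1-t)^2 - t^2 \geq -6t^3(1-t)^2$, and after dividing by $t^2 > 0$ this is $(1-t)^2 - 1 \geq -6t(1-t)^2$. Expanding the left-hand side gives $(1-t)^2 - 1 = -2t + t^2 = -t(2 - t)$, so the inequality to prove is $-t(2-t) \geq -6t(1-t)^2$, i.e., after dividing by $-t < 0$ (which flips the inequality) it suffices to show $2 - t \leq 6(1-t)^2$ for $t \in (0, 0.3]$.

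Next I would verify $6(1-t)^2 - (2 - t) \geq 0$ on $[0, 0.3]$. Expanding, $6(1-t)^2 - (2-t) = 6 - 12t + 6t^2 - 2 + t = 6t^2 - 11t + 4$. I would check this quadratic $g(t) = 6t^2 - 11t + 4$ on $[0, 0.3]$: its roots are $t = \frac{11 \pm \sqrt{121 - 96}}{12} = \frac{11 \pm 5}{12}$, namely $t = \frac{1}{2}$ and $t = \frac{4}{3}$. Since both roots exceed $0.3$ and $g$ opens upward, $g$ is positive to the left of its smaller root $\frac{1}{2}$; in particular $g(t) > 0$ for all $t \in [0, 0.3]$ (one can also simply note $g(0) = 4 > 0$ and $g(0.3) = 0.54 - 3.3 + 4 = 1.24 > 0$ together with convexity, but the root computation is cleanest). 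This establishes $2 - t \leq 6(1-t)^2$, hence the original inequality.

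The argument is entirely elementary; the only mild subtlety is bookkeeping the sign flips when multiplying or dividing by $(1-t)^2 > 0$ and by $-t < 0$. The only genuine "obstacle" is making sure the constant $0.3$ (rather than, say, $1/2$) is the operative endpoint — and indeed the smaller root $1/2$ of $6t^2 - 11t + 4$ shows the bound $0.3$ is comfortably within the valid range, with a little room to spare. No earlier results from the paper are needed; this lemma is self-contained.
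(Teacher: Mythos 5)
Your proof is correct, and it takes a slightly different algebraic route from the paper's. The paper combines the two terms over the common denominator $(1-t)^2$, obtaining $t^2 - \frac{t^2}{(1-t)^2} = \frac{t^4 - 2t^3}{(1-t)^2}$, then discards the positive $\frac{t^4}{(1-t)^2}$ term and uses the crude bound $\frac{1}{(1-t)^2} \leq \frac{1}{(0.7)^2} \leq \frac{17}{8}$ to conclude $-\frac{2t^3}{(1-t)^2} \geq -\frac{17}{4}t^3 \geq -6t^3$. You instead clear denominators, cancel the factors $t^2$ and $-t$ (with the appropriate sign flips, which you track correctly), and reduce the claim to the exact quadratic inequality $6t^2 - 11t + 4 \geq 0$, whose roots $\frac{1}{2}$ and $\frac{4}{3}$ you compute. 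Your method is sharper: it shows the inequality actually holds on all of $[0, \frac{1}{2}]$, whereas the paper's estimate, taken at face value, only certifies the stated interval (and could be pushed to roughly $t \leq 0.42$ before the bound on the denominator fails). For the purposes of the paper the two arguments are interchangeable, since only $t \leq 0.3$ is ever needed; the paper's version is marginally shorter, yours identifies the true threshold. No gap in either.
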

\begin{proof}
Through algebraic simplifications,
\begin{equation*}
    t^{2} - \frac{t^{2}}{(1 - t)^{2}} = \frac{t^{4}}{(1 - t)^{2}} - \frac{2t^{3}}{(1 - t)^{2}} \geq -\frac{2t^{3}}{(1 - t)^{2}} \geq -2t^{3} \cdot \frac{17}{8} \geq -6t^{3}~.
\end{equation*}
\end{proof}

\begin{lemma}
\label{auxlem:b1-implication}
Let \(a, b \geq 0\) and \(d \geq 1\).
If \(x \leq V \cdot \min\left\{\frac{1}{a^{2}}, \frac{1}{a^{\nicefrac{2}{3}}}, \frac{1}{(a \cdot b)^{\nicefrac{2}{3}}}, \frac{1}{d^{3}}, \frac{1}{b \cdot d}\right\}\) for \(V \geq 0\), then each of the terms
\begin{equation*}
\left\{
\begin{tabular}{ccc}
\(a^{2} \cdot x\) & \(a^{2}b \cdot x^{2}\) & \(bd \cdot x\) \\
\(ab \cdot x^{\nicefrac{3}{2}}\) & \(ad \cdot x\) & \(d^{3} \cdot x\) \\
\(ad^{2} \cdot x^{\nicefrac{3}{2}}\) & \(a^{3} \cdot x^{2}\) & \(a^{\nicefrac{3}{2}} \cdot x^{\nicefrac{5}{4}}\)
\end{tabular}\right.
\end{equation*}
are bounded from above as a function of \(V\) alone.
\end{lemma}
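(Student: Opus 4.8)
The plan to prove Lemma~\ref{auxlem:b1-implication} is to introduce a single auxiliary quantity $u \defeq V/x$ and translate the hypothesis into a handful of monomial inequalities in $a,b,d$. Assuming $x>0$ and $V>0$ (if $x=0$ every listed term vanishes, and $V=0$ forces $x=0$, so there is nothing to prove), dividing the hypothesis $x \le V\min\{a^{-2},\, a^{-\nicefrac{2}{3}},\, (ab)^{-\nicefrac{2}{3}},\, d^{-3},\, (bd)^{-1}\}$ through by $x$ gives simultaneously
\[
    a^{2} \le u,\qquad a^{\nicefrac{2}{3}} \le u,\qquad (ab)^{\nicefrac{2}{3}} \le u,\qquad d^{3} \le u,\qquad bd \le u .
\]
Since $d \ge 1$, the fourth inequality forces $u \ge 1$, and the others then yield $a \le u^{\nicefrac{1}{2}}$, $ab \le u^{\nicefrac{3}{2}}$, $d \le u^{\nicefrac{1}{3}}$, and $b \le bd \le u$.

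Next I would observe that each of the nine entries in the table has the shape $a^{p}b^{q}d^{r}x^{s}$ with $s \in \{1,\ \nicefrac{5}{4},\ \nicefrac{3}{2},\ 2\}$ and $p,q,r \ge 0$, so substituting $x = V/u$ rewrites it as $V^{s}\cdot a^{p}b^{q}d^{r}u^{-s}$. It therefore suffices to verify, entry by entry, the elementary bound $a^{p}b^{q}d^{r} \le u^{s}$; each such check is a one-line computation from the four consequences above together with $u \ge 1$ (which lets one freely enlarge an exponent of $u$, e.g. $ad \le u^{\nicefrac{1}{2}}u^{\nicefrac{1}{3}} = u^{\nicefrac{5}{6}} \le u$, or $a^{3} = a^{2}\cdot a \le u\cdot u^{\nicefrac{1}{2}} \le u^{2}$, or $a^{\nicefrac{3}{2}} = a\cdot a^{\nicefrac{1}{2}} \le u^{\nicefrac{3}{4}} \le u^{\nicefrac{5}{4}}$). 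Carrying this out shows every term is at most $V^{s}$, hence at most $\max\{V,\ V^{\nicefrac{5}{4}},\ V^{\nicefrac{3}{2}},\ V^{2}\}$, a function of $V$ alone, which is the claim.

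The computation is almost entirely routine; the only point requiring a little care is that $a$ and $b$ are allowed to be either small or large (no normalisation $a,b \ge 1$ is assumed), so a naive attempt to cancel a power $a^{p}$ using only the bound $a^{2}\le u$, or only $a^{\nicefrac{2}{3}}\le u$, fails in one of the two regimes. The $u$-substitution is precisely what sidesteps this: both bounds are packaged as statements of the form ``$a^{\text{power}} \le u$'', and $u \ge 1$ makes them mutually comparable, so there is never a need to choose which to use. This is the mild obstacle in the argument, and it is dissolved by the bookkeeping rather than confronted head-on.
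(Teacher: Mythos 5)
Your proof is correct. It verifies the same nine elementary inequalities as the paper, but the bookkeeping is organised differently: the paper fixes each term $a^{p}b^{q}d^{r}x^{s}$, substitutes two different hypothesis bounds on $x$ into it, and invokes the observation $\min\{t, t^{-p}\}\leq 1$ to dispose of the resulting pair (e.g.\ for $a^{\nicefrac{3}{2}}x^{\nicefrac{5}{4}}$ it lands on $\min\{\nicefrac{1}{a},\,a^{\nicefrac{2}{3}}\}\leq 1$, which is exactly where the $a^{-\nicefrac{2}{3}}$ hypothesis is consumed), whereas you normalise once via $u=V/x$, note that $d\geq 1$ together with $d^{3}\leq u$ forces $u\geq 1$, and then reduce every entry to a single monomial inequality $a^{p}b^{q}d^{r}\leq u^{s}$ checked from $a\leq u^{\nicefrac{1}{2}}$, $b\leq u$, $d\leq u^{\nicefrac{1}{3}}$, $ab\leq u^{\nicefrac{3}{2}}$. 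A pleasant by-product of your route is that the hypothesis $x\leq V\cdot a^{-\nicefrac{2}{3}}$ is never used: it is implied by $a^{2}\leq u$ and $u\geq 1$ (split on $a\lessgtr 1$), so the small-$a$/large-$a$ dichotomy that the paper's $\min$ trick is designed to handle is absorbed entirely by the fact $u\geq 1$. Your treatment of the degenerate cases $x=0$ and $V=0$ is also fine, since every listed term carries a strictly positive power of $x$.
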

\begin{proof}
We make use the following observation: for \(t \geq 0\) and \(p > 0\), \(\min\{t, t^{-p}\} \leq 1\).
\begin{itemize}[leftmargin=*]
\item \(a^{2} \cdot x \leq a^{2} \cdot V \cdot \frac{1}{a^{2}} = V\).
\item \(a^{2}b \cdot x^{2} \leq V^{2} \cdot \min\left\{\frac{a^{2}b}{a^{4}}, \frac{a^{2}b}{a^{\nicefrac{4}{3}} \cdot b^{\nicefrac{4}{3}}}\right\} = V^{2} \cdot \min\left\{\frac{b}{a^{2}}, \frac{a^{\nicefrac{2}{3}}}{b^{\nicefrac{1}{3}}}\right\} \leq V^{2}\).
\item \(bd \cdot x \leq bd \cdot V \cdot \frac{1}{bd} = V\).
\item \(ab \cdot x^{\nicefrac{3}{2}} \leq ab \cdot V^{\nicefrac{3}{2}} \cdot \frac{1}{ab} = V^{\nicefrac{3}{2}}\).
\item \(ad \cdot x \leq V \cdot \min\left\{\frac{ad}{a^{2}}, \frac{ad}{d^{3}}\right\} \leq V \cdot \min\left\{\frac{d}{a}, \frac{a}{d^{2}}\right\} \leq V \cdot \min\left\{\frac{d}{a}, \frac{a}{d}\right\} \leq V\).
\item \(d^{3} \cdot x \leq d^{3} \cdot V \cdot \frac{1}{d^{3}} = V\).
\item \(ad^{2} \cdot x^{\nicefrac{3}{2}} \leq V^{\nicefrac{3}{2}} \cdot \min\left\{\frac{ad^{2}}{a^{3}}, \frac{ad^{2}}{d^{\nicefrac{9}{2}}}\right\} \leq V^{\nicefrac{3}{2}} \cdot \min\left\{\frac{d^{2}}{a^{2}}, \frac{a}{d^{\nicefrac{7}{2}}}\right\} \leq V^{\nicefrac{3}{2}} \cdot \min\left\{\frac{d^{2}}{a^{2}}, \frac{a}{d}\right\} \leq V^{\nicefrac{3}{2}}\).
\item \(a^{3} \cdot x^{2} \leq V^{2} \cdot \min\left\{\frac{a^{3}}{a^{4}}, \frac{a^{3}}{a^{\nicefrac{4}{3}}}\right\} = V^{2} \cdot \min\left\{\frac{1}{a}, a^{\nicefrac{5}{3}}\right\} \leq V^{2}\).
\item \(a^{\nicefrac{3}{2}} \cdot x^{\nicefrac{5}{4}} \leq V^{\nicefrac{5}{4}} \cdot \min\left\{\frac{a^{\nicefrac{3}{2}}}{a^{\nicefrac{5}{2}}}, \frac{a^{\nicefrac{3}{2}}}{a^{\nicefrac{5}{6}}}\right\} = V^{\nicefrac{5}{4}} \cdot \min\left\{\frac{1}{a}, a^{\nicefrac{2}{3}}\right\} \leq V^{\nicefrac{5}{4}}\).
\end{itemize}
\end{proof}

\begin{lemma}
\label{auxlem:b2-implication}
Let \(a, b \geq 0\), \(c \geq 4\), and \(d \geq 1\).
If \(x \leq V \cdot \min\left\{\frac{1}{a^{2}}, \frac{1}{(a \cdot b)^{\nicefrac{2}{3}}}, \frac{1}{(a \cdot c)^{\nicefrac{2}{3}}}, \frac{1}{d \cdot a}, \frac{1}{d \cdot b}, \frac{1}{d \cdot c}\right\}\), then each of the terms
\begin{equation*}
\left\{
\begin{tabular}{cccc}
\(a^{2} \cdot x\) & \(a^{2}b \cdot x\) & \(bd \cdot x\) & \(ab \cdot x^{\nicefrac{3}{2}}\) \\
\(a\sqrt{d} \cdot x\) & \(d \cdot x\) & \(a^{2}c \cdot x^{2}\) & \(cd \cdot x\) \\
\(ac \cdot x^{\nicefrac{3}{2}}\) & \(a^{3} \cdot x^{2}\) & \(ad \cdot x\) & \(a^{2} \cdot x^{\nicefrac{3}{2}}\)
\end{tabular}
\right.
\end{equation*}
are bounded from above as a function of \(V\) alone.
\end{lemma}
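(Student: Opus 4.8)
The plan is to follow the proof of \cref{auxlem:b1-implication} essentially verbatim: the only ingredients are the elementary inequality $\min\{t,t^{-p}\}\le 1$ valid for all $t\ge 0$ and $p>0$, its two-variable companion $\min\{u,v\}\le\sqrt{uv}$, and the observation that because $c\ge 4$ and $d\ge 1$ the hypothesis already gives $x\le V\cdot\frac{1}{dc}\le\frac{V}{4}$. With these in hand I would treat the twelve terms one at a time, each time distributing the available powers of $x$ over suitable factors of the bound $x\le V\cdot\min\bigl\{\frac{1}{a^2},\frac{1}{(ab)^{2/3}},\frac{1}{(ac)^{2/3}},\frac{1}{da},\frac{1}{db},\frac{1}{dc}\bigr\}$ so that the result collapses to a pure power of $V$.

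Concretely, the ``single reciprocal'' terms are immediate: $a^2x\le V$ from $\frac{1}{a^2}$; $bd\,x\le V$, $cd\,x\le V$, $ad\,x\le V$ from $\frac{1}{db}$, $\frac{1}{dc}$, $\frac{1}{da}$; $ab\,x^{3/2}\le V^{3/2}$ from $\frac{1}{(ab)^{2/3}}$; $ac\,x^{3/2}\le V^{3/2}$ from $\frac{1}{(ac)^{2/3}}$; $a\sqrt d\,x\le V/\sqrt d\le V$ from $\frac{1}{da}$; and $d\,x\le V/c\le V$ from $\frac{1}{dc}$ (here $c\ge 4$, or rather $dc\ge 1$, is used). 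The terms quadratic in $x$ are handled by applying the same reciprocal to both copies of $x$ and then taking a minimum: $a^2c\,x^2\le V^2\min\{a^{2/3}c^{-1/3},\,ca^{-2}\}\le V^2$ and $a^2b\,x^2\le V^2\min\{a^{2/3}b^{-1/3},\,ba^{-2}\}\le V^2$, each an instance of $\min\{s,s^{-1/3}\}\le 1$; while $a^3x^2$ needs two different reciprocals, $a^3x^2\le a^3\cdot\frac{V}{a^2}\cdot\frac{V}{da}=\frac{V^2}{d}\le V^2$. Finally $a^2x^{3/2}=(a^2x)\cdot x^{1/2}\le V\cdot(V/4)^{1/2}\le V^{3/2}$, using $a^2x\le V$ together with $x\le V/4$.

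There is no genuine obstacle here — the lemma is pure bookkeeping — but the point demanding a little care is the handful of terms that are not linear in $x$, namely $a^2c\,x^2$, $a^2b\,x^2$, $a^3x^2$ and $a^2x^{3/2}$: for these a single entry of the minimum does not suffice, and one must either route the two copies of $x$ through two different entries (as for $a^3x^2$) or fall back on the crude bound $x\le V/4$ (as for $a^2x^{3/2}$). Two small remarks: I read the $(1,2)$ entry of the table as $a^2b\,x^2$ rather than $a^2b\,x$, matching the term $\frac{17\lambda}{32}h^2\sfU_{\potential{},\calS}^2$ actually occurring in \cref{eq:SC++-primitive-lower} (with $a=\sfU_{\potential{},\calS}$, $b=\lambda$, $c=\alpha+4$, $x=h$); and every step above is literally an instance of the $\min\{t,t^{-p}\}\le 1$ manipulation used in \cref{auxlem:b1-implication}.
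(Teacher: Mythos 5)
Your proof is correct and follows essentially the same term-by-term bookkeeping as the paper, using $\min\{t,t^{-p}\}\le 1$ and routing each factor of $x$ through an appropriate entry of the minimum; you also correctly diagnose that the table entry $a^{2}b\cdot x$ is a typo for $a^{2}b\cdot x^{2}$ (the linear version is genuinely unbounded in $V$ alone, and the paper's own proof treats the quadratic one). Your handling of $a^{2}x^{\nicefrac{3}{2}}=(a^{2}x)\cdot x^{\nicefrac{1}{2}}\le V\cdot(V/4)^{\nicefrac{1}{2}}$ is in fact tighter than the paper's one-liner $a^{2}x^{\nicefrac{3}{2}}\le a^{2}c\,x^{\nicefrac{3}{2}}$, which appeals to a bound ($a^{2}c\,x^{2}$) of a different power of $x$ than the one actually needed.
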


\begin{proof}
We again make use the following observation: for \(t \geq 0\) and \(p > 0\), \(\min\{t, t^{-p}\} \leq 1\).
\begin{itemize}[leftmargin=*]
\item \(a^{2} \cdot x \leq a^{2} \cdot V \cdot \frac{1}{a^{2}} = V\).
\item \(a^{2}b \cdot x^{2} \leq V^{2} \cdot \min\left\{\frac{a^{2}b}{a^{4}}, \frac{a^{2}b}{a^{\nicefrac{4}{3}} \cdot b^{\nicefrac{4}{3}}}\right\} = V^{2} \cdot \min\left\{\frac{b}{a^{2}}, \frac{a^{\nicefrac{2}{3}}}{b^{\nicefrac{1}{3}}}\right\} \leq V^{2}\).
\item \(bd \cdot x \leq bd \cdot V \cdot \frac{1}{bd} = V\).
\item \(ab \cdot x^{\nicefrac{3}{2}} \leq ab \cdot V^{\nicefrac{3}{2}} \cdot \frac{1}{ab} = V^{\nicefrac{3}{2}}\).
\item \(a\sqrt{d} \cdot x \leq V \cdot \frac{a\sqrt{d}}{a \cdot d} \leq V\).
\item \(d \cdot x \leq V \cdot \frac{d}{d \cdot c} \leq \frac{V}{4}\).
\item \(a^{2}c \cdot x^{2} \leq V^{2}\) (analogous to \(a^{2}b \cdot x^{2} \leq V^{2}\)).
\item \(cd \cdot x \leq V \cdot \frac{cd}{cd} = V\).
\item \(ac \cdot x^{\nicefrac{3}{2}} \leq V^{\nicefrac{3}{2}}\) (analogous to \(ab \cdot x^{\nicefrac{3}{2}} \leq V^{\nicefrac{3}{2}}\)).
\item \(a^{3} \cdot x^{2} \leq V^{2} \cdot \min\left\{\frac{a^{3}}{a^{4}}, \frac{a^{3}}{d^{2}a^{2}}\right\} \leq V^{2} \cdot \min\left\{\frac{1}{a}, \frac{a}{d^{2}}\right\} \leq V^{2} \cdot \min\left\{\frac{1}{a}, a\right\} \leq V^{2}\).
\item \(ad \cdot x \leq V \cdot \frac{ad}{ad} = V\).
\item \(a^{2} \cdot x^{\nicefrac{3}{2}} \leq a^{2}c \cdot x^{\nicefrac{3}{2}} \leq V^{\nicefrac{3}{2}}\).
\end{itemize}
\end{proof}

\end{document}